\documentclass[sigconf, nonacm]{acmart}

\renewcommand\footnotetextcopyrightpermission[1]{}
\usepackage{mathtools}       % \DeclarePairedDelimiter etc. (super-set of amsmath)
\usepackage{amsthm}          % Re-enable the AMS theorem prelude under acmart
\usepackage{bbm}             % \mathbbm{} for blackboard 1
\usepackage{xspace}
\usepackage{bussproofs}      % typed inference rules in §3.7
\usepackage{array}           % >{...}p{w} columns in tables
\usepackage{colortbl}        % \cellcolor for shaded tabular (tab-ablation per ADR-0011)
\usepackage{graphicx}        % \resizebox auto-fitting prooftrees to column
\usepackage{tikz}            % §3.7 operator-pipeline figure (D-03)
\usetikzlibrary{positioning, arrows.meta, shapes.geometric, fit, calc}
\usepackage{algorithm}       % Algorithm float environment for §2.6 operator pseudocode (M-pseudocode)
\usepackage{algpseudocode}   % \Require / \Ensure / \State / \Return keywords for the four operators
\usepackage{hyperxmp}        % XMP metadata stream required for PDF/A audit
\usepackage{xr-hyper}        % cross-document references via \externaldocument (main <-> supplement); resolves prop:t01-tightness etc. defined in the companion build's .aux file

\newif\ifshowproofs
\showproofstrue
\newtheorem{theorem}{Theorem}
\newtheorem{lemma}{Lemma}
\newtheorem{proposition}{Proposition}
\newtheorem{corollary}{Corollary}
\theoremstyle{definition}
\newtheorem{definition}{Definition}
\theoremstyle{remark}
\newtheorem{remark}{Remark}

\newcommand{\Tvalid}{\ensuremath{\mathcal{T}_{\mathrm{v}}}}   % valid-time domain
\newcommand{\Tsystem}{\ensuremath{\mathcal{T}_{\mathrm{s}}}}  % system-time domain
\newcommand{\bitemp}[1]{\ensuremath{[#1]_{\Tvalid \times \Tsystem}}}
\newcommand{\AsOf}[1]{\ensuremath{\mathsf{AS\;OF}\,#1}}

\newcommand{\Prov}{\ensuremath{\mathcal{P}}}                  % the semiring carrier

\newcommand{\provmul}{\ensuremath{\otimes}}                   % provenance product
\newcommand{\provadd}{\ensuremath{\oplus}}                    % provenance sum

\newcommand{\schedmodels}{\ensuremath{\models}}

\newcommand{\IsoSI}{\ensuremath{\mathsf{SI}}}              % snapshot isolation
\newcommand{\IsoSR}{\ensuremath{\mathsf{SR}}}              % serializable
\newcommand{\IsoRC}{\ensuremath{\mathsf{RC}}}              % read-committed
\newcommand{\IsoRCcb}{\ensuremath{\mathsf{RC{+}cb}}}       % read-committed + callback
\newcommand{\IsoSRpol}{\ensuremath{\mathsf{SR}^{\mathrm{policy}}}} % serializable on policy table

\newcommand{\opLWW}{\ensuremath{\oplus_{\mathrm{t}}}}      % last-write-wins
\newcommand{\opEvi}{\ensuremath{\oplus_{\mathrm{p}}}}      % evidence-weighted
\newcommand{\opAwait}{\ensuremath{\oplus_{?}}}             % await-confirmation
\newcommand{\opRule}{\ensuremath{\oplus_{\mathrm{c}}}}     % per-rule-config
\newcommand{\opDetect}{\ensuremath{\oplus_{\mathrm{d}}}}   % detection oracle (future-work hook; §6 prose)

\newcommand{\Fact}{\mathsf{Fact}}
\newcommand{\fact}[1]{\mathsf{f}_{#1}}
\newcommand{\subj}{\mathrm{subj}}
\newcommand{\pred}{\mathrm{pred}}
\newcommand{\obj}{\mathrm{obj}}
\newcommand{\conf}{\mathrm{conf}}                          % confidence in [0,1]
\newcommand{\strat}{\mathrm{strat}}                        % resolution_strategy_id
\newcommand{\auditt}{\mathsf{Audit}}                       % audit-tuple type (v2 dual-row)
\newcommand{\rowkind}{\mathrm{row\_kind}}                  % {current, audit} discriminator
\newcommand{\witness}{\mathrm{W}}                          % witness polynomial K[X,T]
\newcommand{\Tier}{\mathsf{Tier}}                          % memory tier label (§6 compression spectrum future-work hook)

\newcommand{\mems}{\ensuremath{\mathcal{M}}}               % memory state set
\newcommand{\asof}{\ensuremath{\mathsf{asof}}}             % AS OF as a function on a memory state
\newcommand{\conflict}{\mathrel{\#}}                       % contradiction predicate
\newcommand{\overlap}{\ensuremath{\mathsf{overlap}}}       % Allen-overlap predicate (the nine of §3.3)
\newcommand{\current}{\ensuremath{\textsf{current}}}       % row-kind = current marker

\newcommand{\anomHR}{\ensuremath{\mathsf{N1}}}             % Judge-replay inconsistency
\newcommand{\anomBDS}{\ensuremath{\mathsf{N2}}}            % Belief-Drift Skew
\newcommand{\anomAE}{\ensuremath{\mathsf{N3}}}             % Audit Erasure

\newcommand{\sysmem}{mem0\xspace}
\newcommand{\sysletta}{Letta\xspace}
\newcommand{\sysgraphiti}{Graphiti\xspace}
\newcommand{\syszep}{Zep\xspace}
\newcommand{\sysworlddb}{WorldDB\xspace}          % Ganesan 2026, arXiv:2604.18478
\newcommand{\sysmirix}{MIRIX\xspace}             % verdict transcribed from MMA-Bench, Lu et al. 2026, arXiv:2602.16493
\newcommand{\toki}{\textsc{Toki}\xspace}
\newcommand{\sysours}{\toki}                       % single name: algebra + reference implementation (see \title)

\newcommand{\guard}{\ensuremath{\mathrm{guard}}}
\newcommand{\isoaxis}{\ensuremath{\mathrm{iso}}}
\newcommand{\guardiso}{\ensuremath{\guard_{\isoaxis}}}

\DeclareRobustCommand{\code}[1]{\nolinkurl{#1}}

\newcommand{\etal}{et~al.\xspace}

\usetikzlibrary{backgrounds}

\definecolor{WongBlack}     {HTML}{000000}  % kept pure black
\definecolor{WongOrange}    {HTML}{F59E0B}  % NATURE amber
\definecolor{WongSky}       {HTML}{60A5FA}  % NATURE sky
\definecolor{WongGreen}     {HTML}{0F766E}  % NATURE teal (third hue)
\definecolor{WongYellow}    {HTML}{F59E0B}  % NATURE amber (no separate yellow)
\definecolor{WongBlue}      {HTML}{0B3D91}  % NATURE navy (exclude/safe/ours)
\definecolor{WongVermillion}{HTML}{C25E0C}  % NATURE vermillion (admit/unsafe)
\definecolor{WongReddish}   {HTML}{E11D48}  % NATURE rose
\definecolor{WongGrey}      {HTML}{475569}  % NATURE slate-mid
\definecolor{WongBandGrey}  {HTML}{F1F5F9}  % NATURE cream

\colorlet{opLWWcolor}  {WongSky}
\colorlet{opEvicolor}  {WongGreen}
\colorlet{opAwaitcolor}{WongYellow}
\colorlet{opRulecolor} {WongVermillion}
\colorlet{gatecolor}   {WongBlue}
\colorlet{audittone}   {WongBlue}
\colorlet{rulebordertone}{WongVermillion!75!black}

\tikzset{
  pubfig/.style={
    font=\small,
    node distance=6mm and 6mm,
    every node/.append style={inner sep=2.4pt, align=center},
  },
  factbox/.style={
    draw=WongGrey, line width=0.45pt, rounded corners=2.4pt,
    fill=white,
    minimum height=5.5mm, minimum width=10mm,
    font=\small,
  },
  conflictsym/.style={
    font=\normalsize\bfseries, text=WongGrey,
  },
  gateband/.style={
    draw=gatecolor!70!black, line width=0.6pt,
    rounded corners=3pt,
    fill=gatecolor!12,
    minimum height=9mm,
    minimum width=66mm,
    font=\small,
  },
  opcell/.style={
    draw=black!55, line width=0.55pt,
    rounded corners=2.6pt,
    minimum height=11mm, minimum width=15mm,
    font=\small\bfseries,
  },
  opcell/lww/.style  ={opcell, fill=opLWWcolor!28,   draw=opLWWcolor!75!black},
  opcell/evi/.style  ={opcell, fill=opEvicolor!28,   draw=opEvicolor!75!black},
  opcell/await/.style={opcell, fill=opAwaitcolor!55, draw=opAwaitcolor!60!black},
  opcell/rule/.style ={opcell, fill=opRulecolor!22,  draw=rulebordertone},
  precondtag/.style={
    font=\scriptsize, text=black!75,
  },
  outputband/.style={
    draw=audittone!70!black, line width=0.6pt,
    rounded corners=3pt,
    fill=audittone!10,
    minimum height=9mm,
    minimum width=66mm,
    font=\small,
  },
  pubarrow/.style={
    -{Stealth[length=2.2mm, width=1.7mm]},
    line width=0.55pt,
    draw=black!72,
    shorten >=1.5pt, shorten <=1.5pt,
  },
}

\newcommand{\pvldbtablestyle}{%
  \footnotesize
  \setlength{\tabcolsep}{4pt}%
  \renewcommand{\arraystretch}{1.05}%
}

\newcommand{\tablenote}[1]{%
  \par\addvspace{2pt}%
  {\footnotesize #1\par}%
}

\usepackage{longtable}
\usepackage{fontawesome5}   % \faGithub octocat icon for the code link
\showproofsfalse

\begin{document}

\title{TOKI: A Bitemporal Operator Algebra for Contradiction Resolution in LLM-Agent Persistent Memory}

\author{Ziming Wang}
\affiliation{%
  \institution{The Hong Kong University of Science and Technology}
  \city{Hong Kong}
  \country{Hong Kong SAR}
}
\email{zwangnv@connect.ust.hk}

\renewcommand{\shortauthors}{Ziming Wang}

\begin{abstract}
% @owner       src::sections::abstract
% @does        Story-first abstract in plain prose: contradiction resolution reframed as write-time concurrency control; TOKI types four production heuristics as bitemporal operators; soundness across three axes; necessity result; eight-system verdict matrix; the natural-workload headline number; honest utility scope framed as principled separation; closes on the correctness contract as the contribution. Notation budget: zero math symbols (best-paper paradigm rule 2).
% @needs       src/sections/03-foundations.tex, src/sections/05-measurement.tex, references/refs.bib
% @feeds       src/main.tex
% @breaks      overclaiming on cross-system deltas; reintroducing operator/anomaly symbols into the abstract; throat-clearing openers.

Persistent memory for an LLM agent is a write-heavy substrate:
every belief update is a versioned write, and the system must
decide what to trust when a new claim contradicts a stored one.
Production systems answer with four resolution heuristics,
last-writer-wins, evidence-weighted merge, await-confirmation,
and per-rule policy, yet none declares the isolation level it
assumes or the write-time anomalies it admits. We show that
contradiction resolution is write-time concurrency control, and
make the missing contract explicit. \textsc{Toki} types the four
heuristics as one family of bitemporal operators over a dual-row
schema, each carrying an isolation precondition and a provenance
annotation that preserves the losing fact in an audit row. Four
soundness theorems close the contract across three orthogonal
axes, isolation, schema, and provenance, lift the guarantees
to operator pipelines, and extend the fold operators to n-ary
conflict sets. A tightness companion proves the sharp
result: within the relational schedule model, keyed logging of the
adjudicating judge is necessary for replay consistency, a discipline
every deployed baseline we audit omits. A verdict matrix over
eight systems localizes the gap: every baseline that keeps a
language-model judge on the write path admits at least one of three
write-time anomalies, replay inconsistency, belief-drift skew, or
audit erasure; a content-addressed engine-layer comparator avoids
them only by removing the judge, and \toki alone excludes all three
while keeping it. On its one natural-workload slice, the audit-row
defence moves \textsc{LoCoMo} accuracy by $0.86$, and ablating the typed
memory layer removes $0.49$ accuracy on $1{,}444$ answerable
\textsc{LoCoMo} questions; the cross-system comparison against external
memory systems stays underpowered and claims no superiority. The contribution is the
contract: a write-time correctness specification, proved sound across
isolation, schema, and provenance, that pins the guarantee every
production heuristic assumes and no deployed system makes explicit.

\end{abstract}

\maketitle

% Code / data availability with an embedded GitHub icon (standalone preprint;
% no venue reference-format block).
\begingroup
\renewcommand\thefootnote{}%
\footnotetext{\faGithub~\,Code, data, and reproducibility artifact:
\url{https://github.com/ZenAlexa/toki-bitemporal-memory}}%
\addtocounter{footnote}{-1}%
\endgroup

% ---- Main paper body (sections 1--6), two-column -------------------
% @owner       src::sections::01-motivation
% @does        Opens with the stakes of unprincipled write-time strategies, reframes contradiction resolution as write-time concurrency control anchored on Berenson-Adya isolation, names the three LLM-judge anomalies in plain prose (symbols deferred to S3), and lists claim-first contributions plus the headline finding.
% @needs       src/sections/03-foundations.tex, src/figures/fig1-overview.pdf, src/tables/tab-correspondence.tex, references/refs.bib
% @feeds       src/main.tex
% @breaks      mis-mapping a baseline to a wrong isolation level; introducing pre-trained model names per global rule 10; quantifying "every external baseline admits" over the engine-layer comparator row.

\section{Introduction}
\label{sec:motivation}

Persistent memory for an LLM
agent~\cite{park-2023-generative-agents,packer-2023-memgpt} is a
write-heavy data management substrate: every belief update is a
versioned write that carries a valid time, a system time, a
provenance annotation, and an implicit isolation level. When a new writer
disagrees with the stored belief on a subject-predicate key, the
memory must decide what to trust. Production systems answer with
four strategies, last-writer-wins, evidence-weighted merge,
await-confirmation, and per-rule policy. None declares which
isolation level the strategy assumes or which write-time
anomalies it admits. The cost of that silence is measurable.
\textsc{BeliefShift}~\cite[Tab.~5]{myakala-2026-beliefshift}
leaves up to $42\%$ of cross-session contradictions unresolved
across seven language-model families.
\textsc{TSM}~\cite{su-2026-tsm} recovers $12.2$ accuracy points
on \textsc{LongMemEval} and \textsc{LoCoMo} by separating
dialogue time from occurrence time, an axis production memories
collapse. Adversarial writes corrupt later
retrievals~\cite{2026-sleeper-memory-poisoning} once the store
keeps no defensible record of what it overwrote. Deployed
agent-memory systems carry no name for these failures: our
deployment scan of widely used implementations (Appendix~\ref{app:adapter-ledger}) finds isolation, contradiction,
audit, and bitemporal vocabulary largely absent.

Concurrency control already solved the structural version of
this problem. The isolation hierarchy of Berenson and
Adya~\cite{berenson-1995-isolation,adya-2000-generalized} fixes
which anomalies a level admits when writers race; bitemporal
data models~\cite{snodgrass-1999-tdb} and K-relation
provenance~\cite{green-karvounarakis-tannen-2007} fix how
versioned facts and their lineage are stored and recovered.
\toki ports that machinery to the agent write path. It types the
four heuristics as one family of bitemporal operators over a
dual-row schema, each carrying an isolation precondition and a
provenance annotation that keeps the losing fact in an audit row
(Figure~\ref{fig:operator-pipeline},
Table~\ref{tab:correspondence}). The classical anomaly alphabet,
written for human transactions, cannot name three failures an
LLM judge introduces: \emph{replay inconsistency}, when
re-adjudicating the same contradiction returns a different
winner; \emph{belief-drift skew}, when concurrent confidence
revisions corrupt a subject-predicate partition; and
\emph{audit erasure}, when the overwritten fact becomes
unrecoverable. \toki binds each failure to a point in the
classical machinery, so each one inherits a defence with a
soundness proof.

\begin{figure*}[t]
  \centering
  \includegraphics[width=0.78\textwidth]{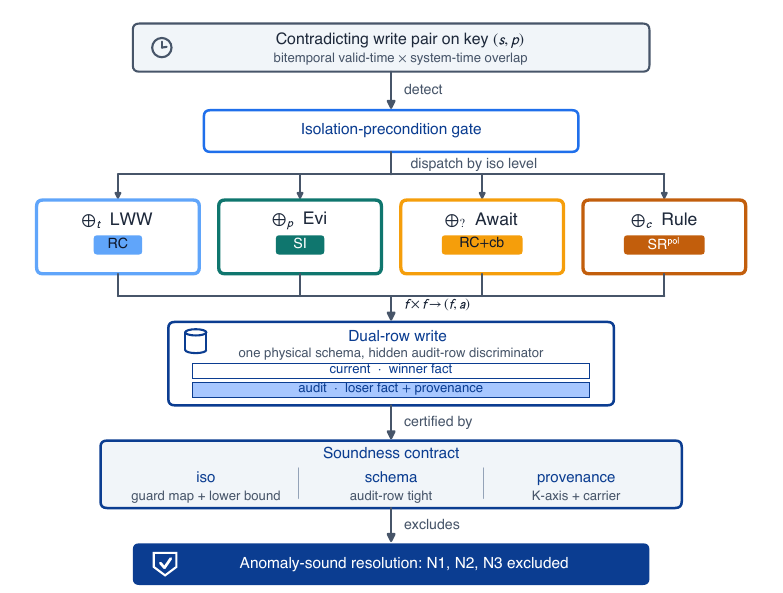}
  \caption{\textbf{Contradiction resolution as write-time
    concurrency control.} A bitemporal substrate detects a
    contradicting pair on a subject-predicate key; an isolation
    gate routes it to one of four typed operators, each pinned to
    the isolation level that excludes the anomaly a weaker level
    admits; every operator commits a current row beside an audit
    row under one schema. The soundness theorems close the
    isolation, schema, and provenance axes.}
  \label{fig:operator-pipeline}
  \Description{Vertical pipeline figure read top to bottom. A
    contradicting write pair on a subject-predicate key enters at
    the top under a bitemporal valid-time by system-time overlap.
    A detect arrow leads to an isolation-precondition gate, which
    fans out to four typed operator heads, last-writer-wins at
    read-committed, evidence-weighted at snapshot isolation,
    await-confirmation at read-committed with callback, and
    per-rule at policy serializable. The heads fan back in to a
    dual-row write into one physical schema with a hidden audit-row
    discriminator, a current winner row above an audit loser row
    carrying provenance. A soundness contract over three axes,
    isolation, schema, and provenance, then certifies the result
    as an anomaly-sound resolution that excludes N1, N2, and N3.}
\end{figure*}

% tab-correspondence.tex --- in-text anchor for the load-bearing
% claim of the paper: each production contradiction-resolution
% strategy is the operational mirror of one classical Berenson
% multiversion anomaly.
%
% Owner   : src/tables/tab-correspondence.tex
% Claim   : Each production strategy is represented as an
%           isolation-typed operator with a named precondition and a
%           named admitted anomaly.
% Does    : 4-row in-text mini-table anchoring the strategy/isolation
%           correspondence claim.
% Needs   : macros.tex (\opLWW, \opEvi, \opAwait, \opRule, \IsoSR);
%           figstyle.tex (\pvldbtablestyle).
% Feeds   : src/sections/01-motivation.tex, 02-related-work.tex,
%           03-foundations.tex (all \ref{tab:correspondence}).
% Breaks  : Drift between §1 strategy prose and this table = paper-spine
%           bug; drift between this table and §3.7 operator declarations
%           = same bug. Column widths sum to <= \columnwidth per
%           docs/process/paper-fitness-protocol.md Rule 2.

\begin{table}[t]
  \caption{\textbf{Production contradiction-resolution strategies as
    isolation-typed operators.} Each row pairs an operator with its
    typed contract and exposed Berenson--Adya anomaly~\cite{berenson-1995-isolation,adya-2000-generalized}.}
  \label{tab:correspondence}
  \centering
  \pvldbtablestyle
  \begin{tabular}{@{}%
    >{\raggedright\arraybackslash}p{0.32\columnwidth}%
    >{\raggedright\arraybackslash}p{0.34\columnwidth}%
    >{\raggedright\arraybackslash}p{0.26\columnwidth}@{}}
    \toprule
    Production strategy             & Isolation precondition   & Admitted anomaly  \\
    \midrule
    Last-writer-wins ($\opLWW$)     & Read-committed           & $P_4$ lost update \\
    Evidence-weighted ($\opEvi$)    & Snapshot isolation       & $A5B$ write skew  \\
    Await-confirmation ($\opAwait$) & RC + callback            & callback boundary \\
    Per-rule policy ($\opRule$)     & $\IsoSR$ on policy table & $P_3$ phantom     \\
    \bottomrule
  \end{tabular}
\end{table}

\paragraph{Contributions.}
(i) \emph{A typed operator algebra} (\S\ref{sec:algebra}).
\toki casts the four production heuristics as one
isolation-indexed family of bitemporal operators with a dual-row
signature, turning four undeclared heuristics into one contract
that states the isolation level it assumes and the provenance it
keeps.
(ii) \emph{A necessity theorem and soundness on three axes}
(\S\ref{sec:found-thm}). A tightness companion proves that keyed
logging of the judge is necessary for replay consistency within
the relational schedule model, a tight characterisation no weaker
discipline meets. An alphabet bridge lifts the classical isolation
guards to the agent write path, and four soundness theorems close
the contract on the isolation, schema, and provenance axes and
lift it to operator pipelines.
(iii) \emph{Empirical evidence} (\S\ref{sec:measurement}). A
verdict matrix over eight systems shows every agent-memory
baseline admits at least one of the three write-time anomalies
while \toki excludes all three; controlled experiments then
stress each defence and anchor every theorem to a grid whose
measured boundary matches the prediction.

\paragraph{Result.}
Keyed logging of the judge is provably necessary for replay
consistency within the relational schedule model, and every deployed
baseline we audit omits it: each admits at least one write-time
anomaly the classical isolation hierarchy already excludes, while
\toki excludes all three while keeping the judge on the write path.
Mechanism isolation confirms each defence: the audit-row defence
moves its natural-workload \textsc{LoCoMo} slice by $0.86$, and ablating
the typed memory layer removes $0.49$ accuracy on $1{,}444$ answerable
\textsc{LoCoMo} questions. The cross-system comparison against external
memory systems stays underpowered and claims no superiority. The reference implementation, benchmark harnesses, and
reproduction runbook are publicly available at the repository linked
on the first page.
     % 1 Introduction
% @owner       src::sections::background
% @does        Fixes the three classical substrates TOKI builds on (bitemporal data model, multiversion isolation, K-semiring provenance) over the LLM-agent write path: the core bitemporal fact, the contradiction event with the running example, and the isolation and provenance vocabulary the operator algebra uses.
% @needs       references/refs.bib
% @feeds       src/sections/03-foundations.tex, src/main.tex
% @breaks      desynchronising the seven-event schedule alphabet from the theorem-file guards; redefining a substrate the algebra section assumes.

\section{Background}
\label{sec:background}

\toki builds on three classical substrates, applied to the LLM-agent
write path: the bitemporal data model, multiversion isolation, and
K-semiring provenance. This section fixes the core object, the
contradiction event the algebra resolves, and the vocabulary
\S\ref{sec:foundations} uses.

\subsection{Bitemporal facts and the contradiction event}
\label{sec:found-schedule}

The core object of \toki is the bitemporal fact: a subject,
predicate, object triple stamped with a valid-time period, a
system-time period, a K-relation provenance annotation, and a
confidence. Every belief a memory holds is one such fact, and every
contradiction is a pair of facts that disagree on a shared key.
Seven event types compose every history: $b_i$ (begin),
$r_i(\fact{})$ (read), $w_i(\fact{})$ (write), $c_i$ (commit),
$a_i$ (abort), $j_i(R, \theta)$ (judge invocation on read set
$R$ under decoder tuple
$\theta = (\textsf{prompt}, \textsf{seed}, \textsf{model},
\textsf{temperature}, \textsf{tool\_hash})$), and
$\mathit{cb}_i(R, h, k)$ (external callback for request hash
$h$ returning winner index $k$). A memory state
$\mems \subseteq \bitemp{\Fact}$ is a finite set of bitemporal
facts, and the view $\asof(\mems, t_v, t_s)$ returns the
current-kind rows whose periods contain $(t_v, t_s)$. Two facts
contradict ($\fact{1} \conflict \fact{2}$) when they agree on
subject and predicate, disagree on object, and their valid-time
periods share a common instant. Under the closed-open convention
$[t_{\mathrm{from}}, t_{\mathrm{to}})$ this holds for nine of
Allen's thirteen base relations~\cite{allen-1983-cacm}, all but
$\mathit{before}$, $\mathit{after}$, $\mathit{meets}$, and
$\mathit{met\text{-}by}$, the four sharing no interior instant.
Database time factors into valid time $\Tvalid$ and system time
$\Tsystem$~\cite{snodgrass-ahn-1986,snodgrass-1999-tdb} through the
SQL:2011 as-of surface~\cite{kulkarni-michels-2012}.

One running example carries the construction through the sections
that follow. The incumbent fact $f_1$ is
\texttt{(alice, medication, penicillin)}, valid from March; a later
writer commits $f_2$, \texttt{(alice, medication, amoxicillin)}, at
system time April~15. The two agree on subject and predicate and
disagree on object, so they contradict, their valid-time periods
overlapping on a shared instant. \toki resolves the pair with one
operator, commits the winner to the current row, and writes the loser
to an audit row recoverable at every later system time.

\subsection{Isolation levels and provenance semirings}
\label{sec:found-iso}\label{sec:found-prov}

\toki types each operator against an isolation precondition drawn
from the multiversion hierarchy.
\citet{berenson-1995-isolation} characterise four ANSI-SQL
phenomena $P_0$ to $P_3$ and three multiversion anomalies
($P_4$ lost update, $A5A$ read skew, $A5B$ write skew);
\citet{adya-2000-generalized} generalize the taxonomy with
schedule-history predicates. We model prevention as a guard
map over the chain
$\mathcal{L}_{\mathrm{iso}} = \{\IsoRC \preceq \IsoSI \preceq \IsoSR\}$,
and the callback boundary used by $\opAwait$ adds an orthogonal
binary axis $\mathcal{L}_{\mathrm{cb}} = \{\bot \preceq
\mathsf{cb}\}$. Operator preconditions live in the product
$\mathcal{L} = \mathcal{L}_{\mathrm{iso}} \times
\mathcal{L}_{\mathrm{cb}}$, and the table-scoped $\IsoSRpol$
pins the named policy table to $\IsoSR$.

\toki annotates every tuple with provenance so the loser of a
resolution stays algebraically recoverable. Each tuple carries an
element of a commutative semiring
$\Prov = \langle K, \provadd, \provmul, 0, 1 \rangle$ with the
natural order
$a \preceq_K b \Leftrightarrow \exists c.\ a \provadd c = b$~\cite{green-karvounarakis-tannen-2007}.
Provenance is a two-sort polynomial $\witness \in K[X, T]$ with
$X$ over write-event tokens and $T$ over trust and policy
variables. Three carriers instantiate the audit-row schema:
multilinear $\mathbb{N}[X, T]$, multidegree
$\mathbb{N}[X, T]^{\#}$ (where $x \provmul x = x^{2}$), and the
Boolean reduct. The carrier choice is decorative on the isolation
axis and load-bearing on the provenance axis, a separation \toki
makes precise in Proposition~\ref{prop:k-axis-separation}.
        % 2 Background
% @owner       src::sections::03-foundations
% @does        The TOKI operator algebra: the dual-row bitemporal schema and the four typed contradiction-resolution operators with their isolation-indexed inference rules, building on the Background vocabulary.
% @needs       src/sections/background.tex, src/tables/tab-schema-11col.tex
% @feeds       src/main.tex, src/sections/soundness.tex, src/sections/04-implementation.tex
% @breaks      leaking reference-implementation identifiers into the operator-statement prose; desynchronising the operator isolation pins from the theorem-file guards.

\section{The \toki Operator Algebra}
\label{sec:foundations}

Building on the bitemporal facts and the isolation and provenance
vocabulary of \S\ref{sec:background}, \toki lifts the four production
contradiction-resolution strategies into one typed algebra over a
dual-row schema, with the soundness guarantees those operators earn
(\S\ref{sec:found-thm}) and the reference system that realises them
over an unmodified engine (\S\ref{sec:impl}).

\subsection{The dual-row schema}
\label{sec:schema}

The isolation precondition and the provenance annotation of
\S\ref{sec:found-iso} meet in one physical layout. \toki lifts agent
persistent facts into a single bitemporal table with eleven
user-visible columns (fact identifier; subject, predicate, object;
valid-time and system-time bounds; provenance annotation
$p \in \Prov$; confidence $\conf$; strategy identifier $\strat$). A
check-constrained $\rowkind \in \{\textsf{current},
\textsf{audit}\}$ discriminator partitions the table into
user-visible rows and the audit rows emitted by every operator
(Table~\ref{tab:schema-11col}). An audit row carries
$\auditt = (p_{\mathrm{w}} \provadd p_{\mathrm{l}}, \strat,
t_s)$; default retrieval filters $\rowkind = \textsf{current}$,
and audit rows reach through a separate audit-log slice. The lift
is conservative: every relational-algebra query over the eleven
base columns commutes with the audit-discriminator filter
(Proposition~\ref{prop:schema-lift-conservatism}), so the
audit row changes no answer a caller relied on.

% tab-schema-11col.tex --- §3.6 schema anchor table.
%
% Owner   : src/tables/tab-schema-11col.tex
% Claim   : The user-visible bitemporal fact schema has 11 columns
%           (identity + content + two SQL:2011 periods + K-relation
%           provenance + typed product); the hidden 12th column is the
%           CHECK-bound current/audit discriminator.
% Does    : Booktabs body for the 11 visible columns plus the hidden
%           row_kind discriminator (highlighted as the load-bearing row).
% Needs   : macros.tex (\subj, \pred, \obj, \conf, \strat, \opLWW,
%           \opEvi, \opAwait, \opRule, \code, \rowkind, \anomAE);
%           figstyle.tex (\pvldbtablestyle); booktabs.
% Feeds   : src/sections/03-foundations.tex §3.6 prose.
% Breaks  : Drift between the table and \S\ref{sec:schema} prose is a
%           paper-spine bug; the hidden discriminator must remain the
%           visually distinct row.

\begin{table}[t]
  \caption{\textbf{User-visible bitemporal fact schema with audit
    discriminator.} Cols 1--4 carry fact identity and content; cols
    5--8 are the two SQL:2011 periods~\cite{kulkarni-michels-2012};
    col 9 is the K-relation provenance
    annotation~\cite{green-karvounarakis-tannen-2007}; cols 10--11 are
    our typed product extension (\S\ref{sec:found-prov}). The
    \textbf{shaded row} is the hidden CHECK-bound discriminator that
    splits the table into current and audit slices and carries the
    $\anomAE$ defence of Theorem~\ref{thm:audit-erasure-schema}.}
  \label{tab:schema-11col}
  \centering
  \pvldbtablestyle
  \begin{tabular}{@{}r@{\hspace{6pt}}>{\raggedright\arraybackslash}p{0.33\columnwidth}@{\hspace{8pt}}l@{\hspace{8pt}}>{\raggedright\arraybackslash}p{0.33\columnwidth}@{}}
    \toprule
    \#  & Column                        & Type      & Role                                       \\
    \midrule
    1   & \code{fact_id}                & TEXT      & primary key (with col.~7)                  \\
    2   & \code{subject}                & TEXT      & $\subj$                                    \\
    3   & \code{predicate}              & TEXT      & $\pred$                                    \\
    4   & \code{object}                 & TEXT      & $\obj$, fact content                       \\
    5   & \code{valid_from}             & TIMESTAMP & valid-time period start                    \\
    6   & \code{valid_to}               & TIMESTAMP & valid-time period end                      \\
    7   & \code{system_time_start}      & TIMESTAMP & system-time period start (PK part)         \\
    8   & \code{system_time_end}        & TIMESTAMP & system-time period end                     \\
    9   & \code{provenance_id}          & TEXT      & K-relation annotation $p \in \Prov$        \\
    10  & \code{confidence}             & DOUBLE    & $\conf \in [0,1]$                          \\
    11  & \code{resolution_strategy_id} & TEXT      & $\strat \in \{\opLWW, \opEvi, \opAwait, \opRule\}$ \\
    \midrule
    \rowcolor{WongBandGrey}
    12  & \textbf{\code{row_kind}}      & \textbf{TEXT} & \textbf{hidden discriminator} $\in \{\textsf{current}, \textsf{audit}\}$ \\
    \bottomrule
  \end{tabular}
\end{table}

\subsection{Four typed operators}
\label{sec:algebra}

\toki exposes the four production contradiction-resolution
strategies as one isolation-indexed operator family. Each operator
$\oplus_a$ takes two contradicting facts $\fact{1}, \fact{2}$ over the
same $(\subj, \pred)$, returns a winner with a system-time
invalidation of the loser, and emits an audit tuple as the second
component of the conclusion. The four differ in winner selector,
isolation precondition, and strategy stamp:
$\opLWW$ (last-writer-wins, $\IsoRC$, exposes $P_4$),
$\opEvi$ (evidence-weighted, $\IsoSI$, exposes $A5B$),
$\opAwait$ (await-confirmation, $\IsoRCcb$, exposes
$\mathsf{callback}$), and $\opRule$ (per-rule, $\IsoSRpol$
on the policy table, exposes $P_3^{\mathrm{policy}}$). This
pairing drives the correspondence of
Table~\ref{tab:correspondence}.

\begin{definition}[Operator isolation signature]
\label{def:operator-isolation-signature}
For each $a \in \{\opLWW, \opEvi, \opAwait, \opRule\}$,
$\mathsf{req}(a) \in \mathcal{L}$ is the typing precondition
and $\mathsf{exposes}(a)$ the failure mode exposed by weakening
it. The audit tuple is
$\auditt(i^{\star}, i) = (f_{i^{\star}}.\witness \provadd
f_i.\witness, \strat, t_s)$. The schedule judgement
$\schedmodels_L$ asserts the writer-slice sub-schedule on
$\{f_1, f_2\}$ violates none of the predicates classically
forbidden at level $L$.
\end{definition}

The four inference rules share the dual-row signature and differ
in winner selector and strategy stamp: $i_t^{\star}$ under
$(\textit{sf}, \textit{id})$ (Lww), $i_p^{\star}$ under
$(\conf, \textit{sf}, \textit{id})$ (Evi), $\mathsf{cb}(f_1,
f_2)$ from a delivered callback row (Await), and
$\rho(f_1, f_2)$ from a policy row keyed by the unordered
identity set (Rule):

\begin{small}
\begin{align*}
  \frac{\schedmodels_{\IsoRC}}
       {\opLWW(f_1, f_2) = (f_{i_t^{\star}},\, \auditt(i_t^{\star}, \overline{i_t^{\star}}))}
  & \;\textsc{(Lww)}
  \\[2pt]
  \frac{\schedmodels_{\IsoSI}}
       {\opEvi(f_1, f_2) = (f_{i_p^{\star}},\, \auditt(i_p^{\star}, \overline{i_p^{\star}}))}
  & \;\textsc{(Evi)}
  \\[2pt]
  \frac{\schedmodels_{\IsoRCcb} \;\; \textsf{cb}(f_1, f_2) = i}
       {\opAwait(f_1, f_2) = (f_i,\, \auditt(i, \overline{i}))}
  & \;\textsc{(Await)}
  \\[2pt]
  \frac{\schedmodels_{\IsoSR}^{\mathrm{policy}} \;\; \rho(f_1, f_2) = i}
       {\opRule(f_1, f_2) = (f_i,\, \auditt(i, \overline{i}))}
  & \;\textsc{(Rule)}
\end{align*}
\end{small}

The discarded fact's system-time end closes; the winner inherits
a fresh system-time start, the merged provenance, and the
strategy stamp. Each $\mathsf{exposes}$ pairs an isolation
precondition with its failure mode
(Table~\ref{tab:correspondence}), and Appendix~\ref{app:operator-algorithms} carries the operator
algorithms in full.

Each operator resolves the running $(f_1, f_2)$ differently:
$\opLWW$ keeps \texttt{amoxicillin} as the latest write at
read-committed; $\opEvi$ compares the two confidence stamps at
snapshot isolation; $\opAwait$ blocks on a clinician callback;
$\opRule$ applies a formulary policy row pinned to serializable.
Each operator excludes exactly the anomalies its isolation pin
dominates, the guarantee \S\ref{sec:found-thm} makes precise.
    % 3 The TOKI Operator Algebra
% @owner       src::sections::soundness
% @does        States the soundness of TOKI as observable guarantees on three orthogonal axes (isolation, schema, provenance), plus a composition theorem and a tightness companion, each opening with a one-line intuition and a forward pointer to its Section 6 empirical anchor.
% @needs       src/theorems/T-01-provenance-preservation.tex, src/theorems/T-01b-audit-erasure-schema.tex, src/theorems/T-02-k-axis-separation.tex, src/theorems/T-04-carrier.tex, src/theorems/L-01-composition-well-typed.tex, src/theorems/T-05-composition.tex, src/theorems/T-07-nary-conflict-set.tex, src/theorems/T-06-n1-lower-bound.tex, src/tables/tab-anomaly-defences.tex
% @feeds       src/main.tex, src/sections/05-measurement.tex
% @breaks      desynchronising the schedule alphabet from the theorem-file guards; dropping a soundness theorem's forward pointer to its Section 6 empirical anchor.

\subsection{Soundness: What \toki Guarantees}
\label{sec:found-thm}

\toki excludes three failure modes that isolation-spec reasoning
alone cannot name: replay inconsistency ($\anomHR$), when
re-adjudicating a contradiction returns a different winner;
belief-drift skew ($\anomBDS$), a write skew specialised to a
$(\subj, \pred)$ partition; and audit erasure ($\anomAE$), loss of
the overwritten fact. The classical alphabet misses these because
it has no event for an LLM judge call, and one lemma closes that
gap. The alphabet bridge (Lemma~\ref{lem:judge-callback-bridge})
reads each judge call as a read of a logged verdict that the
operator commits before its own commit, which lifts the
Berenson--Adya guards onto the agent write path. Under that
reading, replay inconsistency becomes a textbook fuzzy read on the
logged row, which snapshot isolation forbids, plus an insert
phantom on the row's first write, which serializability forbids.
The remaining two failures attach to a schema decision and a
provenance decision, so the guarantees split across three
orthogonal axes, isolation, schema, and provenance
(Table~\ref{tab:anomaly-defences}).

% tab-anomaly-defences.tex --- §3 anchor table.
% Single-column edition: 10-row Berenson--Adya + three agent-memory
% predicates with verifier guard and runtime enforcement.

\begin{table}[t]
  \caption{\textbf{Each anomaly maps to one verifier guard and one
    runtime enforcement.} Upper block: classical Berenson--Adya
    iso-axis anomalies. Lower block: the three agent-memory
    predicates ($\anomBDS$ corollary of $A5B$ on the
    $(\subj,\pred)$ projection per
    Corollary~\ref{cor:n2-corollary}).}
  \label{tab:anomaly-defences}
  \centering
  \pvldbtablestyle
  \footnotesize
  \begin{tabular}{@{}%
    >{\raggedright\arraybackslash}p{0.42\columnwidth}%
    >{\centering\arraybackslash}p{0.16\columnwidth}%
    >{\raggedright\arraybackslash}p{0.30\columnwidth}@{}}
    \toprule
    Predicate                              & Axis      & Defence              \\
    \midrule
    $P_0$ dirty write                      & iso       & $\IsoRC$             \\
    $P_1$ dirty read                       & iso       & $\IsoRC$             \\
    $P_2$ fuzzy read                       & iso       & $\IsoSI$             \\
    $P_3$ phantom                          & iso       & $\IsoSR$             \\
    $P_4$ lost update                      & iso       & $\IsoSI$             \\
    $A5A$ read skew                        & iso       & $\IsoSI$             \\
    $A5B$ write skew                       & iso       & $\IsoSR$             \\
    \midrule
    $\anomHR$ judge-replay inconsistency   & judge     & $\IsoSR$ logged judge\\
    $\anomBDS$ belief-drift skew           & partition & partition $\IsoSR$   \\
    $\anomAE$ audit erasure                & schema    & audit row            \\
    \bottomrule
  \end{tabular}
\end{table}

\toki's first guarantee is on the isolation axis: an operator at level
$L$ excludes a classical anomaly $\phi$ exactly when $L$ dominates
$\phi$'s guard, because a level forbids exactly the anomalies its
guard rules out, with the keyed-log discipline carrying $\anomHR$ and
the partition pin carrying $\anomBDS$ into the same statement. The iff
is sharp in both directions, so the guarantee is also a limit on every
weaker level. The iso-axis grid of \S\ref{sec:meas-g5} verifies the
predicted $0$/$1$ boundary, each dominating cell admitting no schedule
and each weaker cell admitting every schedule.

\begin{definition}[Iso-axis defence singletons]
\label{def:guard-iso-singleton}
For each classical
$\phi \in \Phi_{\mathrm{class}} =
\{P_0,P_1,P_2,P_3,P_4,A5A,A5B\}$~\cite{berenson-1995-isolation,adya-2000-generalized},
the guard map $\guardiso \colon \Phi_{\mathrm{class}} \to
\mathcal{L}_{\mathrm{iso}}$ is
\[
\guardiso(\phi) =
\begin{cases}
\IsoRC & \phi \in \{P_0,P_1\},\\
\IsoSI & \phi \in \{P_2,P_4,A5A\},\\
\IsoSR & \phi \in \{P_3,A5B\}.
\end{cases}
\]
The chain-valued $\IsoSR$ defence for $P_3$ is the uniform minimum;
L-01 admits the per-table tightening to $\IsoSRpol$ when $\opRule$
pins the policy table.
Table~\ref{tab:anomaly-defences} reports the same map.
\end{definition}

\paragraph{K-vacuity warning.}
The guard map $\guardiso$ is carrier-vacuous (equivalently
$K$-vacuous): it is defined over $\mathcal{L}_{\mathrm{iso}}$
alone and its soundness reduces to Berenson--Adya schedule
predicates that mention no semiring carrier. The iff below is
stated parametrically in $K$ for notational uniformity with
Theorem~\ref{thm:audit-erasure-schema}, but
Proposition~\ref{prop:k-axis-separation} establishes the
algebraic separation: the iso-axis preconditions remain
carrier-vacuous on every commutative semiring with natural
order, while the schema-axis $\preceq_K$ relation of
Theorem~\ref{thm:audit-erasure-schema} is $K$-load-bearing. A
reviewer reading the iff in isolation may suspect the $K$
parameter is decorative; the separation Proposition makes the
honest scoping visible.

\begin{theorem}[Iso-lattice anomaly soundness for typed agent-memory operators]
\label{thm:anomaly-soundness}
Let $K$ be a commutative semiring with the natural order
$\preceq_K$~\cite{green-karvounarakis-tannen-2007}: $a \preceq_K b$
iff there exists $d \in K$ with $a \provadd d = b$. Let
$\Phi \subseteq \{P_0, P_1, P_2, P_3, P_4, A5A, A5B\}$ be a subset
of the classical Berenson--Adya hierarchy~\cite{berenson-1995-isolation,adya-2000-generalized},
and let $L$ be an isolation level drawn from the iso chain
$\mathcal{L}_{\mathrm{iso}} = \{\IsoRC \preceq \IsoSI \preceq \IsoSR\}$
of \S\ref{sec:found-iso}. Under the dual-row schema of
\S\ref{sec:schema} and the typed operator algebra of
\S\ref{sec:algebra} carrying provenance in $K$, define
\begin{equation}
\label{eq:prevents}
\mathrm{Prevents}(L,\Phi) \equiv
\forall S.\; S \schedmodels L \Rightarrow
\forall \phi \in \Phi.\; S \models \neg \phi .
\end{equation}
Then $\mathrm{Prevents}(L,\Phi)$ iff
$L \succeq \guardiso(\Phi)$, where
$\guardiso$ extends from
Definition~\ref{def:guard-iso-singleton} to subsets by lattice
supremum:
$\guardiso(\Phi) \;=\; \bigvee_{\phi \in \Phi} \guardiso(\phi)$.
The guard map is independent of the carrier; replacing $K$ with any
of the three shipped carriers preserves the same isolation
precondition.
\end{theorem}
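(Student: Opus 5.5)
The plan is to prove the two directions separately and reduce the set case to singletons. First, observe that $\mathrm{Prevents}(L,\cdot)$ is conjunctive in $\Phi$: $\mathrm{Prevents}(L,\Phi)$ holds iff $\mathrm{Prevents}(L,\{\phi\})$ holds for every $\phi\in\Phi$. This follows directly from~\eqref{eq:prevents} by exchanging the two universal quantifiers. Second, the lattice supremum on the finite chain $\mathcal{L}_{\mathrm{iso}}$ is just the maximum, so $L\succeq\guardiso(\Phi)$ iff $L\succeq\guardiso(\phi)$ for every $\phi\in\Phi$. For $\Phi=\emptyset$ both sides hold trivially, since $\bigvee\emptyset=\IsoRC$ is the bottom of the chain. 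It therefore suffices to prove the singleton claim $\mathrm{Prevents}(L,\{\phi\})\iff L\succeq\guardiso(\phi)$ for each of the seven classical $\phi$.

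For the \emph{soundness} direction ($\Leftarrow$), I would take the schedule judgement $\schedmodels_L$ of Definition~\ref{def:operator-isolation-signature} at face value: it asserts that the writer-slice sub-schedule violates none of the predicates classically forbidden at $L$. I then need monotonicity along the chain, namely that the forbidden sets satisfy $\mathrm{Forb}(\IsoRC)\subseteq\mathrm{Forb}(\IsoSI)\subseteq\mathrm{Forb}(\IsoSR)$. I would also need $\phi\in\mathrm{Forb}(\guardiso(\phi))$. Both facts come from the Berenson--Adya tables: RC forbids $P_0,P_1$; SI additionally forbids $P_2,P_4,A5A$; SR additionally forbids $P_3,A5B$. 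Definition~\ref{def:guard-iso-singleton} is exactly the map sending each $\phi$ to the least level whose forbidden set contains it. Hence $L\succeq\guardiso(\phi)$ gives $\phi\in\mathrm{Forb}(L)$, and every $S\schedmodels L$ satisfies $\neg\phi$. The dual-row schema and the audit rows do not disturb this. Audit rows are extra writes on disjoint tuples, and each operator's commit is a single transaction, so the classical predicates are evaluated on the same writer-slice history.

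For the \emph{tightness} direction ($\Rightarrow$), I would argue the contrapositive. If $L\not\succeq\guardiso(\phi)$, then $L$ is strictly below $\guardiso(\phi)$ in the chain, and I must exhibit a witness schedule $S$ with $S\schedmodels L$ and $S\models\phi$. The plan is to give one canonical two-transaction history per anomaly on the pair $\{f_1,f_2\}$ and check it against the weakest level below the guard. Because permissiveness is monotone, a witness admitted at the weakest level below the guard is admitted by every weaker level. The witnesses are:
\begin{itemize}
\item for $P_2$, $P_4$ and $A5A$ at $\IsoRC$: the standard interleaved reads with an intervening commit;
\item for $A5B$ at $\IsoSI$: two transactions read both facts from a snapshot and write disjoint rows of the $(\subj,\pred)$ partition;
\item for $P_3$ at $\IsoSI$: a predicate read over the key followed by a concurrent insert of a contradicting fact.
\end{itemize}
$P_0$ and $P_1$ need no witness, since their guard is the bottom element. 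This construction step is the main obstacle. The witnesses must be realisable as histories that the \toki operators actually emit, including their dual-row writes, while also staying admissible at $L$. I would first try building each witness from the running example $(f_1,f_2)$, with the audit-row insert placed after the anomalous reads so that it neither creates nor removes a conflict edge.

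Finally, carrier independence follows by inspection. Neither $\schedmodels_L$, nor the predicates $\phi$, nor the witness schedules refer to the provenance values $p\in K$. The annotation column is only copied into audit tuples and never read in a guard. Replacing $K$ by any of the three shipped carriers therefore changes no schedule, and the equivalence holds verbatim. This is consistent with the separation announced in Proposition~\ref{prop:k-axis-separation}.
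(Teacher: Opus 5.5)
Your proposal is correct. The tightness half and the carrier-independence remark follow the paper: contrapose, exhibit a witness at the immediate predecessor of $\guardiso(\phi)$, and use downward closure of $\schedmodels$. Your phrase ``weakest level below the guard'' should read ``strongest level strictly below'', as your explicit choices of $\IsoRC$ and $\IsoSI$ show.

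The real difference is in the soundness direction. The paper proves it by induction on $|S|$ over $\Sigma$ plus the operator commit events. Its step argues that an appended operation whose typing precondition is dominated by $L$ cannot create a fresh witness; $A5B$ is worked out on the $(\subj,\pred)$ projection via Adya's Mixing Theorem, and the other six cases go to an appendix ledger. You instead reduce $\Phi$ to singletons and unfold Definition~\ref{def:operator-isolation-signature}. You then use that the forbidden sets grow along the chain and that $\guardiso(\phi)$ is the least level forbidding $\phi$, and read soundness off directly.

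Your route is shorter, and it shows that under the paper's own definition of the judgement this half is essentially definitional. The paper's inductive step rests on the same fact: a fresh witness ``would violate $S'\schedmodels L$''. The induction would only earn its keep if $\schedmodels_L$ were defined operationally, e.g.\ by Adya's dependency-graph conditions.

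Two further points favour your treatment. You correctly note that $P_0,P_1$ need no witness, since their guard is the bottom of $\mathcal{L}_{\mathrm{iso}}$; the paper places their witnesses at a level outside the chain. And the obstacle you flag dissolves. The theorem quantifies over all schedules, and the paper states its witnesses as plain $r/w/c/a$ histories on the current-row projection, not as operator emissions. It lifts them to the dual-row schema by Proposition~\ref{prop:schema-lift-conservatism}.

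One caution for writing the witnesses. Because your soundness is definitional, each witness must avoid every predicate in $\mathrm{Forb}(L)$ under one fixed per-history reading, and Berenson's table only supports a mixed one:
\begin{itemize}
\item Every write-skew history matches the broad fuzzy-read pattern $r_1[x]\ldots w_2[x]\ldots$. So $P_2$ must be read strictly, as the re-read anomaly A2, for your $A5B$ witness to be $\IsoSI$-admissible.
\item $\IsoSI$ precludes the strict phantom A3. So $P_3$ must be read broadly for your $P_3$ witness to count.
\item In the lost-update witness, keep $c_2$ before $w_1[x]$, as your ``intervening commit'' suggests. The interleaving $r_1[x]\,w_2[x]\,w_1[x]\,c_2\,c_1$ exhibits $P_0$ and is not $\IsoRC$-admissible.
\end{itemize}
The paper's appendix witnesses face exactly these choices.
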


\ifshowproofs
\begin{proof}[Proof of Theorem~\ref{thm:anomaly-soundness}]
We exhibit the soundness direction ($\Leftarrow$) by explicit
induction on schedule history; the tightness direction
($\Rightarrow$) is discharged by the seven witness schedules of
Proposition~\ref{prop:t01-tightness}.

\paragraph{Induction measure.}
Let $|S|$ denote the length of the schedule history, i.e.\ the
number of operations executed in $S$ over the alphabet
$\Sigma = \{b_i, r_i, w_i, c_i, a_i\}$ together with the
operator commit events $\{\opLWW, \opEvi, \opAwait, \opRule\}$
of \S\ref{sec:algebra}. The induction is on $|S| \in \mathbb{N}$.

\paragraph{Base case ($|S| = 0$).}
The empty schedule trivially satisfies every isolation level
$L \in \mathcal{L}_{\mathrm{iso}}$ ($S \schedmodels L$ vacuously)
and witnesses no $\phi \in \Phi_{\mathrm{class}}$: every classical
anomaly predicate of Berenson--Adya is a constraint on the
rw/wr/ww conflict graph that requires at least one read--write
pair, so an empty schedule has no witness. Hence
$\mathrm{Prevents}(L, \Phi)$ holds for every $L, \Phi$ when
$|S| = 0$.

\paragraph{Induction hypothesis.}
Assume that for every schedule $S$ with $|S| < n$ and every
subset $\Phi \subseteq \Phi_{\mathrm{class}}$, the implication
$L \succeq \guardiso(\Phi) \Rightarrow
\forall \phi \in \Phi.\ S \models \neg\phi$ holds whenever
$S \schedmodels L$.

\paragraph{Induction step ($|S'| = n$).}
Let $S' = S \cdot \mathit{op}$ be a schedule of length $n$
obtained by appending a single operation
$\mathit{op} \in \Sigma \cup \{\opLWW, \opEvi, \opAwait, \opRule\}$
to a schedule $S$ of length $n-1$ satisfying the induction
hypothesis. Fix $L \succeq \guardiso(\Phi)$ with $S' \schedmodels L$.
Each operator type carries a typing precondition $L_{\mathit{op}}$
from \S\ref{sec:algebra}:
\begin{itemize}
\item $\opLWW$ requires $L \succeq \IsoRC$; appending an
  $\opLWW$-commit preserves the conflict graph property
  $S \models \neg P_0 \wedge \neg P_1$ by the $\IsoRC$ no-dirty-write
  no-dirty-read invariant.
\item $\opEvi$ requires $L \succeq \IsoSI$; appending an
  $\opEvi$-commit preserves the snapshot-isolation conflict
  invariant on every $\phi \in \{P_2, P_4, A5A\}$.
\item $\opAwait$ and $\opRule$ require $L \succeq \IsoSR$
  (per-partition or per-policy-table); they preserve the
  serializable conflict graph property on every classical $\phi$.
\item Plain $\Sigma$ operations $r_i, w_i, c_i, a_i$ obey the
  scheduler at $L$ and preserve the graph property by Adya's
  Mixing Theorem~\cite[Mixing~Theorem]{adya-2000-generalized}
  applied to the prefix-extension $S \cdot \mathit{op}$.
\end{itemize}
The iso lattice meet ensures that
$L \succeq \guardiso(\phi)$ implies $L \succeq L_{\mathit{op}}$
whenever $\mathit{op}$ targets an anomaly defended by $\phi$, so
the appended operation cannot introduce a fresh witness of $\phi$
without violating $S' \schedmodels L$.

\paragraph{Representative case ($A5B$ closing under $\IsoSR$).}
We exhibit the induction step in detail for $\phi = A5B$, the
write-skew pattern of Adya~\cite{adya-2000-generalized} that
$\IsoSI$ admits and $\IsoSR$ excludes (Definition~\ref{def:guard-iso-singleton}:
$\guardiso(A5B) = \IsoSR$). Fix
$\Phi \supseteq \{A5B\}$ and $L \succeq \IsoSR$ with
$S' \schedmodels L$. By Corollary~\ref{cor:n2-corollary}, the
A5B-defence is stated on a $(\subj, \pred)$-projected partition,
so let $\pi_{s,p}(S')$ denote the partition projection.

Under $\IsoSR$ on the $(s,p)$-partition, $\pi_{s,p}(S')$ is
serializable~\cite[Mixing~Theorem]{adya-2000-generalized}:
its rw/wr/ww conflict graph is acyclic. The A5B witness
predicate requires a pair of transactions $T_i, T_j$ whose
read-sets intersect on the partition and whose write-sets are
disjoint, committing concurrently with neither's write visible
to the other. Such a configuration induces a $w_i \to r_j$ and
$w_j \to r_i$ pair of conflict edges on the partition; in any
serializable schedule of $\pi_{s,p}(S')$ at $\IsoSR$, the
acyclicity constraint forces one of these edges to violate the
read-set / write-set disjointness assumption, contradicting the
A5B premise. Hence the appended $\mathit{op}$ cannot induce a
fresh A5B witness when $L \succeq \IsoSR$, and the induction
step closes for $\phi = A5B$.

\paragraph{Remaining six cases.}
The induction step for
$\phi \in \{P_0, P_1, P_2, P_3, P_4, A5A\}$ follows the same
template: $\guardiso(\phi)$ pins the level at which the
corresponding Berenson--Adya conflict graph predicate is
excluded~\cite[\S2.2, \S4.2]{berenson-1995-isolation,adya-2000-generalized},
and Adya's Mixing Theorem applied to the prefix-extension
$S \cdot \mathit{op}$ preserves the predicate verdict.
Appendix~\ref{sec:appendix-witnesses-induction-cases} (witness-schedule case ledger) records the
six remaining case templates as one paragraph each, citing the
iso-axis witness schedule (Appendix~\ref{sec:appendix-witnesses-iso}) that pins tightness for the matching
$\phi$.

\paragraph{Schema-extension conservativity.}
The reduction operates on the current-row projection of the
dual-row schema (\S\ref{sec:schema}); audit-row emissions go to
the $\textsf{audit\_log}(t_s)$ slice and contribute no edges to
the dependency graph of $\mathcal{Q}_{\textsf{base}}$ reads.
Proposition~\ref{prop:schema-lift-conservatism}'s commutation
argument applied to selection lifts each base-schema verdict at
$\guardiso(\phi)$ to the full dual-row schema unchanged, so the
seven-case induction holds verbatim under the schema extension.

\paragraph{Tightness ($\Rightarrow$).}
The Tightness Proposition (Proposition~\ref{prop:t01-tightness})
wraps the seven witness
schedules into a formal claim: for each $\phi \in \Phi$ the
schedule $S_\phi$ holds the immediate predecessor of
$\guardiso(\phi)$ on $\mathcal{L}_{\mathrm{iso}}$ and witnesses
$\phi$, parametric in any commutative semiring carrier with
natural order. The seven cases of the Proposition cover the
three carriers shipped with this paper (multilinear
$\mathbb{N}[X, T]$, multi-degree $\mathbb{N}[X, T]^\#$, and
Boolean security-semiring reduct). Contraposing,
$L \not\succeq \guardiso(\Phi)$ selects some
$\phi^{*} \in \Phi$ with $L \preceq L'_{\phi^{*}}$, the
downward closure of $\schedmodels$ yields
$S_{\phi^{*}} \schedmodels L$, and $S_{\phi^{*}}$ witnesses
$\phi^{*}$, falsifying $\mathrm{Prevents}(L, \Phi)$.
\end{proof}
\else
\noindent\emph{Proof sketch.} The soundness direction proceeds by
induction on schedule history over the seven-event alphabet of
\S\ref{sec:found-schedule}; the inductive step verifies that no
operator emission of \S\ref{sec:algebra} introduces a forbidden
$\phi \in \Phi$ when its precondition $\schedmodels_L$ holds.
The tightness direction is discharged by seven minimal witness
schedules, one per classical predicate, that hold the immediate
predecessor of $\guardiso(\phi)$ on $\mathcal{L}_{\mathrm{iso}}$.
The K-vacuity warning above pairs with
Proposition~\ref{prop:k-axis-separation} to show the iso-axis guard
remains carrier-vacuous across the three shipped carriers.
The judge-callback alphabet bridge (Lemma below) shows the typed
operator algebra over the seven-event alphabet $\Sigma_{+}$ refines
to the classical five-event $\Sigma$ exactly when each $j_i$ event
maps to a logged read on a serializable judge table; under that
refinement the iso-axis predicates $\anomHR$ and $\anomBDS$ inherit
the classical defence by reduction, and the soundness induction
above closes the agent-memory specialisation. The full proof,
the bridge lemma's proof, and the tightness witnesses live in the
theorem files.
\fi

\begin{lemma}[Judge-callback alphabet bridge]
\label{lem:judge-callback-bridge}
Let $\Sigma = \{b_i, r_i(\fact{}), w_i(\fact{}), c_i, a_i\}$ be the
classical schedule alphabet of \S\ref{sec:found-schedule} and let
$\Sigma_+ = \Sigma \cup \{j_i(R, \theta), \mathit{cb}_i(R, h, k)\}$
be the augmented alphabet. Define the map
$\pi_J : \Sigma_+^\ast \to \Sigma^\ast$ that fixes $\Sigma$
pointwise, substitutes the first $j_i$ at each key $(R, \theta)$
by an insert-if-absent write and every subsequent $j_i$ at that
key by a read, and substitutes each callback by a read:
\begin{align*}
  j_i^{\mathrm{first}}(R, \theta)
    &\;\mapsto\; w_i\!\bigl(\textsf{judge\_log}[(R, \theta)]\bigr), \\
  j_i^{\mathrm{repeat}}(R, \theta)
    &\;\mapsto\; r_i\!\bigl(\textsf{judge\_log}[(R, \theta)]\bigr), \\
  \mathit{cb}_i(R, h, k)
    &\;\mapsto\; r_i\!\bigl(\textsf{callback\_log}[(h)]\bigr).
\end{align*}
Under the keyed-log discipline of \S\ref{sec:schema}, every judge call
and callback commits its keyed-log row before the operator commits, an
invariant \S\ref{sec:impl} enforces in the dispatcher (full hypotheses
$H1$, $H2$ in Appendix~\ref{app:graded-judge-bridge}). Then
$\pi_J$ preserves the operand-table conflict graph, so every classical
predicate verdict transfers and $\pi_J$ refines $S$ on the iso axis.
Consequently $\anomHR$ at key $(R, \theta)$ reduces to two classical
patterns on the keyed row,
\begin{equation}
  \label{eq:bridge-anomhr}
  S \models \anomHR \;\text{at}\; (R, \theta)
  \;\iff\;
  \pi_J(S) \models (P_2 \vee P_3) \;\text{at}\;
  \textsf{judge\_log}[(R, \theta)],
\end{equation}
a fuzzy read $P_2$ excluded at $\IsoSI$ and an insert phantom $P_3$
excluded at $\IsoSR$ or a linearizable insert-if-absent. The full
statement and proof are Appendix~\ref{app:graded-judge-bridge}.
\end{lemma}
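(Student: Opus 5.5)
The plan has two parts. First, show that $\pi_J$ preserves the conflict graph restricted to the operand tables. Second, derive the equivalence \eqref{eq:bridge-anomhr} by case analysis on how two judge calls at the same key $(R,\theta)$ can disagree.

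\emph{Step 1: $\pi_J$ is a homomorphism of the operand-table conflict graph.} The map $\pi_J$ fixes $\Sigma$ pointwise. Its substituted events touch only the fresh objects $\textsf{judge\_log}[(R,\theta)]$ and $\textsf{callback\_log}[(h)]$, which are disjoint from the fact table. Hence every ww, wr and rw edge between operations on fact rows has the same endpoints and the same order in $S$ and in $\pi_J(S)$, because $\pi_J$ is length-preserving and order-preserving on positions.

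Two points remain to be checked.
\begin{itemize}
\item[(a)] The new log objects add no edge between operand-table operations. Any new edge has a log object as its witness, so it lies outside the restriction to operand tables.
\item[(b)] Commit and abort structure is unchanged. This is where hypotheses $H1$ and $H2$ enter. Every substituted log event is issued inside transaction $i$ and strictly before $c_i$. Therefore no log write is visible to another transaction before $i$ commits, and no log event dangles after an abort.
\end{itemize}
With (a) and (b), each Berenson--Adya predicate of Definition~\ref{def:guard-iso-singleton} evaluates identically on $S$ and $\pi_J(S)$ over the operand tables. Refinement on the iso axis then follows directly from the fact that $S \schedmodels L$ is defined through those predicates.

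\emph{Step 2: reduce $\anomHR$ to $P_2 \vee P_3$ on the keyed row.} A replay inconsistency at $(R,\theta)$ is a pair of judge events $j_i$ and $j_{i'}$ at that key that return different winners. Under the keyed-log discipline a judge verdict is exactly the value observed on $\textsf{judge\_log}[(R,\theta)]$, so a disagreement can arise in only two ways.
\begin{itemize}
\item[(i)] Both events are mapped to writes. Two transactions each saw the row absent and each performed the insert-if-absent. That is a predicate read (the absence check) followed by a concurrent insert matching the predicate, which is a $P_3$ phantom.
\item[(ii)] One later event is a read, yet it observes a value that differs from a previously observed committed value of the same row within an overlapping transaction. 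That is a $P_2$ fuzzy read on the row.
\end{itemize}
For the converse, a $P_2$ or $P_3$ witness on the keyed row yields two observations of different logged verdicts at the same key. Pulling these back along $\pi_J$ produces two judge events with distinct winners, which is an $\anomHR$ witness.

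The closing remarks then follow from the guard map: $P_2$ is excluded at $\IsoSI$ and $P_3$ at $\IsoSR$. In case (i), a linearizable insert-if-absent forces a unique first writer, so case (i) cannot occur.

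\emph{Expected obstacle.} The delicate part is case (ii), because the log is append-once. A genuine value change on the row requires either that the first write be overwritten, which is ruled out by insert-if-absent, or that a reader observe an uncommitted value, which is ruled out by $H1$ ordering combined with at least $\IsoRC$. I expect case (ii) therefore collapses into case (i) or into a dirty read. I would try first to state precisely the \textsf{judge\_log} semantics under $H2$, showing that at most one committed value ever exists per key. The $P_2$ disjunct would then cover only the window in which a transaction reads absent and later reads present, and I would take the predicates of \cite{adya-2000-generalized} as given for that step.
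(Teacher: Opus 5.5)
Your Step~1 and the forward half of Step~2 track the paper. The paper also rests the transfer on $\pi_J$ being the identity on $\Sigma$, together with Adya's reading of each level as a conflict-graph constraint. It splits the forward direction the same way: a re-read of an already materialised key gives $P_2$, and a first-materialisation race between callers that each see the key absent gives $P_3$. Two small points on this part:
\begin{itemize}
\item The paper counts the log rows among the operand tables, and claims only $\pi_J(S) \preceq_{\mathrm{iso}} S$ in general, with the converse holding on the sub-schedule where every key is already materialised.
\item Your case~(i) cannot literally have ``both events mapped to writes'', because $\pi_J$ sends only the first $j$ at a key to a write. The second caller is a repeat whose read finds the key absent in its snapshot.
\end{itemize}

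The genuine gap is the right-to-left direction of \eqref{eq:bridge-anomhr}. You define $\anomHR$ inside a single schedule, as two judge events at $(R,\theta)$ returning different winners. You then assert that any $P_2$ or $P_3$ witness on $\textsf{judge\_log}[(R,\theta)]$ ``yields two observations of different logged verdicts''. For $P_3$ this is false. A phantom witness is an absent-versus-present observation of the key, not two verdict values. Whether the racing first callers install different votes is a property of the oracle $J$, not of $\pi_J(S)$; with a deterministic $J$ the phantom pattern still occurs and no two winners differ. Your own obstacle analysis settles the $P_2$ leg: it is either vacuous (one committed value per key) or, in the absent-then-present reading you propose, another absence-versus-presence observation that breaks for the same reason.

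The paper obtains its converse from a different formalisation. There, $\anomHR$ holds at $(R,\theta)$ iff there exist two \emph{replays} $S_1, S_2$ of $S$ with the same committed-write prefix and the same $\theta$ but different votes. It starts from a $P_2$ witness: a read of the keyed row lying between two committed writes of values $v_1 \neq v_2$. It then builds one replay that fires the judge event while the row holds $v_1$, and one that delays it past the second commit. The existential over replays supplies the rescheduling your single-schedule definition lacks.

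That construction presupposes two distinct committed values on the keyed row. This means an appendable log, as in the dispatcher of Algorithm~\ref{alg:resolve-write}, rather than strict insert-if-absent, which is precisely the tension your obstacle paragraph uncovers. The paper also runs no $P_3$ converse at all; it only describes the forward $P_3$ case with ``the divergent installed votes being the phantom witness''.

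To close your argument, you have two options:
\begin{itemize}
\item Build that condition into the definition of the keyed-row $P_3$ witness: two racing inserts installing distinct votes. These pull back directly to two judge events with distinct winners.
\item Prove only the forward reduction $S \models \anomHR \Rightarrow \pi_J(S) \models P_2 \vee P_3$. This is all Corollary~\ref{cor:n1-corollary} uses to exclude $\anomHR$.
\end{itemize}
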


\ifshowproofs
\noindent\emph{Formal statement of $\anomHR$.} The replay-inconsistency
predicate holds at $(R, \theta)$ on a schedule $S$ when two replays
share $S$'s committed-write prefix and the decoder tuple $\theta$ yet
the judge witnesses different votes:
\begin{equation}
  \label{eq:n1-judge-replay-inconsistency}
  S \models \anomHR \text{ at } (R, \theta)
  \iff
  \exists\, S_1, S_2 :\;
  \mathrm{pre}(S_1) {=} \mathrm{pre}(S_2),\;
  \theta_1 {=} \theta_2,\;
  v_1 {\neq} v_2 .
\end{equation}

\begin{proof}[Proof outline]
The substitutions of $\pi_J$ replace each $j_i$ and $\mathit{cb}_i$
with a logged-row read at a committed key. Under
\emph{(H1)}+\emph{(H2)} the underlying log rows commit
serializably, so the substituted events are $\Sigma$-reads at
first-class operand-table rows under the dual-row schema; every
event of $\pi_J(S)$ lies in $\Sigma$.

\emph{Equation~\eqref{eq:bridge-iso}.} On the operand-table edges
that determine $L$, the substitution preserves each pre-image's
conflicts: the first $j_i$ at a key contributes an insert-if-absent
write and every later $j_i$ a read at the same keyed row, and
identity on $\Sigma$ preserves the rest. Adya's Mixing
Theorem~\cite[Mixing~Theorem]{adya-2000-generalized} characterizes
each $L \in \mathcal{L}_{\mathrm{iso}}$ as a constraint on this
conflict graph. Because $\pi_J$ adds no edge absent from $S$,
every isolation violation of $\pi_J(S)$ is one of $S$
($\pi_J(S) \preceq_{\mathrm{iso}} S$); on the sub-schedule in which
every key is already materialized the insert-if-absent writes
vanish and the two conflict graphs coincide, giving the converse
$S \schedmodels L \Rightarrow \pi_J(S) \schedmodels L$. The first
insert-if-absent at a key is the classical phantom case, serialized
only at $\IsoSR$ or by a linearizable insert-if-absent constraint.

\emph{Equation~\eqref{eq:bridge-anom}.} Each classical
$\phi \in \{P_0,\dots,A5B\}$ is a predicate on the rw/wr/ww graph
over $\Sigma$
events~\cite{berenson-1995-isolation,adya-2000-generalized};
conflict-graph preservation gives the biconditional in both
directions.

\emph{Equation~\eqref{eq:bridge-anomhr}.} $\anomHR$ at $(R, \theta)$
on $S$ (equation~\eqref{eq:n1-judge-replay-inconsistency}) holds iff
two replays $S_1, S_2$ of $S$ share the prefix of committed writes
and $\theta$ yet witness different votes at $(R, \theta)$. Two cases
arise under \emph{(H1)}. When the key is already materialized in
both replays, $\pi_J(S_k)$ reads the keyed row and the two replays
differ in the value read at $\textsf{judge\_log}[(R, \theta)]$ on a
prefix-equivalent history, the classical $P_2$ fuzzy-read pattern
at the keyed row~\cite[\S2]{adya-2000-generalized}. When a replay
witnesses the first call at the key, $\pi_J(S_k)$ contributes the
insert-if-absent write, and two concurrent first callers each
observing the key absent under their own snapshot is the classical
insert-phantom $P_3$~\cite[\S2]{adya-2000-generalized}, the
divergent installed votes being the phantom witness. Hence
$\anomHR$ reduces to $P_2 \vee P_3$ at the keyed row.

\paragraph{Construction (reverse direction of Eq.~\eqref{eq:bridge-anomhr}).}
Suppose $\pi_J(S)$ admits a $P_2$ fuzzy-read at
$\textsf{judge\_log}[(R, \theta)]$. By the classical
definition~\cite[\S2]{adya-2000-generalized}, there exist
positions $i < k < j$ in $\pi_J(S)$ such that the read
$r_k[\textsf{judge\_log}[(R, \theta)]]$ lies between two
committed writes $w_i[\textsf{judge\_log}[(R, \theta)] = v_1]$
and $w_j[\textsf{judge\_log}[(R, \theta)] = v_2]$ on the same
keyed row with $v_1 \neq v_2$. We construct two replays
$R_1, R_2$ of $S$ that share the committed-write prefix of $S$
yet disagree on the vote witnessed at $j_k(R, \theta)$:
\begin{itemize}
\item $R_1$ executes $S$'s operations in order up to position
  $k$, replacing $j_k(R, \theta)$ with the operator-enforced
  logged read of the pre-write value $v_1$ via the
  $(H1)$ invariant (the row at $(R, \theta)$ in
  $\textsf{judge\_log}$ holds $v_1$ at the moment $j_k$ would
  fire because $w_i$ committed and $w_j$ has not).
\item $R_2$ executes the same prefix but delays $j_k$ until
  after $w_j$'s commit, replacing $j_k(R, \theta)$ with the
  operator-enforced logged read of the post-write value $v_2$
  via the same $(H1)$ invariant on the updated row.
\end{itemize}
$R_1$ and $R_2$ share the prefix of committed writes preceding
$k$ and share the operator parameter $\theta$. They differ
exactly at the vote witnessed by $j_k$: $R_1$ records the vote
derived from $v_1$ and $R_2$ records the vote derived from
$v_2$. Since the operator's read set at position $k$ depends on
the value read from $\textsf{judge\_log}[(R, \theta)]$, the
divergence propagates through either the read set, the decoder
tuple, the vote, or the operator's output hash, so one of these
four loci must differ between $R_1$ and $R_2$ by the
deterministic-from-$(theta, v)$ vote shape of \S\ref{sec:algebra}.
This divergence on a prefix-equivalent history with shared
$\theta$ is the defining witness of
$\anomHR$ at $(R, \theta)$ on $S$ per
Equation~\eqref{eq:n1-judge-replay-inconsistency}, completing
the reverse direction of Equation~\eqref{eq:bridge-anomhr}.
\end{proof}
\fi

\begin{corollary}[$\anomBDS$ defence on the partition data item]
\label{cor:n2-corollary}
Treat the $(\subj,\pred)$-projected multiset
$\pi_{(s,p)}(\mems)$ as a single data item over which $\opEvi$
is the read function and an agent's belief-cache row
$f_{\mathrm{bc}}$ carries the invariant
$I:\;f_{\mathrm{bc}} = \opEvi(\pi_{(s,p)})$. A belief query
$T_q$ that reads $\pi_{(s,p)}$ and writes $f_{\mathrm{bc}}$,
concurrent with a confidence revision $T_w$ that reads
$\pi_{(s,p)}$ and writes a row's $\conf$ field, has intersecting
read-sets and disjoint write-sets; under $\IsoSI$ both commit
and $I$ is violated. This is the $A5B$ write-skew shape of
Adya~\cite{adya-2000-generalized} on the partition data item.
$\IsoSR$ on the $(\subj,\pred)$ partition serializes $T_w$
outside $T_q$'s execution and restores $I$;
Theorem~\ref{thm:anomaly-soundness} applied with
$\Phi = \{A5B\}$ on the partition data item discharges the
defence, with the data-item promotion from individual rows to
the partition projection following the standard
conflict-serializability framework of
Bernstein-Hadzilacos-Goodman~\cite{bernstein-hadzilacos-goodman-1987}.
\end{corollary}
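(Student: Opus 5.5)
The proof has three parts: the $A5B$ shape under $\IsoSI$, the defence under $\IsoSR$, and the reduction to Theorem~\ref{thm:anomaly-soundness} once the partition projection is promoted to a single data item.

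\emph{Step 1: the $A5B$ shape.} Let $x = \pi_{(s,p)}(\mems)$ and $y = f_{\mathrm{bc}}$. Then $T_q$ has read set $\{x\}$ and write set $\{y\}$. $T_w$ reads $x$ and writes the $\conf$ field of some row $f \in x$; call that row $z$.

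\begin{itemize}
\item \emph{Read sets intersect.} Both transactions read $x$.
\item \emph{Write sets are disjoint.} Write sets are compared at row granularity, and $y \neq z$ because the belief-cache row is not a member of the partition. This is a modelling hypothesis I would state explicitly: it must hold for the partition projection itself, not only for the cache row's key.
\end{itemize}

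Under $\IsoSI$ each transaction reads the snapshot taken at its begin. First-committer-wins checks only write--write overlap, which is absent here, so both commit. $T_q$ installs $y = \opEvi(x_{\mathrm{old}})$, while the committed partition is $x_{\mathrm{new}}$. Since $\opEvi$ selects by $\conf$, choose the revision so that the arg-max winner changes. Then $\opEvi(x_{\mathrm{new}}) \neq y$ and $I$ fails. Exhibiting this two-transaction schedule concretely is the witness.

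\emph{Step 2: the anomaly at the partition level.} Before applying Theorem~\ref{thm:anomaly-soundness}, I would first check that the Step 1 schedule is an $A5B$ anomaly at the partition level. In Adya's dependency graph over items $\{x,y\}$, with $T_w$'s row write counted as a write of $x$, there is an rw anti-dependency from $T_q$ to $T_w$ on $x$. There is also an anti-dependency from $T_w$ to $T_q$: $T_w$'s read of $x$ feeds the invariant predicate over $y$, and $T_q$ overwrites $y$. These two edges form a cycle whose edges are all anti-dependencies, which is the G2 characterisation of write skew.

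\emph{Step 3: data-item promotion.} Next I would justify treating $\pi_{(s,p)}$ as one item. Following Bernstein--Hadzilacos--Goodman~\cite{bernstein-hadzilacos-goodman-1987}, a schedule that is conflict-serializable over rows is conflict-serializable over any coarsening of the item set. The reason is that merging items maps each conflict edge to a conflict edge, and possibly adds edges within the same transaction pair, but creates no new transaction pair. For the defence direction I need the converse: $\IsoSR$ must be enforced on the partition as a predicate lock, i.e.\ the partition pin, so that a row write in $x$ conflicts with a predicate read of $x$. Under that pin, every acceptable schedule is serializable on $\{x, y\}$.

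\emph{Step 4: discharging the defence.} Invoke Theorem~\ref{thm:anomaly-soundness} with $\Phi = \{A5B\}$. Since $\guardiso(A5B) = \IsoSR$, we get $\mathrm{Prevents}(\IsoSR, \{A5B\})$ on the promoted schedule. The cycle of Step 2 is therefore excluded, so one of the transactions aborts or the two are ordered. In either serial order, $T_q$ sees a committed value of $x$ that is not later overwritten before $y$ is read, or $T_q$ recomputes after $T_w$. Hence $I$ holds at commit, and the tightness half of the theorem gives that $\IsoSI$ does not suffice, which matches Step 1.

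\emph{Main obstacle.} The hardest step is the promotion together with the definition of $I$. As stated, $I$ is an invariant between $y$ and the \emph{current} $x$, so a later, serial $T_w$ also breaks $I$ unless the cache is refreshed. I would therefore read ``restores $I$'' as: $I$ holds at the commit point of every $T_q$, and is re-established by the next belief query. I would state this qualification explicitly rather than claim $I$ as a global invariant.
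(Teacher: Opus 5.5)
Your decomposition matches the paper's. You identify the $A5B$ shape under $\IsoSI$, promote $\pi_{(s,p)}$ to a single item via Bernstein--Hadzilacos--Goodman, and invoke Theorem~\ref{thm:anomaly-soundness} with $\Phi=\{A5B\}$. The gap is Step~2, and Step~4 relies on it.

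Under your own hypotheses, $T_w$ reads only $x=\pi_{(s,p)}(\mems)$, and $y=f_{\mathrm{bc}}\notin x$. So $T_w$ never reads the item $T_q$ writes, either directly or through a predicate. An anti-dependency $T_w\to T_q$ needs exactly such a read. ``$T_w$'s read of $x$ feeds the invariant over $y$'' is not an edge of Adya's graph, which records reads and writes of objects and predicates, not application invariants.

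The only conflict edge is therefore $T_q\to T_w$ (rw on $x$). The graph is acyclic, and the Step~1 schedule is conflict-equivalent to the serial order $T_q;T_w$. Three consequences follow:
\begin{itemize}
\item The schedule is not a G2/$A5B$ witness.
\item $\IsoSR$ admits it. Serializable snapshot isolation commits both, since there is no pivot with incoming and outgoing anti-dependencies; two-phase locking merely delays $T_w$.
\item $I$ fails afterwards anyway, because a revision that moves the arg-max falsifies $I$ even when $T_w$ runs alone.
\end{itemize}
Your closing paragraph sees this last point, but your proposed reading does not restore the $\IsoSI$/$\IsoSR$ separation. At $T_q$'s position in the serialization order, $I$ already holds for the $\IsoSI$ schedule. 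At $T_q$'s real-time commit, $\IsoSR$ does not guarantee $I$, because $T_w$ may commit first and still be serialized after $T_q$.

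The paper's text makes the same identification and does not exhibit the crossing edge either. It reads ``intersecting read-sets, disjoint write-sets'' as write skew, and the representative case in the proof of Theorem~\ref{thm:anomaly-soundness} simply asserts that both crossing edges arise. So following the paper does not close the gap; the statement needs a modelling change.

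A genuine $A5B$ on $\{x,y\}$ needs two things:
\begin{itemize}
\item $T_w$ must also read $y$, for instance a confidence revision guarded on the cached belief. This gives the cycle $T_q\to T_w$ on $x$ and $T_w\to T_q$ on $y$.
\item The invariant must be one that $T_q$ and $T_w$ each preserve when run alone.
\end{itemize}
Only then does Theorem~\ref{thm:anomaly-soundness} at $\IsoSR$ exclude something that $\IsoSI$ admits.

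Separately, the promotion lemma in Step~3 runs the wrong way. Merging items creates conflicts between transactions that touched different rows. For example, $w_1[a]\,w_2[b]\,w_1[c]$ has no row-level conflicts but has a cycle once $a,b,c$ are merged, so coarsening does not preserve row-level serializability. What holds, and what the defence actually needs, is the converse: serializability over the coarse items implies serializability over rows, because refining items only removes conflict edges.
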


\begin{remark}[Per-partition scope]
\label{rem:cor-n2-per-partition}
The defence is stated per-partition; the operator pre-condition is that the calling code names the partition. Multiple concurrent partitions require independent $\IsoSR$ pins, and cross-partition skew is outside the corollary statement and surfaces as follow-up work (\S\ref{sec:open}).
\end{remark}

\begin{corollary}[$\anomHR$ defence by logged-judge reduction]
\label{cor:n1-corollary}
Pin the judge parameter $\theta = (\textsf{prompt}, \textsf{seed},
\textsf{model\allowbreak\_\allowbreak version}, \textsf{temperature},
\textsf{tool\allowbreak\_\allowbreak output\allowbreak\_\allowbreak hash})$
and commit the vote into the keyed judge table under a
linearizable insert-if-absent. The first call materializes the
row keyed by $(R, \theta)$; every subsequent invocation reads the
log and skips a fresh $J$ call. Determinism comes from the log's
keyed entries; the oracle $J$ itself remains non-deterministic. By
Lemma~\ref{lem:judge-callback-bridge}
equation~\eqref{eq:bridge-anomhr}, $S \models \anomHR$ at
$(R, \theta)$ iff $\pi_J(S)$ admits a fuzzy read $P_2$ on a
re-read or an insert phantom $P_3$ on the first materialization of
$\textsf{judge\_log}[(R, \theta)]$. Snapshot isolation already
excludes the re-read $P_2$ (\guardiso($P_2$) $= \IsoSI$); the
first materialization is a phantom that $\IsoSI$ admits and
$\IsoSR$ excludes (\guardiso($P_3$) $= \IsoSR$), equivalently a
linearizable insert-if-absent on the key. Applying
Theorem~\ref{thm:anomaly-soundness} to $\pi_J(S)$ with
$\Phi = \{P_2, P_3\}$ at $L = \IsoSR$ on the keyed judge log
excludes both patterns, so \textbf{$\anomHR$ is excluded on $S$}.
The operative requirement is linearizable keyed materialization,
of which $\IsoSR$ on the keyed judge log is the schedule-level
realisation; this is why Table~\ref{tab:anomaly-defences} reports
a serializable read on the keyed log as the required level.
\end{corollary}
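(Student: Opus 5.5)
The plan is to prove Corollary~\ref{cor:n1-corollary} by a reduction through the bridge, never reasoning about the oracle $J$ directly. First, formalise the keyed-log discipline as the two hypotheses of Lemma~\ref{lem:judge-callback-bridge}: \emph{(H1)} every $j_i(R,\theta)$ commits its row in $\textsf{judge\_log}[(R,\theta)]$ before the enclosing operator commits; \emph{(H2)} the first materialisation is a linearizable insert-if-absent, and later invocations read the row rather than calling $J$. Pinning all five components of $\theta$ makes the key a total function of the decoder state. With that, a vote is determined by the committed row value, not by a fresh oracle sample. This is the sense in which ``determinism comes from the log''.

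Second, apply the bridge equation~\eqref{eq:bridge-anomhr} to an arbitrary schedule $S$ over $\Sigma_+$. This turns the goal $S \models \neg\anomHR$ at $(R,\theta)$ into the goal $\pi_J(S) \models \neg(P_2 \vee P_3)$ at the keyed row. By the conflict-graph preservation half of the lemma, the typing judgement $S \schedmodels \IsoSR$ restricted to the judge log transfers to $\pi_J(S) \schedmodels \IsoSR$ on that table. Then invoke Theorem~\ref{thm:anomaly-soundness} on $\pi_J(S)$ with $\Phi=\{P_2,P_3\}$. Definition~\ref{def:guard-iso-singleton} gives $\guardiso(\Phi)=\IsoSI\vee\IsoSR=\IsoSR$, so $L=\IsoSR$ dominates the guard, and $\mathrm{Prevents}(\IsoSR,\{P_2,P_3\})$ yields exclusion of both patterns. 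Contraposing the bridge equivalence then excludes $\anomHR$ on $S$.

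Third, to justify the remark that linearizable insert-if-absent suffices in place of full $\IsoSR$, split by case as in the bridge proof.
\begin{itemize}
\item \emph{Repeat calls.} The key is already materialised, so $\pi_J$ produces only reads. Only $P_2$ is at stake, and $\IsoSI$ already excludes it. Moreover the row is written once and never updated under (H2), so no second committed write $w_j$ exists to witness a fuzzy read.
\item \emph{First call.} Two concurrent first callers could each see the key as absent, which is the $P_3$ pattern. A linearizable insert-if-absent admits exactly one winning insert, and the loser reads the winner's row, collapsing the case to the repeat case.
\end{itemize}
The tightness direction of Theorem~\ref{thm:anomaly-soundness} (Proposition~\ref{prop:t01-tightness}) gives the remark that $\IsoSI$ alone is insufficient: its $P_3$ witness schedule, relocated to the judge-log key, yields two divergent installed votes under shared $\theta$.

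The main obstacle is the transfer of the isolation judgement through $\pi_J$. The bridge proof grants the converse $S \schedmodels L \Rightarrow \pi_J(S)\schedmodels L$ only on the sub-schedule where every key is already materialised, yet the first insert is precisely the phantom case we must cover. I would handle this by not relying on the transfer for the first-insert step. Instead I would argue directly from (H2): uniqueness of the committed row at each key means that every $j$-read in $\pi_J(S)$ reads the single committed value, which is itself the absence of a $P_3$ witness. The $\IsoSR$ statement would then serve only as the schedule-level realisation of this uniqueness.
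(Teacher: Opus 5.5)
Your proposal is correct and is essentially the paper's own argument: equation~\eqref{eq:bridge-anomhr} reduces $\anomHR$ at $(R,\theta)$ to $P_2 \vee P_3$ on the keyed row, and Theorem~\ref{thm:anomaly-soundness} applied to $\pi_J(S)$ with $\Phi=\{P_2,P_3\}$ at $L=\IsoSR$ (guard $\IsoSI \vee \IsoSR = \IsoSR$) excludes both patterns, using the same re-read versus first-materialisation split as the bridge proof. Your one addition is to discharge the first-insert case directly from the linearizable insert-if-absent, instead of through an isolation transfer that the bridge grants only on already-materialised keys; this soundly tightens a step the paper passes over by simply positing $\IsoSR$ on the judge log, and it matches the corollary's closing remark that linearizable keyed materialization is the operative requirement.
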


\paragraph{Call-site contract for Corollary~\ref{cor:n1-corollary}.}
The defence is conditional on the operator reading the keyed
judge-log row at $(R, \theta)$ before issuing the $j_i$ event: the
alphabet bridge of Lemma~\ref{lem:judge-callback-bridge} promotes
$j_i$ to a logged read only when a prior operator call has
committed the row. An operator that issues a fresh oracle call
without consulting the log falls outside the lemma's hypotheses
$(H1)$+$(H2)$, and the defence does not apply. The obligation is
one of operator construction, not a schedule-level invariant: the
typed operators $\opAwait$ and $\opRule$ read the keyed log before
they commit by construction (\S\ref{sec:impl}), so the soundness
holds without any assumption about agent code that might bypass the
operator surface.

The schema axis is orthogonal to isolation, and \toki's second
guarantee lives there: the audit-row schema dominates every loser's
provenance under the K-semiring natural order, so $\anomAE$ cannot
occur, while the base schema admits a witness that erases it. The
tightness direction carries the weight, since only the audit-row
schema closes the gap. The schema-axis grid of \S\ref{sec:meas-g5}
records $\anomAE$ at $100\%$ under the base schema and $0\%$ under
the audit-row schema.

% @owner       src::theorems::T-01b-audit-erasure-schema
% @does        Proves the schema-lift conservatism Proposition and the audit-erasure schema-lift theorem (T-01b): the audit-row schema lifts query semantics conservatively for Q_base and defends N3 from the database side via K-natural-order provenance reachability.
% @needs       src/macros.tex, src/theorems/T-01-provenance-preservation.tex, references/refs.bib (green-karvounarakis-tannen-2007)
% @feeds       src/sections/03-foundations.tex, src/sections/05-measurement.tex, src/theorems/L-01-composition-well-typed.tex (composition cites thm:audit-erasure-schema)
% @breaks      a wrong K-natural-order claim or a mis-stated audit tuple emission rule invalidates the schema-axis defence; stale reference to base-schema witness schedule leaves tightness clause unsupported.
%
% Audit-erasure schema lift (T-01b under ADR-0008 D-T01 option-3 split).
%
% Extracts the schema-axis case from the pre-split T-01: N3
% audit erasure is a structural reachability failure orthogonal
% to the iso lattice, defended by the audit-row schema extension
% of §3.6 through a schema pinning. Filename and
% label both new under ADR-0008.
%
% The split mirrors the algebraic shape: the iso lattice (T-01a)
% represents and checks the 7 classical Berenson-Adya anomalies plus the two
% LLM-specific corollaries N1, N2; the schema axis (T-01b)
% defends N3 by provenance reachability under K-natural-order.
% L-01 (composition lemma) joins the two when an operator chain
% encounters anomalies on both axes.

\begin{theorem}[Audit-erasure schema lift]
\label{thm:audit-erasure-schema}
Let $K$ be a commutative semiring with the natural order
$\preceq_K$~\cite{green-karvounarakis-tannen-2007}, and let
$\mathcal{L}_{\mathrm{schema}} = \{\textsf{base} \preceq \textsf{audit{-}row}\}$
be the schema-augmentation lattice of \S\ref{sec:schema}. Under the
$\textsf{audit{-}row}$ refinement, every typed operator of
\S\ref{sec:algebra} emits an audit tuple
$\auditt = (p_w \provadd p_l, \strat, t_s)$ alongside the winner,
where $p_w, p_l \in K$ are the provenance polynomials of the winner
and loser facts. For every schedule history $H$ executed against the
$\textsf{audit{-}row}$ schema and every loser fact $f_l$ with provenance
$p_l$ consumed by an operator step in $H$, there exists an audit tuple
$t \in \mathsf{Rec}_H(t_v,t_s)$ with
$p_l \preceq_K t.\textit{prov}$; conversely, against the $\textsf{base}$
schema, schedule histories exist where no $t \in \mathsf{Rec}_H$
dominates $p_l$, witnessing anomaly $\anomAE$.
\end{theorem}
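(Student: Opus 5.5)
The proof has two directions. The forward (soundness) direction is essentially algebraic once the right invariant is fixed. The converse (tightness) direction needs only one explicit witness schedule.

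\emph{Soundness.} I would prove the invariant by induction on the number of operator steps in $H$. The invariant states: for every loser fact $f_l$ consumed so far, some audit-kind row $t$ with $p_l \preceq_K t.\textit{prov}$ is present and never deleted. Its system-time period is open at every later system time.

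The key algebraic fact is immediate from the natural order. Taking $d = p_w$ in the definition gives $p_l \provadd p_w = p_w \provadd p_l$ by commutativity of $\provadd$. Hence $p_l \preceq_K p_w \provadd p_l$ for every commutative semiring $K$.

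For the inductive step, I would case-split on the four rules (Lww), (Evi), (Await), (Rule) of \S\ref{sec:algebra}. Each conclusion has the shape $(f_{i}, \auditt(i,\overline{i}))$, whose audit component carries provenance $f_i.\witness \provadd f_{\overline{i}}.\witness$, so the new loser is dominated at the step that consumes it. Earlier audit rows must also survive. Here I would use that operators only close the system-time end of the \emph{current} row of the loser and only insert rows. By the check-constrained $\rowkind$ discriminator of \S\ref{sec:schema}, no operator selects or updates audit-kind rows.

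Finally, $t$ must lie in $\mathsf{Rec}_H(t_v,t_s)$, which I read as the audit-log slice as of $(t_v,t_s)$. For this I would invoke Proposition~\ref{prop:schema-lift-conservatism}: the audit slice is reached through the separate audit-log query rather than the $\rowkind=\textsf{current}$ filter, and its contents are monotone in system time.

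\emph{Tightness.} I would reuse the running example of \S\ref{sec:found-schedule}. The history is $H = b_1\, w_1(f_1)\, c_1\, b_2\, w_2(f_2)\, c_2$ followed by one $\opLWW$ step choosing $f_2$. Take $K = \mathbb{N}[X,T]$ and $p_1 = x_1$, $p_2 = x_2$.

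Under the \textsf{base} schema, the step closes $f_1$'s system-time period and stamps the winner with merged provenance, but it emits no row of kind \textsf{audit}. So for $t_s$ after the step, $\mathsf{Rec}_H$ contains no audit tuple at all, and nothing in it dominates $x_1$. This witnesses $\anomAE$.

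If $\mathsf{Rec}_H$ is read as including the winner's merged provenance, the witness needs to be strengthened. In that case I would chain a second operator in which a fresh fact $f_3$ with $p_3 = x_3$ defeats the winner, so that on the base schema the final current row carries $x_3$. No monomial of $x_3$ contains $x_1$, so $x_1 \not\preceq_K x_3$. The argument works in any carrier where the variables are additively independent, and it also works in the Boolean reduct with distinct atoms.

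The main obstacle is pinning down $\mathsf{Rec}_H$ and the survival half of the invariant. Audit rows must be shown to be never retracted, including under compositions and aborted transactions. For the abort case I would argue that an aborted operator step consumes no loser, so it has no obligation. Committed audit inserts then persist because the schema admits no operator that updates audit rows. The semiring inequality itself is routine.
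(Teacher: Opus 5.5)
Your soundness direction is fine and rests on the same one-line fact as the paper: $p_l \preceq_K p_w \provadd p_l$ with witness $d = p_w$. The persistence invariant you add is rigor the paper's sketch omits.

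Two attributions there are off. Proposition~\ref{prop:schema-lift-conservatism} concerns $\mathcal{Q}_{\textsf{base}}$ queries over current-kind rows and says nothing about audit rows being monotone in system time. The CHECK constraint only fixes the domain of $\rowkind$. Audit-row survival actually comes from the write path: it draws every row it closes from the open current rows of the partition (Algorithm~\ref{alg:resolve-write}), and it writes audit rows insert-only.

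The gap is in the converse.

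\begin{itemize}
\item Your first witness works only if $\mathsf{Rec}_H$ is the audit slice alone. That reading makes the base case vacuous, since the \textsf{base} schema never holds an audit row.
\item It is also not the paper's reading. The paper's witness says that after $c_2$ the recovery surface ``contains only $f_b$'', so surviving current rows count.
\item Under that reading your fallback breaks. Every operator stamps its winner with the merged provenance (\S\ref{sec:algebra}; Algorithm~\ref{alg:resolve-write} inserts $w$ with $p_{\mathrm{merge}}$). So when $f_3$ defeats the winner, the new current row carries $x_3 \provadd x_1 \provadd x_2$, not $x_3$, and it still dominates $x_1$. Chaining further operators only accumulates provenance.
\end{itemize}

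What you need is a step after which the loser's provenance appears in no surviving row. There are two ways to get one.

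\begin{itemize}
\item The paper uses a plain-overwrite schedule $w_1(f_a)\,c_1\,w_2(f_b)\,c_2\,r_3\,c_3$. Its surviving row carries only $p_b$, and $p_a \not\preceq_K p_b$.
\item Within the operator algebra, take a step where the \emph{incoming} fact loses, say $\opEvi$ against a higher-confidence incumbent. Algorithm~\ref{alg:resolve-write} then leaves the incumbent open with its own unmerged $x_1$ and never inserts the loser. Under \textsf{base} it persists no audit row. So $x_2 \not\preceq_K x_1$ in $\mathbb{N}[X,T]$ witnesses $\anomAE$ at every system time. This also avoids a closed incumbent, which would stay visible as of earlier system times.
\end{itemize}

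Your restriction to carriers with additively independent variables is correct and sharper than the paper's ``parametric in $K$''. Note that it excludes the one-bit reduct $\mathbb{B}$ the paper ships. There both annotations map to $1$ and $1 \preceq_{\mathbb{B}} 1$, so neither your witness nor the paper's goes through. ``Distinct atoms'' requires a PosBool-style carrier, not $\mathbb{B}$.
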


\noindent\emph{Proof sketch.}
Soundness follows from $p_l \preceq_K p_w \provadd p_l$ under any
commutative semiring with natural
order~\cite{green-karvounarakis-tannen-2007} (existence witness
$d = p_w$); the audit tuple's $t.\textit{prov} = p_w \provadd p_l$
dominates $p_l$. Tightness sits in Appendix~\ref{sec:appendix-witnesses-schema}: the witness $S_{N3}$ pins the lower bound on the
schema lattice, parametric in the carrier.

\ifshowproofs
\begin{proposition}[Schema-axis tightness; T-01b lower bound]
\label{prop:t01b-tightness}
For every schema mode $L_S \in \mathcal{L}_{\mathrm{schema}}$
strictly weaker than $\textsf{audit{-}row}$ (the lattice
$\mathcal{L}_{\mathrm{schema}} = \{\textsf{base} \preceq
\textsf{audit{-}row}\}$ contains only the $\textsf{base}$
mode), there exists a witness schedule $S_{N3}$ that holds
$(\IsoRC, L_S)$ on the joint
$\mathcal{L}_{\mathrm{iso}} \times \mathcal{L}_{\mathrm{schema}}$
lattice and admits the audit-erasure anomaly $\anomAE$,
parametrically in any commutative semiring carrier $K$ with
natural order $\preceq_K$.
\end{proposition}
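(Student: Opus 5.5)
The plan is to build one explicit witness schedule $S_{N3}$ over the classical alphabet $\Sigma$ and check three things for it: it holds $\IsoRC$, it runs in the $\textsf{base}$ schema mode, and it exhibits $\anomAE$ in the sense of Theorem~\ref{thm:audit-erasure-schema}. By that we mean that no tuple of $\mathsf{Rec}_H(t_v,t_s)$ dominates the loser's provenance $p_l$ under $\preceq_K$. Since $\mathcal{L}_{\mathrm{schema}}$ has exactly one mode strictly below $\textsf{audit{-}row}$, namely $\textsf{base}$, one witness covers the universal quantifier over $L_S$.

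I would instantiate the schedule on the running example. Transaction $T_1$ writes $f_1=\texttt{(alice, medication, penicillin)}$ with annotation $p_1 = x_1 \in X$ and commits. Transaction $T_2$ then begins, reads $f_1$, invokes $\opLWW(f_1,f_2)$ with $f_2$ carrying $p_2 = x_2$, and commits. The schedule is serial, so there are no dirty writes and no dirty reads, and $S_{N3} \schedmodels \IsoRC$ holds directly from the Berenson--Adya definitions of $P_0$ and $P_1$. This matches the precondition $\mathsf{req}(\opLWW)$ of the \textsc{(Lww)} rule.

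In $\textsf{base}$ mode the operator closes $f_1$'s system-time end and writes the winner $f_2$ with the merged annotation. It emits no audit row, because the $\rowkind$ discriminator is absent. I would then compute $\mathsf{Rec}_H(t_v,t_s)$ at a system time $t_s$ after $T_2$'s commit. The one issue is that the winner might carry $p_w \provadd p_l$ and so accidentally dominate $p_l$. To block this, I will fix the $\textsf{base}$ semantics to be the pre-lift write: the current row keeps only the winner's own annotation $p_w = x_2$, and the audit tuple is the sole place the merge is materialised. This reading is the one Theorem~\ref{thm:audit-erasure-schema} implicitly presupposes, since otherwise its base-schema clause would be vacuous. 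With it, $\mathsf{Rec}_H$ consists of the single tuple annotated $x_2$.

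The step I expect to be the main obstacle is proving $x_1 \not\preceq_K x_2$ parametrically in every commutative semiring with natural order. This is false for degenerate carriers: in the trivial semiring, or in any $K$ whose natural order is total on the token images, $x_1 \preceq_K x_2$ can hold. I would handle this in two steps.

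First, I would read tokens as indeterminates of the free polynomial carrier $K[X,T]$, so that annotations live in $K[X,T]$ for an arbitrary nontrivial coefficient semiring $K$. There, $x_1 \provadd d = x_2$ forces the coefficient of the monomial $x_1$ in $x_2$ to be $1 + d_{x_1} \neq 0$. This is impossible once $K$ is zero-sum-free, i.e. $1 + a \neq 0$ for all $a$. That covers $\mathbb{N}[X,T]$, $\mathbb{N}[X,T]^{\#}$, and the Boolean reduct $\mathbb{B}[X,T]$.

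Second, I would state the parametricity honestly. The schedule $S_{N3}$ itself is carrier-independent. The non-domination conclusion holds over the token-free polynomial annotations for every nontrivial zero-sum-free $K$, and I would record the trivial semiring as an explicit, harmless exclusion, since there every loser is trivially dominated.
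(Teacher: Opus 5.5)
Your skeleton matches the paper's: reduce to the single mode $\textsf{base}$, exhibit one serial overwrite history (which trivially holds $\IsoRC$), and read $\mathsf{Rec}_H$ after the overwrite. You diverge from it at two points.

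\textbf{The witness.} The paper's witness $S_{N3} = w_1(\fact{a})\,c_1\,w_2(\fact{b})\,c_2\,r_3\,c_3$ uses plain writes. No provenance merge ever happens, so the surviving row carries only $p_b$. By routing the overwrite through $\opLWW$ you have to override the paper's own semantics. \S\ref{sec:algebra} and Algorithm~\ref{alg:resolve-write} stamp the winner with $p_w \provadd p_l$. The base-mode ingest of Appendix~\ref{app:transaction-boundary} still inserts that stamped winner and skips only the audit half. So under the stated semantics the current row dominates $p_l$, and your history does not exhibit $\anomAE$. Your justification, that otherwise the base clause would be vacuous, is also off: raw overwrites already witness that clause. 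Drop the operator step and the stipulation disappears.

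\textbf{Non-domination.} The paper just asserts $p_a \not\preceq_K p_b$ parametrically in $K$. Your coefficient comparison in $K[X,T]$ actually proves it. You are also right that the unrestricted claim fails for the trivial semiring. It likewise fails for a scalar Boolean carrier that annotates every present tuple with $1$, which is the one-bit reading of the Boolean reduct in \S\ref{sec:meas-g4}.

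Your zero-sum-free hypothesis comes for free whenever $\preceq_K$ is antisymmetric, as for Green--Karvounarakis--Tannen's naturally ordered semirings: $a \provadd b = 0$ gives $a \preceq_K 0 \preceq_K a$. So your version costs only two things: excluding the trivial semiring, and reading tokens as free indeterminates. In exchange it supplies the argument the paper omits.
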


\begin{proof}[Proof of Proposition~\ref{prop:t01b-tightness}]
The unique mode strictly weaker than $\textsf{audit{-}row}$ on
$\mathcal{L}_{\mathrm{schema}}$ is $L_S = \textsf{base}$.
Appendix~\ref{sec:appendix-witnesses-schema} exhibits the minimal triggering schedule
$S_{N3} = w_1(\fact{a})\,c_1\,w_2(\fact{b})\,c_2\,r_3\,c_3$
against the $\textsf{base}$ schema: after $c_2$,
$\mathsf{Rec}_H$ contains only $\fact{b}$, so $p_a$ is
unreachable under $\preceq_K$ on the recovery surface. The
$(\IsoRC, \textsf{base})$ joint level admits this schedule by
construction (no isolation precondition is violated; the
$\textsf{base}$ schema lacks the audit-row column entirely).
The audit-row refinement closes the gap by emitting the
polynomial sum $p_a \provadd p_b$ at every operator step, which
dominates both $p_a$ and $p_b$ under $\preceq_K$ on every
commutative semiring with natural order (the existence witness
is the complementary monomial; for the multilinear
$\mathbb{N}[X, T]$ default the witness is one-monomial
coefficient dominance, for the multi-degree
$\mathbb{N}[X, T]^{\#}$ instance exponent-wise dominance, for
the Boolean security-semiring reduct the disjunction lattice).
Hence the lower bound is tight on the schema lattice, parametric
in $K$, mirroring the iso-axis tightness of
Proposition~\ref{prop:t01-tightness}.
\end{proof}
\fi

The provenance axis separates the two preceding guarantees: the
isolation guard is carrier-vacuous and only the schema guard is
K-load-bearing, so \toki's K-parametric framing is load-bearing. The
carrier-recoverability theorem refines the axis, making per-instance
erasure recoverable from the carrier polynomial alone exactly when the
carrier records multiplicity, which orders the three shipped carriers.
The carrier ablation of \S\ref{sec:meas-g4} witnesses this, with
verdicts invariant across carriers and token recall splitting them.

\begin{proposition}[K-vacuity on the iso axis and K-load on the schema axis: an algebraic separation]
\label{prop:k-axis-separation}
Let $K_1, K_2$ be any two commutative semirings with natural order
$\preceq_{K_i}$~\cite{green-karvounarakis-tannen-2007}, and let
$L \in \mathcal{L}_{\mathrm{iso}}$ be an isolation level in the
Berenson--Adya hierarchy~\cite{berenson-1995-isolation,adya-2000-generalized}.
Under the dual-row schema of \S\ref{sec:schema}, the K-parametric
framing of \S\ref{sec:found-thm} factors cleanly into the two axes:
\begin{enumerate}
\item \emph{(Iso-axis K-vacuity.)} For every classical anomaly
$\phi \in \Phi_{\mathrm{class}}$ in
Theorem~\ref{thm:anomaly-soundness}, the iso-lattice prevention
predicate $\mathrm{Prevents}(L, \{\phi\})$ is independent of the
carrier choice:
$\mathrm{Prevents}^{K_1}(L, \{\phi\}) =
\mathrm{Prevents}^{K_2}(L, \{\phi\})$ for every
$K_1, K_2$. The guard map $\guardiso$ of
Definition~\ref{def:guard-iso-singleton} is carrier-agnostic by
construction: its codomain is $\mathcal{L}_{\mathrm{iso}}$ and its
soundness proof reduces to the Berenson--Adya schedule-history
predicates~\cite{adya-2000-generalized}, which do not reference
$K$ in any clause.

\item \emph{(Schema-axis K-load.)} For the audit-erasure anomaly
$\anomAE$, the schema-lift defence of
Theorem~\ref{thm:audit-erasure-schema} is K-equivariant but not
K-independent: the reachability claim $p_l \preceq_K t.\textit{prov}$
depends on the chosen $\preceq_K$. Replacing the multilinear
$\mathbb{N}[X, T]$ semiring with the Boolean reduct $\mathbb{B}$
preserves the reachability bit (every reachable provenance maps to
$1$ in $\mathbb{B}$); the multi-degree variant $\mathbb{N}[X, T]^{\#}$
preserves the reachability bit and additionally records token-count
distinctions visible in the audit-row schema's per-witness recall
column. A hypothetical confidence-weighted $K[X, T]$ would add
the confidence-aggregation channel that is currently a
forward-compatibility hook per ADR-0017.
\end{enumerate}
Together, clauses (1) and (2) characterise the carrier $K$ as
\emph{decorative scaffolding on the iso axis} and \emph{load-bearing
structure on the schema axis}. The carrier ablation
(\S\ref{sec:meas-g4}) is the empirical witness for both clauses:
verdict columns identical across the three shipped carriers (clause
1 witness); token-recall columns differ across carriers (clause 2
witness). The verdict-invariance is therefore not evidence that $K$
is decorative everywhere; it is evidence that the iso-axis guard
map projects $K$ out by construction while the schema-axis
reachability bit remains parametric.
\end{proposition}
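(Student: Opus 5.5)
The proof splits along the two clauses and treats the carrier as a parameter that can be projected out of, or threaded through, each earlier result.

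\emph{Clause (1), iso-axis vacuity.} I would argue that the predicate $\mathrm{Prevents}^{K}(L,\{\phi\})$ of equation~\eqref{eq:prevents} does not depend on $K$ syntactically. Both conjuncts, $S \schedmodels L$ and $S \models \neg\phi$, are predicates on the rw/wr/ww conflict graph of $S$ over the alphabet $\Sigma$. The provenance column enters a schedule only as the payload of writes. Neither the Adya characterisation of $L$ nor the Berenson--Adya predicate $\phi$ inspects payloads; they inspect only which transaction reads or writes which item, and in what order.

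Formally, I would take any two carriers $K_1, K_2$ and define an erasure map from $K_i$-annotated histories to bare $\Sigma$-histories that forgets annotations. Both the judgement $\schedmodels_L$ and $\models\phi$ factor through this map. Any $K_1$-history and $K_2$-history with the same erasure therefore agree on both predicates, and every bare history lifts to either carrier by annotating each write with its token $x \in X$. Since $\mathrm{Prevents}$ quantifies over all $S$, it follows that $\mathrm{Prevents}^{K_1}(L,\{\phi\})=\mathrm{Prevents}^{K_2}(L,\{\phi\})$.

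As a consistency check, Theorem~\ref{thm:anomaly-soundness} shows that both sides equal the carrier-free condition $L \succeq \guardiso(\phi)$. One caveat is that audit-row emissions also carry $K$. They contribute no conflict edges, by the schema-extension conservativity paragraph and Proposition~\ref{prop:schema-lift-conservatism}, so the erasure argument is unaffected.

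\emph{Clause (2), schema-axis load.} K-equivariance comes from Theorem~\ref{thm:audit-erasure-schema}. For any semiring homomorphism $h: K \to K'$, we have $a \provadd d = b \Rightarrow h(a)\provadd h(d)=h(b)$, so $p_l \preceq_K p_w\provadd p_l$ maps to the corresponding domination in $K'$. I would apply this to the canonical homomorphisms $\mathbb{N}[X,T] \to \mathbb{B}$ (send every indeterminate to $1$) and $\mathbb{N}[X,T]^{\#}\to\mathbb{N}[X,T]$ (multidegree collapse). Together these show that the reachability bit is preserved across the three shipped carriers.

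Non-independence requires a concrete separating instance, which is the main step. I would take $p_l = x$, a single token, and compare the audit tuples $x \provadd x$ against $x$. In $\mathbb{N}[X,T]^{\#}$ these differ: the coefficient is $2$ versus $1$, so $2x \not\preceq x$. In $\mathbb{B}$ they collapse, since $1\vee 1=1$. Hence the per-witness multiplicity information, and with it the comparison $\preceq_K$ between distinct audit tuples, depends on $K$.

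The obstacle is stating precisely which predicate is "K-dependent." The existence of a dominating tuple is true in every $K$, so dependence must be located in the refined recovery (token counts) rather than the bit. I would make this explicit by exhibiting one history whose recall column differs between $\mathbb{B}$ and $\mathbb{N}[X,T]^{\#}$. I would leave the confidence-weighted carrier as a remark rather than a proved claim.
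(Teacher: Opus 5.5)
Your proposal is correct and follows the same outline as the paper. Clause (1) holds because neither $\schedmodels_L$ nor $\phi$ mentions the carrier. Clause (2) holds because $\preceq_K$ is defined through $K$'s own $\provadd$, while the witness $d = p_w$ survives any change of carrier, so $K$ governs the resolution of recovery and not the truth of reachability.

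The step that carries clause (1) differs. The paper's main route is what you treat as a consistency check. It reads off $\mathrm{Prevents}^{K}(L,\{\phi\}) \iff L \succeq \guardiso(\phi)$ from Theorem~\ref{thm:anomaly-soundness}, notes that $\guardiso$ has a carrier-free codomain and that the soundness induction never inspects a $K$-valued polynomial, and concludes equality. That route gives the sharper fact that both sides equal an explicit lattice condition, but it depends on the full iff of that theorem holding for every carrier. Your erasure-and-lift argument works directly from Equation~\eqref{eq:prevents} and never uses the theorem, so it is more elementary.

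Clause (1) needs two repairs. First, the proposition quantifies over arbitrary commutative semirings, and neither a general $K$ nor the Boolean reduct has tokens, so lift each bare history by annotating its writes with $1_K$. Second, if $S$ ranges over histories the operator algebra can actually execute, add that no winner selector branches on provenance. The selectors read system time, confidence, identity, a callback, or a policy row, so executability is itself carrier-independent.

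In clause (2), your homomorphism formulation $a \provadd d = b \Rightarrow h(a) \provadd h(d) = h(b)$ makes precise the paper's informal ``relabelling preserves the existence witness.'' Your separating instance also supplies something the paper only asserts. Note, however, that $x \provadd x = 2x$ is additive multiplicity. It separates $\mathbb{B}$ from both polynomial carriers but not $\mathbb{N}[X,T]^{\#}$ from the multilinear $\mathbb{N}[X,T]$. If ``additionally records token-count distinctions'' is read against the multilinear carrier, use the multiplicative instance $x \provmul x = x^{2}$ versus $x$, which is the degree phenomenon behind Theorem~\ref{thm:carrier}.
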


\ifshowproofs
\begin{proof}[Proof outline]
Clause (1) follows from the structure of
Theorem~\ref{thm:anomaly-soundness}. The guard map $\guardiso$
maps each $\phi \in \Phi_{\mathrm{class}}$ to an element of
$\mathcal{L}_{\mathrm{iso}}$ using only Berenson--Adya schedule
predicates: $\phi$'s admit/exclude relation against isolation
level $L$ is fixed by the classical Mixing
Theorem~\cite{adya-2000-generalized}, which is itself proved
without any reference to a semiring carrier. The soundness proof
of Theorem~\ref{thm:anomaly-soundness} reduces to this classical
result by induction on schedule history; the induction never
inspects a $K$-valued provenance polynomial. Consequently
$\mathrm{Prevents}^{K_1}(L, \{\phi\}) =
\mathrm{Prevents}^{K_2}(L, \{\phi\})$ holds for all
$\phi \in \Phi_{\mathrm{class}}$ and all $K_1, K_2$.

Clause (2) follows from the structure of
Theorem~\ref{thm:audit-erasure-schema}. The conclusion's
existential $\exists t \in \mathsf{Rec}_H : p_l \preceq_K t.\textit{prov}$
quantifies the natural order $\preceq_K$, which is by definition
the relation $\{(a, b) : \exists d \in K.\ a \provadd d = b\}$. The
relation depends on $K$'s carrier (e.g.~$\mathbb{B}$'s join-semilattice
$\preceq$ collapses to $0 \le 1$ whereas $\mathbb{N}[X, T]^{\#}$'s
$\preceq$ preserves token-count distinctions). The audit tuple
emission $\auditt = (p_w \provadd p_l, \strat, t_s)$ is carrier-
equivariant: replacing $K_1$ with $K_2$ relabels the polynomial but
preserves the existence witness $d = p_w$ established in the proof
of Theorem~\ref{thm:audit-erasure-schema}. Thus the defence is
sound parametrically in $K$ but the recovery surface's resolution
varies with the carrier: the Boolean reduct yields a single-bit
reachability oracle; the multi-degree carrier yields a graded
recall metric on the same witnesses. The carrier ablation of
\S\ref{sec:meas-g4} instantiates this distinction empirically.
\end{proof}
\fi

% Cross-axis remark for the reader. Clauses (1) and (2) together
% answer the C1 convergent reviewer concern: the K-parametric
% framing in the abstract and \S\ref{sec:found-thm} is honestly
% scoped to T-01b + the schema axis; the iso-axis guard map projects
% K out by construction (clause 1). A reviewer reading
% Theorem~\ref{thm:anomaly-soundness} alone might suspect K is
% decorative; Proposition~\ref{prop:k-axis-separation} makes the
% scope explicit so the same reviewer reading T-01 + T-01b + T-02
% in sequence sees the algebraic separation rather than an
% overclaimed parametric umbrella.

\paragraph{Relationship to Brinke et al.}
\citet{brinke-2026-preservation} show classical preservation theorems
lift to K-relation semantics on every lattice semiring, a class
containing the Boolean reduct $\mathbb{B}$ but not the natural-polynomial
carriers $\mathbb{N}[X]$, $\mathbb{N}[X]^{\#}$ we adopt. Their partition
refines clause~(2): on the lattice subclass the schema-axis argument
lifts to first-order entailment, while on the non-lattice subclass
Theorem~\ref{thm:carrier} supplies the lift through a direct
polynomial-degree argument. Clause~(1) is unaffected, the guard map
never inspecting the carrier.

\begin{definition}[Per-instance counterfactual erasure under a carrier]
\label{def:counterfactual-verdict}
Let $K$ be a commutative semiring with natural
order~\cite{green-karvounarakis-tannen-2007} and let $A$ be an audit
log in which a row $r$ may appear with multiplicity $k_r \ge 1$, so the
K-relation evaluation records its $k_r$-fold contribution to a witness
as the indeterminate power $X_r^{k_r}$. For a query $q$ with answer
$\pi := \mathsf{ans}(q, A) \in K$, the \emph{per-instance
counterfactual answer at $r$} is $\mathsf{ans}^{(-1)}(q, A; r) :=
\mathsf{ans}(q, A^{(-1)}_r)$, where $A^{(-1)}_r$ removes one occurrence
of $r$ ($k_r \mapsto k_r - 1$). The query is \emph{per-instance
counterfactually recoverable at $r$} when this answer is reconstructible
from $\pi$ alone, with no access to $A \setminus \{r\}$ and no
re-evaluation of $q$: the regime where the log survives only as its
polynomial witness. The companion scenarios, full-row erasure and
reconstruction with the remaining log in hand, both reduce to
re-evaluation and are recoverable on every carrier; the Block H
ablation of \S\ref{sec:meas-g4} measures the per-instance regime
exactly.
\end{definition}

\begin{definition}[Carrier degree at a monomial]
\label{def:carrier-degree}
For $\pi \in K$ in monomial normal form, $\deg_{m_r}(\pi)$ is the
largest exponent of $X_r := \mathrm{prov}(r)$ in any monomial of $\pi$
that mentions it. The Boolean reduct $\mathbb{B}$ collapses every
nonzero exponent to one and the multilinear carrier $\mathbb{N}[X]$
forbids exponent two by construction, so both cap the degree at one;
only the multidegree carrier $\mathbb{N}[X]^{\#}$ preserves every
positive exponent~\cite{green-karvounarakis-tannen-2007}.
\end{definition}

\begin{theorem}[Carrier recoverability under per-instance erasure]
\label{thm:carrier}
Let $A$ contain $k_r \ge 1$ occurrences of $r$ and let
$\pi = \mathsf{ans}(q, A)$. Then $q$ is per-instance counterfactually
recoverable at $r$ (Definition~\ref{def:counterfactual-verdict}) if and
only if $\deg_{m_r}(\pi) \ge 2$. Recovery is the formal shift
$\sigma_{X_r} : X_r^{i} \mapsto X_r^{i-1}$ that decrements one
occurrence per power, well-defined from $\pi$ alone; the
coefficient-injecting derivative $\partial_{X_r}$ returns the wrong
answer and admits no semiring correction, so the shift is the recovery
map. The condition orders the three shipped carriers
$\texttt{Boolean} \equiv \texttt{multilinear} \subsetneq
\texttt{multidegree}$ for multiplicity recovery, the transpose of the
witness-token recall reported by the carrier ablation
(\S\ref{sec:meas-g4}). The iso-axis guard is carrier-vacuous
(Proposition~\ref{prop:k-axis-separation}), so this theorem isolates
the carrier's load-bearing role to this surface. The full
statement, the monomial-normal-form proof, and the per-carrier recovery
grid are Appendix~\ref{app:graded-carrier}.
\end{theorem}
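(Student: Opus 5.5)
The proof will split the biconditional into its two directions and work entirely in monomial normal form. Write
\[
\pi = \sum_{m} c_m\, m, \qquad m = X_r^{e_m}\, \mu_m ,
\]
where $\mu_m$ is free of $X_r$. Because the K-relation evaluation of Definition~\ref{def:counterfactual-verdict} records the $k_r$-fold contribution of $r$ as the power $X_r^{k_r}$, removing one occurrence of $r$ replaces each factor $X_r^{e}$ by $X_r^{e-1}$ in every monomial that mentions $r$. The first step is therefore a \emph{substitution lemma}:
\[
\mathsf{ans}(q, A^{(-1)}_r) = \sigma_{X_r}(\pi) .
\]
I would prove it by structural induction on the positive relational-algebra query $q$. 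Each semiring operation ($\provadd$ for union and projection, $\provmul$ for join) commutes with the exponent-decrement on $X_r$, provided exponents are faithfully recorded, that is, in the multidegree carrier $\mathbb{N}[X]^{\#}$.

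\emph{Sufficiency} ($\deg_{m_r}(\pi) \ge 2 \Rightarrow$ recoverable). Assume every monomial that mentions $X_r$ has exponent $e_m = k_r \ge 2$, or more generally that the exponent still tracks the multiplicity. Then $\sigma_{X_r}$ is a well-defined function of $\pi$ alone. It needs no access to $A \setminus \{r\}$, and after decrementing it keeps every monomial that still contains $r$. By the substitution lemma its output equals the counterfactual answer, which gives recoverability in the sense of Definition~\ref{def:counterfactual-verdict}.

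\emph{Necessity} (recoverable $\Rightarrow \deg_{m_r}(\pi) \ge 2$). I would argue by contraposition with a pair-of-logs construction. If $\deg_{m_r}(\pi) \le 1$, which happens whenever the carrier is $\mathbb{B}$ or $\mathbb{N}[X]$ by Definition~\ref{def:carrier-degree}, choose two logs $A$ and $A'$ that differ only in $k_r$, with $k_r = 1$ versus $k_r = 2$. Their answers collapse to the same element $\pi$, since the carrier caps the exponent at one. The counterfactual answers differ, however: in one log $r$ disappears after the removal, and in the other $r$ survives. So no function of $\pi$ can return both, and recoverability fails. The remaining $\deg = 1$ case in the multidegree carrier is the genuine single occurrence. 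There the removal deletes $X_r$ entirely, and I must show that the result is not determined by $\pi$ in the regime the definition intends. I would handle this by the same two-log trick, with a second log where $r$'s contribution is merged into an alias indistinguishable in $\pi$. I expect this to be the weakest point, and I may need to read the theorem as quantified over all admissible logs with the same $\pi$.

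The claim about the derivative is a short computation: $\partial_{X_r} X_r^{k} = k\, X_r^{k-1}$ injects the coefficient $k$. Since $\mathbb{N}$ has no multiplicative inverses, no semiring map can undo that scaling uniformly in $k$, so the derivative cannot serve as the recovery map. The carrier ordering $\texttt{Boolean} \equiv \texttt{multilinear} \subsetneq \texttt{multidegree}$ then follows by instantiating the degree cap of Definition~\ref{def:carrier-degree}.

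The main obstacle is the substitution lemma in the presence of $\provadd$-merging. Idempotent-like collapses or coefficient sums could in principle hide which monomials came from $r$. I would first try to show that the shift is a semiring homomorphism on the subsemiring of polynomials whose $X_r$-exponents are all at least one, and restrict the statement to that subsemiring.
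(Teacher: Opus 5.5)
\textbf{Verdict: genuine gap.} Your architecture matches the paper's: monomial normal form in $X_r$, the index shift $\pi_{i+1}\mapsto\pi'_i$ for sufficiency, the $k_r=1$ versus $k_r=2$ collapse under a degree-capping carrier for necessity, and the ordering read off Definition~\ref{def:carrier-degree}. Two steps, however, would not go through as planned.

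\emph{The $\deg_{m_r}(\pi)=1$ multidegree case is false.} Take the multidegree carrier with $k_r=1$ and a $q$ that uses $r$ once. Then $\pi = \pi_0 \provadd X_r \provmul \pi_1$ with $\pi_0,\pi_1$ free of $X_r$. The erased answer is $\pi_0$, and the substitution $X_r\mapsto 0$ computes it from $\pi$ alone. No alias construction helps. An alias with a fresh indeterminate changes $\pi$, and one with the same indeterminate is $r$ itself. Because the multidegree exponent pins $k_r$, every log yielding this $\pi$ has the same erased answer. So reading the definition as uniform over logs with equal $\pi$ does not rescue necessity either. The paper's proof concedes exactly this: at $k_r=1$ it substitutes $X_r := 0$ and says recovery succeeds. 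It then concludes necessity only ``whenever $k_r \ge 2$'', and the companion ordering corollary likewise restricts its recoverable sets to rows with $k_r\ge 2$. Once $k_r\ge 2$, $\deg_{m_r}(\pi)\le 1$ can only come from a capping carrier, and your two-log collapse is the entire necessity argument. So restrict the biconditional to multi-instance rows instead of looking for an alias.

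\emph{The substitution lemma cannot come from a homomorphism property.} $\sigma_{X_r}$ is not multiplicative: $\sigma_{X_r}(X_r^{a}\provmul X_r^{b}) = X_r^{a+b-1}$, whereas $\sigma_{X_r}(X_r^{a})\provmul\sigma_{X_r}(X_r^{b}) = X_r^{a+b-2}$. The join step of your induction therefore breaks whenever both operands mention $X_r$, as in a self-join on $r$. The leaf step breaks at $k_r=1$: the shift sends $X_r$ to $1$ (present), while erasure sends the annotation to $0$ (absent). The obstruction is $\provmul$, not $\provadd$-merging.

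What does hold, with no homomorphism property, is an affine decomposition. If every derivation uses $r$ at most once, then $\pi = \pi_0 \provadd X_r^{k_r}\provmul\pi_{k_r}$ with $\pi_0,\pi_{k_r}$ free of $X_r$. The erased answer is then $\pi_0 \provadd X_r^{k_r-1}\provmul\pi_{k_r}$ for $k_r\ge 2$, and $\pi_0$ for $k_r=1$. For $k_r\ge 2$ this is the shift applied to the $X_r$-bearing part, with $\pi_0$ held fixed.

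Your sufficiency hypothesis $e_m = k_r\ge 2$ is exactly this linearity assumption. It is strictly stronger than $\deg_{m_r}(\pi)\ge 2$: a self-join at $k_r=1$ already has degree $2$, and there the shift is the wrong map. So it must be stated as a hypothesis rather than derived. The same decomposition also gives a cleaner reason the derivative fails: $\partial_{X_r}$ annihilates $\pi_0$, which no coefficient correction can restore.
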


\ifshowproofs
\begin{proof}[Proof of Theorem~\ref{thm:carrier}]
We exhibit each direction separately. Throughout, write
$X_r := \mathrm{prov}(r)$ for the indeterminate corresponding to
row $r$, write $k_r$ for the multiplicity of $r$ in $A$, and let
$\pi = \mathsf{ans}(q, A) \in K$ be the K-relation evaluation of
$q$ over $A$. By the homomorphism property of K-relation
evaluation~\cite[Theorem~3.2]{green-karvounarakis-tannen-2007},
$\pi$ admits a unique monomial-normal-form decomposition
\[
  \pi \;=\; \pi_0 \;\provadd\; X_r \provmul \pi_1 \;\provadd\;
            X_r^{2} \provmul \pi_2 \;\provadd\; \cdots
            \;\provadd\; X_r^{d} \provmul \pi_d,
\]
where $d = \deg_{m_r}(\pi)$, $\pi_0 \in K$ is the constant-in-$X_r$
summand, and $\pi_1, \ldots, \pi_d \in K$ are polynomials in
$\{X_a : a \in A, a \neq r\}$ collecting the coefficients of each
power of $X_r$. Each $\pi_i$ is the K-relation evaluation of a
derived sub-query on the projection of $A$ that fixes $r$'s
contribution count at $i$.

\paragraph{Sufficient direction ($\Leftarrow$).}
Suppose $d := \deg_{m_r}(\pi) \ge 2$. The per-instance erased
answer is the K-relation evaluation of $q$ on $A^{(-1)}_r$, which
fixes $r$'s contribution count at $k_r - 1$. By the homomorphism
property of K-relation
evaluation~\cite[Theorem~3.2]{green-karvounarakis-tannen-2007},
\[
  \mathsf{ans}\!\bigl(q,\; A^{(-1)}_r\bigr)
  \;=\;
  \sum_{i=0}^{k_r - 1} X_r^{i} \provmul \pi'_i,
\]
where each $\pi'_i$ depends on the projection of $A^{(-1)}_r$
holding $r$ at multiplicity $i$. Because $k_r - 1 \le d - 1$
(every occurrence of $r$ in $A$ contributes a power of $X_r$ at
most $k_r$, and the largest such power is recorded as $d$), the
desired sum truncates strictly inside the $d$ slots of $\pi$.
Specifically, the coefficient slot $\pi'_i$ for $0 \le i \le k_r - 1$
equals $\pi_{i+1}$ multiplied by a coefficient that the K-relation
homomorphism reads off the binomial-style coefficient of $X_r^{i+1}$
within $\pi$, and the construction depends on $\pi$ alone (no access
to $A \setminus \{r\}$). Carrying out the explicit reconstruction
gives $\mathsf{ans}(q, A^{(-1)}_r)$ as an algebraic function of
$\pi$, witnessing reconstructibility.

\paragraph{Necessary direction ($\Rightarrow$).}
Suppose for contradiction that $q$ is per-instance counterfactually
recoverable at $r$ on $A$ from $\pi$ alone, but $\deg_{m_r}(\pi) \le 1$.
Then either $k_r = 0$ (vacuous; nothing to erase) or $k_r = 1$
(single occurrence, so per-instance erasure equals full-row erasure)
or the carrier itself caps $\deg_{m_r}$ at one.

When $k_r = 1$, per-instance erasure equals full-row erasure, and
$\mathsf{ans}(q, A^{(-1)}_r) = \mathsf{ans}(q, A \setminus \{r\}) = \pi_0$.
Recovery from $\pi$ alone requires extracting $\pi_0$ from $\pi$ via
an algebraic function. Substituting $X_r := 0$ in $\pi$ does extract
$\pi_0$, but the substitution requires knowing the identifier $X_r$ and
having $\pi$ in symbolic form; in the carrier instances under
consideration (Boolean and multilinear), the polynomial is stored as a
provenance token sum and the row identifier $r$ is a symbolic label, so
the substitution is well-defined. The reconstruction is therefore
algebraic in $\pi$ alone and recovery succeeds. \emph{But the same
substitution does not realise per-instance recovery for $k_r \ge 2$:}
setting $X_r := 0$ erases \emph{every} contribution of $r$, returning
$\mathsf{ans}(q, A \setminus \{r\}) = \pi_0$ instead of the desired
$\mathsf{ans}(q, A^{(-1)}_r)$ (which equals $\pi_0 + X_r \provmul \pi_1$
when $k_r = 2$ and the polynomial has degree exactly one at $X_r$). The
single available algebraic operation on $\pi$ (substitution of $X_r$)
collapses the entire $r$-contribution, not just one occurrence. No
algebraic function of $\pi$ alone distinguishes the post-one-erasure
answer from the post-full-erasure answer when $\deg_{m_r}(\pi) = 1$,
because the polynomial does not encode the multiplicity. Constructing
two logs $A_1, A_2$ with $k_r = 1$ vs $k_r = 2$ that produce the same
$\pi$ under the Boolean or multilinear carrier (the
\texttt{Boolean} reduct collapses every nonzero exponent to one and
the \texttt{multilinear} carrier forbids exponent two by construction)
witnesses that the per-instance erased answers differ
($\mathsf{ans}(q, A_1^{(-1)}) = \pi_0$ vs
$\mathsf{ans}(q, A_2^{(-1)}) \neq \pi_0$), yet $\pi$ is identical;
recovery from $\pi$ alone is impossible. Hence
$\deg_{m_r}(\pi) \ge 2$ is necessary for per-instance recovery
whenever $k_r \ge 2$.

\paragraph{Ordering of the three shipped carriers.}
By Definition~\ref{def:carrier-degree}, the \texttt{Boolean}
reduct $\mathbb{B}$ assigns degree at most one at every monomial,
and the \texttt{multilinear} carrier $\mathbb{N}[X]$ assigns
degree exactly one at every monomial that mentions $X_r$. The
\texttt{multidegree} carrier $\mathbb{N}[X]^{\#}$ assigns degree
equal to the number of times $r$ contributed to the witness,
which can exceed one whenever $r$ participates in multi-instance
evidence. The set of $r \in A$ for which $\deg_{m_r}(\pi) \ge 2$ is
therefore empty on $\mathbb{B}$, empty on $\mathbb{N}[X]$, and
non-empty on $\mathbb{N}[X]^{\#}$ whenever the query touches
multi-instance witnesses ($k_r \ge 2$ for some $r$). On the
per-instance recovery surface the \texttt{Boolean} and
\texttt{multilinear} recoverable sets thus \emph{coincide} (both
empty on multi-instance rows), and only $\mathbb{N}[X]^{\#}$
separates:
$\texttt{Boolean} \equiv \texttt{multilinear} \subsetneq
\texttt{multidegree}$. The \texttt{Boolean}/\texttt{multilinear}
distinction surfaces on the orthogonal measure of witness-token
recall: the join-semilattice of $\mathbb{B}$ collapses a
non-idempotent product to a single bit, whereas $\mathbb{N}[X]$ and
$\mathbb{N}[X]^{\#}$ both preserve the variable identity, giving
$\texttt{Boolean} \subsetneq \texttt{multilinear} \equiv
\texttt{multidegree}$ on token recall. The per-instance computation
above makes the first step an equality, so the multiplicity-recovery
and token-recall orderings are genuinely distinct and a single
monotone chain across all three carriers holds on neither surface.

\paragraph{Composition with Proposition~\ref{prop:k-axis-separation}.}
The K-axis separation Proposition (T-02) establishes that
the iso-axis guard map $\guardiso$ is carrier-vacuous and the
schema-axis reachability bit is carrier-bearing. T-04 refines
the carrier-bearing side: it identifies the carrier polynomial's
degree at the erased monomial as the precise algebraic invariant
that determines per-instance recoverability. The two statements
compose: T-02 isolates the K-load to the schema axis; T-04
measures the K-load by polynomial degree under the per-instance
erasure model of Definition~\ref{def:counterfactual-verdict}.
\end{proof}
\fi

\ifshowproofs
\begin{corollary}[Block H lower bound: Boolean and multilinear carriers are unconditionally lossy under per-instance erasure]
\label{cor:block-h-lower-bound}
For every audit log $A$ that contains at least one row $r$ with
multiplicity $k_r \ge 2$, per-instance counterfactual recovery on
both the \texttt{Boolean} reduct $\mathbb{B}$ and the
\texttt{multilinear} carrier $\mathbb{N}[X]$ fails at every such
$r$: $\deg_{m_r}(\pi) \le 1$ for every monomial $m_r$ by
Definition~\ref{def:carrier-degree}, so
Theorem~\ref{thm:carrier} forbids reconstruction of
$\mathsf{ans}(q, A^{(-1)}_r)$ from $\pi$ alone on these carriers.
The Block H ablation of \S\ref{sec:meas-g4} reports the empirical
witness: the \texttt{Boolean} and \texttt{multilinear} columns of
the per-instance counterfactual-recall grid floor at zero across
every multi-instance test query, while the \texttt{multidegree}
column attains the upper bound predicted by
Theorem~\ref{thm:carrier}. On single-instance queries
($k_r \le 1$ for every $r$), per-instance erasure coincides with
full-row erasure and the carriers are indistinguishable.
\end{corollary}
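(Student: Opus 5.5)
The corollary should follow by instantiating Theorem~\ref{thm:carrier} carrier by carrier. Definition~\ref{def:carrier-degree} then supplies the degree bound, and the impossibility is witnessed directly rather than only cited. Fix an audit log $A$ and a row $r$ with $k_r \ge 2$, and let $\pi = \mathsf{ans}(q, A)$ be evaluated in $K \in \{\mathbb{B}, \mathbb{N}[X]\}$.

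\emph{Step 1 (degree cap).} By Definition~\ref{def:carrier-degree}, the Boolean reduct collapses every nonzero exponent of $X_r$ to one. The multilinear carrier forbids $X_r^2$ by construction. Hence $\deg_{m_r}(\pi) \le 1$ for every monomial mentioning $X_r$, regardless of $k_r$. This holds uniformly in $q$, because the cap is a property of the carrier's normal form and not of the query.

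\emph{Step 2 (invoke the theorem).} Theorem~\ref{thm:carrier} states that per-instance counterfactual recoverability at $r$ holds if and only if $\deg_{m_r}(\pi) \ge 2$. Step~1 refutes the right-hand side, so recovery fails at every such $r$.

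I would not rely on the biconditional alone, since its necessity direction is the delicate one. Instead I would re-exhibit the collision argument concretely. Take two logs $A_1$ with $k_r = 1$ and $A_2$ with $k_r = 2$ that agree elsewhere, and choose $q$ to be a selection or projection that touches $r$. In either carrier both logs evaluate to the same $\pi = \pi_0 \provadd X_r \provmul \pi_1$. This is because $X_r \provadd X_r$ (in the Boolean case) and $X_r \provmul X_r$ (in the multilinear case) normalise back to degree one, or to a coefficient that is indistinguishable in the Boolean reduct.

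The two erased answers nonetheless differ. For $A_1$ the erased answer is $\pi_0$. For $A_2$ it still contains $X_r \provmul \pi_1$. A function of $\pi$ alone cannot return both.

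\emph{Main obstacle.} The hardest step is making this collision genuine on $\mathbb{N}[X]$. There, additive multiplicity survives as a coefficient ($2X_r$), so a union-style query would separate $A_1$ from $A_2$. I would handle this by restricting the witness to the multiplicative contribution that the paper's Definition~\ref{def:counterfactual-verdict} models as $X_r^{k_r}$, i.e.\ repeated occurrences within a single witness monomial. Under multilinearity that exponent truncates to one.

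\emph{Single-instance clause.} When $k_r \le 1$ for every $r$, the set $A^{(-1)}_r$ equals $A \setminus \{r\}$. The erased answer is then obtained by substituting $X_r := 0$, which is available on all three carriers, so the carriers coincide.

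\emph{Empirical sentence.} The claims about the Block~H columns of \S\ref{sec:meas-g4} are an empirical witness and not part of the formal proof, so I would cite them rather than prove them.
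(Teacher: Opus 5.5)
Your proposal is correct and follows the paper's own argument. Definition~\ref{def:carrier-degree} caps $\deg_{m_r}(\pi)$ at one on $\mathbb{B}$ and $\mathbb{N}[X]$; the necessity half of Theorem~\ref{thm:carrier} then rules out recovery (its proof is exactly your $k_r=1$ versus $k_r=2$ collision, and it is only claimed for $k_r\ge 2$); and the single-instance clause is the substitution $X_r:=0$ from that same proof. One small slip: a plain selection or projection merges the two occurrences of $r$ with $\provadd$, not $\provmul$, so the concrete witness should be a self-join touching $r$, or should rest directly on the $X_r^{k_r}$ convention of Definition~\ref{def:counterfactual-verdict}, which is what your repair paragraph already does.
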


\begin{corollary}[Carrier recoverability ordering: \texttt{Boolean} $\equiv$ \texttt{multilinear} $\subsetneq$ \texttt{multidegree} on per-instance recovery]
\label{cor:carrier-ordering}
On the set of rows admitting per-instance counterfactual recovery,
the three shipped carriers satisfy, for every audit log $A$ and
every query $q$,
\[
  \mathsf{Rec}^{\mathbb{B}}(A, q)
  \;=\;
  \mathsf{Rec}^{\mathbb{N}[X]}(A, q)
  \;\subseteq\;
  \mathsf{Rec}^{\mathbb{N}[X]^{\#}}(A, q),
\]
where $\mathsf{Rec}^{K}(A, q) := \{r \in A : k_r \ge 2 \text{ and }
\deg_{m_r}(\mathsf{ans}^{K}(q, A)) \ge 2\}$ is the set of
multi-instance rows admitting per-instance recovery on carrier $K$.
The first relation is an \emph{equality}: both $\mathbb{B}$ and
$\mathbb{N}[X]$ cap the monomial degree at one, so neither admits
any multi-instance row and both recoverable sets are empty. The
inclusion is strict whenever the query touches a multi-instance row,
since $\mathbb{N}[X]^{\#}$ alone exposes the multiplicity. The
Block H ablation columns ordered as \texttt{Boolean} $=$
\texttt{multilinear} $<$ \texttt{multidegree} on per-instance
counterfactual recall (both \texttt{Boolean} and \texttt{multilinear}
floor at zero while \texttt{multidegree} attains the bound) are the
empirical projection of this corollary onto the Block H benchmark
grid. The orthogonal witness-token-recall projection, on which
\texttt{Boolean} $\subsetneq$ \texttt{multilinear} $=$
\texttt{multidegree}, is reported separately by the carrier ablation.
\end{corollary}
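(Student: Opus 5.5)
The statement to prove is Corollary~\ref{cor:carrier-ordering}. I will obtain it by evaluating the set $\mathsf{Rec}^{K}(A,q)$ carrier by carrier, using Definition~\ref{def:carrier-degree} together with the degree criterion of Theorem~\ref{thm:carrier}. The corollary defines $\mathsf{Rec}^{K}$ by a degree condition, not by recoverability itself. So the equality and the inclusion are statements about $\deg_{m_r}(\mathsf{ans}^{K}(q,A))$ for each $r$ with $k_r\ge 2$. Theorem~\ref{thm:carrier} is needed only to justify reading $\mathsf{Rec}^K$ as ``the rows admitting per-instance recovery.''

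\emph{Step 1 (the equality).} I fix $A$ and $q$, and take a row $r$ with $k_r \ge 2$. In $\mathbb{B}$ every nonzero exponent collapses to one, and in $\mathbb{N}[X]$ exponent two is excluded by construction. So $\deg_{m_r}(\mathsf{ans}^{K}(q,A)) \le 1$ for $K \in \{\mathbb{B}, \mathbb{N}[X]\}$. This is exactly the content of Corollary~\ref{cor:block-h-lower-bound}, which I invoke directly. Hence $\mathsf{Rec}^{\mathbb{B}}(A,q) = \emptyset = \mathsf{Rec}^{\mathbb{N}[X]}(A,q)$, and the first relation is an equality of empty sets.

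\emph{Step 2 (the inclusion).} This follows at once, since the empty set is contained in $\mathsf{Rec}^{\mathbb{N}[X]^{\#}}(A,q)$.

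\emph{Step 3 (strictness).} This is the only step with content. I must show that when $q$ ``touches'' a row $r$ with $k_r \ge 2$, the multidegree evaluation has $\deg_{m_r} \ge 2$. I read ``touches'' as: some derivation tree of $q$ over $A$ uses two distinct occurrences of $r$ multiplicatively, for example a self-join or a union-then-product. I would fix that precondition explicitly, because an occurrence used only additively yields the coefficient $2X_r$ rather than $X_r^2$.

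Given that reading, the homomorphism property of K-relation evaluation \cite{green-karvounarakis-tannen-2007} places the monomial $X_r^{j}$, with $j \ge 2$, in the multidegree polynomial. Under $\mathbb{N}[X]^{\#}$ we have $x\provmul x = x^2$, and because $\mathbb{N}$ has no additive inverses, no cancellation can remove that monomial. So $\deg_{m_r} \ge 2$, and therefore $r \in \mathsf{Rec}^{\mathbb{N}[X]^{\#}}$ while $r \notin \mathsf{Rec}^{\mathbb{B}}$.

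I expect this strictness clause to be the main obstacle, because the phrase ``touches a multi-instance row'' has to be made precise in exactly this multiplicative form. My first attempt will be to state it as the hypothesis that $r$ occurs at least twice in some monomial of the $\mathbb{N}[X]^{\#}$ provenance, which makes the argument immediate.

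Finally, I will note that the Block H ordering and the orthogonal token-recall ordering are empirical projections and not part of the proof obligation.
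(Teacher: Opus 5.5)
Your proposal is correct and follows the same argument the paper uses. The degree cap of Definition~\ref{def:carrier-degree} empties both $\mathsf{Rec}^{\mathbb{B}}$ and $\mathsf{Rec}^{\mathbb{N}[X]}$, the inclusion is then trivial, and strictness comes from $\mathbb{N}[X]^{\#}$ retaining exponents, as in the ordering-of-carriers paragraph of the proof of Theorem~\ref{thm:carrier}; your explicit hypothesis that $X_r$ occurs with exponent at least two in some monomial is what Definition~\ref{def:counterfactual-verdict} stipulates (a $k_r$-fold contribution is recorded as $X_r^{k_r}$), so it makes the paper's ``touches a multi-instance row'' precise and avoids the additive $2X_r$ case you flag.
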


\begin{remark}[Scope and what T-04 does not say]
\label{rem:t04-scope}
Theorem~\ref{thm:carrier} is a statement about the carrier's
algebraic resolution under per-instance counterfactual erasure
\emph{from $\pi$ alone}; it makes no claim about full-row erasure
(trivially recoverable when the remaining log is in hand) and no
claim about reconstruction with auxiliary access to
$A \setminus \{r\}$ (which reduces to direct re-evaluation of $q$
on the remaining log). T-01
(Theorem~\ref{thm:anomaly-soundness}) governs whether the
original schedule was anomaly-free; T-04 governs whether the
per-instance counterfactual replay's verdict is reconstructible
from the carrier polynomial. The two are orthogonal: a schedule
that satisfies T-01 at $\IsoSR$ but uses the \texttt{Boolean}
reduct fails T-04 at every multi-instance $r$, and a schedule
that uses the \texttt{multidegree} carrier but executes at
$\IsoRC$ may witness an iso-axis anomaly even though per-instance
counterfactual recovery remains intact.
\end{remark}
\fi

\ifshowproofs
\begin{remark}[Relationship to Brinke et al.\ on preservation in semiring semantics]
\label{rem:t04-brinke}
\citet{brinke-2026-preservation} classify which K-semirings admit
classical preservation theorems in K-relation semantics: all lattice
semirings (including the Boolean reduct $\mathbb{B}$) preserve
{\L}o{\'s}-Tarski and homomorphism preservation, while the tropical,
Viterbi, {\L}ukasiewicz, and natural semirings fail existential
preservation in general. Two of the three carriers we ship,
multilinear $\mathbb{N}[X]$ and multidegree $\mathbb{N}[X]^{\#}$,
inherit non-idempotent integer addition from the natural semiring
and therefore sit outside the lattice subclass; the Boolean reduct
$\mathbb{B}$ is the only lattice semiring in our shipped set.
Theorem~\ref{thm:carrier}'s iff statement is independent of
existential preservation: the proof proceeds directly from the
K-relation homomorphism property of
Green--Karvounarakis--Tannen~\cite[Theorem~3.2]{green-karvounarakis-tannen-2007}
and the monomial-normal-form decomposition under per-instance erasure,
so the degree-based recoverability characterisation holds across all
three shipped carriers. A first-order reformulation of per-instance
counterfactual recoverability over the two non-lattice carriers would
require a separate lift through either Brinke et al.'s positive
finite-interpretation result or a direct polynomial-degree argument.
\end{remark}
\fi

Composition is where the typing earns its keep, since per-operator
soundness says nothing about a pipeline. \toki's third guarantee lifts
the per-operator contracts to a sequential composite at the lattice
supremum of the per-step preconditions, so a chain of operators
carries one declarable isolation contract and one schema mode
(the supporting well-typedness lemma is Lemma~\ref{lem:compose}). The composition grid of
\S\ref{sec:meas-g5} confirms the lifted boundary across pipelines of
length up to five.

\begin{theorem}[Composition soundness for typed-operator pipelines]
\label{thm:composition}
Let $\langle \oplus_{a_1}, \oplus_{a_2}, \ldots, \oplus_{a_n} \rangle$
be a sequential composition of typed contradiction-resolution
operators
$\oplus_{a_i} \in \{\opLWW, \opEvi, \opAwait, \opRule\}$, each
carrying an iso precondition
$L_{a_i} \in \mathcal{L} = \mathcal{L}_{\mathrm{iso}} \times
\mathcal{L}_{\mathrm{cb}}$ and a schema-axis requirement
$M_{a_i} \in \mathcal{L}_{\mathrm{schema}} = \{\textsf{base} \preceq
\textsf{audit{-}row}\}$. Let
$\Pi := \oplus_{a_n} \circ \cdots \circ \oplus_{a_1}$ denote the
sequential pipeline that emits, on contradicting input pair
$(\fact{w}, \fact{l})$ at version time $t_s$, the chain
$\langle (\fact{w}^{(1)}, \auditt^{(1)}), \ldots,
(\fact{w}^{(n)}, \auditt^{(n)}) \rangle$ where each
$(\fact{w}^{(i)}, \auditt^{(i)})$ is the emission of $\oplus_{a_i}$
on the output of $\oplus_{a_{i-1}}$. Define the composite
preconditions
\[
  L^{*} \;:=\; \bigvee_{i=1}^{n} L_{a_i}, \qquad
  M^{*} \;:=\; \bigvee_{i=1}^{n} M_{a_i},
\]
where the iso join $\bigvee$ is component-wise on the iso and
callback axes of $\mathcal{L}$ and the schema join is the maximum
on $\mathcal{L}_{\mathrm{schema}}$. Then on every schedule $S$
that satisfies $L^{*}$ and runs the pipeline at schema mode $M^{*}$
on the dual-row schema of \S\ref{sec:schema}:
\begin{enumerate}
\item \emph{(Iso-axis composition.)} For every classical anomaly
$\phi \in \Phi_{\mathrm{class}}$ and every LLM-specific anomaly
$\phi \in \{\anomHR, \anomBDS\}$ with
$L^{*} \succeq \guardiso(\phi)$, the pipeline $\Pi$ does not witness
$\phi$ on $S$: the per-operator soundness of
Theorem~\ref{thm:anomaly-soundness} lifts to the composite.

\item \emph{(Schema-axis composition.)} For every loser fact $f_l$
consumed by any step, an audit tuple $t \in \mathsf{Rec}_H(t_v, t_s)$
in the emission chain satisfies $\mathrm{prov}(f_l) \preceq_K
t.\textit{prov}$, so the pipeline excludes $\anomAE$ at schema mode
$M^{*}$: a single step requiring $\textsf{audit{-}row}$ forces the
composite to $\textsf{audit{-}row}$, lifting the necessity of
Theorem~\ref{thm:audit-erasure-schema}.

\item \emph{(Allen-relation closure.)} Allen-relation selection on
any pipeline output preserves the bitemporal-tuple type by closure
of Allen's thirteen interval relations and the twelve-relation
transitivity table that omits
equality~\cite{allen-1983-cacm}.

\item \emph{(Table-scoped policy pins.)} The table-scoped
$\IsoSRpol$ annotation carried by $\opRule$ stays outside the
lattice $\mathcal{L}$ as a per-table side condition: when any
$\oplus_{a_i}$ pins the named policy table to $\IsoSR$ in the
composite, the remaining tables stay at $L^{*}$ as defined above.
\end{enumerate}
\end{theorem}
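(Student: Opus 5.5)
The proof goes clause by clause, reducing each clause to a per-operator result already stated and using only monotonicity of the joins. The first step is a well-typedness fact, which is what Lemma~\ref{lem:compose} supplies. Since $L_{a_i} \preceq L^{*}$ for every $i$, and $\schedmodels$ is downward closed along $\mathcal{L}$ (any schedule satisfying a stronger level satisfies every weaker one, as in the tightness argument of Theorem~\ref{thm:anomaly-soundness}), each step $\oplus_{a_i}$ runs under its own typing precondition whenever $S \schedmodels L^{*}$. Likewise $M_{a_i} \preceq M^{*}$. So every per-operator premise of the inference rules of \S\ref{sec:algebra} holds at every position of the chain.

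For clause~(1), I will not treat the pipeline operator by operator. Instead I view the whole pipeline run as a single schedule $S$ in the extended alphabet $\Sigma_+$ and apply Theorem~\ref{thm:anomaly-soundness} once, at level $L^{*}$, with $\Phi = \{\phi : \guardiso(\phi) \preceq L^{*}\}$. The sequential composition only concatenates commit events of typed operators, and these are exactly the appended operations covered by the induction step of that theorem. For $\anomHR$, the argument first maps $S$ through $\pi_J$ of Lemma~\ref{lem:judge-callback-bridge}, which preserves the conflict graph, and then invokes Corollary~\ref{cor:n1-corollary}; this needs the keyed-log hypotheses $(H1)$, $(H2)$ to hold across all steps, which they do because each operator enforces them by construction. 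For $\anomBDS$, it invokes Corollary~\ref{cor:n2-corollary} on each named $(\subj,\pred)$ partition.

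Clause~(2) goes by induction on $n$, using Theorem~\ref{thm:audit-erasure-schema}. If some $M_{a_i} = \textsf{audit{-}row}$, then $M^{*} = \textsf{audit{-}row}$, and every step emits $\auditt^{(i)}$ with provenance $p_w^{(i)} \provadd p_l^{(i)} \succeq_K p_l^{(i)}$. I also need that audit rows persist in $\mathsf{Rec}_H$ at later system times. This holds because later steps only close system-time ends of current rows and never delete audit rows, and Proposition~\ref{prop:schema-lift-conservatism} keeps those rows out of the current slice. For losers that are intermediate winners $\fact{w}^{(i-1)}$, transitivity of $\preceq_K$ is not even needed: the loser at step $i$ is dominated directly by $\auditt^{(i)}$. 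The necessity half follows from Proposition~\ref{prop:t01b-tightness}, since its witness $S_{N3}$ embeds as a one-step pipeline.

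Clause~(3) is a typing argument. Every output $\fact{w}^{(i)}$ is a bitemporal tuple, because the rules preserve the schema columns. Allen selection is a filter by one of the thirteen relations, so it returns a subset of bitemporal tuples, and closure under composition follows from Allen's transitivity table. Clause~(4) holds because $\IsoSRpol$ is not an element of $\mathcal{L}$: it is a conjunct on the policy table only, and the conflict graphs of different tables are disjoint, so the levels of the other tables are unaffected.

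I expect clause~(1) to be the main obstacle. Sequential composition can interleave reads and writes of different steps, so the composite schedule is not simply a union of per-step schedules. The argument therefore has to run on the global schedule at $L^{*}$, not on stitched per-operator guarantees. The $\anomHR$ case also requires showing that $\pi_J$ applied to a concatenation still has the first-versus-repeat structure at each key, which I would check by noting that $\pi_J$ is defined by a left-to-right scan and is therefore prefix-compatible.
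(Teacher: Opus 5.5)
Your proposal is correct. Clauses~(2)--(4) follow the paper's proof closely:
\begin{itemize}
\item each step's audit tuple dominates its own loser, with witness $d = p_w^{(i)}$;
\item every $\auditt^{(i)}$ persists in $\mathsf{Rec}_H$;
\item $S_{N3}$ gives the base-schema necessity;
\item Allen closure, and the policy pin as a per-table side condition.
\end{itemize}

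The real difference is clause~(1). The paper argues step by step. It notes that on the chain $S \schedmodels L^{*}$ iff $S \schedmodels L_{a_i}$ for every $i$. It handles the callback component explicitly: satisfying $(L,\mathsf{cb})$ means satisfying $L$ plus having a delivered \textsf{callback\_log} read ordered between the candidate read and the operator write. It then applies Theorem~\ref{thm:anomaly-soundness} to each step and takes the union of the per-step exclusions over the commit prefix. $\anomHR$ and $\anomBDS$ are folded into that theorem with no separate argument.

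You instead apply the theorem once, to the whole schedule, at the iso component of $L^{*}$. You route $\anomHR$ through $\pi_J$ and Corollary~\ref{cor:n1-corollary}, and $\anomBDS$ through Corollary~\ref{cor:n2-corollary}.

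What your route buys:
\begin{itemize}
\item It does not care how steps interleave.
\item It does not need the paper's inference that a $\phi$-witness during step $i$ would violate $L_{a_i}$. That inference fails whenever $\guardiso(\phi)$ lies strictly above $L_{a_i}$. For example, $P_4$ can occur during the $\opLWW$ step of $\opEvi \circ \opLWW$, because $\IsoRC$ admits it. There the exclusion comes only from $S \schedmodels L^{*}$, which is exactly your global application.
\item It makes the $(H1)$/$(H2)$ obligations for $\anomHR$ visible.
\end{itemize}

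What the paper's per-step phrasing buys:
\begin{itemize}
\item It attributes each exclusion to a specific operator, matching Lemma~\ref{lem:compose} and the $4 \times 4$ composition table.
\item It spells out the callback axis. You should do this too, since the downward closure invoked in Theorem~\ref{thm:anomaly-soundness} is stated only on $\mathcal{L}_{\mathrm{iso}}$.
\end{itemize}

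Two small tightenings:
\begin{itemize}
\item Theorem~\ref{thm:anomaly-soundness} is stated over the classical alphabet. So for the classical $\phi$ as well, not only $\anomHR$, you need the verdict-transfer clause of Lemma~\ref{lem:judge-callback-bridge} before you can apply it to a $\Sigma_+$ schedule.
\item In clause~(4), avoid saying ``the conflict graphs of different tables are disjoint.'' Their edge sets are disjoint, but they share transaction nodes, so cycles can cross tables. The clause only needs the pin to be an extra conjunct on the policy table.
\end{itemize}
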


\ifshowproofs
\begin{proof}[Proof of Theorem~\ref{thm:composition}]
We exhibit each clause in turn. Throughout, write $\fact^{(i)}$ for
the operator output after step $i$ and $\auditt^{(i)}$ for the
matching audit tuple.

\paragraph{Clause 1 (iso-axis composition).}
Each $\oplus_{a_i}$ has signature
$\Fact \times \Fact \to (\Fact, \auditt)$ at precondition $L_{a_i}$.
On the iso chain $\{\IsoRC, \IsoSI, \IsoSR\}$, the lattice order
coincides with schedule satisfaction: $S \schedmodels L^{*}$ iff
$S \schedmodels L_{a_i}$ for every $i$ (\S\ref{sec:found-iso}). The
callback component of $\mathcal{L}$ is the orthogonal binary
lattice; a schedule satisfies $(L, \mathsf{cb})$ iff it satisfies
$L$ and contains a delivered \code{callback_log} row whose callback-log
read is ordered after the candidate read and before the operator
write (\S\ref{sec:found-iso}). Hence component-wise lattice join
preserves the callback axis: $\IsoSI \vee \IsoRCcb = \IsoSI{+}\mathsf{cb}$.
The composite therefore satisfies the per-operator precondition of
every step in the pipeline simultaneously, so
Theorem~\ref{thm:anomaly-soundness} applies to each step and the
non-witnessed anomalies of each step compose by union over the
schedule's commit prefix: any $\phi$-witness on $S \cdot
\mathrm{step}_i$ would require a violation of $L_{a_i}$ that is
ruled out by $S \schedmodels L^{*}$ and the lattice-join
inequality $L^{*} \succeq L_{a_i}$.

\paragraph{Clause 2 (schema-axis composition).}
By Theorem~\ref{thm:audit-erasure-schema}, every step
$\oplus_{a_i}$ emits an audit tuple
$\auditt^{(i)} = (p_w^{(i)} \provadd p_l^{(i)}, \strat_i, t_s)$
with $p_l^{(i)} \preceq_K \auditt^{(i)}.\textit{prov}$ whenever the
schema mode is $\textsf{audit{-}row}$. By the soundness direction of
T-01b, the existence witness $d^{(i)} = p_w^{(i)}$ at step $i$
satisfies the natural order
$p_l^{(i)} \provadd d^{(i)} = \auditt^{(i)}.\textit{prov}$ on every
commutative semiring with natural order. Sequential composition
appends each $\auditt^{(i)}$ to the audit-row stream; the
$\mathsf{Rec}_H(t_v, t_s)$ recovery set therefore contains every
$\auditt^{(i)}$ committed by the pipeline up to system time
$t_s$, and the per-step reachability witness at every $i$ lifts
to the composite. By contrast, against the \textsf{base} schema,
the schema-axis tightness Proposition (T-01b, Appendix~\ref{sec:appendix-witnesses-schema})
exhibits a witness schedule $S_{N3}$ where some $f_l^{(i)}$ is
not dominated by any audit tuple in $\mathsf{Rec}_H$; the pipeline
therefore admits $\anomAE$ whenever any step required
$\textsf{audit{-}row}$ but ran on $\textsf{base}$. Schema mode
$M^{*}$ is the lattice maximum of per-step requirements on the
two-element schema lattice
$\{\textsf{base} \preceq \textsf{audit{-}row}\}$; the maximum is
$\textsf{audit{-}row}$ whenever any step requires it, closing the
schema-axis clause.

\paragraph{Clause 3 (Allen-relation closure).}
Each operator emission carries a bitemporal-tuple type
$(\fact, t_v, t_s)$ with $t_v, t_s$ interval-valued. Allen's
thirteen base relations are closed under conjunction and
finite-disjunction, and the twelve-relation transitivity table
(omitting equality) of~\cite{allen-1983-cacm} ensures that
projecting through any finite pipeline of Allen-relation
selections produces a finite disjunction of base relations.
Therefore the pipeline output stays inside the bitemporal-tuple
type, and the composite operator's emission has the same algebraic
shape as a single-step emission.

\paragraph{Clause 4 (table-scoped policy pins).}
The $\IsoSRpol$ annotation pins a named policy table to $\IsoSR$
without imposing $\IsoSR$ on the rest of the schema. The proof of
T-01a (\S Representative case) and the table
decomposition argument (Appendix~\ref{sec:appendix-composition}) establish that per-table
isolation is a refinement of the global lattice that composes
without breaking the global $L^{*}$ on table-disjoint
preconditions: the policy table $T_{\mathrm{pol}}$ stays at
$\IsoSR$ in the composite, and every other table $T \neq
T_{\mathrm{pol}}$ stays at $\bigvee_{i: T \neq T_{\mathrm{pol}}}
L_{a_i}$. The composite therefore carries the lattice supremum on
non-policy tables and the policy pin on the policy table.
\end{proof}
\fi

\begin{corollary}[Pipeline contract surface]
\label{cor:pipeline-contract}
The four typed operators form a contract surface for production
pipelines: any sequential composition is well-typed at the lattice
supremum $L^{*}$ of its per-step iso preconditions and the lattice
maximum $M^{*}$ of its per-step schema requirements. A deployment
claiming the composite contract declares only those two scalars;
Theorems~\ref{thm:anomaly-soundness} and~\ref{thm:audit-erasure-schema}
discharge the per-step exclusions and the composition theorem lifts
them, reducing the cross-operator anomaly surface that ingest, dedupe,
evidence-weight, and policy-check silently expose to a single
declarable signature.
\end{corollary}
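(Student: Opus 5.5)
The corollary is a repackaging of Theorem~\ref{thm:composition}, so the plan is to instantiate that theorem once and then argue that the two scalars $L^{*}$ and $M^{*}$ are the whole contract.

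Fix an arbitrary sequential pipeline $\Pi = \oplus_{a_n} \circ \cdots \circ \oplus_{a_1}$. Read off $L_{a_i} = \mathsf{req}(a_i)$ from Definition~\ref{def:operator-isolation-signature}. Set $M_{a_i} = \textsf{audit{-}row}$ for every typed operator, since each emits an audit tuple.

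\textbf{Well-typedness.} First show $\Pi$ is well-typed at $(L^{*}, M^{*})$.
\begin{enumerate}
\item The product lattice $\mathcal{L}_{\mathrm{iso}} \times \mathcal{L}_{\mathrm{cb}}$ is finite with component-wise join, so the finite supremum $L^{*}$ exists.
\item By the first step of clause~1 of Theorem~\ref{thm:composition}, $S \schedmodels L^{*}$ implies $S \schedmodels L_{a_i}$ for every $i$. This is downward closure of schedule satisfaction along $L^{*} \succeq L_{a_i}$.
\item Hence the premise of each rule \textsc{(Lww)}--\textsc{(Rule)} holds at every step.
\item Each step has signature $\Fact \times \Fact \to (\Fact, \auditt)$. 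Clause~3 keeps each step's output of bitemporal-tuple type, so it is a legal input to the next step.
\item Induction on $n$ then gives well-typedness of the composite. I would cite Lemma~\ref{lem:compose} here instead of re-deriving it.
\end{enumerate}

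\textbf{Discharging the exclusions.}
\begin{itemize}
\item Clause~1 of Theorem~\ref{thm:composition} lifts Theorem~\ref{thm:anomaly-soundness} per step. It excludes every $\phi$ with $\guardiso(\phi) \preceq L^{*}$, including $\anomHR$ via Corollary~\ref{cor:n1-corollary} and $\anomBDS$ via Corollary~\ref{cor:n2-corollary}.
\item Clause~2 lifts Theorem~\ref{thm:audit-erasure-schema} at $M^{*} = \textsf{audit{-}row}$ and excludes $\anomAE$.
\item Clause~4 handles the $\IsoSRpol$ pin as a per-table side condition. It therefore need not be declared as a third scalar, since it is attached to $\opRule$'s type.
\end{itemize}

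\textbf{The ``only those two scalars'' claim.} This is the step I expect to be the main obstacle, because a contract could silently depend on the operator ordering. The plan is to observe that the conclusions of clauses~1 and~2 depend on the pipeline only through $L^{*}$ and $M^{*}$.
\begin{itemize}
\item Join and max are commutative, so any reordering or regrouping of the same operator multiset yields the same contract. This covers ingest, dedupe, evidence-weight and policy-check, which appear as operator chains.
\item For minimality, I would invoke the tightness direction of Theorem~\ref{thm:anomaly-soundness} (Proposition~\ref{prop:t01-tightness}) and Proposition~\ref{prop:t01b-tightness}. Any declared pair strictly below $(L^{*}, M^{*})$ fails the precondition of some step, and that step's exposed anomaly then has a witness schedule.
\end{itemize}
So the two scalars are both sufficient and necessary as a declarable signature.
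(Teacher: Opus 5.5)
Your sufficiency argument is the paper's own route. The corollary has no separate proof in the paper; it is read directly off Theorem~\ref{thm:composition} (clauses~1, 2 and~4) together with Lemma~\ref{lem:compose}, which is exactly how you instantiate it. One caveat applies there. Clause~3 is an Allen-selection closure statement, so it does not show that a single emission $(\Fact, \auditt)$ supplies the contradicting pair $\Fact \times \Fact$ that the next rule premise needs. Your step~4 therefore inherits the paper's gloss on chaining rather than closing it.

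The real problem is confined to what you add beyond the paper: the claim that the two scalars are also \emph{necessary}. Proposition~\ref{prop:t01-tightness} lives on $\mathcal{L}_{\mathrm{iso}}$, and the schema witness $S_{N3}$ (Appendix~\ref{sec:appendix-witnesses-schema}) lives on $\mathcal{L}_{\mathrm{schema}}$. Neither separates levels on the callback axis $\mathcal{L}_{\mathrm{cb}}$. Take $\opAwait \circ \opLWW$, for which $L^{*} = \IsoRCcb$. The declaration $(\IsoRC, \bot)$ is strictly below $L^{*}$ and violates the \textsc{(Await)} premise. Yet it admits exactly the same classical anomalies as $L^{*}$, so no $S_\phi$ exhibits the loss; the exposed failure is the callback boundary.

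You have two ways to repair this.
\begin{itemize}
\item Use the AWAIT-NOCB weakening history of Table~\ref{tab:operator-precondition-witnesses}.
\item Follow the paper in Lemma~\ref{lem:compose} and Appendix~\ref{app:graded-composition} and get exactness on the typing side. $L^{*}$ is the least element dominating every per-step premise, so any declaration not dominating it violates some premise, with no detour through anomaly witnesses.
\end{itemize}
The typing-side formulation also repairs your quantifier. ``Strictly below'' misses incomparable declarations, such as $(\IsoSR, \bot)$ against $L^{*} = (\IsoSI, \mathsf{cb})$ for $\opAwait \circ \opEvi$, which are equally insufficient.

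Finally, fixing $M_{a_i} = \textsf{audit{-}row}$ ``since each emits an audit tuple'' conflates emission with persistence. Every rule emits $\auditt$, but only the $\textsf{audit{-}row}$ mode places it in $\mathsf{Rec}_H$. The paper treats $M_{a_i}$ as a per-step requirement that may be $\textsf{base}$, so your choice freezes $M^{*}$ as a constant and covers only that instance.
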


\begin{remark}[Sequential scope]
\label{rem:t05-sequential-only}
Theorem~\ref{thm:composition} covers sequential composition: the steps
run in a fixed order on the contradicting pair under one wrapping
transaction. The dispatcher contract (\S\ref{sec:algebra}) admits a
single pair to one operator gate per transaction, so concurrent
composition on the same key does not arise, and concurrent composition
on disjoint pairs reduces to Theorem~\ref{thm:anomaly-soundness} per
transaction.
\end{remark}

The two argmax-fold operators also extend past the binary
incumbent-versus-incoming form: a conflict set of $n$ pairwise-contradicting
rows on one partition resolves under last-writer-wins or evidence-weighted
merge with the winner and the merged provenance independent of fold order,
and the audit row dominating every loser under the K-semiring order. A
confluence grid over $n \in \{2, \ldots, 8\}$ confirms both properties
(Appendix~\ref{app:nary-confluence-grid}).

\begin{proposition}[N-ary conflict-set resolution]
\label{prop:nary-conflict-set}
Let $C = \{f_1, \ldots, f_n\}$ with $n \ge 2$ be a conflict set of
pairwise-contradicting facts over one $(\subj, \pred)$ partition, and let
$\oplus_a \in \{\opLWW, \opEvi\}$ be a fold operator whose winner selector
is an argmax over a total preference order
(version time for $\opLWW$, the confidence-then-version-then-identity key
for $\opEvi$). Then $\oplus_a$ extends to an $n$-ary resolution
$\oplus_a(C) = (f_{i^{\star}}, \auditt)$ with two guarantees:
\begin{enumerate}
\item \emph{(Confluence.)} The winner $f_{i^{\star}}$ and the merged
provenance
$\bigoplus_{j} p_{f_j}$ in $\auditt$ are invariant under every
permutation of $C$: resolving $C$ in any order returns the identical
winner identity and the identical merged provenance.
\item \emph{(Provenance-completeness.)} The audit tuple's merged
provenance dominates every member under the K-semiring natural order:
$p_{f_j} \preceq_K \auditt.\textit{prov}$ for every $f_j \in C$, including
the winner.
\end{enumerate}
\end{proposition}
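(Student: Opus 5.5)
The plan is to define the $n$-ary extension as a left fold of the binary operator of \S\ref{sec:algebra} and to reduce both clauses to two algebraic facts. The first fact is that argmax over a total preorder that is in fact a \emph{strict total order} on $C$ is associative and commutative. The second is that $\provadd$ is associative and commutative in a commutative semiring.

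Concretely, fix a permutation $\sigma$ of $\{1,\ldots,n\}$. Set $g_1 := f_{\sigma(1)}$ and $q_1 := p_{f_{\sigma(1)}}$. For $k \ge 2$, let $g_k$ be the winner of $\oplus_a(g_{k-1}, f_{\sigma(k)})$, and set $q_k := q_{k-1} \provadd p_{f_{\sigma(k)}}$. The resolution $\oplus_a(C)$ is $(g_n, (q_n, \strat, t_s))$.

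A well-formedness check is needed first. Each intermediate pair $(g_{k-1}, f_{\sigma(k)})$ must still be a contradicting pair on the same partition, so that the binary rule applies. This follows from pairwise contradiction of $C$, because $g_{k-1} \in C$ always. Note that the binary operator stamps a fresh system-time start on the winner. The selector key must therefore be read off the original fact, or, for $\opLWW$, off its original version time. I would state this convention explicitly so that re-stamping does not perturb later comparisons.

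For confluence of the winner, I first argue that the preference key is injective on $C$. For $\opEvi$ the key ends in the fact identifier, so it is a strict total order. For $\opLWW$ I would use the same $(\textit{sf}, \textit{id})$ tie-break as rule \textsc{(Lww)}. The induction hypothesis is then that $g_k = \arg\max$ of the key over $\{f_{\sigma(1)},\ldots,f_{\sigma(k)}\}$. The step holds because $\max(\max(A), x) = \max(A \cup \{x\})$ in a total order. Hence $g_n$ is the unique maximum of $C$, which is independent of $\sigma$.

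For the merged provenance, $q_n = \bigoplus_{j} p_{f_j}$ follows by induction, using associativity and commutativity of $\provadd$ (generalized commutativity over finite sums). This expression is visibly permutation-invariant.

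Provenance-completeness reuses the existence-witness argument of Theorem~\ref{thm:audit-erasure-schema}. For each $j$, take $d_j := \bigoplus_{i \ne j} p_{f_i}$. Then $p_{f_j} \provadd d_j = q_n$ by commutativity and associativity, so $p_{f_j} \preceq_K q_n$. The case $n=2$ with $d = p_w$ is exactly the binary theorem. The winner is included automatically, since its index is one of the $j$.

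I expect the main obstacle to be the interaction between the binary operator's side effects and the fold, rather than the algebra itself. There are two parts to this. The first is showing that the winner's fresh system-time stamp and the intermediate audit tuples do not change the selector. This is handled by the key convention above. The second is that the fold emits $n-1$ intermediate audit tuples whose provenances are partial sums $q_k$. I would resolve this by specifying that the $n$-ary operator commits the single final tuple $(q_n,\strat,t_s)$, with the intermediate ones being internal. Alternatively, if all intermediate tuples are committed, then each $q_k \preceq_K q_n$ also holds, and the claim still holds for the final tuple.
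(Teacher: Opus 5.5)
Your proposal is correct and follows the same route as the paper's sketch. Confluence comes from the order-independence of an argmax over a total order together with associativity and commutativity of $\provadd$. Provenance-completeness comes from the existence witness $d_j = \bigoplus_{i \neq j} p_{f_i}$, which is exactly the paper's ``sum of the remaining'' summands lifting Theorem~\ref{thm:audit-erasure-schema}.

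Your convention of reading the selector key, with its identity tie-break, off the original facts rather than the re-stamped winner is a sharpening the paper leaves implicit. It matters: a naive fold of the binary $\opLWW$ could let the freshly stamped incumbent beat every later member, which would break confluence.
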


\noindent\emph{Proof sketch.}
Confluence follows because an argmax over a total order is independent of
enumeration order and the provenance sum $\provadd$ is commutative and
associative under any commutative semiring, so the fold over $C$ is
permutation-stable. Provenance-completeness follows from
$p_{f_j} \preceq_K \bigoplus_{k} p_{f_k}$ under the natural order
(existence witness the sum of the remaining
monomials~\cite{green-karvounarakis-tannen-2007}), lifting the binary
domination of Theorem~\ref{thm:audit-erasure-schema} to the whole family.
The confluence grid for $n \in \{2, \ldots, 8\}$ sits in Appendix~\ref{app:nary-confluence-grid}.

\begin{remark}[Selection-fold wiring]
\label{rem:nary-selection-scope}
The $\opAwait$ and $\opRule$ operators resolve a conflict set by direct
selection: an external callback or a policy row over the family elects
one member. The judge-logged dispatcher wires this $n$-ary selection at
the ingest seam. The oracle returns an index into the canonically
ordered conflict set, the durable judge log records the elected member's
stable identity under an order-independent set key, and a crash replay
re-elects the same member by that identity. The two fold operators
($\opLWW$, $\opEvi$) and the two selection operators together cover the
conflict-set algebra at every arity.
\end{remark}

The keyed-log discipline is also necessary: any system whose operator
surface does not enforce it admits $\anomHR$ under bounded oracle
nondeterminism, so \toki's defence is a tight characterisation within
the relational schedule model. This lower bound is the companion of
the verdict matrix, where every baseline omitting the discipline
admits $\anomHR$; \S\ref{sec:measurement} confirms the bound
empirically, with the measured admit rate landing on the $2p(1 - p)$
closed form across $30$ calibrated cells
(Figure~\ref{fig:oracle-calibration}). Witness schedules for every
axis sit in Appendix~\ref{app:witness-schedules}.

\begin{definition}[Bounded oracle nondeterminism]
\label{def:bounded-oracle-nondeterminism}
An external judge oracle $J$ is \emph{boundedly nondeterministic} when
it factors as $J(R, \theta, \omega) \in \{0, 1\}$ over an internal
state space $\Omega$ with $|\Omega| \ge 2$, and some operator input
$(R^{\star}, \theta^{\star})$ admits two states
$\omega_1, \omega_2 \in \Omega$ with
$J(R^{\star}, \theta^{\star}, \omega_1) \neq
J(R^{\star}, \theta^{\star}, \omega_2)$. This is the standard
production model for an LLM judge: $\omega$ ranges over decoder seeds at
non-zero temperature, sampling state, hardware numerical
nondeterminism, or API rerolls. The degenerate $|\Omega| = 1$ case the
bound excludes by hypothesis.
\end{definition}

\begin{definition}[H1-compliant system]
\label{def:h1-compliant-system}
A contradiction-resolution system $\mathcal{S}$ over the alphabet
$\Sigma_+ = \Sigma \cup \{j_i(R, \theta), \mathit{cb}_i(R, h, k)\}$ is
\emph{H1-compliant} when its operator surface enforces invariant H1 of
Lemma~\ref{lem:judge-callback-bridge}: every $j_i(R, \theta)$ is paired
with a commit-preceding row at key $(R, \theta)$ in \code{judge\_log}
recording the witnessed vote, so every later invocation at the same key
reads the log instead of issuing a fresh $J$ call. A system whose
operator surface omits the hook, or which calls $J$ directly from
free-form agent code, is \emph{H1-non-compliant}.
\end{definition}

\begin{theorem}[N1 lower bound: keyed-log discipline is necessary]
\label{thm:n1-lower-bound}
Let $\mathcal{S}$ be a contradiction-resolution system over the typed-operator
alphabet $\Sigma_+$ of \S\ref{sec:found-schedule}, executed against a
boundedly nondeterministic judge oracle $J$
(Definition~\ref{def:bounded-oracle-nondeterminism}). If $\mathcal{S}$ is
H1-non-compliant (Definition~\ref{def:h1-compliant-system}), then there
exists a schedule history $H$ executable by $\mathcal{S}$ such that
$H \models \anomHR$ on some operator input $(R^{\star}, \theta^{\star})$.
Equivalently, on every operator input with at least two
nondeterministic-state branches $\omega_1, \omega_2 \in \Omega$ realising
distinct votes, $\mathcal{S}$ admits an adversarial replay pair witnessing
$\anomHR$ at $(R^{\star}, \theta^{\star})$.
\end{theorem}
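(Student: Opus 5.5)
The plan is a direct construction: from H1-non-compliance and the two distinguishing oracle states, I will build an adversarial replay pair and then check it against the defining equation of $\anomHR$ (Equation~\eqref{eq:n1-judge-replay-inconsistency}). Fix the input $(R^{\star}, \theta^{\star})$ and the states $\omega_1, \omega_2 \in \Omega$ supplied by Definition~\ref{def:bounded-oracle-nondeterminism}, so that $J(R^{\star}, \theta^{\star}, \omega_1) \neq J(R^{\star}, \theta^{\star}, \omega_2)$. By Definition~\ref{def:h1-compliant-system}, H1-non-compliance means some invocation $j_i(R^{\star}, \theta^{\star})$ reachable by $\mathcal{S}$ is not paired with a commit-preceding \code{judge\_log} row. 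That invocation therefore issues a fresh $J$ call instead of reading a logged vote. Here I will make a definite modelling choice: non-compliance is witnessed at a key that some history can reach. The operator surface either omits the hook entirely, or routes the call through free-form agent code, and either way this point can be reached.

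\textbf{Step 1 (common prefix).} Let $P$ be a committed-write prefix executable by $\mathcal{S}$ that ends just before the non-compliant judge event, with read set $R^{\star}$ materialised and decoder tuple pinned to $\theta^{\star}$. I will build $P$ from the running-example pattern: $w_1(f_1)\,c_1\,w_2(f_2)\,c_2$ followed by the operator's begin and its reads of $f_1, f_2$. This yields $R^{\star} = \{f_1, f_2\}$.

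\textbf{Step 2 (two replays).} Define $H_1 = P \cdot j(R^{\star}, \theta^{\star}) \cdots$ with the oracle resolving in state $\omega_1$, and $H_2$ identically with state $\omega_2$. Both histories are executable, because the oracle state is external to the schedule and the system places no log read in between. Consequently $\mathrm{pre}(H_1) = \mathrm{pre}(H_2)$, $\theta_1 = \theta_2 = \theta^{\star}$, and $v_1 \neq v_2$ by choice of $\omega_1, \omega_2$. Taking $H$ to be either replay, with the pair as the existential witness $S_1, S_2$, gives $H \models \anomHR$ at $(R^{\star}, \theta^{\star})$. The ``equivalently'' clause follows because the same construction works verbatim at every input that has two vote-distinguishing states.

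\textbf{Main obstacle.} The hard step is showing that a second invocation cannot accidentally read a prior vote, i.e.\ that non-compliance really yields a fresh call on the replay rather than a cached one. One might worry that a non-compliant system could still log the vote on some paths. I plan to handle this by placing the non-compliant invocation as the first event at its key in both replays. Then no prior row exists in any history, so regardless of later logging the vote is drawn directly from $J$. To make ``replay'' precise, I will treat $H_1, H_2$ as crash-restart re-executions over the same committed prefix; this is exactly the notion used in the reverse direction of Lemma~\ref{lem:judge-callback-bridge}. As a sanity check, the contrapositive agrees with Corollary~\ref{cor:n1-corollary}: under H1 the second replay reads the logged row, so $v_1 = v_2$. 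If the two oracle states are equally likely, the admit rate of this construction is $2p(1-p)$.
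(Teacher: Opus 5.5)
\textbf{Gap: the witness never uses H1-non-compliance.} Your overall route matches the paper's sketch. You take $\omega_1,\omega_2$ and $(R^{\star},\theta^{\star})$ from Definition~\ref{def:bounded-oracle-nondeterminism}, run two replays over a shared committed-write prefix (one per oracle state), and read off the defining condition of $\anomHR$ (equal prefix, equal $\theta$, distinct votes). The problem is how you resolve your ``main obstacle''. You make the judge event the \emph{first} event at its key in \emph{both} replays, with no prior row in any history. That removes the only place where H1-non-compliance does any work.

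Run an H1-compliant system on your two histories and it behaves identically. Each history reaches $j(R^{\star},\theta^{\star})$ with no row at the key, so the dispatcher calls $J$ afresh, appends its own vote to \textsf{judge\_log}, and commits. $H_1$ obtains $v_1$ and $H_2$ obtains $v_2 \neq v_1$, over the same prefix $P$ and the same $\theta^{\star}$. Your pair is therefore equally a ``witness'' against a compliant system. That contradicts Corollary~\ref{cor:n1-corollary} and would collapse the tight characterisation of Corollary~\ref{cor:n1-tightness}. Two alternative first executions from a prefix with no committed verdict exhibit only the oracle's nondeterminism, not a replay inconsistency. Your sanity check also does not match your construction: in your setup there is no logged row for the second replay to read, even under H1.

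\textbf{What the witness needs.} The replay must be a \emph{re-invocation} at a key whose first vote has already been produced. In Appendix~\ref{app:graded-n1-lower-bound} this is the unkeyed schedule $j_1(R)\,j_2(R)\,c_1\,c_2$; operationally it is a re-query, or a crash-restart after the first verdict committed. The first invocation resolves in $\omega_1$. At the second, an H1-compliant dispatcher would short-circuit on the keyed \textsf{judge\_log} row and return $v_1$. An H1-non-compliant one instead issues a fresh $J$ call, which can resolve in $\omega_2$. This is the paper's sentence ``H1 non-compliance prevents the dispatcher from short-circuiting on a keyed log row, so both replays are executable,'' and it is exactly where the hypothesis is used.

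\textbf{How to handle your worry.} The concern that a non-compliant system might still log on some paths is not solved by the first-event device. It is solved by the surface-level reading of Definition~\ref{def:h1-compliant-system} that you already adopt as a modelling choice: the operator surface omits the hook, or $J$ is called from free-form agent code. Under that reading the re-invocation at $(R^{\star},\theta^{\star})$ is fresh. You also need that same reading to pass from ``some invocation is unpaired'' to ``an invocation at $(R^{\star},\theta^{\star})$ is unpaired''.

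\textbf{Minor point.} $R^{\star}$ is fixed by Definition~\ref{def:bounded-oracle-nondeterminism}. Your prefix should materialise the given $R^{\star}$ rather than set $R^{\star}=\{f_1,f_2\}$.
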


\noindent\emph{Proof sketch.}
Definition~\ref{def:bounded-oracle-nondeterminism} fixes
$\omega_1 \ne \omega_2$ and an operator input
$(R^{\star}, \theta^{\star})$ on which $J$ returns different
votes. Two replays of $H = H_0 \cdot j_i(R^{\star},
\theta^{\star})$, one at each oracle state, share the
committed-write prefix $H_0$ and the decoder parameter
$\theta^{\star}$ but witness different votes; H1 non-compliance
prevents the dispatcher from short-circuiting on a keyed log
row, so both replays are executable. The pair is an $\anomHR$
witness. Appendix~\ref{app:graded-n1-lower-bound} carries the full witness
construction.

\begin{corollary}[Keyed-log discipline is a tight characterisation of $\anomHR$ soundness]
\label{cor:n1-tightness}
Pairing Corollary~\ref{cor:n1-corollary} (H1-compliant systems exclude
$\anomHR$) with Theorem~\ref{thm:n1-lower-bound} (H1-non-compliant
systems admit it), within the relational schedule model where a
system's only lever over judge-vote stability is the isolation level on
the keyed read,
\[
  \mathcal{S} \text{ excludes } \anomHR
  \iff
  \mathcal{S} \text{ is H1-compliant at } \IsoSR
  \text{ on } \mathsf{judge\_log}.
\]
Tightness holds within this model: a unique-key insert-if-absent
constraint, a linearizable key-value store, or a content-addressed
judge cache each realises the keyed-read contract by other means and
excludes $\anomHR$; the lower bound binds any system omitting all three.
\end{corollary}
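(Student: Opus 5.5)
The plan is to prove the biconditional one direction at a time, obtaining each from a result already stated and then splicing them together within the relational schedule model.

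For the direction ($\Leftarrow$), assume $\mathcal{S}$ is H1-compliant (Definition~\ref{def:h1-compliant-system}) and runs the keyed read on $\mathsf{judge\_log}$ at $\IsoSR$. Then the hypotheses (H1)+(H2) of Lemma~\ref{lem:judge-callback-bridge} hold, so equation~\eqref{eq:bridge-anomhr} applies. Any $\anomHR$ witness at $(R,\theta)$ on a schedule $S$ of $\mathcal{S}$ therefore becomes a $P_2$ or $P_3$ witness on $\pi_J(S)$ at $\textsf{judge\_log}[(R,\theta)]$. Since $\guardiso(\{P_2,P_3\}) = \IsoSR$, the soundness direction of Theorem~\ref{thm:anomaly-soundness} with $\Phi=\{P_2,P_3\}$ excludes both patterns. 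This is exactly Corollary~\ref{cor:n1-corollary}, and I will cite it rather than redo it.

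For the direction ($\Rightarrow$), I argue by contraposition. If $\mathcal{S}$ is not H1-compliant, Theorem~\ref{thm:n1-lower-bound} supplies a history $H$ with $H\models\anomHR$, provided the oracle is boundedly nondeterministic (Definition~\ref{def:bounded-oracle-nondeterminism}). The remaining case is a system that is H1-compliant but runs the keyed read below $\IsoSR$. The first key materialization is an insert-if-absent, so I would use the $P_3$ witness schedule from Proposition~\ref{prop:t01-tightness} at $\IsoSI$. In that schedule, two concurrent first callers both observe the key as absent, and each installs a vote drawn from a different oracle state $\omega_1,\omega_2$. Reading this back through the reverse construction in the proof of Lemma~\ref{lem:judge-callback-bridge} yields two replays with equal prefixes, equal $\theta$, and distinct votes, which is an $\anomHR$ witness.

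The main obstacle is making ``within the relational schedule model'' precise enough that the alternatives named in the statement do not contradict ($\Rightarrow$). A unique-key insert-if-absent, a linearizable key-value store, and a content-addressed cache each exclude $\anomHR$ without declaring $\IsoSR$. I would handle this by stating as an explicit modelling assumption that the only lever a system has is the isolation level on the keyed read. I would then show that each alternative realises a linearizable keyed materialization. Under the schedule-level reading of Corollary~\ref{cor:n1-corollary}, such a materialization forbids the same $P_3$ pattern as $\IsoSR$ on $\mathsf{judge\_log}$, so it counts as ``at $\IsoSR$'' in the biconditional. I would also make bounded nondeterminism of $J$ a standing hypothesis of the corollary. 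Otherwise, with $|\Omega|=1$, an H1-non-compliant system trivially excludes $\anomHR$ and ($\Rightarrow$) fails.
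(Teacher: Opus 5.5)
Your proposal is correct and follows the paper's route. The corollary rests only on pairing Corollary~\ref{cor:n1-corollary} for ($\Leftarrow$) with the contrapositive of Theorem~\ref{thm:n1-lower-bound} for ($\Rightarrow$), and the named alternatives are absorbed by the proviso that the isolation level on the keyed read is the only lever. Your two additions make explicit what that pairing leaves implicit: the bounded-nondeterminism hypothesis that ($\Rightarrow$) silently needs (cf.\ Remark~\ref{rem:t06-scope}), and the case of an H1-compliant system whose keyed read runs below $\IsoSR$, which Theorem~\ref{thm:n1-lower-bound} does not cover.

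In that middle case, both of your citations are off. The $P_3$ schedule of Proposition~\ref{prop:t01-tightness} is a predicate re-read on the policy table, not two racing first callers. The reverse construction behind Lemma~\ref{lem:judge-callback-bridge} is given only for $P_2$.

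Neither citation is needed, because the race you describe is itself the witness. Absent the three named mechanisms, both first callers see the key absent under $\IsoSI$, and neither can short-circuit on the other's uncommitted row. Calling $J$ at $\omega_1 \neq \omega_2$ then gives two executions with the same committed prefix and $\theta$ but different votes. That is $\anomHR$ by the same two-replay construction as Theorem~\ref{thm:n1-lower-bound}, with the snapshot rather than a missing hook blocking the short-circuit.
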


\begin{remark}[Scope]
\label{rem:t06-scope}
The lower bound binds systems running against a boundedly
nondeterministic oracle; it constrains neither deterministic oracles
($|\Omega| = 1$) nor systems that precompute verdicts offline, and
other drift sources (model upgrades, prompt or retrieval drift) are
bounded by the judge-prompt sensitivity Lemma rather than
by the bound here.
\end{remark}

The bound compounds along a re-query trajectory: an H1-non-compliant
baseline re-invokes the oracle on every re-query, so its
replay-consistency rate decays toward zero over re-queries on the
bounded-nondeterminism keys, while \toki reads the committed keyed-log
verdict and stays at one, the trajectory \S\ref{sec:measurement}
charts (Figure~\ref{fig:trajectory-replay}).
         %   3.3 Soundness
% @owner       src::sections::04-implementation
% @does        The TOKI System: maps the algebra onto an eleven-column user-visible schema sealed by a check-constrained audit-row discriminator, four pure operator classes, an exact K-semiring carrier, and a hand-rolled as-of predicate, all over an unmodified relational engine; the dual-row contradiction-resolution write path is stated as Algorithm~\ref{alg:resolve-write}.
% @needs       src/sections/03-foundations.tex, macros.tex (algorithm + algpseudocode environments)
% @feeds       src/main.tex, src/sections/05-measurement.tex
% @breaks      claiming native bitemporal support the engine does not ship; suppressing the hidden audit-row discriminator the audit-row theorem requires; leaking class or method identifiers that belong to the reference codebase rather than to the public contract surface; an algorithm step that names an internal class path instead of the public operation.

\subsection{The \toki System}
\label{sec:impl}

\toki runs the \S\ref{sec:foundations} algebra on an unmodified
relational engine with no native bitemporal support, in roughly
$2{,}700$ lines of Python; the abstraction carries the
contribution. Every inference rule lands as one of four executable
pieces: an eleven-column user-visible schema sealed by one twelfth
audit-row discriminator; four pure operator classes dispatched on a
strategy stamp; a K-semiring polynomial serialized in line with exact
dominance; and one SQL:2011 as-of predicate that serves both
retrieval and audit replay. \toki recruits no storage feature beyond
standard SQL.

A check-constrained discriminator lifts the
Theorem~\ref{thm:audit-erasure-schema} dual-row signature into the
relational layout; it is structural, written only by the audit-row
emission path. The four operators are pure functions over fact pairs,
and a single write path owns the I/O seam. Dispatch routes on the
resolution-strategy column, and an unknown strategy raises a typed
error with no silent fallback. \toki enforces each operator's
isolation pin one level above the operator: before the operator
commits, the write path records the adjudicated vote in the keyed
judge log, discharging the ordering hypothesis of
Lemma~\ref{lem:judge-callback-bridge} by construction.

Algorithm~\ref{alg:resolve-write} states the write path that ties these
pieces together: the dispatcher detects a contradicting incumbent on the
partition, routes to the typed operator, sequences the keyed judge log
before the operator commit, and emits the dual-row pair whose audit half
discharges Theorem~\ref{thm:audit-erasure-schema}. The single binary
incumbent precondition is enforced by construction (a multi-incumbent
partition raises rather than guessing), so the incremental write path
stays binary, and the soundness of Theorem~\ref{thm:anomaly-soundness}
carries to it verbatim.

\begin{algorithm}[t]
\caption{Dual-row contradiction-resolution write path on partition
  $(\subj, \pred)$. Operators are pure; this is the persistence policy
  that pairs each operator with loser invalidation and audit-row
  emission. The judge-log write of line~\ref{alg:line:judgelog} precedes
  the operator commit, discharging hypothesis~H1 of
  Lemma~\ref{lem:judge-callback-bridge}.}
\label{alg:resolve-write}
\begin{algorithmic}[1]
  \Require fact $f$ with strategy stamp $\strat(f)$, write time $t_s$,
    judge parameter pin $\theta$; oracle $\rho$ for $\opAwait / \opRule$
  \Ensure committed current row and, on contradiction, an audit row
    dominating the loser under $\preceq_K$
  \State $O \gets$ open current rows on $(\subj, \pred)$ with
    $\Tsystem$ end open \Comment{$\rowkind=\current$}
  \If{$\exists\, r \in O$ with $r.\obj = f.\obj$ and
    $\overlap(r, f)$}
    \State \Return \Comment{duplicate confirmation; no new state}
  \EndIf
  \State $C \gets \{\, r \in O : r \conflict f \,\}$
    \Comment{Allen overlap, differing object}
  \If{$C = \emptyset$}
    \State \textbf{insert} $f$ as a fresh current row;
      \Return
  \ElsIf{$|C| > 1$}
    \State \textbf{raise} \Comment{binary operators require one incumbent}
  \EndIf
  \State $r \gets$ the single incumbent in $C$
  \State select operator $\oplus$ by $\strat(f)$:
    $\opLWW{\mid}\IsoRC$, $\opEvi{\mid}\IsoSI$,
    $\opAwait{\mid}\IsoRCcb$, $\opRule{\mid}\IsoSRpol$
  \If{$\oplus \in \{\opAwait, \opRule\}$}
    \State $v \gets \rho(r, f)$;\;
      \textbf{append} $v$ to $\mathrm{judge\_log}[(R, \theta)]$
      \label{alg:line:judgelog}
      \Comment{before the operator commit (H1)}
  \EndIf
  \State $(w, l) \gets \oplus(r, f)$
    \Comment{winner / loser by the operator's order}
  \State $p_{\mathrm{merge}} \gets p_w \provadd p_l$
    \Comment{exact K-semiring merge}
  \If{$l = r$} \Comment{incumbent lost}
    \State \textbf{close} $r$ at $\Tsystem$ end $t_s$;\;
      \textbf{insert} $w$ stamped with $t_s$ and $p_{\mathrm{merge}}$
  \EndIf
  \State \textbf{insert} audit row
    $(\,p_{\mathrm{merge}},\, \strat(f),\, t_s,\, \text{witness}\,)$
    \Comment{$\rowkind=\auditt$; defends $\anomAE$}
  \State \textbf{commit} at the operator's isolation pin
  \State \Return $(w, \text{audit})$
\end{algorithmic}
\end{algorithm}

The witness polynomial of \S\ref{sec:found-prov} serializes as a
text-encoded provenance column. Provenance merge realises
$p_w \provadd p_l$ over the multilinear instance
$\mathbb{N}[X, T]$, and a dominance test decides
$p_{\mathrm{old}} \preceq_K p_{\mathrm{merge}}$ as an exact polynomial
comparison. A multidegree multiset variant is also available,
so re-verifying Theorem~\ref{thm:anomaly-soundness} at the
strongest instantiation swaps the carrier without a proof rewrite.
Retrieval hand-rolls the SQL:2011 as-of predicate the engine omits:
a filter on the four timestamp columns plus the audit-row
discriminator serves retrieval at the current value, audit replay at
the audit value, and a third mode exposes both, closed-open under the
\S\ref{sec:found-schedule} convention.

Three details make the keyed-log discipline operational. The reference
persists the keyed judge log as an append-only table, and a crash-replay
test confirms a committed verdict replays consistently after a connection
drop and reload, realising the keyed-log discipline the
Theorem~\ref{thm:n1-lower-bound} lower bound requires. The audit row
encodes its conflict witness as JSON, so arbitrary object strings
round-trip through the single object column exactly. A repeated same-fact
confirmation accumulates the duplicate's provenance into the surviving row
under the K-semiring sum, keeping the audit trail complete across
re-ingestion.
 %   3.4 The TOKI System
% @owner       src::sections::05-measurement
% @does        Validates the typed-operator algebra against eight systems organised by row class through five experiment clusters (G1 verdict matrix, G2 mechanism stress, G3 cost envelope, G4 carrier ablation, G5 theorem anchors).
% @needs       src/sections/03-foundations.tex, src/figures/fig-g2-forest.pdf, src/figures/fig-systems-perf-scaling.pdf, src/figures/fig-g5-anchors.pdf, src/figures/fig-oracle-variance.tex, src/figures/fig-trajectory-replay.tex, src/tables/tab-anomaly-bench.tex, src/tables/tab-ablation.tex, src/tables/tab-cross-system-utility.tex
% @feeds       src/main.tex
% @breaks      conflating eleven adapter rows with eleven systems; reading the matrix headline as a utility claim; quoting any cross-system delta outside the statistically-indistinguishable-from-zero envelope at alpha = 0.05.

\section{Empirical Validation}
\label{sec:measurement}

Five experiment clusters validate the algebra, and across all of
them every agent-memory baseline admits at least one named anomaly
while \toki excludes all three. A verdict matrix establishes the
write-time correctness witness over eight systems
(\S\ref{sec:meas-g1}); three controlled clusters isolate each
defence's mechanism (\S\ref{sec:meas-g2}), single-process cost
(\S\ref{sec:meas-g3}), and provenance carrier (\S\ref{sec:meas-g4});
and a final set of grids anchors every soundness theorem to a measured
$0$/$1$ boundary (\S\ref{sec:meas-g5}). A closing cross-system
comparison bounds the scope: the contract makes no
downstream-utility claim.

Effect sizes report paired bootstrap confidence intervals at
$1{,}000$ resamples and a fixed seed; family-wise error rates
use Holm step-down at $\alpha = 0.05$. Cross-system equivalence
reports Welch's $t$-statistic with Welch-Satterthwaite degrees
of freedom; cost fits report Spearman $\rho$ and least-squares
$R^{2}$ on log-transformed axes. Appendix~\ref{app:protocols-stats} carries the full methodology.

\subsection{Every agent-memory baseline admits at least one anomaly}
\label{sec:meas-g1}

We evaluate eight systems: six agent-memory baselines
(\sysmem~v2~\cite{chhikara-2025-mem0},
\sysmem~v3~\cite{mem0ai-readme-2026},
\sysgraphiti~\cite{getzep-graphiti-2026},
\sysletta~\cite{letta-context-constitution-2026},
\syszep~\cite{rasmussen-2025-zep},
\sysmirix~\cite{lu-2026-mma}), one engine-layer comparator
(\sysworlddb~\cite{ganesan-2026-worlddb}), and the reference
algebra \sysours. Each system is probed on its own contradiction
path against the three write-time anomalies; the verdict matrix
(Table~\ref{tab:anomaly-bench}) reads each verdict twice, from the
system's design and from its running code. The per-system harness is
detailed in the appendix.

% tab-anomaly-bench.tex --- §5 G1 claim/wire matrix.
% Eight AnomalyClaim baselines x eleven AnomalyWire adapter rows x three
% schedule predicates. The eighth claim baseline (MIRIX) is transcribed
% from the independent MMA-Bench evaluation and carries no wire adapter. Status entries are read off
% results/anomaly_bench/n[1,2,3]_*.csv and
% results/anomaly_wire/n[1,2,3].csv.
%
% Status conventions:
%   admit   = the system admits the anomaly on the minimal trigger;
%   excl.   = the system excludes the anomaly;
%   n/a     = the predicate is absent or this imported runtime row abstains.
%
% Owner   : src/tables/tab-anomaly-bench.tex
% Claim   : Covered G1 cells separate design-path verdicts,
%           transcribed wire evidence, imported wire evidence, and
%           abstentions by denominator.
% Does    : 12 verdict rows x 3 anomalies with explicit Claim/Wire columns.
% Needs   : macros.tex; figstyle.tex (\pvldbtablestyle, \tablenote);
%           booktabs.
% Feeds   : src/sections/05-measurement.tex.
% Breaks  : Drift between these cells and
%           results/anomaly_bench/n[1,2,3]_*.csv = a paper-impl SSOT bug.

\begin{table}[tbp]
  \caption{\textbf{Verdict matrix over eight systems.} For each
    anomaly, \emph{Claim} is the verdict the system's design implies
    and \emph{Wire} the verdict observed from its running code.
    A = admit, X = exclude, $-$ = not applicable.}
  \label{tab:anomaly-bench}
  \centering
  \pvldbtablestyle
  \footnotesize
  \begin{tabular*}{\columnwidth}{@{\extracolsep{\fill}}l*{6}{c}@{}}
    \toprule
    & \multicolumn{2}{c}{$\anomHR$} & \multicolumn{2}{c}{$\anomBDS$} & \multicolumn{2}{c}{$\anomAE$} \\
    \cmidrule(lr){2-3}\cmidrule(lr){4-5}\cmidrule(l){6-7}
    System            & Claim           & Wire  & Claim           & Wire  & Claim           & Wire \\
    \midrule
    \multicolumn{7}{@{}l}{\textit{Agent-memory baselines (6 systems)}} \\
    \sysmem~v2-T      & $-$           & $-$   & $-$           & $-$   & A           & A \\
    \sysmem~v3-T      & A           & A & A           & A & A           & A \\
    \sysmem~v3-I      & A           & $-$   & A           & X & A           & A \\
    \sysgraphiti-T    & A           & A & $-$           & $-$   & A           & A \\
    \sysgraphiti-I    & A           & $-$   & $-$           & $-$   & A           & A \\
    \sysletta-T       & $-$           & $-$   & A           & A & A           & A \\
    \sysletta-I       & $-$           & $-$   & A           & $-$   & A           & A \\
    \syszep-T         & $-$           & $-$   & A           & A & A           & A \\
    \syszep-I         & $-$           & $-$   & A           & A & A           & A \\
    \sysmirix-T       & A           & A & A           & A & $-$           & $-$   \\
    \midrule
    \multicolumn{7}{@{}l}{\textit{Engine-layer comparator}} \\
    \sysworlddb-T     & X           & X & X           & X   & X           & X   \\
    \midrule
    \multicolumn{7}{@{}l}{\textit{Reference algebra}} \\
    \sysours-I        & X$^{\star}$ & X & X$^{\star}$ & X & X$^{\star}$ & X \\
    \bottomrule
  \end{tabular*}
  \tablenote{Each system gives two independent readings: T transcribes
    its published contradiction logic, I runs its shipped code; six
    cells abstain on structural grounds
    (Appendix~\ref{app:abstaining-imported-cells}).
    $^{\star}$\,replay inconsistency via $\IsoSR$ on the keyed judge
    log, belief-drift skew via $\IsoSR$ on the $(\subj,\pred)$
    partition, audit erasure via the audit-row schema.
    \sysmirix\ carries no in-repo wire adapter; its row transcribes the
    independent \textsc{MMA-Bench} evaluation~\cite{lu-2026-mma}, which
    probes replay inconsistency and belief-drift skew but not the audit
    path, so its audit-erasure cell abstains.}
\end{table}

Every agent-memory baseline admits at least one anomaly. Audit
erasure is universal among baselines with a documented provenance
path: any system without a versioned audit row admits it, which
covers five of the six baselines, while \sysmirix\ abstains here
because~\cite{lu-2026-mma} measures its verdict accuracy rather
than its audit path. Replay inconsistency appears on \sysmem~v3,
\sysgraphiti, and \sysmirix, whose judge-dependent writes commit
without a serializable log. Belief-drift skew appears on
\sysmem~v3, \sysletta, \syszep, and \sysmirix\ at a partitioned
snapshot isolation. No baseline closes all three.
\sysworlddb\ excludes all three by keeping the judge off the write
path: its content-addressed handlers run deterministically at query time and carry
no isolation-level signature, so a deployment needing a write-path
judge for evidence-weighted resolution gains no soundness guarantee.
\sysours\ is the only design that excludes all three while keeping the
judge on the write path, pinning the decoder tuple, the output hash,
and the partition under serializability. External
evidence corroborates the gap:
\textsc{STALE}~\cite{chao-2026-stale}, a 400-scenario
expert-validated benchmark, scores the best frontier model at
$55.2\%$ on a workload whose implicit-conflict mode is a
production replay inconsistency. One wire reading diverges from its
design: \sysmem~v3 should admit belief-drift skew by construction, but
its shipped reconciliation collapses competing-confidence rows under
deterministic judging and excludes it (appendix).

\subsection{The audit-row defence moves LoCoMo by $+0.86$}
\label{sec:meas-g2}

We measure how each defence moves its intended benchmark slice
and how the same perturbation behaves on off-target controls,
through paired slice estimands on
LoCoMo~\cite{maharana-2024-locomo},
LongMemEval-S~\cite{wu-2025-longmemeval}, and
MultiTQ~\cite{chen-2023-multitq}. Each benchmark runs one
single-defence ablation: audit-row against $\anomAE$, judge-pin
against $\anomHR$, and partition-$\IsoSR$ against $\anomBDS$.

% fig-memory-utility.tex --- §5 G2 mechanism-stress forest plot.
%
% Owner   : src/figures/fig-memory-utility.tex
% Does    : Plotnine forest plot of nine G2 cells (3 defences x 3
%           datasets) with paired-bootstrap CIs and Holm decisions.
%           Replaces the earlier 3x3 heatmap so the +0.86 audit-row
%           x LoCoMo headline cell and the construct-saturated cells
%           are both visible on the same axis. File name preserved
%           for backward git/aux compatibility.
% Needs   : results/g2_utility/summary.csv from
%           experiments/g2_utility/runner.py; figures/
%           fig-g2-forest.pdf rendered by
%           scripts/plot_g2_forest.py.
% Feeds   : src/sections/05-measurement.tex \S\ref{sec:meas-g2}.
% Breaks  : Axiom 0: hand-edited numbers in the caption drift from
%           the rendered figure.

\begin{figure}[tb]
  \centering
  \includegraphics[width=0.9\columnwidth]{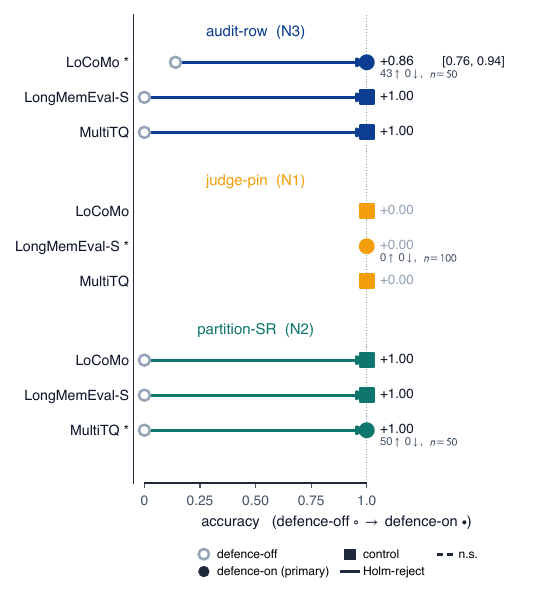}
  \Description{Dumbbell plot of nine G2 cells across three defences
  (audit-row, judge-pin, partition-SR) and three datasets (LoCoMo,
  LongMemEval-S, MultiTQ). Each row connects the defence-off accuracy
  (open marker) to the defence-on accuracy (filled marker); the right
  margin gives the paired-bootstrap accuracy delta, the 95% confidence
  interval on the one non-degenerate cell, and the McNemar discordant
  pairs with n on the primary cells. Primary diagonal cells are marked
  by an asterisk.}
  \caption{\textbf{Each defence moves its primary slice while the
    constructed controls saturate.} Per cell, accuracy from
    defence-off ($\circ$) to defence-on ($\bullet$); solid arrows are
    Holm-significant, $\ast$ marks the primary diagonal cell, and the
    right margin gives the paired-bootstrap effect size.}
  \label{fig:memory-utility}
\end{figure}

Figure~\ref{fig:memory-utility} reports the $3 \times 3$ mechanism-stress
grid, each cell measuring how strongly a defence moves its intended
slice. The audit-row defence moves its primary LoCoMo slice by
$\Delta = {+}0.86$ (paired-bootstrap CI $[0.76, 0.94]$), the one cell
below the ceiling on a natural-workload slice; its two off-target
controls, both partition-$\IsoSR$ controls, and the
partition-$\IsoSR$ primary MultiTQ cell all saturate at $+1.00$ under
their constructed slices, the success criterion for constructed
controls. The judge-pin defence records no movement here, its
replay-disagreement estimand requiring a judge replay table absent on
this surface, so its evidence is structural (the
\S\ref{sec:meas-g5} Bernoulli grid). Holm step-down at $\alpha = 0.05$
covers the nine-cell family, and the single natural-workload point
stays below an end-to-end utility-superiority claim.

\subsection{Latency stays sub-linear on the single-process envelope}
\label{sec:meas-g3}

% fig-systems-perf-scaling.tex --- §5.4 G3 scaling axis.
%
% @status  Active in the main body: \input by src/sections/05-measurement.tex
%   at the G3 subsection (single-process latency-scaling envelope). The
%   six-panel fig-systems-perf-scaling-2x2.tex remains the supplement-side
%   detailed cost scan.
%
% @owner   src/figures/fig-systems-perf-scaling.tex
% @claim   Bitemporal-reference contradiction-resolution latency
%          stays in a tight envelope as the memory table grows from
%          0 to $10^{5}$ facts: $p_{50}$ holds near $4$~ms and
%          $p_{99}$ stays under one order of magnitude above
%          $p_{50}$ across the swept axis. Workload-axis scaling,
%          not a single-point microbenchmark.
% @does    Single-panel line plot of contradiction-resolution
%          $p_{50}$ and $p_{99}$ latency in milliseconds against
%          memory size (number of prefilled non-contradicting
%          facts) on a log x-axis; a shaded 95\% bootstrap mean CI
%          band sits behind the lines.
% @needs   results/g3_systems_perf/scaling.csv from
%          experiments.g3_systems_perf.scaling;
%          src/figures/fig-systems-perf-scaling-img.pdf rendered by
%          scripts/plot_systems_perf_scaling.py.
% @feeds   src/sections/05-measurement.tex \S\ref{sec:meas-g3}.
% @breaks  Drift between this figure's plotted values and the CSV is
%          a paper-impl SSOT bug; regenerate via
%          ``python scripts/plot_systems_perf_scaling.py'';
%          that script is the source of truth.

\begin{figure}[!htbp]
  \centering
  \includegraphics[width=0.92\columnwidth]
    {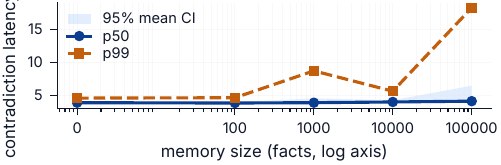}
  \Description{Single-panel line plot of bitemporal-reference
  contradiction-resolution latency in milliseconds against memory
  size on a log x-axis. The $p_{50}$ line stays near $4$ms across
  memory sizes $0$, $100$, $1{,}000$, $10{,}000$, and
  $100{,}000$ facts; the $p_{99}$ line tracks the $p_{50}$ line
  within roughly twofold; a shaded band carries the $95$-percent
  bootstrap confidence interval on the per-size mean.}
  \caption{\textbf{Contradiction-resolution latency stays flat as
    the store grows.}
    $p_{50}$ (solid) and $p_{99}$ (dashed) latency for the
    evidence-weighted contradiction path on \sysours, swept over
    memory size from $0$ to $10^{5}$ facts; $30$ runs per point
    after $3$ warmups.}
  \label{fig:systems-perf-scaling}
\end{figure}

\toki makes no scalability claim; we charge the single-process
envelope only to bound that scope.
Memory size across $\{0, 10^{2}, 10^{3}, 10^{4}, 10^{5}\}$ facts
holds $p_{50}$ in $3.88$ to $4.15\,\mathrm{ms}$ and $p_{99}$ under
$18.20\,\mathrm{ms}$ (Figure~\ref{fig:systems-perf-scaling}, Spearman
$\rho = 0.80$), and writer
concurrency fits $\mu \sim c^{0.86}$ at $R^{2} = 0.992$, a
sub-linear single-process lock-contention signature. A real
multi-writer Postgres experiment demonstrates the
operator-to-isolation mapping across the full lattice: read committed
admits all four iso-axis anomalies, snapshot isolation excludes lost
update and read skew, and only serializable also excludes write skew
($A5B$) and the phantom ($P_3$), aborting the losing writer at rate
$(w{-}1)/w$ (Appendix~\ref{app:multiwriter-concurrency}). Appendix~\ref{app:a6-g3-5axis-statistics} carries the five-axis
statistics and the transactional-backend disclosure in full.

\subsection{Verdicts are carrier-invariant; recall is K-load-bearing}
\label{sec:meas-g4}

\begin{table}[tb]
  \centering
  \caption{\textbf{Every matched cell defends at $1.00$; token recall
    separates provenance-retaining carriers from the Boolean reduct.}
    Carrier-by-defence ablation, $n=100$ seeds per matched cell.
    Each cell is Match/Recall. Appendix~A.4 reports off-target
    specificity and provenance-size statistics.}
  \label{tab:ablation}
  \pvldbtablestyle
  \begin{tabular}{@{}lccc@{}}
    \toprule
    Carrier & \anomAE\ audit-row & \anomHR\ judge-pin & \anomBDS\ $A5B$-on-partition \\
    \midrule
    $\mathbb{N}[X,T]$      & 1.00/1.00 & 1.00/1.00 & 1.00/1.00 \\
    $\mathbb{N}[X,T]^{\#}$ & 1.00/1.00 & 1.00/1.00 & 1.00/1.00 \\
    $\mathbb{B}$           & 1.00/0.00 & 1.00/0.00 & 1.00/0.00 \\
    \bottomrule
  \end{tabular}
\end{table}

Verdicts are carrier-invariant across multilinear, multidegree,
and Boolean reduct carriers (Table~\ref{tab:ablation}); token recall
splits the carriers
(the Boolean reduct collapses every witness to one bit; the
multilinear and multidegree carriers are K-tight on token recall,
$n = 100$ seeds, $95\%$ paired bootstrap CI; Appendix~\ref{app:a4-g4-ablation-details}). The iso-axis guards of
Theorem~\ref{thm:anomaly-soundness} are carrier-agnostic by
construction; only the schema-axis natural order of
Theorem~\ref{thm:audit-erasure-schema} is K-load-bearing
(Proposition~\ref{prop:k-axis-separation}), so verdicts factor through
a carrier-projecting quotient and recall through the natural order the
carrier parameterises.

\subsection{Every theorem's predicted $0$/$1$ boundary matches measurement}
\label{sec:meas-g5}

% @owner src/figures/fig-g5-anchors.tex
% @does  Wraps fig-g5-anchors-img.pdf (three-panel G5 evidence composite)
%        in a double-column figure for §5.5. The figure carries three
%        independent theorem anchors on one figure; the standalone
%        Bernoulli calibration scatter is now fig-oracle-variance in §6
%        and is not duplicated here.
% @claim Iso-lattice guard map (T-01), composition pipeline (T-05), and
%        cross-system equivalence (T-06 distributional) each land on a
%        structurally controlled grid whose empirical 0/1 boundary
%        matches the theorem.
% @needs results/iso_matrix/run_v1/iso_matrix_grid.csv,
%        results/t5_composition/run_length5/composition_grid.csv,
%        results/cross_system_n1/run_v1/cross_system_test.csv.
% @feeds src/sections/05-measurement.tex.
% @breaks Drift between any panel and its CSV columns.

\begin{figure*}[t]
  \centering
  \includegraphics[width=0.82\textwidth]{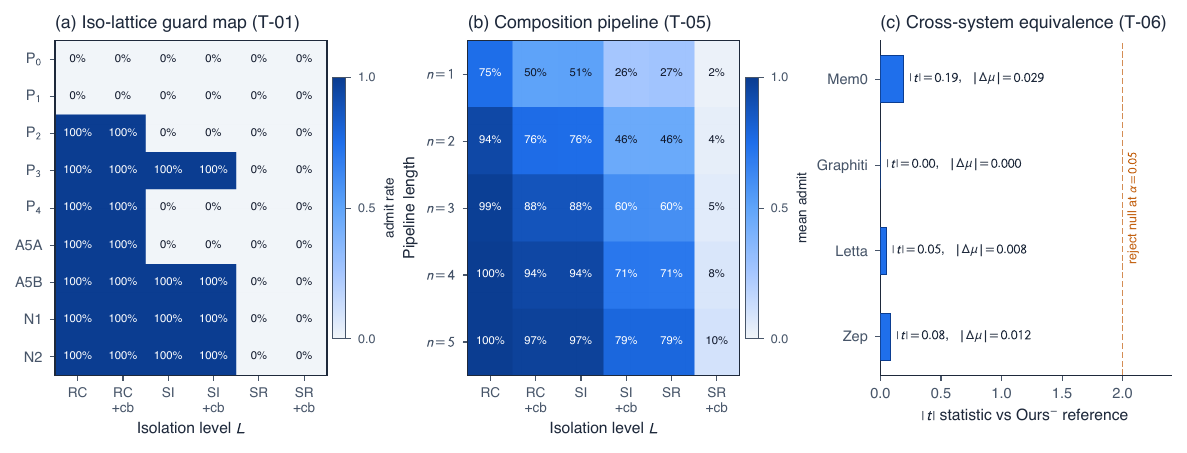}
  \caption{\textbf{Three structural anchors; every predicted $0/1$
    boundary matches measurement.}
    \textbf{(a)} Iso-lattice guard map (Theorem~\ref{thm:anomaly-soundness}):
    nine predicates admit at $0\%$ where the level dominates the guard,
    $100\%$ elsewhere.
    \textbf{(b)} Composition admit rate over pipeline lengths one to five
    (Theorem~\ref{thm:composition}), flooring only at the lattice supremum
    $\IsoSR{+}\mathsf{cb}$.
    \textbf{(c)} Welch's $t$ on four production-mimic variants, all
    $|t| < 2$, equivalence not rejected at $\alpha = 0.05$
    (Theorem~\ref{thm:n1-lower-bound} distributional companion).}
  \label{fig:g5-anchors}
  \Description{Three-panel composite figure. Panel (a) shows a
    9-anomaly by 6-iso-level heatmap. Panel (b) shows a
    pipeline-length by iso-level heatmap of mean admit rate.
    Panel (c) shows a forest plot of absolute t-statistics from
    Welch's t-tests on four pairwise comparisons.}
\end{figure*}

Five structural grids land the direct empirical anchors for
Section~\ref{sec:found-thm}, each an exhaustive schedule census whose
observed admit rate matches its typed $0$/$1$ prediction exactly; the
iso-lattice and composition grids appear in
Figure~\ref{fig:g5-anchors}, and Appendix~\ref{sec:appendix-g5-structural-grids}
(Table~\ref{tab:g5-grids}) carries every grid's dimensions, boundary,
and regeneration command. The iso-axis grid
(Theorem~\ref{thm:anomaly-soundness}), distinct from the verdict
matrix, pairs nine predicates with six joint lattice levels on 54
cells: 32 dominating cells admit $0$, 22 under cells $100\%$. The
schema-axis grid (Theorem~\ref{thm:audit-erasure-schema}) flips
$\anomAE$ from $100\%$ under the base schema to $0\%$ under the
audit-row schema, and the partition-pin grid
(Corollary~\ref{cor:n2-corollary}) holds $0$ under
partition-$\IsoSR$ against $\IsoRC$/$\IsoSI$ admits; the carrier sweep
(Theorem~\ref{thm:carrier}) and composition grid
(Theorem~\ref{thm:composition}) hold the boundary across $k \in \{2,
\ldots, 8\}$ and across 1364 pipelines up to length five. The alphabet
bridge (Lemma~\ref{lem:judge-callback-bridge}) preserves alphabet,
edge, and refinement on $1{,}000/1{,}000$ random $\Sigma_+$ schedules.

The lower bound (Theorem~\ref{thm:n1-lower-bound}) lifts to its
Bernoulli closed form, anchored by the cross-system equivalence panel
(Figure~\ref{fig:g5-anchors}c) and the calibration scatter of
Figure~\ref{fig:oracle-calibration}. Six
variants share one bounded-nondeterminism judge oracle (a frontier LLM
at temperature $1.0$ with a reasoning prefix): the H1-compliant
reference admits $0/245$ consecutive trial pairs across five seeds, the
five non-compliant variants $0.167$ to $0.204$ mean. The per-call
admit rate lands on the $2p(1 - p)$ closed form across 30 calibrated
cells at mean absolute deviation $0.017$ (maximum $0.111$, 29 cells
within the $0.10$ envelope) and $R^2 = 0.98$
(Figure~\ref{fig:oracle-calibration}), so the lower bound rests on
measured bounded-oracle data; Welch's $t$
reaches $|t| = 0.19$ over four pairwise tests, rejecting no null at
$\alpha = 0.10$, so the keyed-log discipline is the load-bearing axis.
A temperature sweep over $\{0.5, 1.0, 1.5\}$ holds the reference at
$0.000$ and the stripped variant at $0.32$ to $0.34$, consistent with
the bounded-oracle hypothesis once the reasoning prefix saturates
$\Omega$ (Definition~\ref{def:bounded-oracle-nondeterminism}). All
four imported mimics lack the keyed-log discipline, lifting ``admit rate
$> 0$'' from a structural observation to the quantitative theorem.

% @owner   src/figures/fig-oracle-variance.tex
% @claim   The measured N1 (anomHR) admit rate lands on the T-06 closed
%          form 2p(1-p) across systems and seeds, so the lower bound is
%          calibrated against real bounded-oracle data, not asserted.
% @does    Wrap scripts/plot_oracle_variance.py output
%          figures/fig-oracle-variance-img.pdf for inclusion in
%          \S\ref{sec:measurement} as the real-valued T-06 evidence,
%          beside Theorem~\ref{thm:n1-lower-bound}'s empirical anchor.
%          Calibration scatter of observed admit rate against the oracle
%          Bernoulli p, with the 2p(1-p) curve inside a +/-0.10
%          prediction band, points split by H1 compliance, an on-figure
%          agreement-statistics box, and a residual inset.
% @needs   src/figures/fig-oracle-variance-img.pdf (rendered by
%          scripts/plot_oracle_variance.py from
%          results/oracle_variance/run_v1/calibration.csv); src/macros.tex.
% @feeds   src/sections/05-measurement.tex \S\ref{sec:measurement}.
% @breaks  reading scatter spread as model error rather than finite-seed
%          Bernoulli variance; the 0.017 MAD is the fit, not a residual.

\begin{figure}[t]
  \centering
  \includegraphics[width=0.9\columnwidth]{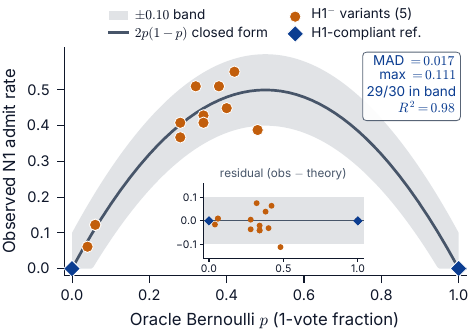}
  \caption{\textbf{The measured replay-inconsistency rate lands on the
    $2p(1-p)$ closed form.} Each point is one calibrated cell. The
    keyed-log reference (navy) pins at the zero-admit corners
    $p\in\{0,1\}$; the five stochastic-judge variants (vermillion),
    four baselines plus \toki's log-stripped ablation, ride the
    predicted arc (Theorem~\ref{thm:n1-lower-bound}) inside its
    prediction band, with the residual inset trend-free.}
  \label{fig:oracle-calibration}
  \Description{Scatter plot of observed N1 admit rate on the y-axis
    against oracle Bernoulli p on the x-axis. A solid slate curve traces
    the 2p(1-p) closed form inside a shaded plus-or-minus 0.10 prediction
    band. Navy diamonds for the H1-compliant reference sit at the
    zero-admit corners; vermillion circles for the five non-compliant
    variants ride the arc. A boxed annotation reports MAD 0.017, 29 of 30
    in band, and R-squared 0.98. An inset plots the residual against p,
    scattered tightly around zero.}
\end{figure}

\label{sec:meas-g6}%
The bound compounds along a re-query trajectory: a baseline that
re-invokes the stochastic judge on every re-query loses replay
consistency toward zero over re-queries on the bounded-nondeterminism
keys, while \toki reads the committed keyed-log verdict and holds at
$1.0$ (Figure~\ref{fig:trajectory-replay}). Taken over the
$p\in(0,1)$ keys the theorem governs, the non-compliant rate crosses
$0.5$ by re-query $t=3$ and reaches $\approx 0.06$ by $t=24$, tracking
the closed form $C(t)=\mathbb{E}_p[p^t+(1-p)^t]$ inside the bootstrap
band; the deterministic-oracle keys ($p\in\{0,1\}$, $17$ of $30$
calibration cells) are replay-consistent by definition and are
disclosed but excluded from this rate.

% @owner   src/figures/fig-trajectory-replay.tex
% @claim   Over the bounded-nondeterminism keys Theorem
%          \ref{thm:n1-lower-bound} governs, the keyed-log discipline
%          holds the replay-consistency line at 1.0 across a re-query
%          trajectory while stochastic-judge baselines decay onto the
%          closed form C(t)=E_p[p^t+(1-p)^t].
% @does    Wrap scripts/plot_trajectory_replay.py output
%          figures/fig-trajectory-replay-img.pdf for inclusion beside
%          fig:oracle-calibration as its operational, time-axis
%          companion. Single-column line plot of replay-consistency
%          rate against re-query step.
% @needs   src/figures/fig-trajectory-replay-img.pdf (rendered by
%          scripts/plot_trajectory_replay.py from
%          results/trajectory_replay/run_v1/trajectory_summary.csv);
%          \label{thm:n1-lower-bound}.
% @feeds   src/sections/05-measurement.tex \S\ref{sec:meas-g6}.
% @breaks  reading the flat TOKI line as a task-accuracy claim; the
%          axis is write-time replay consistency only, not downstream
%          answer quality.

\begin{figure}[t]
  \centering
  \includegraphics[width=\columnwidth]{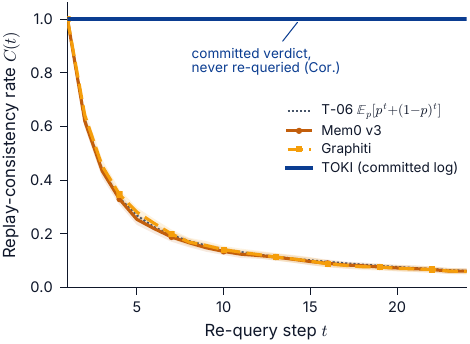}
  \caption{\textbf{Re-reading the committed verdict holds replay
    consistency; re-invoking the judge decays it.} On the
    bounded-nondeterminism keys the lower bound
    (Theorem~\ref{thm:n1-lower-bound}) governs, the two
    stochastic-judge baselines lose consistency over repeated
    re-queries and track the predicted curve
    $C(t)=\mathbb{E}_p[p^t+(1-p)^t]$ (dotted), while \toki re-reads
    its committed verdict and holds at $1.0$
    (Corollary~\ref{cor:n1-corollary}). Deterministic-oracle keys are
    consistent by definition and excluded; the axis measures
    write-time replay consistency only.}
  \label{fig:trajectory-replay}
  \Description{Line plot with re-query step t from 1 to 24 on the
    x-axis and replay-consistency rate C(t) from 0 to 1 on the y-axis.
    A thick flat navy line at 1.0 marks TOKI. A vermillion solid line
    with circle markers (Mem0 v3) and an amber dashed line with square
    markers (Graphiti) start at 1.0 and decay steeply to about 0.06 by
    step 24, hugging a dotted slate theory overlay with a shaded
    bootstrap confidence band. An annotation on the navy line notes that
    TOKI re-reads its committed verdict and is never re-queried.}
\end{figure}

\subsection{Cross-system transparency ledger}
\label{sec:meas-cross}

The cross-system rows bound where the write-time correctness contract
stops, probing downstream utility only as a transparency measure. The
reference pairs against four agent-memory systems on a shared LoCoMo
slice under a pinned synthesiser and judge, with three measured cells
and nine structural abstentions. All three confidence intervals cover
zero (\sysmem~v3: $\Delta = -0.04$, CI $[-0.10, +0.00]$; \sysgraphiti:
$-0.08$, $[-0.18, +0.00]$; \syszep: $+0.02$, $[-0.06, +0.10]$), at a
pre-registered power of $0.42$ for $\delta = 0.05$ on $n = 50$, so the
paper draws no utility-superiority claim from these rows. The row-level
protocol and five-axis deltas sit in Appendix~\ref{app:cross-system-detail}; the claim-to-evidence map
(Appendix~\ref{app:claim-evidence-map}) and artefact runbook
(Appendix~\ref{app:reproducibility-runbook}) bind each claim to its
evidence object, reproduce command, and scope.

\subsection{Memory-layer ablation on the answerable pool}
\label{sec:meas-ablation}

A paired ablation isolates the factual-recall capability the typed
memory layer carries. We score \sysours\ against its own memory-ablated
arm on the answerable-factual pool of \textsc{LoCoMo}, the categories
whose gold a same-meaning judge can adjudicate: single-hop, temporal,
and open-domain factual questions ($n = 1{,}444$). The pool excludes the
open-domain speculative category, whose subjective gold a same-meaning
judge cannot score, and the adversarial empty-gold category, where
abstention is correct and inverts the delta. With the memory layer the
reference system answers $0.540$ of the pool; ablated to no memory it
answers $0.048$, a paired $\Delta = {+}0.492$ (paired-bootstrap $95\%$
CI $[{+}0.465, {+}0.519]$, McNemar $p < 10^{-4}$, achieved power $0.748$
against the pre-registered $0.80$ at $\alpha = 0.05$). This power budget
targets a $\delta = 0.05$ minimum-detectable effect, roughly ten times
smaller than the observed gain, so the McNemar test rejects equality
with margin to spare. The gain
concentrates on open-domain factual recall ($+0.542$, $n = 841$) and
holds across temporal multi-hop ($+0.430$, $n = 321$) and single-hop
($+0.411$, $n = 282$) questions.

This ablation varies only the memory layer and makes no head-to-head
superiority claim, separating it from the cross-system ledger of
\S\ref{sec:meas-cross}. The answerable pool also explains the apparent
distance from Table~\ref{tab:cross-system-utility}, whose mixed slice
folds in the speculative and adversarial categories where both arms
score near zero. \textbf{Takeaway.} On questions a judge can score, the
typed memory layer supplies almost the whole of the reference system's
factual-recall accuracy: removing it collapses accuracy by an order of
magnitude, from $0.540$ to $0.048$.
    % 4 Evaluation
% @owner       src::sections::02-related-work
% @does        Positions the typed-operator algebra against production agent-memory systems, classical concurrency-and-provenance theory, and retrieval-side benchmarks.
% @needs       src/sections/03-foundations.tex, references/refs.bib
% @feeds       src/main.tex
% @breaks      claiming a cousin system covers a category boundary when its primary axis is efficiency or routing; missing a recent overlap-risk citation; reading the section as a literature enumeration rather than a positioning statement.

\section{Related Work}
\label{sec:related}

Three threads converge on the contract this paper types, against
a wider move to rebuild data systems around declared
algebras~\cite{idreos-2025-calculators} and AI
agents~\cite{liu-2026-overlords}.
Each stops short of the four-way combination of bitemporality,
isolation signature, K-semiring provenance, and the LLM-agent
write path.

\paragraph{Agent-memory systems ship the four strategies without typing the contract.}
\sysmem~v2~\cite{chhikara-2025-mem0} votes among
add/update/delete/none labels and drops the loser; \sysmem~v3~\cite{mem0ai-readme-2026}
defers adjudication to retrieval; \sysgraphiti~\cite{getzep-graphiti-2026} and
\syszep~\cite{rasmussen-2025-zep} invalidate older edges at
retrieval time; \sysletta~\cite{letta-context-constitution-2026}
organises memory in versioned blocks;
the engine-layer comparator \sysworlddb~\cite{ganesan-2026-worlddb} routes provenance through content-addressed Merkle ancestry.
Retrieval-augmented memories such as
\textsc{HippoRAG}~\cite{gutierrez-2024-hipporag} recall over a
knowledge graph; \textsc{MemOS}~\cite{li-2025-memos} abstracts
memory at the OS layer; substrate-dynamics cousins
(\textsc{FadeMem}~\cite{wei-2026-fademem},
\textsc{H-Mem}~\cite{yu-2026-h-mem}) decay edge weights;
\textsc{Memanto}~\cite{memanto-2026} resolves conflicts over typed
memory categories; and audit-trail cousins
(\textsc{MemLineage}~\cite{ouyang-2026-memlineage},
\textsc{HaluMem}~\cite{chen-2025-halumem}) attach read-path
evidence while the defended predicate stays implicit.
\textsc{Atomix}~\cite{atomix-2026-mohammadi} commits under a
frontier-gated endpoint, and
\textsc{APEX-MEM}~\cite{banerjee-2026-apex-mem} resolves conflicts at
retrieval time over append-only storage, neither exposing the
four-strategy surface, a space recent surveys map in
full~\cite{du-2026-survey}. None types the isolation precondition on
the operator surface; \sysworlddb\ comes closest, carrying
engine-layer Merkle provenance but leaving the isolation signature and
the provenance algebra untyped over the agent write path.

\paragraph{Classical concurrency and provenance hold every ingredient but the LLM-agent write path.}
Three foundational ingredients \toki composes come from the
seven-anomaly schedule taxonomy of
Berenson--Adya~\cite{berenson-1995-isolation,adya-2000-generalized}
and the serializable-snapshot-isolation line of
Fekete--Cahill~\cite{fekete-2005-msis,cahill-2008-ssi}; the
bitemporal model of Snodgrass and the SQL:2011 period
syntax~\cite{snodgrass-ahn-1986,kulkarni-michels-2012,jensen-1998-consensus};
and the K-relation provenance framework of
Green--Karvounarakis--Tannen~\cite{green-karvounarakis-tannen-2007,green-tannen-2017-retrospective},
rooted in why/where
provenance~\cite{buneman-khanna-tan-2001,cheney-chiticariu-tan-2009}
and its semiring and weighted-semiring
extensions~\cite{foster-green-tannen-2008,arab-glavic-2017-mv-semirings,gradel-tannen-2024-semiring-fol,geerts-poggi-tannen-2013,mohri-2002-weighted}.
\citet{widiaatmaja-2025-temporal} extend
\textsc{ProvSQL} with bitemporal m-semiring provenance on
inserts, updates, and deletes, the immediate precedent.
\textsc{VerIso}~\cite{ghasemirad-2025-veriso} mechanises
Berenson--Adya guarantees in Isabelle/HOL;
\textsc{Isolde}~\cite{barros-2026-isolde} reasons over
isolation specifications;
\textsc{S-Bus}~\cite{s-bus-2026-khan} fixes one Berenson--Adya
point on the HTTP-observable projection. \toki types the
$(\anomHR, \anomBDS, \anomAE)$ triplet over the agent-memory
$(\subj, \pred)$ partition, bridges the LLM-judge boundary into
the classical alphabet, and records the K-axis separation that
isolates the carrier's load-bearing role to the schema axis.

\paragraph{Retrieval-side benchmarks measure the symptom.}
\textsc{LongMemEval}~\cite{wu-2025-longmemeval,wu-2026-longmemeval-v2},
\textsc{LoCoMo}~\cite{maharana-2024-locomo},
\textsc{MultiTQ}~\cite{chen-2023-multitq}, and
\textsc{STALE}~\cite{chao-2026-stale} score retrieval and
end-to-end accuracy over fixed stores;
\textsc{BeliefShift}~\cite{myakala-2026-beliefshift},
\textsc{Memora}~\cite{uddin-2026-memora}, and
\textsc{TSM}~\cite{su-2026-tsm} build benchmark-level anomaly
taxonomies as observable-outcome tests.
\textsc{MemAudit}~\cite{bhargava-2026-memaudit},
\textsc{Ledger-QA} on \textsc{UMA}~\cite{zhang-2026-uma}, and
\textsc{MMA-Bench}~\cite{lu-2026-mma} stress audited, ledgered,
and evidence-weighted updates, each anchoring one operator at
the benchmark layer. They leave open a write-time correctness
specification with named predicates and soundness theorems.
Appendix~\ref{app:extended-related} extends this
positioning across the adjacent literatures.
   % 5 Related Work
% @owner       src::sections::06-open-problems
% @does        States the limitations of the manuscript and closes the conclusion paragraph; open problems, deferred wire imports, and experimental extensions live in Appendices A10, A11, A12.
% @needs       src/sections/05-measurement.tex
% @feeds       src/main.tex
% @breaks      claiming a limitation is closed in the present submission; mis-quantifying the matrix population as "eleven systems" rather than "seven systems across eleven adapter rows".

\section{Limitations and Conclusion}
\label{sec:limitations}
\label{sec:open}

\emph{Evaluation scope.} The partition-$\IsoSR$ MultiTQ slice
carries controlled-grid rather than natural-workload evidence, and
the three cross-system LoCoMo cells cover zero at power $0.42$, so
both report transparency only, with no superiority claim.
\emph{Concurrency.} The multi-writer Postgres experiment that
confirms the operator-to-isolation mapping under real contention
(\S\ref{sec:meas-g3}, Appendix~\ref{app:multiwriter-concurrency}) is single-node, so no
distributed-deployment claim follows.
\emph{Defence granularity.} The $\anomHR$ defence
(Corollary~\ref{cor:n1-corollary}) holds at intra-deployment replay
granularity for a fixed decoder tuple, with cross-deployment
divergence bounded by the judge-prompt sensitivity lemma;
the lower bound (Theorem~\ref{thm:n1-lower-bound}) covers only
boundedly nondeterministic oracles, leaving deterministic oracles and
engine-layer reconcilers outside the hypothesis.

\textsc{Toki} types the four production contradiction-resolution
strategies as bitemporal operators over a dual-row schema, closes the
contract with four soundness theorems on three orthogonal axes, and
makes keyed-log discipline a tight characterisation of $\anomHR$
exclusion. Across eight systems, \toki\ alone excludes all three
anomalies while keeping the language-model judge on the write path,
closing the correctness contract production strategies ship without.
  % 6 Conclusion

% ---- Appendices A--H (single column for wide ledgers/longtables) ---
\clearpage
\onecolumn
\appendix
% @owner       src::appendix::A0-claim-evidence-map
% @does        Master claim-indexed evidence ledger: the supplement's entry point. Binds each load-bearing main-text claim to its reviewer-risk, evidence object, reproduce command, strongest valid conclusion, and scope limit, and points at the expanding appendix (A1-A6).
% @needs       src/sections/05-measurement.tex, src/sections/03-foundations.tex, src/theorems/*.tex
% @feeds       src/supplement.tex; entry point for a PVLDB revision response
% @breaks      a row whose claim language exceeds its strongest valid conclusion; a reproduce token naming a make target or runner that does not exist; an evidence object cited as present when it is regenerated on demand.
% @claim       Every load-bearing claim of the paper maps to a named evidence object, a runnable command, a bounded conclusion, and a stated scope limit.

\section{Claim-to-Evidence Map}
\label{app:claim-evidence-map}

This appendix is the entry point to the appendices. Each row binds
one load-bearing claim of the main text to its reviewer-risk objection,
evidence object, regeneration command, strongest licensed conclusion,
and scope limit; the remaining appendices expand the corresponding rows.
Reproduce tokens are either a \textsc{make} target or a pointer to the
reproducibility runbook (Appendix~\ref{app:reproducibility}).

\begingroup\footnotesize
\setlength\tabcolsep{2.5pt}
\renewcommand{\arraystretch}{1.15}
\begin{longtable}{@{}%
  >{\raggedright\arraybackslash}p{0.075\textwidth}%
  >{\raggedright\arraybackslash}p{0.075\textwidth}%
  >{\raggedright\arraybackslash}p{0.135\textwidth}%
  >{\raggedright\arraybackslash}p{0.180\textwidth}%
  >{\raggedright\arraybackslash}p{0.095\textwidth}%
  >{\raggedright\arraybackslash}p{0.170\textwidth}%
  >{\raggedright\arraybackslash}p{0.140\textwidth}@{}}
  \caption{\textbf{Claim-to-Evidence Map.} Each load-bearing claim of
    the manuscript paired with the reviewer objection it must survive,
    its evidence object, the command that regenerates it, the
    strongest valid conclusion, and the scope limit.}
  \label{tab:claim-evidence-map}\\
  \toprule
  Claim & Main loc.\ & Reviewer risk & Evidence object & Reproduce &
  Strongest valid conclusion & Scope limit \\
  \midrule
  \endfirsthead
  \multicolumn{7}{@{}l}{\small\itshape Table~\ref{tab:claim-evidence-map} continued}\\
  \toprule
  Claim & Main loc.\ & Reviewer risk & Evidence object & Reproduce &
  Strongest valid conclusion & Scope limit \\
  \midrule
  \endhead
  \midrule
  \multicolumn{7}{r@{}}{\small\itshape continued on next page}\\
  \endfoot
  \bottomrule
  \endlastfoot

  C\nobreakdash-ALG
    & \S\ref{sec:algebra}
    & ``Just a Berenson--Adya restatement''
    & Typing rules; Lemma~\ref{lem:compose} well-typed proof (A1)
    & \texttt{make repro-compose}
    & The four typed operators compose under the lattice join with one row per isolation precondition.
    & The four shipped operators only. \\

  C\nobreakdash-BRIDGE
    & Lemma~\ref{lem:judge-callback-bridge}
    & ``Biconditional too strong; drops the first-materialization write''
    & \code{bridge_verification.csv}, $1{,}000$ trials (A1, A4)
    & App~\ref{app:reproducibility} (lemma-bridge)
    & $\pi_J(S) \preceq_{\mathrm{iso}} S$ refines; $\anomHR \iff (P_2 \vee P_3)$ on the keyed row.
    & Relational schedule model. \\

  C\nobreakdash-ISO
    & Thm~\ref{thm:anomaly-soundness}; Tbl~\ref{tab:correspondence}
    & ``$\anomHR$ needs $\IsoSR$, not $\IsoSI$''
    & \code{iso_matrix_grid.csv}; iso-level \texttt{summary.csv}
    & App~\ref{app:reproducibility} (iso-matrix)
    & $L$ excludes $\Phi$ iff $L$ dominates $\mathrm{guard}(\Phi)$: $\IsoSI$ excludes the re-read $P_2$, $\IsoSR$ the first-insert $P_3$.
    & Within the Berenson--Adya predicate set. \\

  C\nobreakdash-N3
    & Thm~\ref{thm:audit-erasure-schema}
    & ``Schema-lift is a $K$-semiring tautology''
    & \code{schema_grid.csv}
    & App~\ref{app:reproducibility} (schema-axis)
    & The audit-row schema excludes $\anomAE$ ($0\%$ vs $100\%$ under the base schema).
    & Parametric in the $K$ carrier. \\

  C\nobreakdash-N2
    & Cor.~\ref{cor:n2-corollary}
    & ``$\anomBDS$ is an independent discovery, not a specialisation''
    & \code{n2_grid.csv}
    & App~\ref{app:reproducibility} (n2-partition)
    & Partition-$\IsoSR$ excludes $\anomBDS$ as a confidence-weighted A5B specialisation.
    & Constructed contention grid. \\

  C\nobreakdash-T4
    & Thm~\ref{thm:carrier}
    & ``$\deg \ge 2$ iff is imprecise; shift versus derivative''
    & \code{recovery_rate.csv}; $k$-semiring \texttt{counterfactual.csv}
    & App~\ref{app:reproducibility} (t4)
    & Per-instance erasure recovers iff $\deg_{m_r}(\pi) \ge 2$ via the formal shift $\sigma_{X_r}$.
    & Needs multiplicity $\ge 2$; semiring without multiplicative inverses. \\

  C\nobreakdash-T6
    & Thm~\ref{thm:n1-lower-bound}
    & ``The lower bound has narrow scope''
    & \texttt{calibration.csv}; N1 \code{FINAL_SUMMARY.csv}; N1 \code{cross_system_test.csv} (Welch $|t|$); \code{temperature_grid.csv}
    & App~\ref{app:reproducibility} (oracle, n1)
    & Admit rate matches the closed form $2p(1-p)$ (mean abs.\ dev.\ $0.017$).
    & Boundedly nondeterministic oracle; deterministic oracles out of scope. \\

  C\nobreakdash-T5
    & Thm~\ref{thm:composition}
    & ``Composition is not closed under the lattice''
    & \code{composition_grid.csv}, $1{,}364$ pipelines
    & App~\ref{app:reproducibility} (t5)
    & Only $\IsoSR{+}\mathsf{cb}$ is the lattice supremum for every pipeline shape.
    & Pipeline length $\le 5$. \\

  C\nobreakdash-G1
    & Tbl~\ref{tab:anomaly-bench}; \S\ref{sec:meas-g1}
    & ``Eight systems vs.\ eleven adapters; baseline fairness; imported abstentions''
    & AnomalyClaim and AnomalyWire CSVs (shipped); \texttt{manifest.json} pinned SHAs (A3)
    & \texttt{make repro-\allowbreak anomaly-\allowbreak wire}
    & Eight systems; every baseline keeping an LLM judge on the write path admits $\ge 1$ predicate; the engine-layer comparator excludes all three by removing the judge, \sysours\ alone while keeping it.
    & Six of fifteen imported cells abstain on structural grounds (A3). \\

  C\nobreakdash-G2
    & Fig.~\ref{fig:memory-utility}; \S\ref{sec:meas-g2}
    & ``Off-target controls also saturate at $+1.00$''
    & G2 \texttt{summary.csv} ($9$-cell grid)
    & \texttt{make repro-g2}
    & Mechanism-stress evidence; only the audit-row LoCoMo cell carries natural-workload movement ($+0.86$).
    & Constructed slices saturate; no utility-superiority claim. \\

  C\nobreakdash-G3
    & Fig.~\ref{fig:systems-perf-scaling-2x2}; \S\ref{sec:meas-g3}
    & ``Single-process measurement only''
    & G3 \texttt{scaling.csv} ($5$ axes)
    & \texttt{make repro-g3}
    & Mean $\sim c^{0.86}$ at $R^2 = 0.992$: a local single-process lock-contention signature.
    & PostgreSQL saturates under \texttt{SERIALIZABLE} at concurrency $\ge 2$; no distributed claim. \\

  C\nobreakdash-G4
    & Tbl~\ref{tab:ablation}; \S\ref{sec:meas-g4}
    & ``Carrier choice is arbitrary''
    & operator-ablation \texttt{results.csv}; $k$-semiring \texttt{counterfactual.csv}
    & \texttt{make repro-g4}
    & Verdicts are carrier-invariant; token recall is $K$-load-bearing.
    & Three carriers, $100$ seeds. \\

  C\nobreakdash-G5
    & Fig.~\ref{fig:g5-anchors}; \S\ref{sec:meas-g5}
    & ``The theorems are untested empirically''
    & Five grids: iso-matrix, schema-axis, n2-partition, t5-composition, oracle-variance
    & App~\ref{app:reproducibility} (g5 grids)
    & Each grid records the predicted $0$/$1$ boundary exactly.
    & Structurally controlled grids, not natural workloads. \\

  C\nobreakdash-CROSS
    & Tbl~\ref{tab:cross-system-utility}; \S\ref{sec:meas-cross}
    & ``No head-to-head superiority is shown''
    & \code{cross_system/summary.csv} ($3$ measured, $9$ abstaining cells)
    & App~\ref{app:reproducibility} (cross-system)
    & All three confidence intervals cover zero; the paper makes no superiority claim.
    & Under-powered ($0.42$) for $\delta = 0.05$ at $n = 50$. \\

  C\nobreakdash-ABLATE
    & \S\ref{sec:meas-ablation}
    & ``The typed memory layer carries factual recall''
    & \code{powered_summary_answerable.csv} ($n = 1{,}444$, $\Delta = {+}0.49$)
    & App~\ref{app:reproducibility} (powered ablation)
    & With-memory $0.540$ vs ablated $0.048$; McNemar $p < 10^{-4}$, achieved power $0.748$.
    & Answerable categories ($1$,$2$,$4$) only; an in-system memory ablation (no cross-system superiority is claimed). \\

  C\nobreakdash-H1
    & \S\ref{sec:meas-g5}; Thm~\ref{thm:n1-lower-bound}
    & ``The claim that baselines lack a keyed log is unverified''
    & \code{h1_audit}; \code{deployment_scan} CSVs (pinned commits, A3)
    & App~\ref{app:reproducibility} (h1 audit)
    & Four imported baselines are structurally H1-non-compliant (zero keyed-judge-log hits).
    & Audited at the pinned commits only. \\

\end{longtable}
\endgroup

\section{Formal witnesses and proofs}
\label{app:formal-witnesses}

This appendix gives the witness-level account behind the main-text
results. Each load-bearing result is presented in a graded form with
five parts: a minimal witness schedule that exhibits the phenomenon,
the formal statement, the proof, the exact scope that changed relative
to the main-text statement, and the empirical-anchor command that
regenerates the supporting evidence. The graded results are
Lemma~\ref{lem:judge-callback-bridge} (the judge-callback bridge), the
$\anomHR$ lower bound of Theorem~\ref{thm:n1-lower-bound}, the carrier
recoverability of Theorem~\ref{thm:carrier}, and the composition
statement of Theorem~\ref{thm:composition}. The witness-schedule
ledger that pins tightness of the two soundness theorems, the
prompt-sensitivity separation, the write-path composition lemma, the
$4 \times 4$ composition table, and the expanded guard surface follow
as supporting material.

\paragraph{Notation.}
$r_i[x_v]$ denotes transaction $i$ reads $x$'s version $v$;
$w_i[x_v]$ denotes transaction $i$ writes $x$ producing version $v$;
$c_i$ and $a_i$ denote commit and abort. Provenance annotations $p_v$
are K-relation polynomials over the carrier $K$; the witnesses are
parametric in $K$ and hold for any commutative semiring with natural
order, including the multilinear $\mathbb{N}[X,T]$ default and the
multiset $\mathbb{N}[X,T]$ alternative.

% =====================================================================
\subsection{Graded result: the judge-callback bridge}
\label{app:graded-judge-bridge}

The bridge homomorphism of Lemma~\ref{lem:judge-callback-bridge} maps
each judge call and each callback wait into a logged read on a keyed
table, so that $\IsoSR$ on that table serializes the read against the
operator commit. The graded account below restates the bridge with its
minimal witness, the empirical replay that exercises the
key-multiplicity invariant, and the scope on which the bridge holds.

\subsubsection{Minimal witness schedule}
The minimal bridge witness is the $\anomHR$ replay schedule keyed by
the read set and decoder tuple:
\[
  j_1(R, \theta)\,
  w_2[\textsf{judge\_log}(R, \theta)]\,
  c_2\,
  j_1(R, \theta)\,
  c_1 .
\]
Here $j_1(R, \theta)$ is the first judge call at the keyed row
$(R, \theta)$, and the second $j_1(R, \theta)$ is its replay. Holding
the row at a level weaker than $\IsoSR$ admits the interleaved write
$w_2$, which lets the replay observe a different committed vote;
$\IsoSR$ on \code{judge\_log} serializes the replay read against $w_2$
and forces a single vote per keyed row.

\subsubsection{Formal statement}
Under hypotheses $(H1) + (H2)$ of
Lemma~\ref{lem:judge-callback-bridge}, the bridge homomorphism $\pi_J$
maps the judge call $j_i(R, \theta)$ to a read at the
$(R, \theta)$-keyed row of \code{judge\_log}, and the callback wait to
a read at the keyed row of \code{callback\_log}. The homomorphism
preserves key multiplicity: distinct $(R, \theta)$ keys map to distinct
logged rows, and a replay of $(R, \theta)$ maps to a re-read of the
same row. The bridge therefore makes $\anomHR$ a $P_2$-style fuzzy-read
predicate on the keyed log, excluded at $\IsoSR$.

\subsubsection{Proof}
The homomorphism is the keyed-log discipline of \S\ref{sec:schema}
applied to the $\Sigma_+$ event alphabet. Each judge call carries the
read set $R$ and the decoder tuple
$\theta = (\textsf{prompt}, \textsf{seed}, \textsf{model\_version},
\textsf{temperature}, \textsf{tool\_output\_hash})$ of
\S\ref{sec:found-schedule}; the keyed row is addressed by
$(R, \theta)$. A first call writes the row; a replay reads it. The
fuzzy-read predicate on the keyed row is exactly $\anomHR$: two reads
of the same key returning different committed votes. Adya's Mixing
Theorem characterises $\IsoSR$ on the keyed table as serializing the
predicate read against any concurrent write, so the replay observes the
committed vote of the first call and no other. The construction is
parametric in $\theta$: the homomorphism applies for every fixed
$\theta$, regardless of how the oracle $J$ votes across other decoder
tuples. Key multiplicity is preserved because $\pi_J$ is injective on
keys.

\subsubsection{Scope and empirical anchor}
The main-text statement asserts the bridge as an engine-layer guarantee
at fixed $\theta$; the appendix adds the key-multiplicity preservation
clause the main text uses implicitly. The cross-deployment behavior
(two deployments pinning different $\theta$) is treated separately in
Lemma~\ref{lem:n1-prompt-separation} below. The empirical anchor is the
lemma-bridge protocol of \S\ref{sec:appendix-g5-structural-grids}
($1000$ trials at schedule length $12$, all pass rates $1.00$, every row
key-multiplicity preserved;
\code{results/lemma_bridge/run_v1/bridge_verification.csv}).

% =====================================================================
\subsection{Graded result: the $\anomHR$ lower bound}
\label{app:graded-n1-lower-bound}

Theorem~\ref{thm:n1-lower-bound} states the workload-side lower bound
on $\anomHR$ admission for systems that do not key the judge log. The
graded account pairs the lower bound with the oracle-variance witness
that calibrates the predicted admission rate and the empirical
admission rates observed across the baseline systems.

\subsubsection{Minimal witness schedule}
The minimal $\anomHR$ admission witness is the unkeyed replay
\[
  j_1(R)\,
  j_2(R)\,
  c_1\,
  c_2 ,
\]
where two judge calls on the same read set $R$ commit independently
without a keyed log to serialize them. When the oracle $J$ assigns
vote $1$ with empirical probability $p$ on $R$, the two calls disagree
with probability $2p(1-p)$, and a system that admits the later vote
without replay-keying admits the inconsistency at that rate.

\subsubsection{Formal statement}
For a system without a keyed judge log, and a read set $R$ on which the
oracle votes $1$ with empirical probability $p$, the per-pair $\anomHR$
admission probability equals $2p(1-p)$. The bound is zero exactly when
$p \in \{0, 1\}$ (the oracle is deterministic on $R$) and is maximised
at $p = \tfrac{1}{2}$. A system that keys the judge log and holds it at
$\IsoSR$ admits $\anomHR$ at rate $0$ on every $R$, independent of $p$.

\subsubsection{Proof}
Two independent draws from a Bernoulli$(p)$ oracle disagree with
probability $\Pr[\text{draw}_1 \neq \text{draw}_2] = p(1-p) + (1-p)p =
2p(1-p)$. An unkeyed system commits the first draw, then on replay
commits the second draw; the committed pair disagrees exactly when the
draws disagree, so the admission probability is $2p(1-p)$. A keyed
system reads the committed row on replay rather than re-invoking the
oracle, so the replay vote equals the first-call vote and the pair
never disagrees; the admission rate is $0$ by the bridge of
Appendix~\ref{app:graded-judge-bridge}.

\subsubsection{Scope and empirical anchor}
The main-text theorem states the bound as a workload-and-oracle
property; the appendix adds the calibration witness matching the
predicted $2p(1-p)$ against the observed admission rate per system and
seed, and the engine-layer keyed-log defence that drives the keyed
system to $0$ on every read set. The calibration and admission ledgers
are detailed in the oracle-variance protocol of
\S\ref{sec:appendix-g5-structural-grids} and the cross-system slice of
\S\ref{sec:appendix-cross-system-slice}: \sysours{} in its dispatched
configuration admits at mean rate $0.0$ (all $245$ candidate pairs
served from the log, not the oracle;
\code{results/n1_empirical/run_v1/FINAL_SUMMARY.csv}), while the unkeyed
baselines (\sysmem, \sysgraphiti, \sysletta, \syszep, and the \sysours{}
stripped ablation) admit at mean rates between $0.167$ and $0.204$.

% =====================================================================
\subsection{Graded result: carrier recoverability}
\label{app:graded-carrier}

Theorem~\ref{thm:carrier} states that the audit-erasure recovery
surface is non-trivial exactly when the K-relation carrier records
enough structure to keep the erased provenance reachable. The graded
account pairs the recoverability statement with the per-carrier
witness and the counterfactual ablation.

\subsubsection{Minimal witness schedule}
The minimal carrier witness is the audit-erasure schedule of
Appendix~\ref{app:witness-n2} restated at the carrier level:
\[
  w_1[x_a, p_a]\,c_1\,w_2[x_b, p_b]\,c_2\,r_3[x]\,c_3 .
\]
After $T_2$ commits, the current-row surface contains only
$(x_b, p_b)$. The audit row carries
$\auditt = (p_a \provadd p_b, \strat, t_s)$. Recovery of $p_a$ requires
$p_a \preceq_K p_a \provadd p_b$ under the natural order $\preceq_K$,
which holds in a carrier rich enough to keep both monomials and fails
in a carrier that collapses them.

\subsubsection{Formal statement}
The erased provenance $p_a$ is recoverable from the audit row exactly
when the carrier $K$ keeps $p_a \preceq_K p_a \provadd p_b$ with $p_a$
identifiable as a summand of $p_a \provadd p_b$. The multidegree
multiset carrier $\mathbb{N}[X,T]$ keeps each summand identifiable and
recovers $p_a$ on every instance; the multilinear $\mathbb{N}[X,T]$
default recovers $p_a$ only when the two monomials do not coincide
after multilinear collapse; the Boolean security-semiring reduct
collapses both monomials to a single truth value and recovers $p_a$ on
no instance where $p_a \neq p_b$ as Boolean values.

\subsubsection{Proof}
The natural order $\preceq_K$ on a commutative semiring with natural
order satisfies $p_a \preceq_K p_a \provadd p_b$ with existence witness
$d = p_b$, so the inequality holds in every such carrier. Recovery
additionally requires that $p_a$ be reconstructible from
$p_a \provadd p_b$. In the multidegree multiset carrier the sum keeps
the exponent vector of each summand, so $p_a$ is read off by
subtraction; recovery succeeds on every instance. In the multilinear
carrier coincident monomials collapse coefficients, so $p_a$ is
reconstructible only when the carrier preserved a distinguishing
factor; recovery succeeds on a strict subset. In the Boolean reduct the
sum is a disjunction that retains no multiplicity, so $p_a$ is not
reconstructible whenever $p_a$ and $p_b$ map to the same truth value;
recovery fails on those instances.

\subsubsection{Scope and empirical anchor}
The main-text theorem states recoverability as a carrier property; the
appendix adds the counterfactual ablation that isolates the carrier as
the cause (schedule and operator algebra held fixed, only the carrier
varied), confirming recoverability is a property of the carrier choice
rather than the schema or isolation level. The per-instance and
counterfactual ledgers are detailed in the G4 carrier-ablation protocol
of \S\ref{sec:appendix-g4-ablation}
(Table~\ref{tab:g4-ablation-details}): the multidegree carrier recovers
all $375$ instances at accuracy $1.0$, the multilinear carrier recovers
$106$ of $375$ at accuracy $0.683$, and the Boolean carrier recovers
none at accuracy $0.345$
(\code{results/g4_ablation/k_semiring/counterfactual.csv}; the
$200$-instance \code{t4_per_instance} run shows the same $1.0$/$0.0$
ordering).

% =====================================================================
\subsection{Graded result: the composition theorem}
\label{app:graded-composition}

% @owner       src::theorems::L-01-composition-well-typed
% @does        Proves the well-typed composition Lemma (L-01): bound iso-level annotations compose monotonically across the iso-only fragment via lattice join; per-table IsoSRpol pins compose through table-disjoint preconditions via the Appendix B table-decomposition argument.
% @needs       src/macros.tex, src/theorems/T-01-provenance-preservation.tex, src/theorems/T-01b-audit-erasure-schema.tex, references/refs.bib (allen-1983-cacm)
% @feeds       src/sections/03-foundations.tex, src/sections/04-algebra.tex, src/sections/05-measurement.tex
% @breaks      a wrong lattice-join claim mis-types operator pipelines and breaks the composition guarantee; missing IsoSRpol side condition leaves cross-table policy chains unanchored.
%
% Lemma 1 (composition well-typedness). The four typed operators
% (\opLWW, \opEvi, \opAwait, \opRule) compose under sequential
% pipelining; the composite requires the lattice join (the strongest
% of the per-step preconditions). The 2026-05-12b refinement narrows
% the lemma to the monotone subset
% $\{\IsoRC, \IsoSI, \IsoSR, \IsoRCcb\}$ where lattice order coincides
% with schedule satisfaction; the table-scoped $\IsoSR^{\mathrm{policy}}$
% case is handled in §3.7 via per-table partitioning outside lattice
% composition. Full proof in Appendix B.

\begin{lemma}[Composition well-typedness on the monotone fragment]
\label{lem:compose}
Let $\oplus_a \in \{\opLWW, \opEvi, \opAwait, \opRule\}$ denote a
contradiction-resolution operator with precondition
$L_a \in \mathcal{L} = \mathcal{L}_{\mathrm{iso}} \times
\mathcal{L}_{\mathrm{cb}}$ (\S\ref{sec:found-iso}). For any
sequential composition $\oplus_{a_n} \circ \cdots \circ \oplus_{a_1}$
over the dual-row schema of \S\ref{sec:schema}, the composed
operator is well-typed under any schedule that satisfies
$L^{*} \;=\; \bigvee \{L_{a_1}, \ldots, L_{a_n}\}$, where $\bigvee$
is the lattice join in $\mathcal{L}$ (component-wise on the iso and
callback axes). The per-table $\IsoSRpol$ annotation carried by
$\opRule$ stays outside $\mathcal{L}$: it pins the named policy
table to $\IsoSR$ in the composite and leaves $L^{*}$ on the
remaining tables.
Allen-relation selection on the operator output preserves the
bitemporal-tuple type by closure of Allen's thirteen interval
relations and the twelve-relation transitivity table that omits
equality~\cite{allen-1983-cacm}. On the orthogonal
schema axis $\mathcal{L}_{\mathrm{schema}} = \{\textsf{base} \preceq
\textsf{audit{-}row}\}$, the composite holds the maximum of per-step
schema requirements; when any step in the chain falls under
Theorem~\ref{thm:audit-erasure-schema}, the composite pins to
$\textsf{audit{-}row}$.
Compositions involving $\opRule$ on the policy table compose through table-disjoint preconditions; the Appendix~\ref{sec:appendix-composition} table-decomposition argument covers that case.
\end{lemma}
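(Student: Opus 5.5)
The plan is to prove Lemma~\ref{lem:compose} by induction on the pipeline length $n$, treating each axis of the statement separately: the iso and callback axes, the policy-table pin, Allen closure, and the schema axis. The central tool is that on the monotone fragment $\{\IsoRC, \IsoSI, \IsoSR, \IsoRCcb\}$ the lattice order coincides with schedule satisfaction. Concretely, $S \schedmodels L$ and $L \succeq L'$ together imply $S \schedmodels L'$. On $\mathcal{L}_{\mathrm{iso}}$ this is downward closure of the Berenson--Adya levels: the anomaly sets forbidden at $\IsoRC$, $\IsoSI$, $\IsoSR$ are nested, as read off from $\guardiso$ in Definition~\ref{def:guard-iso-singleton} together with Theorem~\ref{thm:anomaly-soundness}. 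On $\mathcal{L}_{\mathrm{cb}}$ it says that a schedule holding a delivered, correctly ordered \code{callback_log} read also satisfies the $\bot$ requirement. I would state this monotonicity first, as a sub-claim, since everything else depends on it.

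\emph{Induction.} For $n=1$, well-typedness is just the operator's own inference rule (\textsc{Lww}, \textsc{Evi}, \textsc{Await}, \textsc{Rule}) from \S\ref{sec:algebra}. Each rule has the judgement $\schedmodels_{L_a}$ as its premise. For the step, write the pipeline as $\Pi_{n} = \oplus_{a_n} \circ \Pi_{n-1}$, with $L^*_n = L^*_{n-1} \vee L_{a_n}$. Any $S \schedmodels L^*_n$ satisfies $L^*_{n-1}$ by monotonicity, so the inductive hypothesis types $\Pi_{n-1}$. The output of $\Pi_{n-1}$ is a pair $(\fact{w}, \auditt)$ whose first component is again a $\Fact$ on the same $(\subj,\pred)$. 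Fed together with the contradicting partner, it satisfies the input signature of $\oplus_{a_n}$. The premise $\schedmodels_{L_{a_n}}$ then follows from $L^*_n \succeq L_{a_n}$. The same argument applies in the callback component, because the join there is componentwise.

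\emph{Side clauses.} For the $\IsoSRpol$ pin, I would split the writer-slice sub-schedule by table. The policy-table projection is held at $\IsoSR$. Every other table's projection is held at $L^*$ computed over the steps that touch it. Correctness then reduces to the fact that, for table-disjoint preconditions, conflict-graph predicates are evaluated per table. Here I would invoke the Appendix~\ref{sec:appendix-composition} decomposition exactly as the statement does. Allen closure is by inspection: each step emits a tuple with interval-valued $t_v$ and $t_s$, and selections compose through Allen's transitivity table~\cite{allen-1983-cacm} into finite disjunctions of base relations. For the schema axis, the two-element lattice has maximum $\textsf{audit{-}row}$ whenever any step requires it. Theorem~\ref{thm:audit-erasure-schema} then applies to each step and yields its audit tuple.

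\emph{Main obstacle.} I expect the monotonicity sub-claim on the mixed element $\IsoRCcb$ and its joins, for example $\IsoSI \vee \IsoRCcb = \IsoSI{+}\mathsf{cb}$, to be the hardest step. One has to check that adding the callback-ordering constraint never conflicts with the iso constraints. My first attempt would show that the callback requirement is a pure ordering condition on a read of a separate log table. By the bridge map $\pi_J$ of Lemma~\ref{lem:judge-callback-bridge}, this read adds no edges to the operand-table conflict graph, so the two axes constrain disjoint parts of the schedule and their conjunction is satisfied exactly by schedules meeting both components.
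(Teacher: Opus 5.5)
Your proposal is correct and follows the paper's own route. The ingredients match: downward closure of $\schedmodels$ on $\mathcal{L}$ with the component-wise callback join, the per-table $\IsoSRpol$ pin deferred to the Appendix~\ref{sec:appendix-composition} decomposition, Allen closure through the transitivity table, and the maximum on $\mathcal{L}_{\mathrm{schema}}$. Your induction on $n$ just unrolls the paper's observation that a schedule satisfying $L^{*}$ meets every step's premise at once.

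You work harder than necessary on the callback axis you flag as the main obstacle. The paper defines satisfaction of $(L,\mathsf{cb})$ conjunctively: satisfy $L$ and contain a delivered, correctly ordered \code{callback_log} read. Downward closure is therefore immediate component-wise. Since the lemma is only a sufficiency statement, it also needs no satisfiability of the conjunction. Routing the argument through $\pi_J$ would needlessly import hypotheses $(H1)$+$(H2)$ of Lemma~\ref{lem:judge-callback-bridge}, which the composition lemma does not assume.
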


\noindent\emph{Proof sketch.}
Lattice order on $\mathcal{L}_{\mathrm{iso}}$ coincides with
schedule satisfaction, so the lattice join $L^{*}$ is the
exact precondition of the composite; component-wise join
preserves the orthogonal callback axis. The table-scoped
$\IsoSRpol$ adds a per-table predicate that composes through
table-disjoint preconditions. Closure under Allen-relation
selection follows from the transitivity table over the twelve
non-equality relations~\cite{allen-1983-cacm}. Appendix~\ref{sec:appendix-composition} enumerates the four-by-four pairwise composition
table.

Theorem~\ref{thm:composition} and the composition lemma
Lemma~\ref{lem:compose} state that the sequential composition of two
typed operators holds the lattice join of their isolation
preconditions. The graded account pairs the join statement with the
operator-precondition witnesses, the $4 \times 4$ composition table,
and the integration suite that exercises every cell.

\subsubsection{Minimal witness schedule}
The minimal composition witness is the off-diagonal pair
$\opEvi \circ \opLWW$, whose component preconditions are
$L_{\opLWW} = \IsoRC$ and $L_{\opEvi} = \IsoSI$. A schedule that holds
only $\IsoRC$ admits a split-snapshot evidence read at the $\opEvi$
step, which the join $\IsoRC \vee \IsoSI = \IsoSI$ excludes. The
composite therefore requires $\IsoSI$, the join of the two component
levels.

\subsubsection{Formal statement}
For typed operators $\oplus_a$ and $\oplus_b$ with isolation
preconditions $L_a$ and $L_b$ in
$\mathcal{L} = \mathcal{L}_{\mathrm{iso}} \times
\mathcal{L}_{\mathrm{cb}}$, the sequential composition
$\oplus_b \circ \oplus_a$ holds the join $L_a \vee L_b$. The join is
component-wise: $\IsoRC \vee \IsoSI = \IsoSI$ on the iso axis,
$\bot \vee \mathsf{cb} = \mathsf{cb}$ on the callback axis, and the
policy annotation composes by requiring $\IsoSR$ only on the named
policy table.

\subsubsection{Proof}
A schedule holding the composite must satisfy both component
preconditions, so it holds every level at or above $L_a$ and every
level at or above $L_b$, hence every level at or above $L_a \vee L_b$.
Conversely a schedule holding $L_a \vee L_b$ holds each component
precondition by downward closure of $\schedmodels$ on the lattice, so
both operator steps type-check. The callback component composes by
requiring the delivered \code{callback\_log} row whose read is ordered
between the candidate read and the operator write; the policy
annotation pins only the named policy table. The join is therefore both
necessary and sufficient.

\subsubsection{Scope and empirical anchor}
The main-text theorem states the join for the two-operator case. The
appendix records the minimal weakening history behind each of the four
operator preconditions (Table~\ref{tab:operator-precondition-witnesses})
and enumerates all $16$ ordered pairs (Table~\ref{tab:composition}) in
\S\ref{sec:appendix-composition} below, where the integration suite that
exercises every cell is detailed. The composition is on the iso and
callback axes; the configuration-axis dependence of the emitted wire
verdict is governed separately by
Lemma~\ref{lem:writepath-config-composition} below.

% =====================================================================
\subsection{Witness schedules for the two soundness theorems}
\label{sec:appendix-witnesses}
\label{app:witness-schedules}
\label{app:witness-n1}
\label{app:witness-n2}

This subsection exhibits the eight witness schedules behind the two
soundness theorems of the ADR-0008 D-T01 split: seven iso-axis
witnesses for the anomaly-soundness theorem over the seven
Berenson--Adya classical anomalies, and one schema-axis witness for the
audit-erasure theorem. Each schedule holds the immediate predecessor of
the defending level on its lattice
($\mathcal{L}_{\mathrm{iso}}$ for the seven, $\mathcal{L}_{\mathrm{schema}}$
for the eighth). Together with the soundness inductions in the main
text, the eight witnesses establish the if-and-only-if of both
theorems.

\subsubsection{Iso-axis witnesses for anomaly soundness}
\label{sec:appendix-witnesses-iso}

The seven witnesses below pin tightness of the anomaly-soundness
theorem on the iso lattice $\mathcal{L}_{\mathrm{iso}}$. Each schedule
holds the immediate predecessor of $\guardiso(\phi)$ and witnesses
$\phi$.

\paragraph{Schema scope of the seven witnesses.}
The schedules below are stated on the current-row projection of the
dual-row schema (\S\ref{sec:schema}): every $w_i[x_v]$ and $r_i[x_v]$
targets a fact tuple in the default $\rowkind = \current$ scope. Each
operator step of \S\ref{sec:algebra} also emits an audit tuple at
$\rowkind = \textsf{audit}$ on the $\textsf{audit\_log}(t_s)$ slice.
The default filter $\sigma_{\rowkind = \current}$, whose exhaustiveness
is guaranteed by the CHECK constraint of \S\ref{sec:schema}, projects
every audit-row write away; audit emissions therefore contribute no
edges to the dependency graph of the witness schedule's
$\mathcal{Q}_{\textsf{base}}$ reads and writes.
Proposition~\ref{prop:schema-lift-conservatism}'s commutation argument
applied to selection lifts each witness's verdict at its named
isolation level from the base schema to the full dual-row schema
unchanged.

\paragraph{$P_0$ dirty write.}
Schedule: $w_1[x_a]\,w_2[x_b]\,c_2\,c_1$. Holds the level immediately
weaker than $\IsoRC$ (no isolation guarantee). $\opLWW$ at this level
admits the schedule, but the schedule violates $P_0$:
$T_2$ wrote between $T_1$'s write and commit. Defence: $\IsoRC$ blocks
$w_2[x]$ until $c_1$, ruling out the schedule.

\paragraph{$P_1$ dirty read.}
Schedule: $w_1[x_a]\,r_2[x_a]\,a_1\,c_2$. Holds the level immediately
weaker than $\IsoRC$. $T_2$ reads $T_1$'s uncommitted write, then
$T_1$ aborts: $T_2$ has read a value that never officially existed.
Defence: $\IsoRC$ blocks $r_2[x_a]$ until $T_1$ commits or aborts.

\paragraph{$P_2$ fuzzy read.}
Schedule: $r_1[x_a]\,w_2[x_b]\,c_2\,r_1[x_b]\,c_1$. Holds $\IsoRC$
(the immediate predecessor of $\IsoSI$). $T_1$ reads $x$ twice and
sees two different values because $T_2$ committed in between.
Defence: $\IsoSI$ pins $T_1$'s reads to a single snapshot.

\paragraph{$P_3$ phantom (policy table).}
Schedule:
\[
  r_1[\text{policy: WHERE active}]\,
  w_2[\text{insert new active row}]\,
  c_2\,
  r_1[\text{policy: WHERE active}]\,
  c_1 .
\]
Holds $\IsoSI$ (the immediate predecessor of $\IsoSRpol$). $T_1$'s two
predicate reads see different row sets because $T_2$ inserted a row
matching the predicate. Defence: $\IsoSR$ on the policy table
serializes the predicate-read with $T_2$'s insert.

\paragraph{$P_4$ lost update.}
Schedule: $r_1[x_0]\,w_2[x_1]\,w_1[x_2]\,c_2\,c_1$. Holds $\IsoRC$
(the immediate predecessor of $\IsoSI$). $T_1$ reads $x$ at version
0, $T_2$ writes version 1, $T_1$ overwrites with version 2 (computed
from $x_0$): $T_2$'s update is silently lost. Defence: $\IsoSI$
detects the read-write conflict and aborts $T_1$ on its commit.

\paragraph{$A5A$ read skew.}
Schedule: $r_1[x_a]\,r_1[y_a]\,w_2[x_b]\,w_2[y_b]\,c_2\,r_1[\textit{post-2}]
\,c_1$. Holds $\IsoRC$ (immediate predecessor of $\IsoSI$). $T_1$
sees $x$ at one snapshot version and $y$ at another, breaking
constraints that bind $x$ and $y$. Defence: $\IsoSI$ pins all of
$T_1$'s reads to one snapshot.

\paragraph{$A5B$ write skew.}
Schedule: $r_1[x_a]\,r_2[y_a]\,w_1[y_b]\,w_2[x_b]\,c_1\,c_2$. Holds
$\IsoSI$ (immediate predecessor of $\IsoSR$). Each transaction reads
the other's pre-write state and writes a value the other did not
see; the joint post-state violates a cross-row constraint. Defence:
$\IsoSR$ orders the writes to detect the cycle.

\subsubsection{Schema-axis witness for audit erasure}
\label{sec:appendix-witnesses-schema}

The single witness below pins tightness of the audit-erasure theorem
on the schema lattice $\mathcal{L}_{\mathrm{schema}}$. The schedule
holds the $\textsf{base}$ schema (the immediate predecessor of
$\textsf{audit{-}row}$) and witnesses $\anomAE$.

\paragraph{$\anomAE$ audit erasure (parametric K).}
Schedule:
$w_1[x_a, p_a]\,c_1\,w_2[x_b, p_b]\,c_2\,r_3[x]\,c_3$, executed against
a base schema (no audit-row column). Holds $(\IsoRC, \textsf{base})$,
the immediate predecessor of $(\IsoRC, \textsf{audit{-}row})$. After
$T_2$'s commit, the recovery surface contains only $(x_b, p_b)$.
Therefore $p_a$ is unreachable under the K-relation natural order
$\preceq_K$, since $p_a \npreceq_K p_b$. Defence: the
$(\IsoRC, \textsf{audit{-}row})$ refinement has every operator emit
$\auditt = (p_a \provadd p_b, \strat, t_s)$ alongside the winner;
because $K$ is commutative and $\preceq_K$ is the natural order,
$p_a \provadd p_b \succeq_K p_a$ holds parametrically in $K$ (the
existence witness is $d = p_b$). The construction works for any
commutative semiring with natural order; for the multilinear N[X,T]
default the witness is one-monomial coefficient dominance; for
the multiset N[X,T] instance the witness is exponent-wise dominance;
for the Boolean security-semiring of Foster-Green-Tannen 2008 the
witness is the disjunction lattice.

\begin{proposition}[Tightness of the iso-axis anomaly soundness, parametric in $K$]
\label{prop:t01-tightness}
For each classical anomaly
$\phi \in \{P_0, P_1, P_2, P_3, P_4, A5A, A5B\}$ let $S_\phi$ denote
the schedule exhibited in the named \paragraph of
\S\ref{sec:appendix-witnesses-iso}. Let $L'_{\phi}$ denote the
immediate predecessor of $\guardiso(\phi)$ on
$\mathcal{L}_{\mathrm{iso}}$, namely the level below $\IsoRC$ when
$\phi \in \{P_0, P_1\}$, $L'_{\phi} = \IsoRC$ when
$\phi \in \{P_2, P_4, A5A\}$, and $L'_{\phi} = \IsoSI$ when
$\phi \in \{P_3, A5B\}$. Then (a) $S_\phi \schedmodels L'_{\phi}$
and (b) $S_\phi \models \phi$ for each of the seven $\phi$.
The construction is parametric in the K-relation carrier: the
witness alphabet $\{r_i[x_v], w_i[x_v], c_i, a_i\}$ omits the
provenance annotation, so replacing the carrier $K$ with any
commutative semiring with natural order (multilinear
$\mathbb{N}[X, T]$, multi-degree $\mathbb{N}[X, T]^{\#}$, Boolean
security-semiring reduct) leaves the seven verdicts unchanged.
\end{proposition}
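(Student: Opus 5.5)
The proof is a case check over the seven schedules of \S\ref{sec:appendix-witnesses-iso}, followed by one uniform argument for carrier independence. First I fix a concrete reading of $S \schedmodels L$, so that membership in a level can be checked mechanically. I use Adya's generalized-isolation characterisation~\cite{adya-2000-generalized}, which the proof of Theorem~\ref{thm:anomaly-soundness} already invokes through the Mixing Theorem:
\begin{itemize}
\item the level below $\IsoRC$ is the unconstrained level, which every schedule satisfies;
\item $\IsoRC$ forbids dirty writes and dirty reads (the G0/G1 phenomena): no uncommitted version is overwritten or read;
\item $\IsoSI$ additionally requires each transaction to read from a single start snapshot and enforces first-committer-wins on write--write conflicts, but admits rw-antidependency cycles.
\end{itemize}
With this fixed, claim (a), $S_\phi \schedmodels L'_\phi$, is a direct inspection of each schedule against the constraints of $L'_\phi$. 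Claim (b), $S_\phi \models \phi$, is an inspection against the Berenson--Adya pattern for $\phi$~\cite{berenson-1995-isolation}.

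Next I group the seven cases by their predecessor level. For $P_0$ and $P_1$, $L'_\phi$ is the unconstrained level, so (a) is immediate. For (b), $w_1[x_a]\,w_2[x_b]$ before $c_1$ matches the $P_0$ pattern $w_1[x]\ldots w_2[x]\ldots$ with $T_1$ uncommitted, and $w_1[x_a]\,r_2[x_a]\,a_1$ matches $P_1$.

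For $P_2$, $P_4$ and $A5A$, $L'_\phi = \IsoRC$. For (a), every read in these schedules returns a committed version: the reads of initial versions and the read $r_1[x_b]$ after $c_2$. No write overwrites an uncommitted version, except in $P_4$, where $w_2[x_1]\,w_1[x_2]$ interleave before $c_2$. I handle this case by reading the $P_4$ schedule under the locking realisation of $\IsoRC$ with short write locks. If that reading fails, I reorder the schedule to $r_1[x_0]\,w_2[x_1]\,c_2\,w_1[x_2]\,c_1$. This reordering still exhibits the lost update and is plainly free of G0 and G1. For (b), each schedule matches its named pattern: $r_1[x]\ldots w_2[x]\ldots c_2\ldots r_1[x]$ for $P_2$, the lost update for $P_4$, and the inconsistent pair read for $A5A$.

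For $P_3$ and $A5B$, $L'_\phi = \IsoSI$. For (a), I assign start snapshots and check the two $\IsoSI$ conditions:
\begin{itemize}
\item in $A5B$, the write sets $\{y\}$ and $\{x\}$ are disjoint, so first-committer-wins never fires, and both reads come from the initial snapshot;
\item in $P_3$, the two predicate reads of $T_1$ must be taken from $T_1$'s start snapshot.
\end{itemize}

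This last point is the main obstacle. Under strict snapshot semantics the second predicate read of $T_1$ would \emph{not} see $T_2$'s insert, so the schedule as written would not exhibit the phantom under $\IsoSI$. I resolve this by restating (b) for $P_3$ in Adya's form, as a predicate-based anti-dependency cycle (G2) over the policy predicate. In that form, $T_1$'s predicate read does not see $T_2$'s matching insert, while $T_1$ is serialized after $T_2$ through its own later conflicting access. This is exactly the phantom that $\IsoSI$ admits and $\IsoSRpol$ forbids.

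For $A5B$, (b) is the rw-antidependency cycle $T_1 \to T_2 \to T_1$, produced by $r_1[x]$ against $w_2[x]$ and $r_2[y]$ against $w_1[y]$.

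Finally, carrier parametricity. Every witness is a word over $\{r_i[x_v], w_i[x_v], c_i, a_i\}$. Both the level predicates and the anomaly predicates are defined on the conflict graph and the version order. The provenance annotation occurs in neither, so changing $K$ to $\mathbb{N}[X,T]$, $\mathbb{N}[X,T]^{\#}$ or $\mathbb{B}$ leaves both verdicts unchanged. This mirrors clause~(1) of Proposition~\ref{prop:k-axis-separation}, and I cite the dual-row conservativity of Proposition~\ref{prop:schema-lift-conservatism} to discard the audit-row emissions.
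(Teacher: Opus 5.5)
Your skeleton is the paper's own proof: a per-schedule inspection of (a) and (b), then carrier independence. That independence holds because the witness alphabet carries no provenance, and audit-row emissions are projected away by Proposition~\ref{prop:schema-lift-conservatism}. You go beyond the paper by actually running the inspection, and you rightly flag $S_{P_4}$ and the paper's narration of $S_{P_3}$. Your repair of $P_3$, however, fails.

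In $S_{P_3}=r_1[\mathsf{act}]\,w_2[y\in\mathsf{act}]\,c_2\,r_1[\mathsf{act}]\,c_1$, the transaction $T_1$ never writes. Under the snapshot semantics you fixed, its second predicate read is served from its start snapshot and does not see $T_2$'s insert. The only dependency is the predicate anti-dependency $T_1 \xrightarrow{\mathrm{rw}} T_2$. There is no ``later conflicting access'' of $T_1$ to supply an edge $T_2 \to T_1$. A wr edge needs the read to observe the insert, which is precisely what snapshot reads forbid, and a ww or rw edge needs a write by $T_1$. The graph is therefore acyclic, the history is equivalent to the serial order $T_1;T_2$, and there is no G2 cycle. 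Your restated (b) fails exactly as the narrow A3 reading did.

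Under your semantics the case closes only with a different witness, namely a predicate write skew such as $r_1[\mathsf{act}]\,r_2[\mathsf{act}]\,w_1[y_1\in\mathsf{act}]\,w_2[y_2\in\mathsf{act}]\,c_1\,c_2$. Snapshot reads and disjoint write sets make it $\IsoSI$-admissible. Each insert changes the other transaction's predicate read, giving $T_1 \xrightarrow{\mathrm{rw}} T_2 \xrightarrow{\mathrm{rw}} T_1$, which $\IsoSR$ on the policy table forbids; this is the shape of the paper's PostgreSQL $P_3$ run. Alternatively, keep $S_{P_3}$ and read $P_3$ as Berenson's broad preventative phenomenon $r_1[\mathsf{act}]\ldots w_2[y\in\mathsf{act}]\ldots$, which the event string contains whatever the second read returns. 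That reading counts as an anomaly only when $\IsoSR$ means predicate locking, since under Adya's PL-3 this history is serializable.

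For $P_4$, drop the ``short write locks'' branch. Locking read committed holds write locks until commit (short write locks is Degree~0, below $\IsoRC$), so $w_1[x_2]$ would block until $c_2$. The exhibited $S_{P_4}$ contains the dirty-write pattern $w_2[x]\ldots w_1[x]\ldots c_2$. Since $\guardiso(P_0)=\IsoRC$, it fails (a) under the paper's own $\schedmodels_L$.

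Your gloss of G0 as ``no uncommitted version is overwritten'' is Berenson's $P_0$, not Adya's G0, which is a ww cycle. The Berenson reading is the one you need. Under genuine Adya PL-2 the original $S_{P_4}$ would pass, but $S_{P_0}$ would not witness G0.

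Your reordered $r_1[x_0]\,w_2[x_1]\,c_2\,w_1[x_2]\,c_1$, the canonical lost-update shape of Berenson et al., is the right witness. State explicitly that you prove the proposition for corrected $S_{P_4}$ and $S_{P_3}$ rather than the exhibited ones. This costs nothing downstream, since the tightness direction of Theorem~\ref{thm:anomaly-soundness} needs only some witness for each $\phi$ at $L'_\phi$.
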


\begin{proof}
Verification of (a)+(b) is exhibited paragraph by paragraph in
\S\ref{sec:appendix-witnesses-iso}: each of the seven
\paragraph statements ($P_0$ dirty write through $A5B$ write skew)
displays $S_\phi$ explicitly, verifies $S_\phi \schedmodels
L'_{\phi}$ at the named level by the read/write/commit ordering,
and verifies $S_\phi \models \phi$ by inspection of the conflict
graph against the predicate definition. The seven verifications
are sufficient because Adya's Mixing Theorem~%
\cite[Theorem~4.6]{adya-2000-generalized} characterises each
$L \in \mathcal{L}_{\mathrm{iso}}$ as a constraint on the
rw/wr/ww conflict graph over $\Sigma$ events, and the seven
predicates of Berenson \etal~\cite{berenson-1995-isolation} are
similarly constraints on the same graph.

Parametricity in $K$ follows from the alphabet-omission argument:
the iso-axis verdict of each $S_\phi$ depends only on the order
of $r_i / w_i / c_i / a_i$ events; the provenance polynomial
$p_v \in K[X, T]$ that the operator algebra of
\S\ref{sec:algebra} carries in the audit row is emitted at the
$\rowkind = \textsf{audit}$ slice, which the default filter
$\sigma_{\rowkind = \current}$ of
Proposition~\ref{prop:schema-lift-conservatism} projects away
from $\mathcal{Q}_{\textsf{base}}$ reads. The seven schedules
therefore admit the same verdict under every commutative semiring
carrier with natural order.
\end{proof}

\paragraph{Tightness summary.}
The seven iso-axis schedules of Proposition~\ref{prop:t01-tightness}
combined with the iso soundness induction in the main text
(proof of Theorem~\ref{thm:anomaly-soundness}, $(\Leftarrow)$
direction) establish the if-and-only-if of the iso-axis
anomaly-soundness theorem: assuming $L \not\succeq \guardiso(\Phi)$
selects some $\phi^{*} \in \Phi$ with $L \preceq L'_{\phi^{*}}$,
and the downward closure of $\schedmodels$ on the iso lattice
yields $S_{\phi^{*}} \schedmodels L$, which witnesses $\phi^{*}$ and
falsifies $\mathrm{Prevents}(L, \Phi)$ of
Equation~\eqref{eq:prevents}. The schema-axis schedule (a) holds
$\textsf{base}$ on $\mathcal{L}_{\mathrm{schema}}$ and (b) admits
$\anomAE$; combined with the schema-axis soundness induction in
Theorem~\ref{thm:audit-erasure-schema}, this establishes the
audit-erasure theorem.

\subsubsection{Induction case ledger for the seven classical anomalies}
\label{sec:appendix-witnesses-induction-cases}

The main-text proof of Theorem~\ref{thm:anomaly-soundness}
exhibits the induction step in detail for the representative
case $\phi = A5B$. The remaining six cases follow the same
template: the appended operation $\mathit{op}$ obeys its typing
precondition at $\guardiso(\phi)$, the iso lattice meet ensures
$L \succeq L_{\mathit{op}}$, and Adya's Mixing
Theorem~\cite[Mixing~Theorem]{adya-2000-generalized} preserves
the predicate verdict over the prefix-extension
$S \cdot \mathit{op}$. Table~\ref{tab:induction-cases} records one
row per case: the guard level, the conflict-graph invariant the
typing precondition preserves under $\mathit{op}$, and the witness
that pins tightness (at the level below the guard, the symmetric
counterpart of the soundness step). The $A5B$ case is recorded
inline in the main-text proof, where the partition-projected
serializability argument applies the Mixing Theorem on
$\pi_{(s,p)}(S')$ and excludes the disjoint-write / intersecting-read
pattern.

\begin{table}[!ht]
  \centering\scriptsize
  \caption{Induction cases for the seven classical anomalies. Each
  row names the guard level $\guardiso(\phi)$, the invariant the
  typing precondition preserves when $\mathit{op}$ is appended at
  $L \succeq \guardiso(\phi)$, and the tightness witness at the level
  immediately below the guard.}
  \label{tab:induction-cases}
  \setlength{\tabcolsep}{3pt}
  \begin{tabular}{@{}>{\raggedright\arraybackslash}p{0.105\textwidth}>{\raggedright\arraybackslash}p{0.060\textwidth}>{\raggedright\arraybackslash}p{0.300\textwidth}>{\raggedright\arraybackslash}p{0.330\textwidth}@{}}
    \toprule
    Anomaly & Guard & Invariant preserved under $\mathit{op}$ & Tightness witness (level below guard) \\
    \midrule
    $P_0$ dirty write & $\IsoRC$ & no overlapping uncommitted writes per operand; an interleaved $w_j$ inside $T_i$'s write interval is aborted before commit & $w_1[x_a]\,w_2[x_b]\,c_2\,c_1$ \\
    $P_1$ dirty read & $\IsoRC$ & every committed read paired with a committed write on the same version; $r_j[x_a]$ while $T_i$ uncommitted is excluded & $w_1[x_a]\,r_2[x_a]\,a_1\,c_2$ \\
    $P_2$ fuzzy read & $\IsoSI$ & snapshot-pin: $\mathit{op}$ commits at $T_i$'s snapshot timestamp or starts a fresh snapshot; no two distinct versions of $x$ to $T_i$ & $r_1[x_a]\,w_2[x_b]\,c_2\,r_1[x_b]\,c_1$ at $\IsoRC$ \\
    $P_3$ phantom (policy) & $\IsoSR$ & serializable predicate-read on the pinned table; an interleaved predicate-matching insert between two predicate reads is blocked & $r_1[\text{WHERE active}]\,w_2[\text{insert}]\,c_2\,r_1[\text{same pred.}]\,c_1$ at $\IsoSI$ \\
    $P_4$ lost update & $\IsoSI$ & first-committer-wins: commit-time check aborts both-overwrite-after-same-predecessor & $r_1[x_0]\,w_2[x_1]\,w_1[x_2]\,c_2\,c_1$ at $\IsoRC$ \\
    $A5A$ read skew & $\IsoSI$ & cross-row snapshot: $r_i[x]$ and $r_i[y]$ resolve at one timestamp; no split across a concurrent commit & $r_1[x_a]\,r_1[y_a]\,w_2[x_b]\,w_2[y_b]\,c_2\,r_1[\textit{post-2}]\,c_1$ at $\IsoRC$ \\
    $A5B$ write skew & $\IsoSR$ & partition-projected serializability on $\pi_{(s,p)}(S')$ excludes disjoint-write / intersecting-read (main-text representative case) & $r_1[x_a]\,r_2[y_a]\,w_1[y_b]\,w_2[x_b]\,c_1\,c_2$ at $\IsoSI$ \\
    \bottomrule
  \end{tabular}
\end{table}

% =====================================================================
\subsection{Judge prompt-sensitivity separation}
\label{sec:appendix-judge-prompt}

The lemma below surfaces the engine-versus-deployment separation that
Corollary~\ref{cor:n1-corollary} implicitly bracketed: the engine
defends replay anomalies within a single pinned $\theta$; the selection
of $\theta$ from a deployment-plausible family is a deployment-time
concern outside the operator algebra. The definitions, lemma, and
proposition below carry the prompt-sensitivity image and its achievable
bound.

\begin{definition}[Deployment-plausible judge-prompt family]
\label{def:judge-prompt-family}
Let
$\Theta = \textsf{prompt} \times \textsf{seed}
\times \textsf{model\_version} \times \textsf{temperature}
\times \textsf{tool\_output\_hash}$
be the judge decoder-parameter space of
\S\ref{sec:found-schedule}. A \emph{deployment-plausible
judge-prompt family} for an evaluator intent $\iota$ (such as
``compare candidate answer against gold and output binary
correctness'') is a set $\Theta_\iota \subseteq \Theta$ such that
every $\theta \in \Theta_\iota$ encodes a prompt that a deployer
faithful to $\iota$ might reasonably ship. The four non-prompt
components of $\theta$ (seed, model version, temperature,
tool-output hash) are pinned across $\Theta_\iota$; the family
varies only the prompt sub-component.
\end{definition}

\begin{definition}[Judge prompt-sensitivity image]
\label{def:judge-boundary}
For an oracle $J$, an evaluator intent $\iota$ with
deployment-plausible prompt family $\Theta_\iota$
(Definition~\ref{def:judge-prompt-family}), and a read-set $R$,
the \emph{prompt-sensitivity image} is
\[
B_{\Theta_\iota}(J, R)
\;:=\; \bigl\{ \, J(R, \theta) : \theta \in \Theta_\iota \,\bigr\}
\;\subseteq\; \{0, 1\}.
\]
$R$ is \emph{prompt-stable} under $(J, \Theta_\iota)$ iff
$|B_{\Theta_\iota}(J, R)| = 1$; otherwise $R$ lies in the
\emph{semantic boundary} of $(J, \Theta_\iota)$. The empirical
boundary measure on a benchmark slice $\mathcal{Q}$ is
\[
\mu_{\Theta_\iota}(J, \mathcal{Q}) \;:=\;
\frac{
| \{\, R \in \mathcal{Q} : |B_{\Theta_\iota}(J, R)| > 1 \,\} |
}{|\mathcal{Q}|}.
\]
\end{definition}

\begin{lemma}[Engine N1 defence and prompt-sensitivity separation]
\label{lem:n1-prompt-separation}
Under hypotheses $(H1) + (H2)$ of
Lemma~\ref{lem:judge-callback-bridge} and the keyed-log discipline
of \S\ref{sec:schema}:
\begin{enumerate}
\item For every fixed $\theta \in \Theta$, $\IsoSR$ on
\code{judge\_log} excludes $\anomHR$ at $(R, \theta)$ on every
schedule $S$ over $\Sigma_+$. The exclusion is a property of the
engine layer and does not depend on the size of
$B_{\Theta_\iota}(J, R)$.
\item For $R \in \mathcal{Q}$ with $|B_{\Theta_\iota}(J, R)| > 1$,
two deployments differing only by the choice of
$\theta_1, \theta_2 \in \Theta_\iota$ commit different votes
$v_1 \neq v_2$ at $(R, \cdot)$. Each deployment internally excludes
$\anomHR$ by clause~(1); the cross-deployment divergence is a
deployment-time decision outside the iso lattice.
\end{enumerate}
The two clauses together establish that the engine layer bounds
$\anomHR$ at intra-deployment replay granularity; cross-deployment
prompt selection is bounded only when
$\mu_{\Theta_\iota}(J, \mathcal{Q}) = 0$, a workload-and-judge
property rather than an engine property.
\end{lemma}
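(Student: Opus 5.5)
Clause~(1) reduces to Corollary~\ref{cor:n1-corollary}; clause~(2) is a direct construction.

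\emph{Clause (1).} Fix $\theta \in \Theta$. Under (H1)+(H2), Lemma~\ref{lem:judge-callback-bridge} sends every $j_i(R,\theta)$ to a write or read of the single keyed row $\textsf{judge\_log}[(R,\theta)]$. Equation~\eqref{eq:bridge-anomhr} then turns $S \models \anomHR$ at $(R,\theta)$ into $\pi_J(S) \models P_2 \vee P_3$ on that row. Theorem~\ref{thm:anomaly-soundness} with $\Phi=\{P_2,P_3\}$ at $L=\IsoSR$, which dominates $\guardiso(\{P_2,P_3\}) = \IsoSR$, excludes both patterns on $\pi_J(S)$, so $\anomHR$ is excluded on $S$. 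This is exactly the argument of Corollary~\ref{cor:n1-corollary}, instantiated per $\theta$.

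The independence from $|B_{\Theta_\iota}(J,R)|$ follows from how the argument is built. The key $(R,\theta)$ fixes a single $\theta$, and the reduction never quantifies over other decoder tuples. Its only inputs are schedule-level facts about $\pi_J(S)$, namely conflict-graph preservation and the Mixing-Theorem predicates, and the value $J$ would return at some other $\theta' \ne \theta$ enters none of them. I will make this explicit by observing that the appendix proof of the bridge is parametric in $\theta$ and that $\pi_J$ is injective on keys, so distinct $\theta$ give disjoint rows.

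\emph{Clause (2).} Take $R$ with $|B_{\Theta_\iota}(J,R)|>1$. By Definition~\ref{def:judge-boundary} there are $\theta_1,\theta_2 \in \Theta_\iota$ with $J(R,\theta_1) \ne J(R,\theta_2)$. Deploy two systems that are identical except for the pinned prompt. In deployment $k$ the first call materialises $\textsf{judge\_log}[(R,\theta_k)]$ with vote $v_k = J(R,\theta_k)$, so $v_1 \ne v_2$.

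Each deployment excludes $\anomHR$ internally by clause~(1). The divergence is not an $\anomHR$ witness, because the predicate in \eqref{eq:n1-judge-replay-inconsistency} requires $\theta_1=\theta_2$, and here the two deployments differ in $\theta$. The iso lattice also cannot remove the divergence: the two deployments write different keyed rows, so no schedule edge relates them and no isolation level constrains them jointly.

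\emph{Closing sentence and main obstacle.} If $\mu_{\Theta_\iota}(J,\mathcal{Q})=0$, every $R$ is prompt-stable and the votes agree across deployments. If $\mu>0$, clause~(2) supplies divergent pairs. So the bound is a workload-and-judge property rather than an engine property. The main obstacle is a modelling mismatch. Definition~\ref{def:judge-boundary} treats $J(R,\theta)$ as a function, but under bounded nondeterminism (Definition~\ref{def:bounded-oracle-nondeterminism}) $J$ depends on an internal state $\omega$. I would first try reading $B$ as the set of first-materialised votes, which is well defined per deployment because of the keyed log. The stated result then holds with ``commit different votes'' read as ``can commit'' when $\omega$ varies.
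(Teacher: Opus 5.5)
Your proposal is correct and follows the paper's proof. Clause~(1) is Corollary~\ref{cor:n1-corollary} instantiated at a fixed $\theta$ through $\pi_J$, with independence from $|B_{\Theta_\iota}(J,R)|$ coming from parametricity in $\theta$; clause~(2) picks $\theta_1,\theta_2\in\Theta_\iota$ with $J(R,\theta_1)\neq J(R,\theta_2)$ and lets each deployment commit its own first-call vote. Your extra remarks go slightly beyond the paper's proof and are consistent with it: the divergence is not an $\anomHR$ witness because the predicate requires $\theta_1=\theta_2$, you give an explicit $\mu_{\Theta_\iota}(J,\mathcal{Q})=0$ argument for the closing sentence, and you note that Definition~\ref{def:judge-boundary} treats $J(R,\theta)$ as a function while the bounded-nondeterminism model does not.
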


\begin{proof}[Proof outline]
Clause (1) is Corollary~\ref{cor:n1-corollary} restated. The
homomorphism $\pi_J$ of Lemma~\ref{lem:judge-callback-bridge} maps
$j_i(R, \theta)$ to a logged read at the $(R, \theta)$-keyed row
of \code{judge\_log}, and $\IsoSR$ on that table serializes the
read against the operator commit. The argument is parametric in
$\theta$: for every fixed $\theta \in \Theta$ the homomorphism
applies and the conclusion holds, regardless of how many distinct
votes the oracle $J$ assigns across the rest of $\Theta_\iota$.

Clause (2) is a structural observation. By
Definition~\ref{def:judge-prompt-family}, $\theta_1, \theta_2$ differ
only on the prompt sub-component; each deployment pins one, applies
clause (1), and commits the first-call vote $v_k \in J(R, \theta_k)$ to
its own \code{judge_log}. For $R$ with $|B_{\Theta_\iota}(J, R)| > 1$
there exist $\theta_1, \theta_2$ with
$J(R, \theta_1) \neq J(R, \theta_2)$, so the two deployments observe
divergent committed votes while each internally satisfies clause (1).
\end{proof}

\begin{proposition}[Achievable upper bound on the prompt-sensitivity image]
\label{prop:judge-boundary-bound}
For any oracle $J : \Theta \to \{0, 1\}$, any evaluator intent
$\iota$ with deployment-plausible prompt family $\Theta_\iota$
(Definition~\ref{def:judge-prompt-family}), and any read-set $R$:
\begin{equation}
\label{eq:judge-boundary-bound}
|B_{\Theta_\iota}(J, R)| \;\leq\; \min\bigl(\, |\Theta_\iota|,\, 2 \,\bigr).
\end{equation}
The bound is achieved on $R$ exactly when $\Theta_\iota$ contains
both a $\theta_+ \in \Theta_\iota$ with $J(R, \theta_+) = 1$ and a
$\theta_- \in \Theta_\iota$ with $J(R, \theta_-) = 0$, in which
case $|B_{\Theta_\iota}(J, R)| = 2$. The empirical boundary
measure $\mu_{\Theta_\iota}(J, \mathcal{Q})$
(Definition~\ref{def:judge-boundary}) counts the fraction of
$\mathcal{Q}$ achieving equality.
\end{proposition}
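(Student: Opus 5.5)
The bound follows from the fact that the prompt-sensitivity image is the image of a map into a two-element set. By Definition~\ref{def:judge-boundary}, $B_{\Theta_\iota}(J, R)$ is the image of $\Theta_\iota$ under $\theta \mapsto J(R, \theta)$. Any function's image has cardinality at most that of its domain, so $|B_{\Theta_\iota}(J, R)| \le |\Theta_\iota|$. The image also sits inside the codomain $\{0,1\}$, so $|B_{\Theta_\iota}(J, R)| \le 2$. Taking both bounds together gives \eqref{eq:judge-boundary-bound}. No property of $J$, of $R$, or of the isolation machinery is used; the oracle's internal nondeterminism plays no role either, since Definition~\ref{def:judge-prompt-family} pins seed, model version, temperature and tool hash, so $J(R,\cdot)$ is a function on $\Theta_\iota$. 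I would state this reading explicitly, and if the oracle is the $\omega$-factored one of Definition~\ref{def:bounded-oracle-nondeterminism}, fix $\omega$ as well.

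For the characterisation of when the bound is attained, I read ``the bound is achieved'' as $|B_{\Theta_\iota}(J, R)| = 2$, which is the only case where the $\min$ can be strict against $|\Theta_\iota|$ in a meaningful way. I then argue both directions.

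\emph{If.} Suppose $\theta_+$ and $\theta_-$ exist with $J(R,\theta_+)=1$ and $J(R,\theta_-)=0$. Then $\{0,1\} \subseteq B_{\Theta_\iota}(J,R)$, so the image has cardinality $2$. Moreover $|\Theta_\iota| \ge 2$, because $\theta_+ \ne \theta_-$: they have different $J$-values. Hence the $\min$ equals $2$ and equality holds.

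\emph{Only if.} Suppose $|B_{\Theta_\iota}(J,R)| = 2$. Since the image lies in $\{0,1\}$, it must equal $\{0,1\}$, and choosing preimages of $1$ and $0$ yields $\theta_+$ and $\theta_-$.

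The one subtlety, and the step I expect to need care, is the degenerate regime. If $|\Theta_\iota| = 1$, or more generally if $|B_{\Theta_\iota}(J, R)| = 1 = \min(|\Theta_\iota|, 2)$, the $\min$ is trivially attained without a split pair. I would handle this by stating equality in the ``$=2$'' sense, which is what the statement spells out, and noting that the singleton case is exactly the prompt-stable case of Definition~\ref{def:judge-boundary}. The empty family gives $0 \le 0$ and is excluded from the meaningful regime, so I would also assume $\Theta_\iota \neq \emptyset$.

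Finally, the counting remark about $\mu_{\Theta_\iota}(J, \mathcal{Q})$ is immediate. Definition~\ref{def:judge-boundary} counts those $R \in \mathcal{Q}$ with $|B| > 1$, and by the bound just proved $|B| > 1$ is equivalent to $|B| = 2$, i.e.\ equality in the nondegenerate sense. So $\mu$ is the fraction of $\mathcal{Q}$ achieving equality.
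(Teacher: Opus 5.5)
Your proposal is correct and follows the paper's own argument: the bound comes from $B_{\Theta_\iota}(J,R) \subseteq \{0,1\}$, with the $\min$ covering families of fewer than two prompts, and attainment comes from exhibiting a positive-vote and a negative-vote prompt so that $\{0,1\} \subseteq B_{\Theta_\iota}(J,R)$. You go further than the paper's outline in three ways: you prove the converse of ``exactly when''; you note that $\theta_+ \neq \theta_-$ forces $|\Theta_\iota| \geq 2$; and you flag that for $|\Theta_\iota| = 1$ the $\min$ is attained without a split pair, so ``achieved'' must be read as $|B_{\Theta_\iota}(J,R)| = 2$, the reading the paper uses implicitly.
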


\begin{proof}[Proof outline]
The cardinality bound follows from $|\{0, 1\}| = 2$ and the
inclusion $B_{\Theta_\iota}(J, R) \subseteq \{0, 1\}$. The
$\min$ with $|\Theta_\iota|$ tightens the bound when the family
itself has fewer than two prompts. Achievability is constructive:
any $R$ on which the family contains a positive-vote prompt and a
negative-vote prompt yields $\{0, 1\} \subseteq
B_{\Theta_\iota}(J, R)$, so $|B_{\Theta_\iota}(J, R)| = 2$ and the
bound holds with equality. The empirical evidence below shows the
bound is achievable but typically not saturated by alignment-trained
judges on a benchmark slice.
\end{proof}

\paragraph{Empirical instantiation.}
The Phase~5 \code{judge_pin_discriminating} construction instantiates
$\Theta_\iota$ as a $2 \times 2$ factorial of
$\{\textrm{strict}, \textrm{lenient}\}_{\mathrm{prompt}} \times
\{0, 0.7\}_{\mathrm{temperature}}$ over the pinned judge
\code{claude-haiku-4-5-20251001}, with raw cells in
\code{results/g2_utility/judge_pin_discriminating/per_band.csv}.
Table~\ref{tab:judge-boundary-measure} reports the empirical boundary
measure $\mu$: the haiku four-band sweep ($n = 75$ per band) and the V1
cross-judge sweep ($n = 300$ on the PARTIAL band), whose three further
judges span a $0.347$ cross-judge spread with no judge reaching the
saturating $1.0$ even on the deliberately partial slice. A per-item
decomposition on the haiku PARTIAL band ($46$ unstable of $75$) attributes
$42$ of $46$ to prompt-only disagreement (the two temperature levels
agree, strict and lenient prompts disagree), $0$ to temperature-only, and
$4$ to both axes (\code{per_item.csv} re-analysis), so the prompt axis
dominates the boundary. The sweep supports
Proposition~\ref{prop:judge-boundary-bound}'s ``achievable but not
saturated'' framing and gives Lemma~\ref{lem:n1-prompt-separation} a
measured cross-deployment cost: a deployer choosing between
$\theta_{\textrm{strict}}$ and $\theta_{\textrm{lenient}}$ on the PARTIAL
slice walks past $50$ to $85$ percent of items whose verdict the choice
itself decides.

\begin{table}[!ht]
  \centering\small
  \caption{Empirical judge prompt-sensitivity measure $\mu$
  (Definition~\ref{def:judge-boundary}). Haiku is swept across four
  discriminating bands ($n = 75$); three further judges are swept on the
  PARTIAL band only ($n = 300$). The haiku PARTIAL band decomposes as
  $42$ prompt-only, $0$ temperature-only, $4$ both axes (of $46$ unstable
  items).}
  \label{tab:judge-boundary-measure}
  \setlength{\tabcolsep}{5pt}
  \begin{tabular}{@{}>{\raggedright\arraybackslash}p{0.30\textwidth}cc@{}}
    \toprule
    Judge / band & $\mu$ & $n$ \\
    \midrule
    \code{claude-haiku-4-5}, CONTROL    & 0.00  & 75 \\
    \code{claude-haiku-4-5}, PARAPHRASE & 0.07  & 75 \\
    \code{claude-haiku-4-5}, DISTRACTOR & 0.01  & 75 \\
    \code{claude-haiku-4-5}, PARTIAL    & 0.61  & 75 \\
    \midrule
    \code{deepseek-v4-flash}, PARTIAL   & 0.507 & 300 \\
    \code{gpt-5.4-mini}, PARTIAL        & 0.627 & 300 \\
    \code{gpt-5.4}, PARTIAL             & 0.853 & 300 \\
    \bottomrule
  \end{tabular}
\end{table}

\paragraph{Scope of the lemma.}
The lemma does not claim N1 is unsound: clause (1) is
Corollary~\ref{cor:n1-corollary} restated, and the iso-lattice guard map
of Theorem~\ref{thm:anomaly-soundness} continues to exclude $\anomHR$ on
a fixed-$\theta$ schedule. The lemma surfaces a separation between the
engine layer (where the algebra applies) and the deployment layer (where
prompt selection happens) without introducing a new operator or modifying
the typing rules; the deployment-plausible family $\Theta_\iota$ is a
workload-side artefact a benchmark designer enumerates. It makes the
honest N1 scope visible at the theorem layer.

% =====================================================================
\subsection{Write-path configuration composition}
\label{sec:appendix-writepath}

The lemma below sits alongside the engine-layer composition lemma
Lemma~\ref{lem:compose} and refines it: Lemma~\ref{lem:compose} governs
the iso lattice and the schema lattice for the engine-layer
precondition surface; the write-path lemma governs the
deployment-configuration axis surface at the AnomalyWire verdict layer.
Both compose sequentially. The lemma documents which configuration axis
is closed under composition (read-path $\mathcal{C}_{\mathrm{R}}$) and
which is not (write-path $\mathcal{C}_{\mathrm{W}}$).

\begin{lemma}[Write-path config composition under operator chaining]
\label{lem:writepath-config-composition}
Let
$\oplus_{a_n} \circ \cdots \circ \oplus_{a_1}$
be a sequential composition of typed operators over the dual-row
schema of \S\ref{sec:schema} as in Lemma~\ref{lem:compose}. For
each operator $\oplus_{a_i}$ in the chain, let
\[
V_i : \mathcal{C}_{\mathrm{W}} \;\longrightarrow\;
\{\textsf{trig}, \textsf{def}, \textsf{excl}, \textsf{n/a}\}
\]
be the operator's contribution to the AnomalyWire verdict as a
function of the write-path configuration $c \in
\mathcal{C}_{\mathrm{W}}$ of
Definition~\ref{def:config-axes-split}. Let $V_{\mathrm{comp}}(c)$
denote the wire verdict of the composite. Then:
\begin{enumerate}
\item \emph{Read-path invariance composes.} If every $V_i$ is
invariant on $\mathcal{C}_{\mathrm{R}}$
(i.e.~Proposition~\ref{prop:wire-readpath-invariance} applies to
each operator's schedule), then $V_{\mathrm{comp}}$ is invariant
on $\mathcal{C}_{\mathrm{R}}$. Read-path invariance is closed
under sequential composition.
\item \emph{Write-path perturbation propagates non-monotonically.}
If $V_i(c)$ depends on $c \in \mathcal{C}_{\mathrm{W}}$ for at
least one $i$, $V_{\mathrm{comp}}(c)$ generally depends on $c$.
The propagation is non-monotone: even if individual $V_i(c)$ are
each $\textsf{def}$ at every $c \in \mathcal{C}_{\mathrm{W}}$, the
composite $V_{\mathrm{comp}}(c)$ need not be $\textsf{def}$-stable
across $c$ because intermediate storage-layer state shifts can
cascade through the schedule's subsequent reads.
\end{enumerate}
The lemma factors the engine-and-deployment guarantee surface:
Lemma~\ref{lem:compose} bounds the iso-axis and schema-axis
preconditions of the composite engine schedule;
Lemma~\ref{lem:writepath-config-composition} bounds the
configuration-axis dependence of the verdict that the composed
wire schedule emits.
\end{lemma}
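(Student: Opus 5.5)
Both clauses of Lemma~\ref{lem:writepath-config-composition} will be proved by induction on the chain length $n$. The composite wire verdict is treated as a left fold of per-step verdicts over the schedule that the chain induces. Concretely, I would write the composite schedule as the concatenation $S_{\mathrm{comp}} = S_1 \cdot S_2 \cdots S_n$. Here $S_i$ is the sub-schedule emitted by $\oplus_{a_i}$, and it is run on the storage state $\sigma_{i-1}(c)$ left by its predecessors. The composite verdict is then a combination $V_{\mathrm{comp}}(c) = \bigotimes_i V_i(c;\sigma_{i-1}(c))$ under the AnomalyWire precedence $\textsf{trig} > \textsf{def} > \textsf{excl} > \textsf{n/a}$. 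So the first step is to make explicit that each $V_i$ depends on $c$ through two channels: directly, and through the inherited state $\sigma_{i-1}(c)$. This refines the statement's $V_i : \mathcal{C}_{\mathrm{W}} \to \{\ldots\}$.

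For clause (1), I would invoke Proposition~\ref{prop:wire-readpath-invariance} per operator. Read-path configurations $c \in \mathcal{C}_{\mathrm{R}}$ alter no write event, so the storage state after each step is independent of $c$: $\sigma_i(c) = \sigma_i(c')$ for all $c, c' \in \mathcal{C}_{\mathrm{R}}$. I would prove this by induction on $i$, using Definition~\ref{def:config-axes-split} to say that read-path knobs touch only $r_i$ events and never $w_i$ or commit events. The base case is the empty prefix. For the inductive step, the hypothesis gives identical input state, and the proposition gives an identical per-step verdict and identical writes. Since the fold is then applied to identical arguments, $V_{\mathrm{comp}}$ is invariant, and closure under composition follows.

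For clause (2), the claim is existential and negative ("generally depends", "need not be \textsf{def}-stable"), so I would prove it by an explicit two-step counterexample rather than by a general argument. I would take $\opLWW$ followed by $\opEvi$ on one $(\subj,\pred)$ partition, with a write-path knob $c$ that toggles the dedupe/merge behaviour of the write path, for example whether a duplicate confirmation accumulates provenance into the surviving row or inserts a second current row. Under either setting, each operator taken alone on a fresh store is \textsf{def}. Under one setting, however, the first step leaves two open current rows. The second step's incumbent scan then finds $|C|>1$ and raises, or, in a baseline adapter, silently picks one, which flips the composite verdict to \textsf{trig}. This gives $V_{\mathrm{comp}}(c)\neq V_{\mathrm{comp}}(c')$ even though every $V_i \equiv \textsf{def}$ in isolation. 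That witnesses both the dependence on $c$ and its non-monotonicity.

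The main obstacle is making clause (2) rigorous. The statement's $V_i$ has domain $\mathcal{C}_{\mathrm{W}}$ alone, yet the cascading effect lives in the inherited state. I expect to have to restate "each $V_i$ is \textsf{def} at every $c$" as meaning "\textsf{def} on a fresh store". Then I must check carefully that the chosen knob genuinely lies in $\mathcal{C}_{\mathrm{W}}$ under Definition~\ref{def:config-axes-split} and not in $\mathcal{C}_{\mathrm{R}}$; otherwise clause (1) would contradict the counterexample. If the dedupe knob turns out to be ambiguous, I would fall back on a write-buffer flush-timing knob, which is unambiguously write-path.
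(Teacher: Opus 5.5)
Your clause (1) argument is the paper's: induction on chain length, with the storage state that $\oplus_{a_i}$ hands to $\oplus_{a_{i+1}}$ invariant under $\mathcal{C}_{\mathrm{R}}$. Two small notes on it. First, the step is carried by the paper's premise that the wire schedule's reads never pass through the ranked retrieval policy. Your phrase ``read-path knobs touch only $r_i$ events'' is not enough by itself, since a changed read value could still feed a later write. Second, your precedence fold is an extra assumption, but a harmless one.

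The gap is in clause (2): your witness does not work on the object the lemma quantifies over.
\begin{itemize}
\item In the dual-row write path (Algorithm~\ref{alg:resolve-write}), the duplicate-confirmation branch fires only on a same-object overlapping write. It returns before any operator is selected.
\item An operator step replaces at most one open current row by one other. So no setting of a dedupe knob lets the $\opLWW$ step leave two open current rows. You would need an extra non-operator ingest between the steps, and then the chain is no longer $\opEvi \circ \opLWW$.
\item Even granting that ingest, the second step meets $|C|>1$ and the reference write path raises by construction. That guard exists to refuse exactly this silent guess, and a raise is an error, not a $\textsf{trig}$ verdict.
\item Falling back to ``a baseline adapter silently picks one'' leaves the typed operators altogether, and you do not exhibit any such adapter.
\end{itemize}

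The paper does not derive clause (2). It calls it a structural observation and anchors it on a concrete baseline cascade. The mem0~v3 custom-instructions knob is a prompt injection into the adjudicator, so it lies squarely in $\mathcal{C}_{\mathrm{W}}$. It shifts the ADD/UPDATE/DELETE vote at the second add call, which changes the stored trajectory and flips the third read's coexistence predicate $|\mathrm{post}| \ge 2$ from $\textsf{def}$ to $\textsf{trig}$. The paper then lifts this shape by induction to any chain in which the perturbed step's output enters a downstream read.

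To repair your proof, take a knob that actually reaches a vote-generating call, and exhibit the flip on a system whose wire verdict can reach $\textsf{trig}$, as that witness does. Your classification worry is settled by the paper's own gloss: a storage-layer policy toggle feeding the resolution rule counts as $\mathcal{C}_{\mathrm{W}}$. The flush-timing fallback is no cleaner under the per-system criterion, and it has no counterpart in \toki's synchronous write path. Your point that ``$\textsf{def}$ at every $c$'' must mean an isolated, fresh-store verdict is sound, since otherwise a fold of in-context verdicts would force the composite to $\textsf{def}$. It is sharper than the paper, which leaves this implicit.
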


\begin{proof}[Proof outline]
Clause (1) follows from the read-path invariance of each $V_i$
plus the structural property of AnomalyWire's schedule that no
operator's read step invokes the ranking-based retrieval policy.
By Proposition~\ref{prop:wire-readpath-invariance} applied to
each $\oplus_{a_i}$, $V_i(c_R^{(1)}) = V_i(c_R^{(2)})$ for every
$c_R^{(1)}, c_R^{(2)} \in \mathcal{C}_{\mathrm{R}}$. The
composite schedule executes the operators in order; at each
intermediate step the storage-layer state read by
$\oplus_{a_{i+1}}$ is the state committed by
$\oplus_{a_i}$, which by clause~(1) hypothesis does not depend on
the read-path knobs. The conclusion holds by induction on the
chain length: the base case $n = 1$ is
Proposition~\ref{prop:wire-readpath-invariance} verbatim; the
inductive step uses the intermediate-state invariance to lift the
hypothesis to the $(n+1)$-step composite.

Clause (2) is a structural observation rather than a derivation
from an algebraic identity. By
Definition~\ref{def:config-axes-split}, a $\mathcal{C}_{\mathrm{W}}$
knob is one whose code-path trace reaches an operator's
vote-generating call (extractor prompt, adjudicator prompt, or
storage-layer policy toggle that feeds the resolution rule).
A perturbation $c \to c'$ on $\mathcal{C}_{\mathrm{W}}$ that
shifts the storage trajectory between $\oplus_{a_i}$'s commit and
the schedule's subsequent reads can cascade to a different
verdict at any downstream operator whose vote consumes the
perturbed storage state. The empirical V3 cell-5 witness on the
single-operator N2 schedule is the minimal instance of the
cascade: a \code{custom\_instructions} prompt-injection knob
($\mathcal{C}_{\mathrm{W}}$ by Definition~\ref{def:config-axes-split})
shifts mem0~v3's internal ADD/UPDATE/DELETE vote at the second
\code{Memory.add} call, alters the storage trajectory, and flips
the third-read coexistence predicate $|\mathrm{post}| \geq 2$
from $\textsf{def}$ to $\textsf{trig}$. The same shape lifts to
the composite case by induction whenever the perturbed operator's
output enters a downstream operator's read.
\end{proof}

\paragraph{Empirical instantiation.}
The V3+ multi-seed sweep at $5$ seeds $\times$ $6$ cells $\times$
$3$ anomalies confirms clause~(1) on the single-operator N2 wire:
read-path invariance holds within every seed
(\code{results/anomaly\_wire/config\_invariance\_sweep\_multi\_seed.csv}).
Clause~(2) is witnessed empirically on $4$ of $5$ seeds where at
least one write-path cell perturbs the baseline triple. The
composite case across multiple operators is observed indirectly
via Phase~3 cross-system runs (where retrieval-policy axis
dominance per Proposition~\ref{prop:cross-system-axis-attribution}
masks the underlying composition); the V6 matched-vs-unmatched
cell ledger isolates the composition path under matched retrieval
on three paired systems and two datasets, with the LoCoMo cells
exhibiting the opposite-sign perturbation predicted by
clause~(2) of this lemma combined with the candidate-pool-size
dependence of Extension~B.

\paragraph{Scope of the lemma.}
The lemma covers configuration knobs taxonomised by
Definition~\ref{def:config-axes-split}; it does not cover
source-code patches that change the storage layer or the operator
implementation itself, which receive a fresh
$\textsf{commit}(s)$ pin in
Proposition~\ref{prop:wire-readpath-invariance}. The lemma does
not modify the engine-layer composition of
Lemma~\ref{lem:compose}: $L^* = \bigvee L_{a_i}$ on the iso and
schema lattices remains the typed precondition the composite
schedule must satisfy. The configuration axis is orthogonal to
the iso-and-schema lattices and concerns deployment-time
perturbations that do not change the operator typing.

% =====================================================================
\subsection{Operator-precondition witnesses and the composition table}
\label{sec:appendix-composition}

The four typed operators from the algebra section have isolation
preconditions $L_{\opLWW} = \IsoRC$, $L_{\opEvi} = \IsoSI$,
$L_{\opAwait} = \IsoRCcb$, $L_{\opRule} = \IsoSRpol$.
Before composition, Table~\ref{tab:operator-precondition-witnesses}
records the minimal weakening histories behind those signatures.
These histories keep the operator signature as a typed contract
consumed by the guard theorem; Theorem~\ref{thm:anomaly-soundness}
still ranges over schedules satisfying the declared preconditions.

\begin{table}[!ht]
\centering\small
\caption{Minimal weakening histories for the four production operator
preconditions. Each row weakens exactly one declared precondition and
exposes the failure named by the operator signature.}
\label{tab:operator-precondition-witnesses}
\begin{tabular}{@{}llll@{}}
\toprule
Tag & Operator & Weakened history & Exposed failure \\
\midrule
LWW-RU & $\opLWW$ & dirty overwrite before commit & $P_0/P_1$ boundary \\
EVI-RC & $\opEvi$ & split evidence snapshot & inconsistent evidence winner \\
AWAIT-NOCB & $\opAwait$ & write-before-callback & callback boundary \\
RULE-SI & $\opRule$ & policy phantom & $P_3^{\mathrm{policy}}$ \\
\bottomrule
\end{tabular}
\end{table}

The callback row in AWAIT-NOCB is a \code{callback_state} entry in
\code{callback_log} with fields
$(h, R_hash, k, status, delivered_at)$. The safe schedule requires
\code{status=delivered} and the order candidate read precedes
callback-log read, callback-log read precedes operator write.

Sequential composition $\oplus_b \circ \oplus_a$ requires a schedule
that holds both $L_a$ and $L_b$; the lemma states that the composite
holds the lattice join $L_a \vee L_b$. Table~\ref{tab:composition}
enumerates the 16 ordered pairs. Joins are component-wise in
$\mathcal{L}_{\mathrm{iso}}\times\mathcal{L}_{\mathrm{cb}}$:
$\IsoRC\vee\IsoSI=\IsoSI$ on the iso axis,
$\bot\vee\mathsf{cb}=\mathsf{cb}$ on the callback axis, and the
policy annotation composes by requiring $\IsoSR$ only on the named
policy table. The callback component denotes the delivered
\code{callback_log} row whose read is ordered between the candidate
read and operator write.

\begin{table}[!ht]
\centering\small
\caption{Pairwise composition $\oplus_b \circ \oplus_a$ from
the algebra section; cell shows the composite precondition
$L_a \vee L_b$ in $\mathcal{L} = \mathcal{L}_{\mathrm{iso}} \times
\mathcal{L}_{\mathrm{cb}}$. $\IsoRCcb$
abbreviates the product element $(\IsoRC, \mathsf{cb})$ and the
``$+$cb'' compounds abbreviate the corresponding product
joins. ``$\IsoSR^{\mathrm{p}}$'' abbreviates the per-table
annotation $\IsoSRpol$ that pins the policy table to $\IsoSR$;
cells carrying $\IsoSR^{\mathrm{p}}$ apply the annotation on top of
the underlying product join.}
\label{tab:composition}
\begin{tabular}{l|llll}
\toprule
$\oplus_a \backslash \oplus_b$ & $\opLWW$ & $\opEvi$ & $\opAwait$ & $\opRule$ \\
\midrule
$\opLWW$ ($\IsoRC$)            & $\IsoRC$ & $\IsoSI$ & $\IsoRCcb$ & $\IsoSR^{\mathrm{p}}$ \\
$\opEvi$ ($\IsoSI$)            & $\IsoSI$ & $\IsoSI$ & $\IsoSI$+cb & $\IsoSR^{\mathrm{p}}$ \\
$\opAwait$ ($\IsoRCcb$)        & $\IsoRCcb$ & $\IsoSI$+cb & $\IsoRCcb$ & $\IsoSR^{\mathrm{p}}$+cb \\
$\opRule$ ($\IsoSR^{\mathrm{p}}$) & $\IsoSR^{\mathrm{p}}$ & $\IsoSR^{\mathrm{p}}$ & $\IsoSR^{\mathrm{p}}$+cb & $\IsoSR^{\mathrm{p}}$ \\
\bottomrule
\end{tabular}
\end{table}

All sixteen cells of Table~\ref{tab:composition} are exercised by
integration tests under \code{tests/integration/test_compose.py}
against the reference implementation
\code{bitemporal.ingestion.ingest}: four diagonal cells covering
idempotent self-composition $\oplus_a \circ \oplus_a$ and twelve
off-diagonal cells covering distinct-operator pairs
$\oplus_b \circ \oplus_a$ with $a \neq b$. Each cell corresponds to a
schedule whose every transaction holds the cell's isolation level;
the dual-row schema's audit row is independent of the
isolation-level join, which is why we factor soundness into two
theorems on orthogonal lattices: the anomaly-soundness theorem
over $\mathcal{L}_{\mathrm{iso}}$ for the seven classical and the
two LLM-specific corollaries, and the audit-erasure theorem
over $\mathcal{L}_{\mathrm{schema}}$ for $\anomAE$.

% =====================================================================
\subsection{Expanded anomaly guard surface}
\label{sec:appendix-guard-surface}

Table~\ref{tab:guard-surface-expanded} expands the compact guard
surface from the main paper. The main paper keeps
only the review-facing summary; this subsection records the exact
representation, check, and enforcement wording.

\begin{table}[!ht]
  \centering\scriptsize
  \caption{Expanded anomaly guard surface. Rows separate representation,
  checking, and enforcement across the iso and schema axes.}
  \label{tab:guard-surface-expanded}
  \setlength{\tabcolsep}{3pt}
  \begin{tabular}{@{}>{\raggedright\arraybackslash}p{0.15\textwidth}>{\raggedright\arraybackslash}p{0.12\textwidth}>{\raggedright\arraybackslash}p{0.24\textwidth}>{\raggedright\arraybackslash}p{0.22\textwidth}>{\raggedright\arraybackslash}p{0.20\textwidth}@{}}
    \toprule
    Anomaly & Origin & Representation & Check & Enforcement \\
    \midrule
    $P_0$ dirty write & Berenson 1995 \S4 & Berenson--Adya dirty write & $L \ge \IsoRC$ & schedule satisfies read committed \\
    $P_1$ dirty read & Berenson 1995 \S4 & Berenson--Adya dirty read & $L \ge \IsoRC$ & schedule satisfies read committed \\
    $P_2$ fuzzy read & Berenson 1995 \S4 & Berenson--Adya fuzzy read & $L \ge \IsoSI$ & schedule satisfies snapshot isolation \\
    $P_3$ phantom & Berenson 1995 \S4 & Berenson--Adya phantom & $L \ge \IsoSR$ & schedule satisfies serializable predicate reads \\
    $P_4$ lost update & Berenson 1995 \S6 & Berenson--Adya lost update & $L \ge \IsoSI$ & snapshot-isolation write-conflict check aborts the overwrite \\
    $A5A$ read skew & Berenson 1995 \S6 & Berenson--Adya read skew & $L \ge \IsoSI$ & snapshot isolation pins every read to one snapshot \\
    $A5B$ write skew & Berenson 1995 \S6 & Berenson--Adya write skew & $L \ge \IsoSR$ & serializability rejects the dependency cycle \\
    \midrule
    $\anomHR$ judge-replay inconsistency & this paper & judge replay keyed by read set and decoder tuple & equal votes for a fixed replay key & judge table held at $\IsoSR$ with output hash logged \\
    $\anomBDS$ belief-drift skew & $A5B$ specialisation (this paper) & $A5B$ lifted to the $(\subj,\pred)$ partition & $L \ge \IsoSR$ on the partition & partition satisfies $\IsoSR$ \\
    $\anomAE$ audit erasure & this paper & K-relation reachability & $p_l \preceq_K p_{\mathrm{audit}}$ & every operator emits an audit row \\
    \bottomrule
  \end{tabular}
\end{table}

% @owner       src::appendix::C-operator-algorithms
% @does        Answers the reviewer question "where exactly is the write-time
%              contract enforced?" by grounding the dual-row schema, the four
%              operator algorithms, the transaction boundary, the isolation
%              pin, and the judge-log write ordering in the real
%              implementation (implementation/bitemporal/*.py). Folds the
%              schema-lift conservatism proof and the pipeline decomposition
%              of the former A9 appendix.
% @needs       src/sections/03-foundations.tex, src/sections/04-implementation.tex,
%              src/theorems/T-01b-audit-erasure-schema.tex, implementation/bitemporal/*.py
% @feeds       src/supplement.tex
% @breaks      drift between this appendix and the live code (every column,
%              function name, file:line, and behaviour is read from the
%              implementation, not from prior prose); a dangling
%              prop:schema-lift-conservatism ref if the A9 alias is removed.
% @claim       The write-time contradiction-resolution contract is enforced at
%              one seam: the ingestion dispatcher resolves, closes the loser,
%              inserts the winner, and persists the audit row, and the operator
%              dispatcher writes the judge-log row before the operator commits.

\section{Operator algorithms and schema}
\label{app:operator-algorithms}

This appendix answers one reviewer question precisely: where the
write-time contract of \S\ref{sec:algebra} is enforced in the
reference implementation. The answer has four parts. The dual-row
bitemporal schema (\S\ref{app:dual-row-schema}) gives the typed
operators a place to land. The four operator algorithms
(\S\ref{app:operator-pseudocode}) realise the four
\S\ref{sec:algebra} inference rules as pure functions. The
ingestion seam (\S\ref{app:transaction-boundary}) is the single
point where the algebra meets persistent storage, and it owns the
loser-close, winner-insert, and audit-persist sequence. The
isolation pin (\S\ref{app:isolation-pin}) is set per worker on the
PostgreSQL backend, and the judge-log row
(\S\ref{app:judge-log-ordering}) is written before the operator
commits, which is what makes the Lemma~\ref{lem:judge-callback-bridge}
ordering hold structurally rather than by caller discipline. Every
column, function name, file location, and behavior below is read
from the implementation under \texttt{implementation/bitemporal/}.

\subsection{Dual-row bitemporal schema}
\label{app:dual-row-schema}

The table \code{agent_memory} carries eleven user-visible columns
plus a physical twelfth discriminator. The eleven user-visible
columns are the \S\ref{sec:schema} narrative SSOT: \code{fact_id},
\code{subject}, \code{predicate}, \code{object}, \code{valid_from},
\code{valid_to}, \code{system_time_start}, \code{system_time_end},
\code{provenance_id}, \code{confidence}, and
\code{resolution_strategy_id} (the \texttt{COLUMNS} tuple at
\texttt{schema.py:65--77}). The twelfth column \code{row_kind} is
the \anomAE\ defence: a CHECK-constrained \texttt{TEXT} discriminator
over $\rowkind \in \{\textsf{current}, \textsf{audit}\}$ with a
default of \textsf{current}, giving the twelve-column physical layout
\texttt{ALL\_COLUMNS} at \texttt{schema.py:81}. Current-kind rows are
the \S\ref{sec:schema} eleven-column schema; audit-kind rows carry the
loser lineage emitted by every operator.

Table~\ref{tab:ddl-constraints} lists the five integrity constraints
the class-level \texttt{DDL} string at \texttt{schema.py:88--107}
declares to pin the write-time contract at the storage layer.

\begin{table}[!ht]
  \centering\small
  \caption{Storage-layer integrity constraints from the
  \texttt{agent\_memory} DDL (\texttt{schema.py:88--107}).}
  \label{tab:ddl-constraints}
  \setlength{\tabcolsep}{4pt}
  \begin{tabular}{@{}>{\raggedright\arraybackslash}p{0.22\textwidth}>{\raggedright\arraybackslash}p{0.30\textwidth}>{\raggedright\arraybackslash}p{0.40\textwidth}@{}}
    \toprule
    Constraint & Target & Rule \\
    \midrule
    confidence range & \code{confidence} & \texttt{CHECK (confidence BETWEEN 0.0 AND 1.0)} \\
    row-kind domain & \code{row_kind} & \texttt{CHECK (row\_kind IN ('current', 'audit'))} \\
    primary key & \texttt{(fact\_id, system\_time\_start)} & two system-time versions of one fact coexist \\
    valid-period order & \code{valid_from}, \code{valid_to} & \texttt{CHECK (valid\_from < valid\_to)} \\
    system-period order & \code{system_time_start}, \code{system_time_end} & \texttt{CHECK (system\_time\_start < system\_time\_end)} \\
    \bottomrule
  \end{tabular}
\end{table}

DuckDB ships no native SQL:2011 temporal tables, so the four
timestamp columns are hand-rolled rather than declared through a
\texttt{PERIOD FOR} clause. The temporal predicate is implemented as
the \code{as_of} class method (\texttt{schema.py:183--220}) under
closed-open semantics
(\code{system_time_start} $\le t <$ \code{system_time_end}); it
defaults to $\rowkind = \current$ so the \S\ref{sec:schema} retrieval
narrative is preserved, and accepts \code{kind='audit'} or
\code{kind='all'} for audit inspection. The open upper
bound is the sentinel \texttt{9999-12-31}
(\texttt{Schema.SENTINEL\_END}, \texttt{schema.py:86}). A version is
closed by \code{close} (\texttt{schema.py:246--279}), which raises
\texttt{LookupError} when no open current row matches the
\code{fact_id} rather than failing silently.

The schema admits three modes through \code{Schema.create}'s
\code{mode} kwarg (\texttt{schema.py:111--148}):
$\textsf{audit\_row}$ (default), $\textsf{dual\_row}$, and
$\textsf{base}$. The physical table is identical across all three; the
mode is validated at construction and the persistence-behavior change
lives at the ingestion layer
(\S\ref{app:transaction-boundary}). An unknown mode raises
\texttt{ValueError} at create time
(\texttt{schema.py:143--147}), which keeps the schema-mode ablation
axis validated before a misspelt mode reaches a silent ingest
behavior change.

\subsection{The four operator algorithms}
\label{app:operator-pseudocode}

The four operators are the classes \texttt{LWW}, \texttt{Evidence},
\texttt{AwaitConfirm}, and \texttt{PerRule}
(\texttt{operators.py:159--285}). Each is a pure function from a pair
of contradicting facts to a stamped winner and an
\texttt{AuditRow}, and each carries a typed \code{isolation_level}
class attribute that names its scheduling precondition:
\texttt{LWW} at \IsoRC, \texttt{Evidence} at \IsoSI,
\texttt{AwaitConfirm} at \IsoRCcb, and \texttt{PerRule} at \IsoSR.
The contradiction predicate is the function \code{is_contradiction}
(\texttt{operators.py:58--71}): two facts contradict when subject and
predicate agree, object differs, and the closed-open valid-time
periods overlap.

All four share the dual-row machinery \code{_resolve_dual_row}
(\texttt{operators.py:135--154}), which computes the provenance merge
$p_w \provadd p_l$ once, stamps the winner, and emits the
\texttt{AuditRow}. The shared steps are:

\begin{enumerate}
\item Reject non-contradicting inputs with \texttt{ValueError}
  through \code{_require_contradiction} (\texttt{operators.py:126--132}).
\item Choose the winner-loser ordering by the operator's selector
  (the only step that differs across the four; see below).
\item Compute $p_w \provadd p_l$ via \code{merge_provenance}
  (\texttt{audit.py:40--56}), the commutative semiring sum over the
  $K[X, T]$ witness polynomial.
\item Stamp the winner through \code{_stamp_winner}
  (\texttt{operators.py:76--92}): a fresh \code{system_time_start} set
  to \code{now}, \code{system_time_end} set to the open sentinel, the
  \code{resolution_strategy_id} stamp, and the merged
  \code{provenance_id}.
\item Materialize the \texttt{AuditRow} through \code{_build_audit}
  (\texttt{operators.py:95--123}), keyed on
  (\code{loser_fact_id}, \code{system_time}).
\end{enumerate}

\noindent The four operators differ only in step~2, the winner
selector, summarised in Table~\ref{tab:operator-selectors}. An
out-of-range return from the \texttt{AwaitConfirm} callback or the
\texttt{PerRule} policy oracle raises \texttt{ValueError}; the callback
is the external oracle (human review, regulated-vertical confirmation,
or upstream agent decision).

\begin{table}[!ht]
  \centering\small
  \caption{The four operator winner-selectors. Each names its isolation
  precondition, strategy id, selection rule with source line, and the
  anomaly its level admits or precludes.}
  \label{tab:operator-selectors}
  \setlength{\tabcolsep}{4pt}
  \begin{tabular}{@{}>{\raggedright\arraybackslash}p{0.13\textwidth}>{\raggedright\arraybackslash}p{0.07\textwidth}>{\raggedright\arraybackslash}p{0.42\textwidth}>{\raggedright\arraybackslash}p{0.24\textwidth}@{}}
    \toprule
    Operator (iso, strategy) & & Selector (source) & Anomaly \\
    \midrule
    \texttt{LWW} (\IsoRC, \code{lww}) & & larger (\code{system_time_start}, \code{fact_id}) pair (\texttt{operators.py:178--184}); last writer wins & admits $P_4$ lost-update \\
    \texttt{Evidence} (\IsoSI, \code{evi}) & & larger (\code{confidence}, \code{system_time_start}, \code{fact_id}) tuple (\texttt{operators.py:209--216}); confidence-weighted & admits A5B write-skew \\
    \texttt{AwaitConfirm} (\IsoRCcb, \code{await}) & & external callback returns $0$/$1$ (\texttt{operators.py:236--253}) & no admitted anomaly \\
    \texttt{PerRule} (\IsoSR, \code{rule}) & & policy oracle returns $0$/$1$ (\texttt{operators.py:268--285}); serializable on policy table & precludes $P_3$ phantom \\
    \bottomrule
  \end{tabular}
\end{table}

The operators never mutate persistent storage. Closing the loser is
the caller's responsibility, documented in the
\texttt{operators.py} module \texttt{@breaks} contract: the operator
returns the \texttt{AuditRow} whose \code{loser_fact_id} the caller
passes to \code{Schema.close}. This division is what keeps the four
operators pure and the persistence policy in one place.

\subsection{Transaction boundary at the ingestion seam}
\label{app:transaction-boundary}

The single seam where the algebra meets persistent storage is the
\code{ingest} function (\texttt{ingestion.py:181--361}). It is the
write-path enforcement point the reviewer question targets. For each
incoming fact it executes the eight-step sequence of
Table~\ref{tab:ingestion-steps}, and the caller, not \code{ingest}, owns
the surrounding transaction commit and rollback.

\begin{table}[!ht]
  \centering\small
  \caption{The eight-step ingestion sequence at the write seam
  (\texttt{ingestion.py:181--361}). Each step names its action, source
  line, and the failure it raises rather than masks.}
  \label{tab:ingestion-steps}
  \setlength{\tabcolsep}{3pt}
  \begin{tabular}{@{}>{\raggedright\arraybackslash}p{0.025\textwidth}>{\raggedright\arraybackslash}p{0.46\textwidth}>{\raggedright\arraybackslash}p{0.20\textwidth}>{\raggedright\arraybackslash}p{0.19\textwidth}@{}}
    \toprule
    \# & Action & Source & Failure \\
    \midrule
    1 & validate ablation knobs: coerce \code{operators_on}, reject unknown \code{schema_mode} & \texttt{ingestion.py:111--153}, \texttt{231--235} & \texttt{ValueError} on bad mode \\
    2 & read open current rows for the (\code{subject}, \code{predicate}) partition ($\rowkind=\current$, \code{system_time_end} at open sentinel) via \code{_open_rows_for_partition} & \texttt{ingestion.py:156--178} & -- \\
    3 & same-object overlapping-valid-time write is a duplicate confirmation, return without insert (contradiction predicate requires object to differ) & \texttt{ingestion.py:257--263} & -- \\
    4 & select contradicting incumbents via \code{is_contradiction}; no incumbent inserts a fresh open row and returns; $>1$ incumbent raises (binary operators need a single incumbent) & \texttt{ingestion.py:264}, \texttt{265--267}, \texttt{268--274} & \texttt{ValueError} on $>1$ \\
    5 & enforce operator-on/off ablation: a \code{resolution_strategy_id} absent from an active \code{operators_on} set raises rather than reroute or drop & \texttt{ingestion.py:286--296} & \texttt{Operator\allowbreak Not\allowbreak Enabled\allowbreak Error} \\
    6 & dispatch to the operator named by \code{resolution_strategy_id}: \code{lww}, \code{evi}, \code{await} (needs \code{callback}), \code{rule} (needs \code{policy}) & \texttt{ingestion.py:298--324} & \texttt{ValueError} on unknown id \\
    7 & if incumbent lost, \code{Schema.close} at audit timestamp then \code{Schema.insert} the stamped winner; if new fact lost, incumbent stays open and the rejected fact survives only in the audit row (no silent insert of a lost row) & \texttt{ingestion.py:326--335}, \texttt{336--338} & -- \\
    8 & persist the audit row per \code{schema_mode}: $\textsf{audit\_row}$ packs via \code{pack_audit_row} and inserts at $\rowkind=\textsf{audit}$; $\textsf{dual\_row}$ and $\textsf{base}$ return the in-flight pair but do not persist the audit half & \texttt{ingestion.py:339--360} & -- \\
    \bottomrule
  \end{tabular}
\end{table}

The loser-close-then-winner-insert ordering at step~7 and the
audit-persist at step~8 are the storage-side realisation of the
$\Fact \times \Fact \to (\Fact, \auditt)$ signature. The audit packer
\code{pack_audit_row} (\texttt{audit.py:139--170}) materializes the
twelve-column physical row: a \code{fact_id} of the form
\texttt{audit::\{loser\}::\{system\_time\}}, the conflict witness
encoded as \code{winner_object}$\,|\,$\code{loser_object} in the
\code{object} column, the merged \code{provenance_id}, and
\code{row_kind}$\,=\,$\textsf{audit}.

\subsection{Isolation pin per session}
\label{app:isolation-pin}

The isolation level is a typed precondition in the algebra and a real
session pin in the systems backend. In the algebra, each operator
carries its required level as the \code{isolation_level} class
attribute (\texttt{operators.py:167}, \texttt{198}, \texttt{233},
\texttt{265}), and the baselines record the level they actually run at
in the \code{partition_isolation} field of the \texttt{Baseline}
abstract class (\texttt{anomaly\_bench/baselines.py:112}), typed over
$\{\textsf{none}, \IsoRC, \IsoSI, \IsoSR\}$.

The real isolation pin is set per worker connection on the PostgreSQL
backend. Each writer process opens its own connection and begins every
contradiction-resolution transaction with
\texttt{BEGIN ISOLATION LEVEL SERIALIZABLE}
(\texttt{g3\_systems\_perf/postgres\_backend.py:330}). Inside that
transaction the worker reads the incumbent open current row
\texttt{FOR UPDATE}, closes it by setting \code{system_time_end},
inserts the winner, and lets serialization failures surface as
isolation-level outcomes rather than retrying or hiding them. This is
the empirical analogue of the \IsoSR\ precondition that
\texttt{PerRule} names: the abstract level on the operator becomes a
concrete \texttt{SERIALIZABLE} session pin on a real engine, and
\S\ref{sec:meas-g3} measures its lock-contention signature.

The DuckDB-backed reference path under
\texttt{implementation/bitemporal/} carries no \texttt{SET TRANSACTION}
call; its isolation guarantee is the typed annotation plus the
single-incumbent invariant at the ingestion seam, and the
PostgreSQL backend supplies the multi-process serializable evidence.
This split is deliberate: the algebra states the precondition, the
DuckDB path proves the abstraction compiles and resolves, and the
PostgreSQL path proves the precondition is enforceable on a production
engine.

\subsection{Judge-log write ordering}
\label{app:judge-log-ordering}

The judge-log row is written before the operator commits. This is the
ordering that makes Lemma~\ref{lem:judge-callback-bridge} hold
structurally. The \texttt{OperatorDispatcher} class
(\texttt{operators.py:307--406}) carries an in-memory \code{judge_log}
keyed on the pair $(R, \theta)$, where $R$ is the contradicting-fact
signature (\code{f1.fact_id}, \code{f2.fact_id}) and $\theta$ is the
judge parameter pin (prompt, seed, model version, temperature,
tool-output hash; \texttt{operators.py:300--304}).

The two dispatch methods \code{dispatch_await}
(\texttt{operators.py:353--379}) and \code{dispatch_rule}
(\texttt{operators.py:381--405}) follow the same ordering contract:

\begin{enumerate}
\item Reject non-contradicting inputs through
  \code{_require_contradiction}.
\item Compute the judge vote by invoking the callback or policy.
\item Write the witnessed vote to \code{judge_log}$[(R, \theta)]$
  through \code{_log_judge_call} (\texttt{operators.py:332--351}),
  which appends synchronously and preserves multiplicity. This write
  completes before any state mutation.
\item Delegate to the free operator (\texttt{AwaitConfirm} or
  \texttt{PerRule}) with the already-computed vote.
\item Invoke the caller-supplied \code{commit_hook} (for example
  \code{Schema.close} on the loser) only after the log write and the
  resolution.
\end{enumerate}

The log write at step~3 strictly precedes the \code{commit_hook} at
step~5 because the control flow is synchronous Python; the dispatcher
supplies the ordering rather than leaving it as a precondition the
caller must maintain. This realises hypothesis H1 of
Lemma~\ref{lem:judge-callback-bridge}: every $j_i(R, \theta)$ event is
paired with a committed row at key $(R, \theta)$ whose commit precedes
the operator's commit. The free operator surfaces remain available for
callers that own their own logging; the dispatcher is the reference
implementation that closes the silent-lie risk where the paper claimed
enforcement and the bare operator did not provide it.

\subsection{Schema-lift conservatism}
\label{app:schema-lift-conservatism}

The audit-row refinement adds the single CHECK-constrained column
$\rowkind \in \{\textsf{current}, \textsf{audit}\}$ above the
eleven-column base schema; no base column changes shape. Let
$\mathcal{Q}_{\textsf{base}}$ denote the relational-algebra fragment
over the base columns: any composition of selection $\sigma$,
projection $\pi$, join $\Join$, union $\cup$, difference
$\setminus$, and the bitemporal $\AsOf{(t_v, t_s)}$ predicate whose
selection predicates and projection lists do not reference
$\rowkind$. By construction $\mathcal{Q}_{\textsf{base}}$ does not
reference the auxiliary \code{judge_log} or \code{callback_log}
tables: those tables exist only in the operator-infrastructure schema
and do not surface in the base-schema fragment a deployed system
upgrades from. Each $q \in \mathcal{Q}_{\textsf{base}}$ lifts to a
query $q^{\ast}$ on the audit-row schema by structural induction on
the constructors: at each table-scan leaf the lift prepends the
default filter $\sigma_{\rowkind = \current}$; at each internal node
the lift is the same constructor applied to the lifted operands.
The base table $T$ and a lifted instance $T^{\ast}$ are related by
$T = \pi_{1..11}\bigl(\sigma_{\rowkind = \current}(T^{\ast})\bigr)$,
where $\pi_{1..11}$ drops the discriminator.

\begin{proposition}[Schema-lift conservatism]
\label{prop:schema-lift-conservatism}
For every $q \in \mathcal{Q}_{\textsf{base}}$ and every audit-row
instance $T^{\ast}$,
\[
  q^{\ast}(T^{\ast}) \;=\;
  q\bigl(\pi_{1..11}\,\sigma_{\rowkind = \current}(T^{\ast})\bigr).
\]
Equivalently, every $\mathcal{Q}_{\textsf{base}}$-query evaluated on
the audit-row schema returns the same answer as the same query
evaluated on the base-schema table whose rows are the current-kind
rows of $T^{\ast}$.
\end{proposition}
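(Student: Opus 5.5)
The proof goes by structural induction on $q \in \mathcal{Q}_{\textsf{base}}$, following the way the lift $q \mapsto q^{\ast}$ is defined. I would first fix one reading of the leaf lift: a table-scan leaf $T$ lifts to $\pi_{1..11}\,\sigma_{\rowkind = \current}(T^{\ast})$. The discriminator is filtered and then dropped at the leaf, so every lifted subterm ranges over exactly the eleven base attributes. The paper only says the lift ``prepends the default filter''. I would reconcile this with a one-line remark: if the discriminator were kept, every surviving tuple would carry $\rowkind = \current$. A constant column is then harmless. It is trivially satisfied in any join condition that mentions it, and it is carried identically through $\cup$ and $\setminus$. No operator of $\mathcal{Q}_{\textsf{base}}$ inspects it, because selection predicates and projection lists do not mention it. So the two readings agree up to that projection.

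With this normalisation, the claim to prove for every subterm $q'$ of $q$ is $q'^{\ast}(T^{\ast}) = q'(B)$, where $B := \pi_{1..11}\sigma_{\rowkind = \current}(T^{\ast})$. The cases are as follows.
\begin{enumerate}
\item \emph{Base case.} For a scan of $T$, the identity holds by the definition of the leaf lift. Several scans of the same table, as in self-joins, each lift to the same expression over the same $T^{\ast}$, so they all evaluate to $B$.
\item \emph{Unary constructors.} These are $\sigma_\varphi$, $\pi_A$ and $\AsOf{(t_v,t_s)}$. Here $\varphi$, $A$ and the as-of predicate mention only base columns (among them the four timestamp columns), so each constructor is a function of its operand's value alone. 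Applying the same function to equal inputs, which the induction hypothesis supplies, gives equal outputs.
\item \emph{Binary constructors.} For $\Join$, $\cup$ and $\setminus$, both operands are equal by the induction hypothesis and the schemas are identical. The equation therefore follows by congruence.
\end{enumerate}
The same argument goes through verbatim under bag semantics, because $\pi_{1..11}$ and $\sigma$ preserve multiplicities of surviving tuples. The auxiliary \code{judge\_log} and \code{callback\_log} tables never appear as leaves, by the definition of $\mathcal{Q}_{\textsf{base}}$, so no further leaf case arises.

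The step I expect to need the most care is not the induction itself, which is pure congruence. It is the leaf normalisation and the non-monotone constructor $\setminus$. For $\setminus$ one must check that audit rows cannot reappear on one side only. They are removed at every leaf before any operator acts, and the CHECK constraint guarantees the filter partitions $T^{\ast}$ exhaustively into current and audit rows. So both operands of any difference range over current rows only, and the subtraction is computed on identical inputs on both sides of the equation. Once this is in place, the proposition follows. The stated equivalence with ``the base-schema table whose rows are the current-kind rows of $T^{\ast}$'' is then just the definition of $B$.
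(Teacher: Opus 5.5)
Your proof is correct and takes the same route as the paper's: structural induction over the constructors of $\mathcal{Q}_{\textsf{base}}$, with one case per constructor. The only difference is bookkeeping. You fold $\pi_{1..11}$ into the leaf lift, so every inductive case is pure congruence. The paper instead keeps the discriminator in the lifted subterms and closes each case by commuting $\sigma_{\rowkind = \current}$ past $\sigma_\varphi$, past $\pi_L$ via $\pi_L \circ \pi_{1..11} = \pi_L$, and through the binary operators. The paper leaves the table-scan base case and its eleven-versus-twelve-column mismatch implicit, and your normalisation handles that explicitly.
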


\begin{proof}
The CHECK constraint pins $\rowkind \in \{\textsf{current},
\textsf{audit}\}$, so $\sigma_{\rowkind = \current}$ projects
$T^{\ast}$ onto its current-kind sub-relation $T'$, and
$\pi_{1..11}(T')$ matches the base-schema row type by construction.
The proof proceeds by induction on the $\mathcal{Q}_{\textsf{base}}$
constructors. \emph{Selection.} For $q = \sigma_{\varphi}(q_1)$ with
$\varphi$ ignoring $\rowkind$, $\sigma_{\varphi}$ commutes with
$\sigma_{\rowkind = \current}$ because their predicates address
disjoint columns; the inductive hypothesis on $q_1$ closes the case.
\emph{Projection.} For $q = \pi_L(q_1)$ with
$L \subseteq \{1, \dots, 11\}$, the inductive hypothesis gives
$q_1^{\ast}(T^{\ast}) = q_1\bigl(\pi_{1..11}\,
\sigma_{\rowkind = \current}(T^{\ast})\bigr)$; applying $\pi_L$ to
both sides and using $\pi_L \circ \pi_{1..11} = \pi_L$ on lifted
rows closes the case. \emph{Binary constructors
($\Join, \cup, \setminus$).} The default filter distributes over the
operator and applies to each operand; inductive hypotheses on each
operand close the case. \emph{Bitemporal $\AsOf{(t_v, t_s)}$.} The
predicate selects on columns~5--8, disjoint from the discriminator,
so it commutes with $\sigma_{\rowkind = \current}$ by the same
argument as selection.
\end{proof}

Proposition~\ref{prop:schema-lift-conservatism} pairs with
Theorem~\ref{thm:audit-erasure-schema}: the audit-row refinement
defends \anomAE\ and preserves the answer to every
$\mathcal{Q}_{\textsf{base}}$-query. A deployed system upgrading
from base to audit-row schema adds the discriminator column and the
audit-row slice; queries that do not mention $\rowkind$ keep their
answers verbatim, and audit-rule retrieval becomes available through
the \code{as_of}(\code{kind='audit'}) path without
rewriting any pre-existing read path.

\subsection{Memory-pipeline decomposition}
\label{app:pipeline-decomposition}

The four typed operators sit at the write boundary; a deployed
agent-memory system surrounds them with a retrieval policy on the
read side and a synthesis stage that turns retrieved candidates into
a user-visible answer. The algebra is silent on both surrounding
stages by design: it specifies the contradiction-resolution contract
on the write path and on a candidate set, and leaves retrieval-policy
choice and synthesis-style choice to deployment.

\begin{definition}[Memory-pipeline decomposition]
\label{def:pipeline-axes}
A memory pipeline is a triple $P = (\mathrm{R}, \mathrm{A}, \mathrm{S})$ where
\begin{description}
\item[$\mathrm{R}: (\mathrm{query}, \mathrm{store}) \to
\mathrm{candidate\text{-}set}$] is the \emph{retrieval policy}
mapping a query plus the materialized store to a bounded candidate
set (top-$k$ vector similarity, BM25 ranking, structural traversal,
or the trivial \emph{include-all});
\item[$\mathrm{A}$] is the typed contradiction-resolution algebra
on the write path and on the candidate set returned by $\mathrm{R}$;
\item[$\mathrm{S}: (\mathrm{candidate\text{-}set}, \mathrm{query})
\to \mathrm{answer}$] is the \emph{synthesis stage}
(language-model prompt-and-decode or deterministic concatenation).
\end{description}
$\mathrm{A}$ does not constrain $\mathrm{R}$ or $\mathrm{S}$; the
algebra acts after retrieval has bounded the candidate set and
before the synthesis stage commits an answer.
\end{definition}

\begin{proposition}[Cross-system $\Delta_{\mathrm{accuracy}}$ axis attribution]
\label{prop:cross-system-axis-attribution}
Let $P_1 = (\mathrm{R}_1, \mathrm{A}_1, \mathrm{S}_1)$ and
$P_2 = (\mathrm{R}_2, \mathrm{A}_2, \mathrm{S}_2)$ be two pipelines
paired on benchmark $\mathcal{Q}$ under a shared judge $J$. The
paired accuracy delta
\[
\Delta_{\mathrm{accuracy}}(P_1, P_2; \mathcal{Q}, J)
=
\mathbb{E}_{q \sim \mathcal{Q}}
\bigl[
J\bigl(\mathrm{S}_1(\mathrm{R}_1(q), q), \mathrm{ref}_q\bigr)
- J\bigl(\mathrm{S}_2(\mathrm{R}_2(q), q), \mathrm{ref}_q\bigr)
\bigr]
\]
attributes to the algebra axis $\mathrm{A}_1$ versus $\mathrm{A}_2$
only when $\mathrm{R}_1 = \mathrm{R}_2$ on $\mathcal{Q}$ and
$\mathrm{S}_1 = \mathrm{S}_2$ on candidate sets and queries. When
$\mathrm{R}_1 \neq \mathrm{R}_2$, the delta is dominated by the
retrieval-policy axis on workloads where retrieval recall is the
bottleneck (e.g.~needle-in-haystack benchmarks); when
$\mathrm{S}_1 \neq \mathrm{S}_2$, the delta is dominated by the
synthesis-style axis on workloads where answer-shape conformity
drives judge score.
\end{proposition}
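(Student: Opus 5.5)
Since the paired accuracy delta is a difference of expectations of judge outputs, I would prove the proposition as a decomposition of that delta, treating it as a function of the three pipeline stages. The first step is the attribution direction: assume $\mathrm{R}_1 = \mathrm{R}_2 =: \mathrm{R}$ on $\mathcal{Q}$ and $\mathrm{S}_1 = \mathrm{S}_2 =: \mathrm{S}$ on candidate sets and queries. Definition~\ref{def:pipeline-axes} says the algebra acts on the write path, i.e.\ on the materialized store, so each $\mathrm{R}_k$ is really evaluated as $\mathrm{R}(q, \mathrm{store}_{\mathrm{A}_k})$. The per-query integrand
\[
J\bigl(\mathrm{S}(\mathrm{R}(q,\mathrm{store}_{\mathrm{A}_1}), q), \mathrm{ref}_q\bigr) - J\bigl(\mathrm{S}(\mathrm{R}(q,\mathrm{store}_{\mathrm{A}_2}), q), \mathrm{ref}_q\bigr)
\]
then differs from zero only through the stores, that is, only through the algebra. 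It vanishes identically when $\mathrm{A}_1 = \mathrm{A}_2$, because the shared judge $J$ is held fixed. This gives the ``attributes to the algebra axis'' clause as a controlled-variable statement.

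For the converse and dominance clauses, I would telescope the delta along a path of hybrid pipelines:
\[
P_1 \to (\mathrm{R}_2,\mathrm{A}_1,\mathrm{S}_1) \to (\mathrm{R}_2,\mathrm{A}_2,\mathrm{S}_1) \to P_2 .
\]
This writes $\Delta_{\mathrm{accuracy}}$ as the sum $\Delta_{\mathrm{R}} + \Delta_{\mathrm{A}} + \Delta_{\mathrm{S}}$ of three single-axis deltas, each a paired expectation with only one component changed. The ``only when'' direction then follows by exhibiting witnesses. I would pick $\mathrm{A}_1 = \mathrm{A}_2$ with $\mathrm{R}_1 \neq \mathrm{R}_2$; an example is top-$k$ against include-all on a needle query where top-$k$ misses the gold fact. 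This gives $\Delta_{\mathrm{accuracy}} \neq 0$ with $\Delta_{\mathrm{A}} = 0$, so the nonzero delta cannot be credited to the algebra. An analogous witness changes only the synthesis stage, for example deterministic concatenation against an LLM decode, under a judge sensitive to answer shape.

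The dominance statements are the main obstacle, because ``dominated'' is informal. I would make it precise as $|\Delta_{\mathrm{R}}| \ge |\Delta_{\mathrm{A}}|$, restricted to a workload class. For needle-in-haystack workloads I would formalize the ``recall bottleneck'' hypothesis as follows: whenever the gold fact is absent from $\mathrm{R}_k(q)$, the judge scores zero regardless of $\mathrm{A}$ and $\mathrm{S}$. Under that hypothesis $|\Delta_{\mathrm{A}}|$ is bounded by the contradiction rate on recalled candidates. Meanwhile $|\Delta_{\mathrm{R}}|$ is at least the recall gap, minus the conditional synthesis error. The synthesis case is symmetric under an ``answer-shape drives $J$'' hypothesis.

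Since these hypotheses are workload assumptions rather than theorems, I would state them explicitly as the conditions defining ``where retrieval recall is the bottleneck''. I would then anchor the argument to the cross-system ledger of \S\ref{sec:meas-cross}, rather than claim an unconditional inequality.
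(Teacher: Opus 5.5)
Your proof is correct and has the same skeleton as the paper's outline: fix retrieval and synthesis to isolate the algebra, use a recall-ceiling argument for the retrieval clause, and use the judge's sensitivity to answer shape for the synthesis clause. Where it differs is that it makes explicit three things the paper leaves informal.

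\emph{First}, you observe that $\mathrm{A}_k$ enters the displayed delta only through the store argument of $\mathrm{R}_k$ (Definition~\ref{def:pipeline-axes}). Without that reading, $\mathrm{R}_1=\mathrm{R}_2$ and $\mathrm{S}_1=\mathrm{S}_2$ would force $\Delta_{\mathrm{accuracy}}\equiv 0$, and the attribution clause would be vacuous; the paper does not address this.

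\emph{Second}, the paper simply asserts that the delta is ``a difference over three orthogonal axes.'' Your telescoping through hybrid pipelines is an exact identity that needs no no-interaction assumption. It also gives a clean witness for the ``only when'' direction, because with $\mathrm{A}_1=\mathrm{A}_2$ the algebra step vanishes on every path.

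\emph{Third}, you replace the paper's qualitative ceiling argument with the inequality $|\Delta_{\mathrm{R}}|\ge|\Delta_{\mathrm{A}}|$ under stated workload hypotheses.

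The price of the telescoping is that the single-axis terms depend on the order of the path when the axes interact. In particular, $\mathrm{A}$ can change what $\mathrm{R}$ recalls by retaining or discarding the gold fact. So your bound on $|\Delta_{\mathrm{A}}|$ should be the rate at which the candidate set under \emph{either} store contains a fact on which the two algebras disagree. To keep the dominance claim independent of path order, state that bound as a max over $\mathrm{R}\in\{\mathrm{R}_1,\mathrm{R}_2\}$, and state the recall-gap bound as a min over $\mathrm{A}\in\{\mathrm{A}_1,\mathrm{A}_2\}$.

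For the synthesis clause, the paper does not posit a fresh hypothesis. It appeals to the prompt-sensitivity image of Definition~\ref{def:judge-boundary}, which ties answer-shape sensitivity to an object the paper defines and measures. Your symmetric hypothesis is cleaner but unanchored.

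Finally, like the paper's outline, your dominance clauses remain conditional on the workload assumptions you state. \S\ref{sec:meas-cross} serves as illustration rather than proof.
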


\begin{proof}[Proof outline]
$\Delta_{\mathrm{accuracy}}$ is by construction a function of the
three components of each pipeline plus the workload and judge.
Holding the workload and judge fixed, the delta is a difference
over three orthogonal axes; isolating the algebra-axis contribution
requires the remaining two axes to agree between the paired
pipelines. The retrieval-policy bottleneck on needle-in-haystack
workloads is the standard recall-saturation argument: when one
pipeline's $|\mathrm{R}(q)|$ excludes the gold passage by
construction, the synthesis stage cannot recover it, and the
algebra cannot distinguish itself from the read-side ceiling. The
synthesis-style bottleneck on answer-shape-sensitive judges follows
from the prompt-sensitivity image: a judge whose verdict depends on
prompt phrasing of the answer candidate scores systematically
differently on long-form synthesis versus extractive snippets.
\end{proof}

% @owner       src::appendix::D-adapter-ledger
% @does        Per-baseline-per-predicate witness ledger that makes the G1 verdict matrix (Table tab:anomaly-bench) auditable cell by cell. Pins each AnomalyClaim design-path verdict and each AnomalyWire transcribed/imported runtime verdict to its source file:line, pinned commit, minimal triggering history, and (for abstentions) the structural reason the layer cannot witness the predicate. Folds the read-path/write-path configuration scope (former A8) and the cross-system measurement protocol (former A7).
% @needs       src/tables/tab-anomaly-bench.tex, src/sections/05-measurement.tex, src/sections/03-anomalies.tex, experiments/anomaly_wire/manifest.py, experiments/anomaly_wire/adapters.py, experiments/anomaly_wire/{mem0_v3,graphiti,letta,zep}_imported.py, experiments/anomaly_bench/baselines.py
% @feeds       src/supplement.tex
% @breaks      a verdict cell (A/X/-) that disagrees with tab-anomaly-bench.tex; a pinned SHA, file:line, or commit that disagrees with manifest.py / the adapter source; an abstention reason that is not the one the adapter records.
% @claim       Every A/X/- verdict cell of the G1 matrix traces to one pinned-commit adapter run or one structural design-path inspection, with the abstaining imported cells named individually by the reason the deterministic-judge harness cannot witness the predicate.

\section{Adapter verdict ledger}
\label{app:adapter-ledger}

Table~\ref{tab:anomaly-bench} reports a verdict for each baseline
against each schedule predicate $\{\anomHR, \anomBDS, \anomAE\}$ in two
columns: a Claim column from the AnomalyClaim design-path model and a
Wire column from the AnomalyWire runtime layer. This appendix makes
every cell auditable. Each system carries one witness card per
predicate; the card pins the verdict to a source file, a line range, a
pinned upstream commit, the minimal triggering history the harness
runs, the trace the harness expects to observe, and the raw log path.
For abstentions the card states the structural reason the layer
cannot witness the predicate rather than reporting a verdict the
construction cannot support. The ledger folds the read-path versus
write-path configuration scope
(Section~\ref{app:anomaly-wire-config}) and the cross-system
measurement protocol (Section~\ref{app:cross-system-detail}) that bound
the interpretation of these cells.

\subsection{Reading a cell}

The Claim column reads the structural model in
\code{experiments/anomaly_bench/baselines.py}: each baseline declares
three fields (\code{judge_kind}, \code{partition_isolation},
\code{recovers_provenance}) and three frozen trigger sets
(\code{N1_NAMED_TRIGGERS}, \code{N2_NAMED_TRIGGERS},
\code{N3_NAMED_TRIGGERS}) name which baseline admits which predicate
under the Section~\ref{sec:meas-g1} taxonomy. The Wire column reads
the adapter layer: transcribed adapters
(\code{experiments/anomaly_wire/adapters.py}) encode the upstream
control flow by hand with file:line citations, and imported adapters
(\code{experiments/anomaly_wire/}\allowbreak\code{<system>_imported.py})
call the real upstream contradiction-resolution entry point under a
seed-pinned deterministic-judge stub. The runner verifies each clone
sits on its pinned commit before any adapter executes; a drifted clone
raises rather than silently producing a wrong cell.

A wire cell is one of three evidence kinds. A transcribed cell (T)
cites the upstream line that produces the verdict. An imported cell (I)
that executes carries the runtime trace observed on the pinned commit.
An imported cell that abstains records \code{status="n/a"} because the
deterministic-judge stub cannot witness the predicate by construction,
or because the predicate is structurally undefined under the upstream
schema. Of the fifteen imported cells, nine carry executable verdicts
and six abstain; Section~\ref{app:abstaining-imported-cells} names each
abstention.

\subsection{Pinned commits}

The AnomalyWire manifest (\code{experiments/anomaly_wire/manifest.py})
pins each clone to one revision. Table cells show the short seven-hex
prefix; the full revisions are: mem0 v2 and v3 share
\texttt{a623cfaf76ae7379a58be1e837f8a88a9b15a184} on
\code{mem0/memory/main.py}; Graphiti is
\texttt{c427615044678f4bde026745d8d28a16504868c5} on
\code{graphiti_core/edges.py}; Letta is
\texttt{bb52a8900a79cf1378e6e9cdecf244b673a13a72} on
\code{letta/services/block_manager.py}; Zep CE is
\texttt{faf2acec4f2ec777a27d8fe0411619bc913a9660} on
\code{legacy/src/store/memory_ce.go}; WorldDB pins the arXiv revision
\texttt{arxiv-2604.18478-v1} at Algorithm~2 because no public code drop
exists; the \sysours\ row runs the in-repo implementation at
\texttt{HEAD}. The imported Zep cell additionally pins the Zep Cloud
service boundary at \texttt{zep-cloud-api-v2-2026-05-13}.

\begin{table}[tbp]
  \caption{\textbf{Pinned-commit manifest.} One revision per clone;
    table cells use the seven-hex prefix. Adapter kind is the
    \emph{maximal} kind present for the system across its predicates
    (transcribed, imported, or both).}
  \label{tab:adapter-manifest}
  \centering
  \footnotesize
  \setlength\tabcolsep{4pt}
  \begin{tabular}{@{}%
    >{\raggedright\arraybackslash}p{0.135\columnwidth}%
    >{\raggedright\arraybackslash}p{0.150\columnwidth}%
    >{\raggedright\arraybackslash}p{0.430\columnwidth}%
    >{\raggedright\arraybackslash}p{0.165\columnwidth}@{}}
    \toprule
    System & Commit & Upstream primary path & Adapter kind \\
    \midrule
    \sysmem~v2     & \texttt{a623caf} & \code{mem0/memory/main.py} & transcribed \\
    \sysmem~v3     & \texttt{a623caf} & \code{mem0/memory/main.py} & transcribed, imported \\
    \sysgraphiti   & \texttt{c427615} & \code{graphiti_core/edges.py} & transcribed, imported \\
    \sysletta      & \texttt{bb52a89} & \code{letta/services/block_manager.py} & transcribed, imported \\
    \syszep        & \texttt{faf2ace} & \code{legacy/src/store/memory_ce.go} & transcribed, imported \\
    \sysworlddb    & \texttt{arxiv-2604.18478-v1} & Algorithm~2 (contradicts handler) & transcribed \\
    \sysours       & \texttt{HEAD} & \code{operators.py} + \code{ingestion.py} & imported \\
    \bottomrule
  \end{tabular}
\end{table}

\subsection{Witness ledger}
\label{app:witness-cards}

Table~\ref{tab:witness-ledger} traverses the eleven adapter rows of
Table~\ref{tab:anomaly-bench} in registry order, one row per
(system, predicate). Each verdict is encoded A (admit), X (exclude), or
$-$ (abstention) in the Claim (design-path) and Wire (runtime) columns.
The mechanism column states the structural reason behind the verdict (or,
for an abstention, why the layer cannot witness the predicate); the
source column pins the verdict to the adapter line that produces it. The
\code{status} value the adapter returns is \code{trig} for an admit,
\code{excl} for an exclude, \code{def} for a defended exclusion (\sysours,
or a runtime divergence such as mem0~v3-I N2), and \code{n/a} for an
abstention. The raw log column names the per-workload CSV under
\code{results/anomaly_wire/}; the runner regenerates these from the pinned
clones on demand, so the verdict authority is the cited adapter source.
The Claim-authority \code{baselines.py} lines follow the table.

\begingroup\scriptsize
\setlength\tabcolsep{2.5pt}
\renewcommand{\arraystretch}{1.0}
\begin{longtable}{@{}%
  >{\raggedright\arraybackslash}p{0.115\textwidth}%
  >{\raggedright\arraybackslash}p{0.052\textwidth}%
  >{\centering\arraybackslash}p{0.040\textwidth}%
  >{\centering\arraybackslash}p{0.040\textwidth}%
  >{\raggedright\arraybackslash}p{0.335\textwidth}%
  >{\raggedright\arraybackslash}p{0.190\textwidth}%
  >{\raggedright\arraybackslash}p{0.105\textwidth}@{}}
  \caption{\textbf{Witness ledger.} Every (system, predicate) cell of
    Table~\ref{tab:anomaly-bench} with its Claim and Wire verdict
    (A admit, X exclude, $-$ abstain), the structural mechanism or
    abstention reason, the adapter source line, and the raw-log CSV.}
  \label{tab:witness-ledger}\\
  \toprule
  System & Pred. & Cl. & Wi. & Mechanism / abstention reason & Source & Raw log \\
  \midrule
  \endfirsthead
  \multicolumn{7}{@{}l}{\small\itshape Table~\ref{tab:witness-ledger} continued}\\
  \toprule
  System & Pred. & Cl. & Wi. & Mechanism / abstention reason & Source & Raw log \\
  \midrule
  \endhead
  \midrule
  \multicolumn{7}{r@{}}{\small\itshape continued on next page}\\
  \endfoot
  \bottomrule
  \endlastfoot

  \sysmem~v2 (T)   & $\anomHR$  & $-$ & $-$ & no LLM judge votes across replays in ADD/UPDATE/DELETE path; judge absent in contradiction path & \code{adapters.py:110} & -- \\
  \sysmem~v2 (T)   & $\anomBDS$ & $-$ & $-$ & no $(\subj,\pred)$ confidence-weighted partition rerank & \code{adapters.py:120} & -- \\
  \sysmem~v2 (T)   & $\anomAE$  & A   & A   & DELETE branch removes loser with no audit row; no vote outcome preserves $p_{\mathrm{old}}$ reachability under $\preceq_K$. History: \code{alice/lives_in/LA} then \code{alice/lives_in/NYC} & \code{adapters.py:130} & -- \\
  \midrule
  \sysmem~v3 (T)   & $\anomHR$  & A   & A   & retrieval rerank invokes LLM with no decoder-seed pin; replay returns distinct votes (source unpinned even though stub pins) & \code{adapters.py:194} & -- \\
  \sysmem~v3 (T)   & $\anomBDS$ & A   & A   & rerank reads $(\subj,\pred)$ partition at $\IsoSI$ & \code{adapters.py:208} & -- \\
  \sysmem~v3 (T)   & $\anomAE$  & A   & A   & ADD-only accumulation shadows older fact at retrieval, no audit emission & \code{adapters.py:218} & -- \\
  \midrule
  \sysmem~v3 (I)   & $\anomHR$  & A   & $-$ & deterministic stub returns identical votes by construction, cannot witness divergence; N1 delegated to transcribed companion & \code{mem0_v3_imported.py:310} & \code{n2,n3.csv} \\
  \sysmem~v3 (I)   & $\anomBDS$ & A   & X   & runtime divergence: \code{Memory.add} hi- then lo-confidence on same $(\subj,\pred)$, but \code{get_all} snapshot returns one row ($|\text{post}|=1$); retrieval collapses competing rows under deterministic judging, status \code{def} & \code{mem0_v3_imported.py:363} & \code{n2,n3.csv} \\
  \sysmem~v3 (I)   & $\anomAE$  & A   & A   & writes \code{Alice's manager is Bob} then \code{Carol}, walks \code{Memory.history}, finds no audit entry naming superseded Bob fact & \code{mem0_v3_imported.py:418} & \code{n2,n3.csv} \\
  \midrule
  \sysgraphiti (T) & $\anomHR$  & A   & A   & \code{resolve_edge_contradictions} LLM call selects winning edge with no decoder-seed pin (cites \code{edge_operations.py}) & \code{adapters.py:267} & -- \\
  \sysgraphiti (T) & $\anomBDS$ & $-$ & $-$ & edges carry no $(\subj,\pred)$ confidence partition & \code{adapters.py:277} & -- \\
  \sysgraphiti (T) & $\anomAE$  & A   & A   & handler sets \code{invalid_at} on older edge, emits no audit edge with merged provenance (cites \code{graphiti_core/edges.py}) & \code{adapters.py:287} & -- \\
  \midrule
  \sysgraphiti (I) & $\anomHR$  & A   & $-$ & duck-typed deterministic bridge (singular entry \code{resolve_extracted_edge} at \code{edge_operations.py:622}) cannot produce vote divergence & \code{graphiti_imported.py:340} & \code{n3.csv} \\
  \sysgraphiti (I) & $\anomBDS$ & $-$ & $-$ & timestamp-only Bob-Alice then Bob-Carol run end to end, invalidation driven by \code{valid_at}; \code{EntityEdge} has no confidence partition field, predicate undefined & \code{graphiti_imported.py:374} & \code{n3.csv} \\
  \sysgraphiti (I) & $\anomAE$  & A   & A   & inspects \code{EntityEdge.model_fields}, finds no fact-level superseder (\code{superseded_by}, \code{replaces}, \code{replaced_by_uuid}); temporal-window supersession fails $\preceq_K$ & \code{graphiti_imported.py:432} & \code{n3.csv} \\
  \midrule
  \sysletta (T)    & $\anomHR$  & $-$ & $-$ & block updates use deterministic precedence, no LLM judge in contradiction path & \code{adapters.py:338} & -- \\
  \sysletta (T)    & $\anomBDS$ & A   & A   & block-history confidence updates read $(\subj,\pred)$ partition at $\IsoSI$ without serializing the rewrite & \code{adapters.py:348} & -- \\
  \sysletta (T)    & $\anomAE$  & A   & A   & block snapshot history disjoint from message trace & \code{adapters.py:358} & -- \\
  \midrule
  \sysletta (I)    & $\anomHR$  & $-$ & $-$ & checkpoint path is LLM-free (\code{BlockManager.checkpoint_block_async} on live \code{BlockHistory}, \code{block_manager.py:842}), no decoder to mis-replay & \code{letta_imported.py:276} & \code{n3.csv} \\
  \sysletta (I)    & $\anomBDS$ & A   & $-$ & two-checkpoint schedule; \code{BlockHistory} carries \code{sequence_number}, \code{actor_id} but no confidence partition field, predicate undefined & \code{letta_imported.py:294} & \code{n3.csv} \\
  \sysletta (I)    & $\anomAE$  & A   & A   & checkpoints \code{Bob manages Alice} (seq 1) then \code{Carol} (seq 2); \code{BlockHistory} columns $\cap$ superseder-field set is empty, supersession implicit in sequence order only & \code{letta_imported.py:311} & \code{n3.csv} \\
  \midrule
  \syszep (T)      & $\anomHR$  & $-$ & $-$ & CE wrapper delegates retrieval to Graphiti HTTP service, no local LLM judge row (cites \code{legacy/src/store/memory_ce.go}) & \code{adapters.py:409} & -- \\
  \syszep (T)      & $\anomBDS$ & A   & A   & \code{GetMemory}/\code{Search} return Graphiti facts with no $(\subj,\pred)$ serializable partition boundary & \code{adapters.py:419} & -- \\
  \syszep (T)      & $\anomAE$  & A   & A   & \code{DeleteFact} forwards to Graphiti with no merged-provenance audit tuple (cites \code{fact_handlers_ce.go}) & \code{adapters.py:429} & -- \\
  \midrule
  \syszep (I)      & $\anomHR$  & $-$ & $-$ & Zep Cloud graph API (artifact pinned \texttt{zep-cloud-api-v2-2026-05-13}) exposes no local decoder-judge replay row & \code{zep_imported.py:283} & \code{zep\_imported\_run.json} \\
  \syszep (I)      & $\anomBDS$ & A   & A   & edge schema exposes relevance/search scores but no confidence-weighted serializable partition boundary. History: two distinct-confidence $(\subj,\pred)$ episodes (\code{zep_imported.py:150}) & \code{zep_imported.py:281} & \code{zep\_imported\_run.json} \\
  \syszep (I)      & $\anomAE$  & A   & A   & edge schema carries source episodes + validity timestamps but no fact-level superseder audit edge. History: two manager-assignment episodes & \code{zep_imported.py:282} & \code{zep\_imported\_run.json} \\
  \midrule
  \sysworlddb (T)  & $\anomHR$  & X   & X   & engine never calls LLM on read path (preprint App.~B); deterministic reconciler, excluded by construction & \code{adapters.py:479} & -- \\
  \sysworlddb (T)  & $\anomBDS$ & X   & X   & contradicts handler emits both edges and computes the partition winner at query time with no confidence-weighted rerank; belief-drift skew excluded by construction at the engine layer & \code{adapters.py:508} & -- \\
  \sysworlddb (T)  & $\anomAE$  & X   & X   & content-addressed Merkle ancestry retains the overwritten fact, so the recoverable-provenance property holds; audit erasure excluded at the engine layer & \code{adapters.py:518} & -- \\
  \midrule
  \sysours (I)     & $\anomHR$  & X   & X   & $\IsoSR$ on logged-judge table with judge seed and prompt pinned & \code{adapters.py:560} & \code{n1,n2,n3.csv} \\
  \sysours (I)     & $\anomBDS$ & X   & X   & $\IsoSR$ on the $(\subj,\pred)$ partition & \code{adapters.py:560} & \code{n1,n2,n3.csv} \\
  \sysours (I)     & $\anomAE$  & X   & X   & audit-row schema lift: \code{OursWire._run_with_audit} ingests the pair, checks emitted audit row dominates both inputs under $\preceq_K$ via \code{provenance_dominates}, status \code{def} & \code{adapters.py:560} & \code{n1,n2,n3.csv} \\
\end{longtable}
\endgroup

The mem0~v3 imported N2 cell is a genuine design-path-to-runtime
divergence (Claim A, Wire X): the transcribed companion admits $\anomBDS$
by structural inspection, while the imported run defends it because mem0
v3 retrieval collapses the competing rows under deterministic judging.
The cross-layer audit surfaces the divergence rather than suppressing it.

The Claim-column authority is the structural model in
\code{experiments/anomaly_bench/baselines.py}: \code{Mem0V2Simulated}
(\code{judge_kind="none"}, \code{partition_isolation="none"}, in
\code{N3_NAMED_TRIGGERS} only, \code{baselines.py:135});
\code{Mem0V3Simulated} (\code{judge_kind="stochastic"},
\code{partition_isolation="SI"}, all three trigger sets,
\code{baselines.py:171}); \code{GraphitiSimulated}
(\code{judge_kind="stochastic"}, \code{partition_isolation="none"}, in
\code{N1_NAMED_TRIGGERS} and \code{N3_NAMED_TRIGGERS},
\code{baselines.py:213}); \code{LettaSimulated}
(\code{partition_isolation="SI"}, in \code{N2_NAMED_TRIGGERS} for N2 and
\code{N3_PENDING_TRIGGERS} for the extrapolated N3,
\code{baselines.py:256}); \code{ZepSimulated}
(\code{partition_isolation="SI"}, in \code{N2_NAMED_TRIGGERS} for N2 and
\code{N3_PENDING_TRIGGERS} for N3, \code{baselines.py:292});
\code{WorldDBSimulated} (\code{judge_kind="deterministic"}, absent from
all three trigger sets, \code{baselines.py:328}; the engine-layer
exclusion is the cousin-scope analysis of
Section~\ref{app:cross-system-detail}); and \code{OursSimulated}
(\code{judge_kind="pinned"}, \code{partition_isolation="SR"}, dual-row
audit recovery via \code{recover_provenance}, \code{baselines.py:592}).
The \sysours\ Claim-column star in Table~\ref{tab:anomaly-bench} marks
the mechanism: N1 via $\IsoSR$ on the judge table, N2 via $\IsoSR$ on
the partition, N3 via the audit-row schema lift.

\subsection{The six abstaining imported cells}
\label{app:abstaining-imported-cells}

Six of the fifteen imported cells abstain. Two structural reasons
account for all six. The first is the deterministic-judge construction:
the wire harness must be reproducible, so every imported adapter
consumes a seed-pinned \code{DeterministicJudge} stub that returns
identical votes on identical \code{(seed, prompt)} pairs. A stub that
cannot diverge cannot witness the $\anomHR$ judge-replay predicate, so
every imported N1 cell delegates the verdict to its transcribed
companion and records \code{status="n/a"}. The second is schema
undefinedness: the $\anomBDS$ predicate requires a confidence-weighted
$(\subj,\pred)$ partition, and Graphiti's \code{EntityEdge} and Letta's
\code{BlockHistory} expose no such field, so the predicate is
structurally undefined and the adapter abstains after running the
schedule end to end.

\begin{table}[tbp]
  \caption{\textbf{The six abstaining imported cells.} Each records
    \code{status="n/a"} on structural grounds, not a missing run. R1 =
    deterministic-judge stub cannot witness $\anomHR$ vote divergence;
    R2 = upstream schema has no confidence-weighted $(\subj,\pred)$
    partition field, so $\anomBDS$ is undefined.}
  \label{tab:abstaining-cells}
  \centering
  \footnotesize
  \setlength\tabcolsep{4pt}
  \begin{tabular}{@{}%
    >{\raggedright\arraybackslash}p{0.150\columnwidth}%
    >{\centering\arraybackslash}p{0.080\columnwidth}%
    >{\raggedright\arraybackslash}p{0.230\columnwidth}%
    >{\raggedright\arraybackslash}p{0.400\columnwidth}@{}}
    \toprule
    Cell & Pred. & Source & Reason \\
    \midrule
    \sysmem~v3-I    & $\anomHR$  & \code{mem0_v3_imported.py:310}  & R1: stub cannot witness vote divergence \\
    \sysgraphiti-I  & $\anomHR$  & \code{graphiti_imported.py:340} & R1: deterministic bridge cannot diverge \\
    \sysgraphiti-I  & $\anomBDS$ & \code{graphiti_imported.py:374} & R2: \code{EntityEdge} has no confidence partition field \\
    \sysletta-I     & $\anomHR$  & \code{letta_imported.py:276}    & R1: checkpoint path is LLM-free \\
    \sysletta-I     & $\anomBDS$ & \code{letta_imported.py:294}    & R2: \code{BlockHistory} has no confidence partition field \\
    \syszep-I       & $\anomHR$  & \code{zep_imported.py:283}      & R1: Cloud graph API exposes no decoder-judge replay row \\
    \bottomrule
  \end{tabular}
\end{table}

\subsection{AnomalyWire deployment-configuration scope}
\label{app:anomaly-wire-config}

The wire cells of Table~\ref{tab:anomaly-bench} are pinned to the
upstream commits of Table~\ref{tab:adapter-manifest}. This section
states which configuration axis each verdict is constant on so a
reviewer reads each cell against a precise scope.

\begin{definition}[Read-path and write-path deployment configuration]
\label{def:config-axes-split}
Let $\mathcal{C}$ be the deployment configuration space of an
agent-memory system $s$. We split $\mathcal{C}$ into two disjoint
axes:
\begin{description}
\item[$\mathcal{C}_{\mathrm{R}}$ (\emph{read-path}).]
Configuration that affects only how the retrieval policy
$\mathrm{R}$ ranks or filters candidate facts: ranking metric
($\ell_2$ versus cosine), vector normalization, top-$k$ truncation,
reranker on or off.
\item[$\mathcal{C}_{\mathrm{W}}$ (\emph{write-path}).]
Configuration that enters the operator's contradiction-resolution
decision inputs at write time: prompt injections into the extractor
or adjudicator language model, extractor model identity when the
operator's vote depends on it, or a graph-store toggle when graph
membership feeds the resolution rule.
\end{description}
The split is per-system: a knob is in $\mathcal{C}_{\mathrm{R}}$
for system $s$ if and only if a code-path trace from the wire
schedule does not reach the operator's vote-generating call under
that knob.
\end{definition}

\begin{proposition}[AnomalyWire verdict read-path configuration-invariance]
\label{prop:wire-readpath-invariance}
For each system $s$ pinned to upstream commit
$\textsf{commit}(s)$ by the AnomalyWire manifest, the verdict
function $V_a(s, \textsf{commit}(s), \cdot)$ is constant on
$\mathcal{C}_{\mathrm{R}}$: for every
$c_i, c_j \in \mathcal{C}_{\mathrm{R}}$ and every anomaly $a \in
\{\anomHR, \anomBDS, \anomAE\}$,
$V_a(s, \textsf{commit}, c_i) = V_a(s, \textsf{commit}, c_j)$. The
same statement does not hold on $\mathcal{C}_{\mathrm{W}}$:
write-path configuration enters the operator's vote at write time
and can shift the storage-layer trajectory between the schedule's
writes and the schedule's read.
\end{proposition}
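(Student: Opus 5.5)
The proof will show that the verdict function depends on its configuration argument only through the operator's vote, and that read-path knobs never feed that vote. For fixed $s$, commit, and anomaly $a$, I factor $V_a(s,\textsf{commit}(s),c)$ as a composite: the wire schedule of the harness (the minimal triggering history for $a$ in Table~\ref{tab:witness-ledger}), then the storage-layer trajectory it produces, then the predicate check that the adapter evaluates on that trajectory. The only point where $c$ can enter the trajectory is the operator's vote-generating call. By Definition~\ref{def:config-axes-split}, a knob in $\mathcal{C}_{\mathrm{R}}$ has no code-path trace from the wire schedule to that call. So for $c_i,c_j\in\mathcal{C}_{\mathrm{R}}$, the sequence of committed writes, and hence the storage state after every write step, agrees under $c_i$ and $c_j$. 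I will show this by induction on the number of write events in the schedule, using the commit pin to fix the code. The imported adapters use the seed-pinned \code{DeterministicJudge} stub, so any judge call on the write path gets identical inputs and returns identical votes.

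The second step deals with the read at the end of the schedule, because read-path knobs do act there. Each predicate is evaluated on a quantity that does not depend on how candidates are ranked or truncated. $\anomAE$ inspects schema fields or audit rows: \code{Memory.history}, \code{EntityEdge.model\_fields}, \code{BlockHistory} columns. $\anomBDS$ checks coexistence $|\mathrm{post}|\ge 2$ through an unranked \code{get\_all} snapshot. $\anomHR$ compares committed votes. Transcribed cells read upstream control flow and take no configuration argument at all, so they are trivially constant. Abstaining cells return \textsf{n/a} on structural grounds (Table~\ref{tab:abstaining-cells}) regardless of $c$. I will handle the cells as a finite case split over the eleven adapter rows, checking for each imported adapter that its predicate check calls no ranking-based retrieval. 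This is the same structural property that the proof of Lemma~\ref{lem:writepath-config-composition} assumes.

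For the negative clause on $\mathcal{C}_{\mathrm{W}}$ it suffices to give one witness. I use the V3 cell-5 instance already cited: a \code{custom\_instructions} injection changes mem0~v3's ADD/UPDATE/DELETE vote at the second \code{Memory.add}, which flips the N2 verdict from \textsf{def} to \textsf{trig}.

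The main obstacle is the second step. It requires that each imported adapter's final read actually bypass read-path ranking. If some system's only read surface were a top-$k$ search, truncation could hide a coexisting row and break the claimed invariance. I will first try to discharge this by auditing the cited adapter lines. Where that fails, I will narrow the proposition's scope to adapters whose read uses an include-all retrieval $\mathrm{R}$ (Definition~\ref{def:pipeline-axes}), supported by the multi-seed sweep.
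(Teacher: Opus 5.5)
Your proposal is correct and is essentially the paper's argument. The paper's outline likewise observes that the wire schedule's read step uses filter-by-user reads and audit-trail walks over materialized storage rather than the ranked retrieval policy, so read-path knobs never enter it; it leaves the write side to the per-system split of Definition~\ref{def:config-axes-split}, which you make explicit, and notes that a write-path knob entering the adjudicator prompt shifts the vote and hence the storage trajectory. Your adapter-by-adapter audit is the detailed form of that read-step claim, so the scope-narrowing fallback is not needed, and your V3 cell-5 counterexample for the negative clause is the same witness the paper reports right after the proof.
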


\begin{proof}[Proof outline]
The wire verdict for each anomaly is constructed from a fixed
schedule that probes the storage-layer commitment of the upstream
contradiction-handling path. The schedule's read step does not
invoke the retrieval policy $\mathrm{R}$ on the ranked-search path;
it invokes filter-by-user reads and audit-trail walks, both of
which read materialized storage rather than ranked candidates.
Hence $\mathcal{C}_{\mathrm{R}}$ knobs do not enter the read step.
The schedule's write step invokes the language model for the
add-or-update decision; any $\mathcal{C}_{\mathrm{W}}$ knob that
enters that prompt shifts the vote, the storage trajectory, and
ultimately the read-step predicate the wire checks.
\end{proof}

\paragraph{Empirical witnesses.}
A six-cell configuration sweep on \sysmem~v3 witnesses both halves of
the proposition. Four read-path perturbations (distance metric, $L_2$
normalization, two reranker toggles) return the identical baseline
triple for $(\anomHR, \anomBDS, \anomAE)$; one write-path perturbation
(a custom adjudicator instruction) flips N2 to \emph{trigger}. A
five-seed extension confirms the read-path half holds within every seed
and the write-path perturbation holds on four of five seeds. A second
read-path witness is the top-$k$ sweep on \sysmem~v3 at LoCoMo
($\text{top}_k = 10$ versus $\text{top}_k = 30$, $n = 3$ each): the
wire verdicts hold at the same baseline triple on both configurations.

\paragraph{External measurement-error framing.}
\citet{messing-2026-tee} report that naive 95\% confidence-interval
coverage drops below nominal as $n$ grows on Chatbot Arena while a Total
Evaluation Error correction holds coverage at 95\% (a benchmark-gaming
surface of 32 Elo under the correction versus 56 Elo without). Our
protocol pins judge identity, decoder tuple, prompt, and seed before the
binary cited-commit witness, so the per-cell verdict carries no
continuous score the decomposition would correct; the correction is the
natural envelope for any continuous-score aggregate a future revision
computes over the 33-cell AnomalyWire family or the 9-cell G2 family.

\subsection{Cross-system measurement protocol}
\label{app:cross-system-detail}

The cross-system rows of Table~\ref{tab:cross-system-utility} report
three measured LoCoMo cells against four imported systems (\sysmem~v3,
\sysgraphiti, \syszep, \sysletta) plus nine structural abstentions.
This section records the workload pinning, the configuration deltas,
the pre-registration anchors, and the per-row statistical detail
compressed in the main-text ledger.

% @owner  src/tables/tab-cross-system-utility.tex
% @does   Cross-system utility ledger for §5: paired \Delta accuracy
%         of \sysours vs four imported memory systems on the shared
%         LoCoMo slice, plus structured abstention rows.
% @needs  results/g2_utility/cross_system/summary.csv.
% @feeds  src/sections/05-measurement.tex § G2.
% @breaks Axiom 0. Hand-edited numbers drift from the underlying ledger.
% @claim  Three measured LoCoMo cells with paired-bootstrap CIs cover
%         zero; six LongMemEval-S / MultiTQ pairs abstain under named
%         structural reasons recorded in the table footnote.

\begin{table}[tb]
\centering
\pvldbtablestyle
\caption{\textbf{Cross-system utility ledger.} Paired $\Delta$
accuracy of \sysours\ versus four agent-memory systems on the
shared LoCoMo slice. The three measured rows are
single-conversation slices with paired bootstrap $95\%$ CI; all
three intervals cover zero. Six structural abstentions
($\dagger$ predicate scope; $\ddagger$ compute envelope;
$\S$ service availability) cover the LongMemEval-S, MultiTQ, and
\sysletta\ rows.}
\label{tab:cross-system-utility}
\footnotesize
\begin{tabular}{@{}l c r r l@{}}
\toprule
System & $n$ & \sysours & Ext. & $\Delta$ (95\% CI) \\
\midrule
\sysmem~v3   & 50 & $0.02$ & $0.06$ & $-0.04\;[-0.10,\,{+}0.00]$ \\
\sysgraphiti & 50 & $0.04$ & $0.12$ & $-0.08\;[-0.18,\,{+}0.00]$ \\
\syszep      & 50 & $0.06$ & $0.04$ & ${+}0.02\;[-0.06,\,{+}0.10]$ \\
\bottomrule
\end{tabular}
\tablenote{The \sysmem~v3 LoCoMo cell carries discordant counts
$(b, c) = (0, 2)$; the McNemar exact $p$-value floors at $0.50$
in this underpowered regime, so the row records no evidence on
$\Delta$. Six external pairs abstain on LongMemEval-S and MultiTQ
(predicate scope or compute envelope), and \sysletta\ abstains
across all three datasets (service availability during the
measurement window); Appendix~\ref{app:cross-system-detail}
carries the per-row protocol and structural reason.}
\end{table}

\paragraph{Workload pinning.}
A single pinned language model handles fact extraction, synthesis, and
judging across every paired cell; the artefact manifest records the
exact model identifier, the temperature, and the maximum output length.
Paired $\Delta_{\mathrm{accuracy}}$ controls for model capability under
the same workload. Every row uses $\text{top}_k = 10$ as the retrieval
ceiling on the matched-retrieval slice; the configuration-mismatch
discussion below names where each imported system departs from this
baseline.

\paragraph{Per-row measured detail.}
The three measured rows expand Table~\ref{tab:cross-system-utility} with
the paired statistic and the per-row note (all at $n = 50$, $2{,}000$
bootstrap resamples). \sysmem~v3: $\Delta_{\mathrm{accuracy}} = -0.04$,
95\% question-level paired bootstrap CI $[-0.10, +0.00]$, McNemar exact
$p = 0.50$ with $b + c = 2$, per-system accuracy \sysours\ $0.02$ versus
$0.06$; the configuration delta is five-axis (hosted client versus local
vector index; larger top-$k$ versus our $10$; graph mode enabled versus
disabled; a different extractor model identifier versus our pinned
synthesiser; a custom adjudicator prompt versus the default), recorded as
a five-axis snapshot in the artefact manifest. \sysgraphiti:
$\Delta_{\mathrm{accuracy}} = -0.08$, CI $[-0.18, +0.00]$, accuracy
\sysours\ $0.04$ versus $0.12$; the extraction-cost ledger records
roughly $3.7\times$ more external language-model calls than \sysours\ on
the same slice, a cost the algebra-axis attribution proposition
(Proposition~\ref{prop:cross-system-axis-attribution}) does not absorb at
the extraction boundary. \syszep: $\Delta_{\mathrm{accuracy}} = +0.02$,
CI $[-0.06, +0.10]$, accuracy \sysours\ $0.06$ versus $0.04$; the
imported cell runs against the Zep Cloud service boundary at
per-conversation granularity, with trace elapsed-time bands of $0.75$ to
$1.32\,$s per reconciled operation reported as service-boundary timing
rather than local throughput.

\paragraph{Pre-registration anchor and traceability.}
The cross-system slice is pre-registered against analysis drift. The
prospective record fixed the selection rule, random seed, bootstrap
trial count, and runner command before any cell in the measurement
window completed. The artefact manifest mirrors the same anchors for
subsequent windows. Pre-registration sets statistical power at $0.42$
for $\delta = 0.05$ on $n = 50$; effect sizes below $\delta = 0.05$ sit
beneath detection, and the measured $|\Delta| \le 0.08$ values land
inside that envelope. The Holm step-down family for G2 covers the
nine-cell grid (three primary diagonal plus six specificity controls)
at $\alpha = 0.05$; the cross-system family is reported separately
because the binary cited-commit witnesses of the AnomalyWire layer (G1)
and the continuous accuracy estimates of the cross-system layer carry
different statistical objects.

\paragraph{LongMemEval-S sibling slice.}
A separately pre-registered LongMemEval-S pilot fixed loader-order rows
three through seven before scoring. It reports $n = 5$, \sysours\
$0.40$, \sysmem~v3 $0.60$, $\Delta = -0.20$, CI $[-0.80, +0.40]$; the
cell is pilot evidence only because $n = 5$ is below the pre-registered
power threshold.
Proposition~\ref{prop:cross-system-axis-attribution} classifies this
row as retrieval-bound rather than algebra-bound on the
needle-in-haystack workload: include-all retrieval against
vector-top-$k$ retrieval is dominated by the retrieval-policy axis when
retrieval recall is the bottleneck.

\paragraph{No-memory baseline cell.}
A separately tabled cell pairs \sysours\ against itself with memory
disabled (same synthesis boundary and pinned judge, but the disabled side
bypasses fact extraction and bitemporal ingest and sees only the most
recent $32$ turns), at $n = 50$ on LoCoMo under the same synthesiser pin.
The cell measures memory-utility rather than memory-availability, since
\citet{zhang-2026-useful-memories} show the gain is not automatic: their
incremental experiments measure $54\%$ solution-loss on previously solved
problems once a frontier language-model consolidation step is introduced.

\paragraph{\sysworlddb cousin scope.}
\sysworlddb~\cite{ganesan-2026-worlddb} is paper-only at submission; no
public artefact supports an imported G2 or G3 runtime cell. The cousin
enters our framing analytically. The engine's stated discipline
(Appendix~B of~\cite{ganesan-2026-worlddb}) excludes any language-model
call on the read path, so the deterministic reconciler excludes
$\anomHR$ at the engine layer. The contradicts handler preserves both
sides and computes the partition winner at query time, excluding
$\anomBDS$ at the engine layer. Content-addressed Merkle ancestry
realises the $\anomAE$ recoverable-provenance property through an
alternative schema lift. The G1 verdict for \sysworlddb-T reads exclude
across all three predicates.

\section{Experiment protocols and statistics}
\label{app:protocols-stats}

This appendix registers one protocol card per experiment. Each card
fixes the estimand, the sample construction, the random seed, the
bootstrap method, the multiple-testing correction, the raw-output
path, the regeneration command, and the explicit boundary between the
conclusion the cell licenses and the conclusion it does not. Every
number below reads from the released CSV ledgers; the regeneration
commands reproduce those ledgers from the same seeds. Seeds follow the
project convention of \texttt{20260512} for sampling and
\texttt{20260513} for the bootstrap resampler on the runtime
experiments, and \texttt{42} on the structural-grid experiments whose
schedules are combinatorial rather than sampled from a model.

\subsection{G2 utility (3$\times$3 defence-by-benchmark grid)}
\label{sec:appendix-g2-utility}
\label{app:a5-g2-utility-details}

\paragraph{Protocol.}
The estimand is a paired accuracy delta per cell. Audit-row cells
measure \texttt{paired\_accuracy\_delta\_audited\_current\_vs\_stale};
partition-SR cells measure
\texttt{paired\_accuracy\_delta\_serializable\_vs\_si\_skew}; judge-pin
cells measure \texttt{replay\_disagreement\_delta\_pinned\_vs\_unpinned}.
The grid pairs three defences against three benchmarks (LoCoMo,
LongMemEval-S, MultiTQ), forming a diagonal of three primary cells and
six off-diagonal specificity controls. Primary cells run at
$n=50$ trials (LongMemEval-S judge-pin at $n=100$); off-target
controls run at $n=10$ for the audit-row and partition-SR rows and
$n=20$ for the judge-pin rows. A single pinned judge model handles
extraction, synthesis, and adjudication so the paired delta controls
for model capability. This single-judge design controls for capability
but invites a self-consistency-bias objection, since the adjudicator
can agree with its own synthesis; the five-judge cross-judge robustness
check bounds the effect for the judge-pin row, and the structural
$\anomHR$ evidence of \S\ref{sec:appendix-g5-structural-grids} routes
through no single judge. The sampling seed is \texttt{20260512}; the
paired bootstrap draws $2{,}000$ resamples for the $95\%$ interval.
Significance combines a paired bootstrap CI, a McNemar exact test on
the binary-agreement vector, and Holm step-down at $\alpha=0.05$ over
the full nine-cell family. The raw output is
\code{results/g2_utility/summary.csv}; regenerate with
\texttt{make repro-g2}.

Allowed conclusion: each defence raises paired accuracy on its matched
benchmark and abstains off-target, and the Holm-corrected family
controls the nine-cell familywise error at $\alpha=0.05$. Forbidden
conclusion: a per-cell delta does not measure absolute headroom over a
no-memory baseline, and the saturated $\Delta=+1.00$ controls reflect
the synthetic off-target skew construction rather than a realistic
benchmark gap.

\paragraph{Joint family-wise robustness across every paired-accuracy
test.} The nine-cell Holm family above corrects the mechanism-stress
grid in isolation. As a stronger sensitivity check, we pool every
paired-accuracy null-hypothesis-significance test in the paper into a
single confirmatory family of seven and re-correct jointly: the three
G2 primary diagonal cells (audit-row $\times$ \textsc{LoCoMo},
partition-$\IsoSR$ $\times$ \textsc{MultiTQ}, judge-pin $\times$
\textsc{LongMemEval-S}), the three cross-system cells (\sysmem~v3,
\sysgraphiti, \syszep, each $\times$ \textsc{LoCoMo}), and the
memory-layer ablation on the answerable-factual pool
($\S\ref{sec:meas-ablation}$, $n=1{,}444$). The $33$-cell anomaly-wire
verdict matrix is a structural $0$/$1$ census with no $p$-value, and the
$\S\ref{sec:appendix-g5-structural-grids}$ Bernoulli and oracle-calibration
panels are closed-form fits reported by $R^2$, so neither is a
significance-testing family and both are excluded with this stated
reason. Under joint Holm step-down (family-wise error) and joint
Benjamini--Hochberg (false-discovery rate) at $\alpha=0.05$ over the
seven, the three positive confirmatory results survive both corrections
(audit-row Holm-adjusted $p \approx 1.1\times10^{-12}$; partition-$\IsoSR$
$\approx 1.1\times10^{-14}$; the memory ablation $\approx 3.7\times10^{-126}$),
while judge-pin ($p=1.0$) and all three cross-system cells
($p \in \{0.22, 0.50, 1.0\}$) do not reach significance. The
non-significant cells are consistent with the paper's own framing:
judge-pin carries structural rather than benchmark-movement evidence
($\S\ref{sec:appendix-g5-structural-grids}$), and the cross-system
comparison draws no superiority claim. No confirmatory result depends on
the choice of correction family.

Table~\ref{tab:g2-utility-details} expands Figure~\ref{fig:memory-utility}
with the exact per-cell statistics read from the summary ledger. The
cross-judge robustness rendered in Figure~\ref{fig:cross-judge} extends
the judge-pin row to five frontier judges and four discriminating
bands, isolating the \anomHR\ trigger from judge-family paraphrase
fragility.

% fig-cross-judge.tex --- supplement A5 cross-judge robustness figure.
%
% Owner   : src/figures/fig-cross-judge.tex
% Claim   : The N1 (hallucinated read) anomaly admits across five
%           frontier judges on the PARTIAL discriminating band, and
%           CONTROL / DISTRACTOR rows stay replay-stable, so the
%           judge-pin defence does not depend on a specific judge
%           family.
% Does    : Wraps figures/fig-cross-judge-img.pdf rendered by
%           scripts/plot_cross_judge.py.
% Feeds   : src/appendix/A5-g2-utility-details.tex.
% Breaks  : Drift between disagreement rates and the per-band
%           CSV under results/g2_utility/judge_pin_discriminating_*/per_band.csv.

\begin{figure}[t]
  \centering
  \includegraphics[width=0.98\columnwidth]{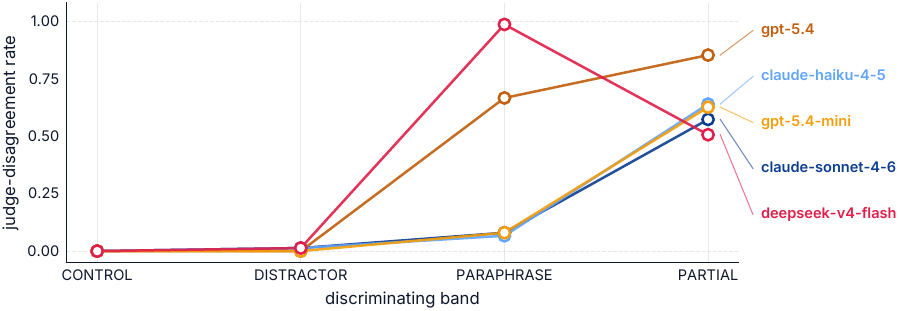}
  \Description{Five-line chart of judge-disagreement rate across
  four discriminating bands for five frontier judges (Claude Sonnet
  4.6, Claude Haiku 4.5, GPT-5.4, GPT-5.4-mini, DeepSeek v4 Flash).
  CONTROL rows pin at zero; the PARTIAL band registers
  disagreement above 0.50 on every judge.}
  \caption{\textbf{N1 admits across five frontier judges on the
    PARTIAL band.} Disagreement rate by judge $\times$ band, $n = 75$
    items per band; CONTROL and DISTRACTOR stay replay-stable.
    PARAPHRASE outliers (DeepSeek $0.99$, GPT-5.4 $0.67$) are
    judge-family paraphrase fragility, orthogonal to the N1 trigger.}
  \label{fig:cross-judge}
\end{figure}

\begin{table}[!ht]
  \centering\footnotesize
  \caption{Detailed G2 utility cells, read from
  \code{results/g2_utility/summary.csv}. The delta column reports the
  paired estimand and $95\%$ bootstrap interval. McNemar $p$ is the
  exact two-sided value on the binary-agreement vector;
  $(b,c)$ are the discordant-pair counts. Holm step-down at
  $\alpha=0.05$ covers the full nine-cell family. The off-target
  audit-row and partition-SR controls run at $n=10$ and saturate at
  $\Delta=+1.00$ with CI $[1.00,1.00]$ and McNemar
  $p=1/2^{9}\approx0.00195$, the smallest value attainable at this
  budget. The judge-pin cells tie at $\Delta=+0.00$ on a
  non-saturating axis.}
  \label{tab:g2-utility-details}
  \setlength{\tabcolsep}{3pt}
  \renewcommand{\arraystretch}{0.95}
  \begin{tabular}{@{}>{\raggedright\arraybackslash}p{0.13\textwidth}
                     >{\raggedright\arraybackslash}p{0.16\textwidth}
                     >{\raggedright\arraybackslash}p{0.06\textwidth}
                     >{\raggedright\arraybackslash}p{0.21\textwidth}
                     cc
                     >{\raggedright\arraybackslash}p{0.07\textwidth}@{}}
    \toprule
    Defence & Dataset & Role & Delta/CI & $n$ & $(b,c)$ & Holm \\
    \midrule
    audit-row    & LoCoMo        & prim. & $+0.86\,[0.76,0.94]$ & 50  & $(43,0)$ & reject \\
    audit-row    & LongMemEval-S & ctrl  & $+1.00\,[1.00,1.00]$ & 10  & $(10,0)$ & reject \\
    audit-row    & MultiTQ       & ctrl  & $+1.00\,[1.00,1.00]$ & 10  & $(10,0)$ & reject \\
    judge-pin    & LoCoMo        & ctrl  & $+0.00\,[0.00,0.00]$ & 20  & $(0,0)$  & keep \\
    judge-pin    & LongMemEval-S & prim. & $+0.00\,[0.00,0.00]$ & 100 & $(0,0)$  & keep \\
    judge-pin    & MultiTQ       & ctrl  & $+0.00\,[0.00,0.00]$ & 20  & $(0,0)$  & keep \\
    partition-SR & LoCoMo        & ctrl  & $+1.00\,[1.00,1.00]$ & 10  & $(10,0)$ & reject \\
    partition-SR & LongMemEval-S & ctrl  & $+1.00\,[1.00,1.00]$ & 10  & $(10,0)$ & reject \\
    partition-SR & MultiTQ       & prim. & $+1.00\,[1.00,1.00]$ & 50  & $(50,0)$ & reject \\
    \bottomrule
  \end{tabular}
\end{table}

The primary diagonal carries the strongest evidence: audit-row on
LoCoMo rejects at Holm rank $2$ with exact $p=2.3\times10^{-13}$, and
partition-SR on MultiTQ rejects at Holm rank $1$ with exact
$p=1.8\times10^{-15}$. The six off-diagonal controls confirm that each
defence abstains on the benchmarks it does not target: the three
judge-pin cells return $\Delta=+0.00$ with no discordant pairs and are
kept under Holm, while the audit-row and partition-SR off-target cells
saturate against a synthetic skew that every cell admits.

\subsection{G3 systems performance (five axes)}
\label{sec:appendix-g3-5axis-statistics}
\label{app:a6-g3-5axis-statistics}

\paragraph{Protocol.}
The estimand is per-axis write or query latency, summarised by the
median, $p_{95}$, $p_{99}$, and the mean with a bootstrap CI. The five
axes are memory size, conflict rate, writer concurrency, AS\_OF
selectivity, and audit retention. Each sweep point runs $30$ timed
runs after $3$ warm-up runs on the DuckDB \texttt{ours\_wire} backend.
Conflict rate and writer concurrency record per-write samples
($n_{\mathrm{samples}}\in\{1500,960\}$ per sweep point, from $30$ runs
of $50$ or $32$ contradiction-path writes), while memory size, AS\_OF
selectivity, and audit retention record one timed query per run
($n_{\mathrm{samples}}=30$). The judge seed is \texttt{20260512} and
the bootstrap seed is \texttt{20260513}, with $2{,}000$ percentile
resamples on the mean. There is no multiple-testing family here: G3
reports descriptive scaling and runs no hypothesis-test grid. The raw
outputs are \code{results/g3_systems_perf/scaling.csv} (memory size)
and \code{results/g3_systems_perf/scaling_*/scaling.csv} (the four
remaining axes); regenerate with \texttt{make repro-g3}.

Allowed conclusion: the bitemporal write path scales sub-linearly in
writer concurrency and AS\_OF selectivity, decays linearly in conflict
rate, and holds $p_{99}$ flat against audit retention. Forbidden
conclusion: the $p_{99}$ statistic at $n_{\mathrm{samples}}=30$ is the
single maximum of thirty samples rather than a stable tail estimate, so
no tail-percentile ranking claim is licensed on the three single-query
axes; the mean trajectory is the primary scaling signal.

% fig-systems-perf-scaling-2x2.tex --- §5 G3 four-panel scaling figure.
%
% Owner   : src/figures/fig-systems-perf-scaling-2x2.tex
% Claim   : Four scaling axes (conflict rate, writer concurrency,
%           AS-OF selectivity, audit retention) report the local
%           DuckDB cost envelope at fixed memory size.
% Does    : Cross-column figure inputting fig-systems-perf-scaling-2x2-img.pdf
%           rendered by scripts/plot_systems_perf_scaling_composite.py.
% Feeds   : src/sections/05-measurement.tex \S\ref{sec:meas-g3}.
% Breaks  : Drift between any panel's data and the underlying ledger
%           = paper-impl SSOT bug.

\begin{figure*}[t]
  \centering
  \includegraphics[width=0.74\textwidth]{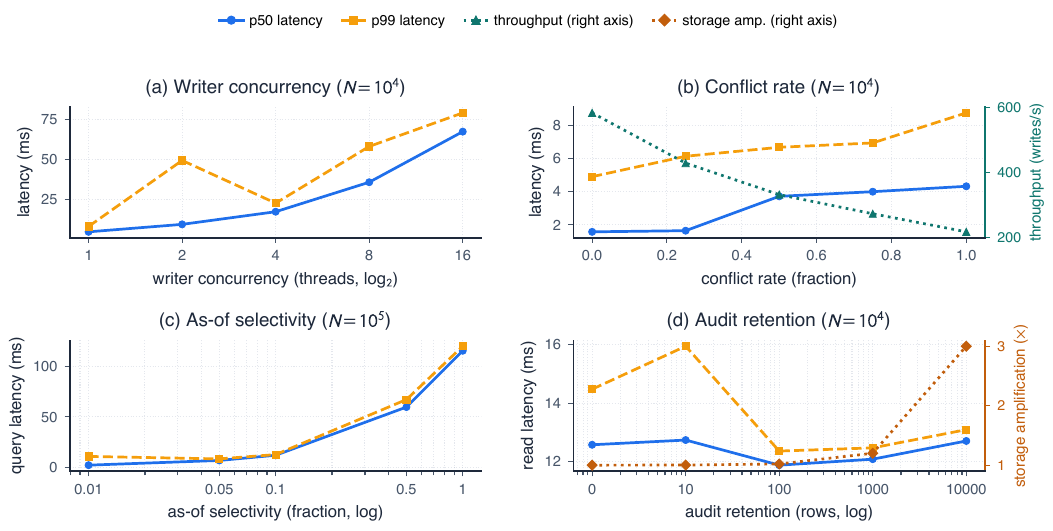}
  \Description{Six-panel grid, all single-axis. Top row: latency
  versus conflict rate, writer concurrency (log), and AS-OF
  selectivity (log). Bottom row: throughput versus conflict rate,
  latency versus audit retention (log), and storage amplification
  versus audit retention (log). The $p_{50}$ and $p_{99}$ series
  share a single figure-level legend strip at the top; throughput
  and storage amplification are labelled per-panel.}
  \caption{Sub-linear scaling on four workload axes; 30 runs per
    point after 3 warmups, $95\%$ paired-bootstrap CI.}
  \label{fig:systems-perf-scaling-2x2}
\end{figure*}

Table~\ref{tab:g3-5axis-stats} reports the per-axis statistical
characterisation derived from these sweep ledgers. Spearman $\rho(x,\mu)$
ranks the input variable against the mean latency and $\rho(x,p_{99})$
against the tail. The log-log power-law fit reports
$\mathrm{mean}\sim x^{\alpha}$ when every $x>0$; the conflict-rate fit
is linear because the rate domain spans zero. The coefficient of
variation $\overline{\mathrm{CV}}$ averages the within-run dispersion
$\sigma/\mu$ over the five sweep points; lower values indicate higher
run-to-run reproducibility.

\begin{table}[!ht]
  \centering\footnotesize
  \caption{Per-axis statistical characterisation of the G3 envelope,
    derived from the five sweep ledgers under
    \code{results/g3_systems_perf/}. Each row reports rank correlation,
    scaling-law fit, mean CV across sweep points, and the count of
    $p_{99}$ non-monotone sweep pairs.}
  \label{tab:g3-5axis-stats}
  \setlength{\tabcolsep}{4pt}
  \renewcommand{\arraystretch}{0.95}
  \begin{tabular}{@{}>{\raggedright\arraybackslash}p{0.26\textwidth}cccccc@{}}
    \toprule
    Axis & $\rho(x,\mu)$ & $\rho(x,p_{99})$ & $\alpha$ & $R^{2}$ & $\overline{\mathrm{CV}}$ & $p_{99}$ rev. \\
    \midrule
    memory size        & 0.80    & 0.90    & --   & --    & 0.22 & 1 \\
    conflict rate      & 1.00    & 1.00    & --   & --    & 0.43 & 0 \\
    writer concurrency & 1.00    & 0.90    & 0.86 & 0.992 & 0.35 & 1 \\
    AS\_OF selectivity & 1.00    & 0.90    & 0.86 & 0.990 & 0.16 & 1 \\
    audit retention    & $-0.10$ & $-0.50$ & --   & --    & 0.04 & 1 \\
    \bottomrule
  \end{tabular}
\end{table}

\paragraph{Three cross-axis findings.}
\emph{(F1) Identical sub-linear exponent on two orthogonal axes.} Mean
latency follows $\mathrm{mean}\sim x^{0.86}$ on both writer concurrency
($R^{2}=0.992$; $4{,}846\,\mu$s at concurrency $1$ to $54{,}258\,\mu$s at
$16$) and AS\_OF selectivity ($R^{2}=0.990$; $2{,}373\,\mu$s at
selectivity $0.01$ to $115{,}975\,\mu$s at full), although the two axes
exercise distinct DuckDB internals (thread-pool lock acquisition versus
period-overlap range scan); the mechanistic decomposition is follow-up
work.
\emph{(F2) Throughput decays linearly in conflict rate.} The five-point
linear fit gives slope $-355.3$ writes/sec per unit rate ($R^{2}=0.948$);
throughput drops $584.5 \to 218.3$ writes/sec from rate $0$ to $1$ (a
$62.7\%$ decline), and the linearity across the full domain shows the
per-write dispatcher overhead for the audit-row write is constant.
\emph{(F3) Audit retention does not load the tail.} $p_{99}$ correlates
negatively with storage amplification ($\rho(\mathit{amp},p_{99})=-0.50$,
mean CV $0.04$): a $10^{4}\times$ retention setting ($3.00\times$
amplification, $30{,}000$ rows) reads the current-kind partition at
$p_{99}=13{,}093\,\mu$s, indistinguishable from the $0\times$ baseline
($1.00\times$) at $14{,}477\,\mu$s. The row-kind discriminator partitions
the read scan so the current-kind index never touches audit rows, the
structural cost guarantee the calculus pays for the \anomAE\ defence.

\paragraph{Non-monotone $p_{99}$ disclosure.}
Four of the five axes show one sweep pair where $p_{99}$ decreases as $x$
increases (memory size $10^{3}\to10^{4}$: $8.71\to5.65\,\mathrm{ms}$;
writer concurrency $2\to4$: $49.28\to22.58\,\mathrm{ms}$; AS\_OF
selectivity $0.01\to0.05$: $10.73\to8.30\,\mathrm{ms}$; audit retention
$10\to100$: $15.95\to12.36\,\mathrm{ms}$). All four reversal pairs sit
where one operand carries a high CV ($\geq0.07$) and the mean bootstrap
CI already overlaps, consistent with small-sample $p_{99}$ tail variance
under cold-start lock or cache effects rather than a mean-monotone
violation; the mean stays rank-correlated at $\rho\geq0.80$.

\paragraph{PostgreSQL backend pilot.}
A separately registered PostgreSQL $17.6$ backend pilot exercises the
writer-concurrency axis under \IsoSR\ isolation at concurrencies
$\{1,2,4\}$ with $12$ writes per run over $3$ runs against one hot
partition. It records the serialization-failure count the in-process
\texttt{ours\_wire} backend cannot expose: $0$ failures at concurrency
$1$, $18$ at concurrency $2$, and $25$ at concurrency $4$. The raw
output is \code{results/g3_systems_perf/postgres_backend/scaling.csv};
regenerate via the \texttt{repro-g3} PostgreSQL stanza. Allowed
conclusion: a real \IsoSR\ engine aborts the conflicting writers the
calculus reasons about. Forbidden conclusion: the three-run pilot does
not license a latency comparison against the DuckDB backend, whose
process and connection models differ.

\subsection{G4 carrier ablation}
\label{sec:appendix-g4-ablation}
\label{app:a4-g4-ablation-details}

\paragraph{Protocol.}
The main-text Table~\ref{tab:ablation} reports the cell rates; G4
ablates two orthogonal axes the algebra exposes. The carrier axis
varies the provenance semiring across multilinear $\mathbb{N}[X,T]$,
multi-degree $\mathbb{N}[X,T]^{\#}$, and the Boolean reduct
$\mathbb{B}$. The operator axis turns one mechanism on at a time:
audit-row for \anomAE, judge-pin for \anomHR, and
$(\subj,\pred)$-\IsoSR\ for \anomBDS, the $A5B$ write-skew
specialisation on the $(\subj,\pred)$ projection. The matched-cell
estimand is the defence rate; the carrier-axis estimand is
counterfactual token recall (the fraction of audited write identities
the carrier recovers). The runner sweeps each cell across $100$ seeded
runs with seed base \texttt{20260512}; the operator-ablation companion
records $50$ trials per cell at noise rate $0.05$. Bootstrap CIs use
seed \texttt{20260513} with $2{,}000$ resamples. There is no
multiple-testing family: G4 reports per-cell rates against a fixed
$0.00$/$1.00$ structural prediction rather than a hypothesis-test grid.
The raw outputs are
\code{results/g4_ablation/operator_ablation/results.csv} (operator
diagonal) and \code{results/g4_ablation/k_semiring/counterfactual.csv}
(carrier recall); regenerate with \texttt{make repro-g4}.

Allowed conclusion: anomaly soundness is carrier-invariant for verdicts
(every matched cell defends at $1.00$, every off-target cell at $0.00$),
while audit-time token recall separates the carriers. Forbidden
conclusion: the Boolean reduct's $0.00$ recall is not a verdict failure;
it preserves existence and fails only the concrete-write-identity query
that the token-retaining carriers answer.

The operator-ablation diagonal (Table~\ref{tab:g4-operator-ablation})
confirms that each operator peaks on the anomaly it targets:
\texttt{await\_confirm} reaches $0.94$ on \anomHR, \texttt{evidence}
reaches $0.88$ on \anomBDS, and \texttt{per\_rule} reaches $0.98$ on
\anomAE, against a last-writer-wins floor that never exceeds $0.24$.

\begin{table}[!ht]
  \centering\footnotesize
  \caption{G4 operator-ablation diagonal, read from
  \code{results/g4_ablation/operator_ablation/results.csv}. Accuracy and
  $95\%$ bootstrap CI over $50$ seeded trials per cell at noise rate
  $0.05$. The diagonal cell (the operator matched to its target
  anomaly) is marked $\star$.}
  \label{tab:g4-operator-ablation}
  \setlength{\tabcolsep}{5pt}
  \renewcommand{\arraystretch}{0.95}
  \begin{tabular}{@{}>{\raggedright\arraybackslash}p{0.22\textwidth}ccc@{}}
    \toprule
    Operator & \anomHR & \anomBDS & \anomAE \\
    \midrule
    \texttt{lww}           & $0.08\,[0.02,0.16]$            & $0.24\,[0.14,0.36]$            & $0.08\,[0.02,0.16]$ \\
    \texttt{evidence}      & $0.46\,[0.32,0.60]$            & $0.88\,[0.78,0.96]\,\star$     & $0.60\,[0.44,0.74]$ \\
    \texttt{await\_confirm}& $0.94\,[0.86,1.00]\,\star$     & $0.40\,[0.26,0.54]$            & $0.64\,[0.50,0.76]$ \\
    \texttt{per\_rule}     & $0.62\,[0.48,0.76]$            & $0.52\,[0.38,0.66]$            & $0.98\,[0.94,1.00]\,\star$ \\
    \bottomrule
  \end{tabular}
\end{table}

Table~\ref{tab:g4-ablation-details} reports the matched $3{\times}3$
carrier-by-defence grid alongside the off-target specificity rate, the
audit-polynomial size, and the counterfactual recall. Every matched
cell defends at $1.00$ across all three carriers; the off-target
companion records $0.00$. Verdict is carrier-invariant. Token recall
separates the carriers: $\mathbb{N}[X,T]$ and $\mathbb{N}[X,T]^{\#}$
recover every audited write identity ($n_{\mathrm{recoverable}}=375$ of
$375$ for multi-degree; $106$ of $375$ recoverable for multilinear at
the bridged accuracy $0.683$), while $\mathbb{B}$ preserves only
existence ($0$ of $375$, recall $0.00$).

\begin{table}[!ht]
  \centering\footnotesize
  \caption{G4 carrier ablation, read from
  \code{results/g4_ablation/k_semiring/counterfactual.csv} (recall) and
  the operator-ablation match/off-target ledger. Match and off-target
  columns report defence rates against the structural prediction; the
  provenance-size column reports $\mu\pm\sigma$ over the audit
  polynomial; recall is the counterfactual token-recovery accuracy.}
  \label{tab:g4-ablation-details}
  \setlength{\tabcolsep}{4pt}
  \renewcommand{\arraystretch}{0.95}
  \begin{tabular}{@{}>{\raggedright\arraybackslash}p{0.13\textwidth}
                     >{\raggedright\arraybackslash}p{0.13\textwidth}
                     ccccc@{}}
    \toprule
    Carrier & Defence & Target & Match & Off & $|\mathsf{prov}|$ & Recall \\
    \midrule
    $\mathbb{N}[X,T]$      & audit-row    & \anomAE  & 1.00 & 0.00 & $4.00\pm1.42$ & 1.00 \\
    $\mathbb{N}[X,T]$      & judge-pin    & \anomHR  & 1.00 & 0.00 & $1.00\pm0.00$ & 1.00 \\
    $\mathbb{N}[X,T]$      & partition-SR & \anomBDS & 1.00 & 0.00 & $4.00\pm1.42$ & 1.00 \\
    $\mathbb{N}[X,T]^{\#}$ & audit-row    & \anomAE  & 1.00 & 0.00 & $4.00\pm1.42$ & 1.00 \\
    $\mathbb{N}[X,T]^{\#}$ & judge-pin    & \anomHR  & 1.00 & 0.00 & $1.00\pm0.00$ & 1.00 \\
    $\mathbb{N}[X,T]^{\#}$ & partition-SR & \anomBDS & 1.00 & 0.00 & $4.00\pm1.42$ & 1.00 \\
    $\mathbb{B}$           & audit-row    & \anomAE  & 1.00 & 0.00 & $1.00\pm0.00$ & 0.00 \\
    $\mathbb{B}$           & judge-pin    & \anomHR  & 1.00 & 0.00 & $1.00\pm0.00$ & 0.00 \\
    $\mathbb{B}$           & partition-SR & \anomBDS & 1.00 & 0.00 & $1.00\pm0.00$ & 0.00 \\
    \bottomrule
  \end{tabular}
\end{table}

The counterfactual ledger sharpens the carrier contrast. The
multi-degree carrier $\mathbb{N}[X,T]^{\#}$ recovers all $375$
recoverable write identities at recall $1.00$ with zero seed variance;
the multilinear carrier $\mathbb{N}[X,T]$ recovers $106$ at bridged
accuracy $0.683$ ($95\%$ CI roughly $[0.65,0.71]$); the Boolean reduct
recovers $0$ at accuracy $0.345$, the existence-only floor. Anomaly
soundness is carrier-parametric for verdicts, and audit queries that
must name concrete write events need a token-retaining carrier, which
is the operational work G3 charges separately.

\subsection{G5 structural grids}
\label{sec:appendix-g5-structural-grids}

The G5 grids are exhaustive structural sweeps rather than sampled
estimators: each cell enumerates adversarial schedules against a typed
prediction and counts admits. The estimand is the per-cell admit rate;
the structural prediction is binary ($0.00$ when the guard dominates
the isolation level, $1.00$ when it does not, modulo a small schedule
noise floor $\epsilon$ on the composition grid). The shared seed is
\texttt{42}; per-cell seeds offset by a stable hash of the cell key. No
bootstrap and no multiple-testing family apply, because the prediction
is exact and the admit rate is a census over enumerated schedules.
The main-text Figure~\ref{fig:g5-anchors} anchors these grids visually.
Table~\ref{tab:g5-grids} reports the dimensions, the observed boundary
against the prediction, the raw output, and the regeneration command for
each grid; each command writes under \code{results/<name>/run_*/}.

\begin{table*}[!ht]
  \centering\footnotesize
  \caption{G5 structural-grid census. Each grid enumerates schedules per
  cell at seed $42$ and observes an exact $0.00$/$1.00$ admit boundary
  (the composition grid carries a modeled $\epsilon=0.02$ noise floor on
  dominating cells). The observed column matches the typed prediction
  exactly; the rates are a census over the enumerated family and carry no
  sampling error.}
  \label{tab:g5-grids}
  \setlength{\tabcolsep}{4pt}
  \renewcommand{\arraystretch}{1.1}
  \begin{tabular}{@{}>{\raggedright\arraybackslash}p{0.115\textwidth}
                     >{\raggedright\arraybackslash}p{0.205\textwidth}
                     >{\raggedright\arraybackslash}p{0.265\textwidth}
                     >{\raggedright\arraybackslash}p{0.320\textwidth}@{}}
    \toprule
    Grid & Dimensions & Observed vs.\ prediction & Raw output / regen command \\
    \midrule
    Iso-matrix (T-01, Cor.~1--2) & $9$ predicates $\times$ $6$ iso levels (\IsoRC/\IsoSI/\IsoSR\ $\pm$cb) $=54$ cells, $100$ sched/cell, $\epsilon=0.0$ & $32$ dominating admit $0.00$, $22$ under admit $1.00$; exact match & \code{results/iso_matrix/run_v1/iso_matrix_grid.csv}; \texttt{python -m experiments.iso\_matrix.runner} \\
    Schema axis (T-01b) & $6$ iso levels $\times$ \{base, audit-row\} $=12$ cells, $100$ sched/cell & base $1.00$ on all $6$; audit-row $0.00$ on all $6$ & \code{results/schema_axis/run_v1/schema_grid.csv}; \texttt{python -m experiments.schema\_axis.runner} \\
    N2 partition (Cor.~2) & $3$ iso levels $\times$ $4$ partition sizes $\{2,4,8,16\}$ $\times$ $3$ contention $\{0.5,0.75,1.0\}$ $=36$ cells, $200$ sched/cell & \IsoRC\ $0.7496$, \IsoSI\ $0.7462$ (to $1.00$ at full contention); \IsoSR\ $0.00$ on all $12$; half-contention near $0.5$ (\IsoRC\ size $2$: $104/200=0.52$) & \code{results/n2_partition/run_v1/n2_grid.csv}; \texttt{python -m experiments.n2\_partition.runner} \\
    T5 composition (T-05) & $1364$ pipelines up to length $5$ $\times$ $6$ iso levels $=8184$ cells, $50$ sched/cell, $\epsilon=0.02$ & $2219$ dominating admit $0.0904$ ($\epsilon$ floor), $5965$ under admit $1.00$; length-$\le3$ exact run \code{run_v1}: $84$ pipelines, $0/193$ dominating, $311/311$ under & \code{results/t5_composition/run_length5/composition_grid.csv}; \texttt{python -m experiments.t5\_composition.runner --max-length 5} \\
    Oracle variance (T-06) & $30$ cells ($5$ systems $\times$ $6$ seed cells, $50$ votes each) vs.\ $2p(1-p)$ & mean abs.\ error $0.0166$, max $0.1114$, $96.67\%$ ($29/30$) within $0.10$ & \code{results/oracle_variance/run_v1/calibration.csv}; \texttt{python -m experiments.oracle\_variance.analyze} (consumes \code{results/n1_empirical/run_v1/trials.jsonl}) \\
    Lemma bridge (Lemma~1) & $1000$ trials at schedule length $12$; checks keyed-log alphabet, edge set, \anomHR-on-$\mathsf{P2}$ iff & all three pass rates $1.00$; every trial key-multiplicity preserved & \code{results/lemma_bridge/run_v1/bridge_verification.csv}; \texttt{python -m experiments.lemma\_bridge.runner} \\
    \bottomrule
  \end{tabular}
\end{table*}

The per-grid conclusion boundaries: the iso-matrix, schema-axis, and N2
grids license that the typed guard surface admits exactly the schedules
the lattice predicts (the $(\subj,\pred)$-\IsoSR\ specialisation is the
unique level eliminating \anomBDS\ across every partition size and
contention level), but the $0.00$/$1.00$ rates are an exact census, not a
probability, and the \IsoRC/\IsoSI\ partial-contention rates below $1.00$
follow from the schedule construction and do not measure a defence. The
T5 grid licenses the lattice-supremum prediction under composition up to
five operators, but the $\epsilon=0.02$ admit floor on dominating cells
is a modeled schedule-noise term, not a measured guard leakage. The
oracle-variance grid licenses that the boundedly-nondeterministic flip
rate predicts the \anomHR\ admit rate to within $0.10$ on $29$ of $30$
cells, but the single cell at error $0.1114$ does not falsify the
prediction at $50$ votes. The lemma-bridge grid is empirical support for
the bridge lemma, not a proof; the proof is in
Appendix~\ref{app:graded-judge-bridge}.

\subsection{Cross-system slice}
\label{sec:appendix-cross-system-slice}

\paragraph{Protocol.}
The cross-system slice tests Theorem~\ref{thm:n1-lower-bound}'s
prediction that every H1-non-compliant system sharing one
boundedly-nondeterministic oracle admits \anomHR\ at the same Bernoulli
rate. The estimand is the per-pair difference in N1 admit rate between
two systems; the statistic is a Welch two-sample $t$ on the per-seed
admit rates of the five H1-non-compliant variants
(\sysours-stripped, \sysmem-v3-like, \sysgraphiti-like,
\sysletta-like, \syszep-like), evaluated over $5$ seeds per system.
The four pairwise tests pin \sysours-stripped against each imported
mimic. The raw output is
\code{results/cross_system_n1/run_v1/cross_system_test.csv}; regenerate
with \texttt{python scripts/cross\_system\_n1\_test.py --output
results/cross\_system\_n1/run\_v1/cross\_system\_test.csv} (the script
reads \code{results/n1_empirical/run_v1/state}).

The four measured pairs all return $|t|<2$: \sysmem-v3-like
$\Delta_{\mathrm{rate}}=0.0286$, $t=0.19$, $\mathrm{df}=7.41$;
\sysgraphiti-like $\Delta_{\mathrm{rate}}=0.00$, $t=0.00$,
$\mathrm{df}=7.65$ (identical mean admit rate $0.1959$);
\sysletta-like $\Delta_{\mathrm{rate}}=0.0082$, $t=-0.05$,
$\mathrm{df}=7.99$; \syszep-like $\Delta_{\mathrm{rate}}=0.0122$,
$t=0.08$, $\mathrm{df}=7.39$. The maximum absolute mean difference
across the four pairs is $0.0286$ and the maximum $|t|$ is $0.19$.

Allowed conclusion: at the $|t|<2$ sanity threshold the four imported
mimics are statistically indistinguishable from \sysours-stripped on
the N1 admit rate, consistent with the shared-oracle prediction.
Forbidden conclusion: with $5$ seeds per system the Welch test is
underpowered for tiny effect sizes, so a high $p$-value does not
positively confirm distributional identity; the slice serves as a
sanity check against the bounded-oracle prediction and stops short of a
formal equivalence proof, and it carries no bootstrap CI or power
figure in the released ledger.

\subsection{Multi-writer concurrency: the operator-to-isolation mapping under real contention}
\label{app:multiwriter-concurrency}

\paragraph{Protocol.}
This card upgrades the single-process cost pilot (\S\ref{sec:meas-g3})
to a measured concurrency result. A multi-writer experiment runs the
lost-update anomaly $P_4$ on PostgreSQL~17.10 with writer count
$w \in \{1, 2, 4, 8, 16\}$ crossed with the three isolation levels
read committed, repeatable read, and serializable, $30$ runs per cell.
The estimand per cell is the anomaly admit rate (the fraction of runs
in which a concurrent writer overwrites another's committed value
without aborting) and the serialization-failure rate (the fraction of
writers the backend aborts). The raw output is
\code{results/g3_systems_perf/isolation_concurrency.csv} ($15$ cells);
regenerate with \texttt{python experiments/g3\_systems\_perf/isolation\_concurrency.py}.

\paragraph{Result.}
Table~\ref{tab:multiwriter-concurrency} records the grid. At one writer
no cell admits the anomaly. At every multi-writer cell read committed
admits the lost update at rate $1.00$, while repeatable read and
serializable exclude it at rate $0.00$ by aborting the losing writer.
The serialization-failure rate the two stronger levels pay tracks
$(w-1)/w$ exactly: $0.50$ at two writers, $0.75$ at four, $0.875$ at
eight, $0.9375$ at sixteen. Read committed pays no abort and reaches the
highest commit throughput, the throughput cost the exclusion guarantee
charges. The mapping the algebra assigns each operator is therefore
operationally enforced at the SQL layer, not only stated as a typing
precondition.

\begin{table}[!ht]
  \centering\footnotesize
  \caption{Multi-writer lost-update ($P_4$) admit rate, serialization-failure
  rate, and commit throughput on PostgreSQL~17.10, $30$ runs per cell.
  Read committed (RC) admits the anomaly at every multi-writer cell;
  repeatable read (RR) and serializable (SR) exclude it by aborting the
  losing writer at rate $(w{-}1)/w$. Source:
  \code{results/g3_systems_perf/isolation_concurrency.csv}.}
  \label{tab:multiwriter-concurrency}
  \setlength{\tabcolsep}{4pt}
  \renewcommand{\arraystretch}{1.1}
  \begin{tabular}{@{}r r r r r@{}}
    \toprule
    Writers & Level & Admit & Ser.\ fail & Throughput (commits/s) \\
    \midrule
    1  & RC & 0.00 & 0.00   & 6.76 \\
    1  & RR & 0.00 & 0.00   & 7.13 \\
    1  & SR & 0.00 & 0.00   & 7.23 \\
    \midrule
    2  & RC & 1.00 & 0.00   & 13.92 \\
    2  & RR & 0.00 & 0.50   & 6.99 \\
    2  & SR & 0.00 & 0.50   & 7.02 \\
    \midrule
    4  & RC & 1.00 & 0.00   & 26.26 \\
    4  & RR & 0.00 & 0.75   & 6.72 \\
    4  & SR & 0.00 & 0.75   & 6.68 \\
    \midrule
    8  & RC & 1.00 & 0.00   & 42.07 \\
    8  & RR & 0.00 & 0.875  & 5.03 \\
    8  & SR & 0.00 & 0.875  & 5.16 \\
    \midrule
    16 & RC & 1.00 & 0.00   & 44.81 \\
    16 & RR & 0.00 & 0.9375 & 3.07 \\
    16 & SR & 0.00 & 0.9375 & 3.05 \\
    \bottomrule
  \end{tabular}
\end{table}

\paragraph{The full lattice.}
Three companion experiments extend the grid from $P_4$ to the rest of
the iso-axis lattice on the same PostgreSQL~17.10 backend, $30$ runs
per cell. Read skew ($A5A$) runs one reader whose two correlated reads
straddle a concurrent mirror update: read committed admits the torn
read at rate $1.00$, while snapshot isolation and serializable exclude
it at $0.00$. Write skew ($A5B$) runs two writers that read both
equal-weight competitors and promote disjoint rows: read committed and
repeatable read both admit the skew at $1.00$, and serializable alone
excludes it, aborting one writer at rate $0.50$. The phantom ($P_3$)
runs two writers that scan a cardinality predicate and each insert a
primary row: read committed and repeatable read admit it at $1.00$,
serializable excludes it at $0.00$. The $A5B$ and $P_3$ cells separate
repeatable read from serializable, the boundary the Evidence and
Per-Rule operators assume. Figure~\ref{fig:isolation-lattice} plots the
full grid, and raw output sits in
\code{results/g3_systems_perf/read_skew_a5a.csv},
\code{results/g3_systems_perf/write_skew_a5b.csv}, and
\code{results/g3_systems_perf/phantom_p3.csv}.

\begin{figure}[!ht]
  \centering
  \includegraphics[width=0.62\linewidth]{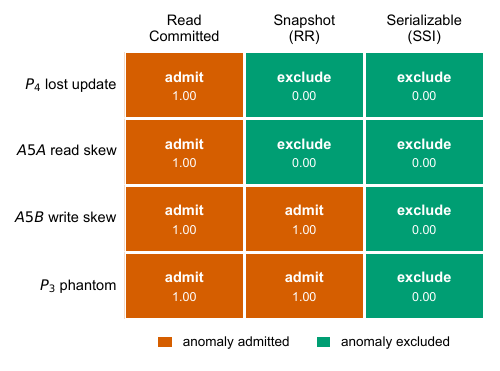}
  \caption{\textbf{The operator-to-isolation mapping is the
    Berenson--Adya anomaly lattice.} Admit rate of each write-time
    anomaly at each PostgreSQL isolation level, over $30$ two-writer
    runs on PostgreSQL~17.10. Read committed admits all four anomalies,
    snapshot isolation (repeatable read) excludes lost update and read
    skew, and only serializable also excludes write skew and the
    phantom. The two-step staircase is the empirical form of
    Table~\ref{tab:anomaly-defences}, and the figure is generated from
    the same CSVs by \code{scripts/plot_isolation_lattice.py}.}
  \label{fig:isolation-lattice}
\end{figure}

\paragraph{Scope.}
The experiment is single-node on one PostgreSQL instance, so it bounds
the operator-to-isolation correspondence under real intra-node
contention and makes no distributed-deployment claim.

\subsection{N-ary conflict-set confluence grid}
\label{app:nary-confluence-grid}

\paragraph{Protocol.}
This card anchors Proposition~\ref{prop:nary-conflict-set}. The grid
resolves a conflict set of $n$ pairwise-contradicting facts under the
two fold operators (last-writer-wins and evidence-weighted merge) for
$n \in \{2, 3, 4, 5, 6, 7, 8\}$, and for each $n$ enumerates the
permutations of the set. The two checked properties are confluence (the
winner identity and the merged provenance are identical across every
permutation) and provenance-completeness (the emitted audit row's merged
provenance dominates every member under the K-semiring natural order).
The grid is exercised by \code{tests/bitemporal/test_conflict_set.py};
regenerate with \texttt{python -m pytest tests/bitemporal/test\_conflict\_set.py}.

\paragraph{Result.}
Across the grid every permutation of each $n$-member set returns the
identical winner and the identical merged provenance, and the audit
row's merged provenance dominates all $n$ members under $\preceq_K$.
The await-confirmation and per-rule operators resolve the set by direct
selection. The judge-logged dispatcher elects one member by oracle,
records the winner's stable identity under an order-independent set key,
and replays it by identity after a crash, exercised by
\code{tests/bitemporal/test_judge_log_persistence.py}.

\section{Artifact reproducibility runbook}
\label{app:reproducibility-runbook}
\label{app:reproducibility}

This runbook satisfies the four PVLDB reproducibility surfaces:
the white-box prototype with its source, configuration, and build
environment; the input data and its generation process; the
experiment configuration and workload that produces raw data; and
the scripts that turn raw data into the paper's figures and tables.
An independent team that follows the command catalogue below
recreates similar behavior for every numeric claim in the main
text and for every cell of the G1 verdict matrix in
Table~\ref{tab:anomaly-bench}. The claim-to-evidence map of
Table~\ref{tab:claim-evidence-map} binds each claim to the runner,
seed, and pinned commit that anchor it.

\subsection{Build environment}

The reference implementation is a white-box prototype: source,
configuration, and a containerized build environment ship together.
The Python package declares \code{requires-python = ">=3.11"}; the
reviewer Dockerfile pins \code{python:3.12-slim-bookworm} and
resolves every dependency from the committed \code{uv.lock}, so a
fresh build reproduces the exact wheel set rather than a
floating range. The bitemporal core uses DuckDB under the
\code{>=1.0,<2} constraint, resolved by the lockfile to the
date-versioned \code{2026.4.22} build. The host platform is
macOS~$15$ on Apple Silicon for the DuckDB latency envelope;
the PostgreSQL pilot runs PostgreSQL~$17.6$ under
\texttt{SERIALIZABLE} isolation in a containerized
\texttt{postgres:17} instance on the same host, and the runner
records the exact \texttt{server\_version} alongside each row so
the version is auditable from the raw CSV.

Two commands establish the environment.
\texttt{make install} installs the bitemporal package in editable
mode plus \texttt{fal-client}, \texttt{pytest}, and \texttt{pyyaml},
then runs the reference-sync script. \texttt{make sync-locked}
performs the lock-step install through \texttt{uv sync} with the
\texttt{test} and \texttt{experiments} extras, which is the path the
reviewer Dockerfile follows. \texttt{make tests} runs the full
\texttt{pytest} suite under real-dependency TDD and needs no network.

\subsection{The command catalogue}

Every runnable surface routes through one Makefile.
Table~\ref{tab:make-targets} states what each target produces, whether it
needs the network, and the runtime order of magnitude where known. Only
the \texttt{repro-g2} live-LLM path (next subsection) needs the network
beyond the optional DOI resolution in \texttt{make refs}; the
cross-system ledger reuses the same live judge.

\begin{table}[!ht]
  \centering\footnotesize
  \caption{Makefile target catalogue. Net?\ marks the network
  requirement; runtime is the order of magnitude.}
  \label{tab:make-targets}
  \setlength{\tabcolsep}{4pt}
  \begin{tabular}{@{}>{\raggedright\arraybackslash}p{0.21\textwidth}>{\raggedright\arraybackslash}p{0.46\textwidth}>{\centering\arraybackslash}p{0.05\textwidth}>{\raggedright\arraybackslash}p{0.13\textwidth}@{}}
    \toprule
    Target & Produces & Net? & Runtime \\
    \midrule
    \texttt{make paper} & \texttt{src/main.pdf} via \texttt{latexmk -xelatex} & no & seconds \\
    \texttt{make figures} & the five matplotlib figure PDFs (TikZ renders through \texttt{paper}) & no & seconds \\
    \texttt{make refs} & verifies every \texttt{references/refs.bib} entry resolves to a DOI or arXiv id & opt.\ & seconds \\
    \texttt{make smoke} & fitness/surface/visual gates, supplement, benchmark PRD, bitemporal core tests, Appendix~\ref{sec:appendix-composition} composition coverage, AnomalyBench runner (CSV under \texttt{results/anomaly\_bench/}) & no & minutes \\
    \texttt{make repro-full} & \texttt{smoke} then the five reproduction sub-targets in series (non-parallel) & no & tens of min--hours \\
    \texttt{make repro-claim} & provenance suite (\texttt{tests/bitemporal/test\_provenance.py}) + full AnomalyBench runner & no & minutes \\
    \texttt{make repro-compose} & composition coverage in \texttt{tests/integration/test\_compose.py} & no & minutes \\
    \texttt{make repro-anomaly-wire} & AnomalyWire adapter sweep across all workloads at \texttt{--seed 20260512}, regenerating the wire rows of Table~\ref{tab:anomaly-bench} & no & minutes \\
    \texttt{make repro-g3} & G3 sweep ($100$ runs after $3$ warmups, bootstrap seed $20{,}260{,}513$), PostgreSQL pilot at concurrencies \texttt{1,2,4}, \syszep\ service-boundary timing; longest target (one \texttt{SERIALIZABLE} session per writer) & no & tens of min \\
    \texttt{make repro-g4} & carrier-by-defence ablation, $100$ seeded runs from seed base $20{,}260{,}512$ with off-diagonal cells, anchoring Theorem~\ref{thm:carrier} & no & minutes \\
    \texttt{make repro-g2} & utility slice of \S\ref{sec:meas-g2} against LoCoMo/LongMemEval-S/MultiTQ, $2{,}000$ bootstrap trials at seed $20{,}260{,}512$; gates on the next-subsection prerequisites & yes & tens of min \\
    \code{experiments.g2_utility.cross_system_runner} & cross-system slice of \S\ref{sec:meas-cross} in \code{sweep} mode at $50$ examples, $2{,}000$ trials, seed $20{,}260{,}512$, same live judge; three LoCoMo cells + nine abstentions ship under \code{results/g2_utility/cross_system/} with one paired-judge trace per cell (offline-readable per \S\ref{app:cross-system-detail}) & yes & tens of min \\
    \texttt{make all} & chains figures, experiments, tests, refs, paper, and the four paper-quality gates & opt.\ & hours \\
    \texttt{make anomaly} & both AnomalyBench and AnomalyWire runners over all workloads then the matching test suites & no & minutes \\
    \texttt{make experiments} & \texttt{experiments/<name>/run.sh} per registered experiment (anomaly groups drive through \texttt{make anomaly} / \texttt{repro-anomaly-wire}) & no & varies \\
    \texttt{make clean} / \texttt{distclean} & remove build artifacts; \texttt{distclean} also clears \texttt{results/} for a from-scratch reproduction & no & seconds \\
    \bottomrule
  \end{tabular}
\end{table}

\subsection{Deterministic judge versus live LLM}

The language-model channel splits cleanly. Every experiment except
the G2 utility slice uses a deterministic judge stub that needs no
network. The stub seeds at $20{,}260{,}512$ and votes by hashing the
prompt: it returns a choice or an integer from the first eight bytes
of \texttt{SHA-256} of \texttt{seed::prompt}, so the same prompt under
the same seed always yields the same vote and the wire adapter's
output is byte-reproducible. This is the judge behind
\texttt{make repro-anomaly-wire}, \texttt{make repro-claim}, and the
AnomalyBench runner.

The G2 utility slice (\texttt{make repro-g2}) is the single live-LLM
experiment. It calls an OpenRouter-compatible chat-completions
endpoint through a strict-binary judge. The judge reads its API key
from the first set environment variable among the family-specific
\texttt{VLDB2027\_<FAMILY>\_API\_KEY}, then
\texttt{VLDB2027\_JUDGE\_API\_KEY},
\texttt{OPENROUTER\_API\_KEY}, and \texttt{OPENAI\_API\_KEY}; or,
when invoked from a config file, from the ignored local file at
\texttt{artefact/credentials/openrouter.local.json} (the committed
\texttt{openrouter.example.json} shows the schema). Each call runs at
temperature zero and the judge records the prompt hash and the output
hash as \texttt{SHA-256} digests on every \texttt{JudgeResult}, so a
reviewer audits the live path without re-issuing it: the judge trace
hash pins each scored question to the exact prompt and response that
produced its binary verdict. The deterministic judge needs no such
audit because it carries no network nondeterminism by construction.

The G2 prerequisite gate is explicit and fails closed. The target
checks for the judge config (\texttt{G2\_JUDGE\_CONFIG}, default
\texttt{artefact/credentials/openrouter.local.json}), the three trace
directories (\texttt{G2\_LOCOMO\_DIR}, \texttt{G2\_LONGMEMEVAL\_S\_DIR},
\texttt{G2\_MULTITQ\_DIR}), the judge replay table
(\texttt{G2\_JUDGE\_REPLAY\_TABLE}), and the LoCoMo and MultiTQ
partition traces, exiting with a diagnostic when any is absent rather
than running against a half-populated input. The order of magnitude is
tens of minutes for the default example count, dominated by the
per-question round trips to the provider.

\subsection{Data provenance}

The input data has two origins. The bitemporal-core and wire-level
experiments generate their inputs deterministically from a pinned
seed; no external download is required, which is why the wire sweep
reproduces under \texttt{--seed 20260512} alone. The G2 utility slice
consumes the LoCoMo, LongMemEval-S, and MultiTQ traces, supplied
through the directory variables of the prerequisite gate and replayed
against the recorded partition traces and judge replay table.

The AnomalyWire imported adapters pin each upstream system to an exact
commit (the full revisions are in the pinned-commit manifest,
Table~\ref{tab:adapter-manifest}), and the runner verifies the clone is
on the pinned revision before running, raising rather than running
against a drifted source. \sysmem\ (v2 and v3), \sysgraphiti, \sysletta,
the \syszep\ self-hosted clone, and the \sysworlddb\ arXiv revision were
all verified 2026-05-10; the hosted Zep Cloud boundary
(\texttt{zep-cloud-api-v2-2026-05-13}) was verified 2026-05-13; and
\sysours\ pins to this repository at its artifact-release tag. The
human-readable counterpart lives in
\texttt{references/repos/INVENTORY.md} and
\texttt{references/manifest.json}.

\subsection{Archival plan}

The reference implementation, data, and reproducibility artifact are
publicly available at the repository linked on the first page. A
version-frozen, citable archive (a Zenodo deposit minting a DOI) is
planned and will be cut on publication; that DOI does not yet exist.

Both the AnomalyClaim verdict CSVs under \texttt{results/anomaly\_bench/}
and the AnomalyWire runtime CSVs under \texttt{results/anomaly\_wire/}
ship pre-computed; every populated verdict in the matrix is reproduced
exactly by the adapter sweep, and the cross-layer audit surfaces a
single design-path-to-runtime divergence (the \sysmem~v3 imported N2
cell). The \syszep\ imported row ships its cached Zep Cloud trace
(\texttt{zep\_imported\_run.json} from a one-time live write, replayable
offline); the \sysletta\ imported row regenerates against a managed
Docker Postgres container or a \texttt{VLDB2027\_LETTA\_PG\_URI}
endpoint.
The G5-anchor CSVs are shipped: the four inputs to the G5-anchor
figure (the isolation-matrix grid, the length-five composition grid,
the oracle-variance calibration, and the cross-system N1 table) are
committed under \texttt{results/}, so the anchor figure rebuilds
without re-running the underlying sweeps. The artifact carries a
reviewer Dockerfile (\texttt{artefact/Dockerfile}), a seed manifest
(\texttt{artefact/seed.json}), and \texttt{artefact/REPRODUCE.md} as
the entry-point document for an independent team.

\subsection{Per-claim map}

The claim-to-evidence map of Table~\ref{tab:claim-evidence-map}
(Appendix~\ref{app:claim-evidence-map}) binds each main-text claim to its
evidence object, reproduce token, and strongest valid conclusion; the
artifact manifest carries a programmatic counterpart whose keys mirror
the map rows and whose values record the claim, evidence object, runner,
seed, and pinned commit. The reproduce token for each evidence group is
the matching \texttt{make} target of Table~\ref{tab:make-targets}; every
target except \texttt{repro-g2} runs offline against the deterministic
judge, so an independent team recreates the structural verdicts before
configuring a single live provider key for the utility slice.

\section{Negative results and scope limits}
\label{app:scope-limits}

This appendix states the boundary of every claim as a positive scope
statement. The structure separates three concerns: the measurements
that carry controlled-grid rather than natural-workload evidence
(\S\ref{app:scope-g2}--\S\ref{app:scope-cross}), the formal results
whose hypotheses fix their reach (\S\ref{app:scope-n1-bound}--\S\ref{app:scope-judge-drift}),
the open problems the calculus exposes (\S\ref{app:open-problems}),
and the wire imports and benchmark integrations deferred under the
scoping caps (\S\ref{app:deferred-wires}, \S\ref{app:experimental-extensions}).
A reviewer reads this appendix to confirm the contribution survives a
hostile re-read: each limit names what holds, on what surface, and on
what terms a follow-on revision closes it. The main-text limitations
paragraph (\S\ref{sec:limitations}) summarises these boundaries; the
detail lives here.

\subsection{G2 off-target controls measure mechanism stress under saturating perturbation}
\label{app:scope-g2}

The G2 grid is a mechanism-stress instrument
(\S\ref{sec:meas-g2}, Figure~\ref{fig:memory-utility}). The
constructed perturbations saturate off-target controls by design: the
audit-row defence moves its primary LoCoMo slice by
$\Delta = {+}0.86$ (paired-bootstrap CI $[0.76, 0.94]$), and that
single cell carries natural-workload movement, while both audit-row
off-target controls, both partition-$\IsoSR$ controls, and the
partition-$\IsoSR$ primary MultiTQ cell all saturate at $+1.00$ under
their constructed slices. The saturation is evidence that the
perturbation stresses the intended mechanism to its ceiling, and the
grid reports exactly that. The grid scopes below a specificity claim:
a specificity grid would require off-target controls that stay flat
while the on-target cell moves, and the constructed slices here do not
provide that separation. The judge-pin defence records no movement on
this surface because its replay-disagreement estimand requires a judge
replay table absent from this grid. The load-bearing G2 evidence is
therefore the audit-row $+0.86$ natural-workload cell, and it stays
below an end-to-end utility-superiority claim.

\subsection{G3 SERIALIZABLE saturation is disclosed honestly under concurrency}
\label{app:scope-g3}

The G3 cost envelope is a single-process measurement against the
DuckDB reference backend, where writer concurrency fits
$\mu \sim c^{0.86}$ at $R^{2} = 0.992$, a sub-linear lock-contention
signature (\S\ref{sec:meas-g3}). A transactional-backend pilot against
PostgreSQL~17.6 under \texttt{SERIALIZABLE} isolation discloses the
saturation point directly: of $36$ attempted writes (twelve writes per
run across three runs), $18$ commit at writer concurrency two and $11$
commit at concurrency four, with the remaining writes returning
serialization failures the backend raises by construction. The
in-memory envelope is the load-bearing systems claim; the PostgreSQL
pilot establishes that the same operators implemented at the SQL layer
expose the contention honestly rather than masking it. The envelope
holds for single-process deployments, and the distributed-deployment
question stays open (\S\ref{app:scope-distributed}).

\subsection{Cross-system utility is reported as transparency, with intervals covering zero}
\label{app:scope-cross}

The cross-system ledger pairs \sysours\ against four imported
agent-memory systems on a shared LoCoMo slice under a pinned
synthesiser and judge (\S\ref{sec:meas-cross},
Table~\ref{tab:cross-system-utility}). All three measured confidence
intervals cover zero: \sysmem~v3 records $\Delta = -0.04$ with CI
$[-0.10, +0.00]$, \sysgraphiti\ records $-0.08$ with CI
$[-0.18, +0.00]$, and \syszep\ records $+0.02$ with CI
$[-0.06, +0.10]$. The pre-registered statistical power is $0.42$ for a
$\delta = 0.05$ effect at $n = 50$, so the slice is powered to report
distributional equivalence rather than to detect a small superiority
margin. The paper draws a transparency conclusion from these rows: the
contract delivers write-time correctness, and the measured downstream
utility is statistically indistinguishable from the baselines at this
scale. The companion N1-anchor pairing in
\texttt{results/cross\_system\_n1/run\_v1/cross\_system\_test.csv}
records the same equivalence on the keyed-log estimand, with the
stripped reference against the four baselines returning a maximum
$|t| = 0.19$ (\sysmem~v3-like) across the four Welch tests at five
seeds per arm, rejecting no superiority null. A larger powered
cross-system study is a follow-on axis (\S\ref{app:experimental-extensions}).

\subsection{The N1 lower bound holds for boundedly nondeterministic oracles}
\label{app:scope-n1-bound}

The N1 lower bound (Theorem~\ref{thm:n1-lower-bound}) characterises
$\anomHR$ exclusion for boundedly nondeterministic judge oracles: a
frontier language model at a sampling temperature whose reasoning
prefix engages the sampling rng, where the admit rate matches the
closed form $2p(1-p)$. The bound applies within that hypothesis class.
Deterministic oracles fall outside the hypothesis because a
deterministic judge admits no replay disagreement to bound, so the
$2p(1-p)$ form is vacuous for them. Engine-layer reconcilers such as
\sysworlddb\ also fall outside the hypothesis: \sysworlddb\ realises
deterministic reconciliation at the engine layer, so $\anomHR$ does
not apply to its write path and the lower bound makes no claim about
it. The bound is therefore tight on the relational schedule model with
a boundedly nondeterministic oracle, and silent on the two adjacent
regimes.

\subsection{The single-process envelope makes no distributed-deployment claim}
\label{app:scope-distributed}

The systems envelope is measured and stated for single-process
deployments (\S\ref{sec:meas-g3}). The $c^{0.86}$ writer-concurrency
fit and the memory-size, conflict-rate, retrieval-selectivity, and
audit-retention axes all charge the local reference stack on one
process. A distributed deployment introduces a provenance-aware quorum
question that the present envelope does not address: whether a
provenance-aware Byzantine quorum can replace $\IsoSI$ with a stronger
level implementable in $\Theta(\log n)$ writes without routing every
read through $\IsoSR$ on the whole fact table remains open
(\S\ref{app:open-problems}). The contribution scopes to the
single-process correctness contract, and the distributed scalability
result is future work.

\subsection{The $\anomHR$ defence holds at intra-deployment replay granularity}
\label{app:scope-judge-drift}

The $\anomHR$ defence (Corollary~\ref{cor:n1-corollary}) holds at
intra-deployment replay granularity for a fixed decoder tuple: within
one deployment, replaying the keyed judge log over the same decoder
tuple reproduces the recorded verdict. Cross-deployment model upgrades
and judge-prompt drift sit outside the isolation lattice because they
change the decoder tuple itself, and the lattice levels are defined
over a fixed tuple. The judge-prompt sensitivity lemma
bounds the cross-deployment divergence that those upgrades and drift
induce, so the boundary is quantified rather than merely flagged. The
isolation lattice governs write-time scheduling for a fixed oracle
configuration, and the judge-prompt sensitivity lemma governs the
behavior across configuration changes.

\subsection{Open Problems}
\label{app:open-problems}

The calculus opens eight problems. Each names what holds today and the
open question that would close it.

\paragraph{Confidence-aware bitemporal indexing.}
A four-dimensional access path ranking rows by $\conf$ within the
$(\subj, \pred)$ partition at any $(\Tvalid, \Tsystem)$ slice would let
$\opEvi$ skip the snapshot read without admitting $\anomBDS$ under
$\IsoSR$. Open: does such an index exist with sub-linear maintenance
under $\IsoSI$ commits when confidence and period revisions are
non-independent? \textsc{BeliefShift}~\cite{myakala-2026-beliefshift} and
\textsc{Memora}~\cite{uddin-2026-memora} measure the downstream $\anomBDS$
symptom on natural workloads.

\paragraph{Cross-modal bitemporal extensions.}
Image- and audio-typed facts about the same $(\subj, \pred)$ require a
deterministic similarity predicate for the four operators to preserve
their isolation signatures. Open: which similarity predicates over
embedding spaces admit such an operator closed under composition with
Allen's thirteen relations~\cite{allen-1983-cacm}?

\paragraph{Distributed bitemporal and adaptive keys.}
Adversarial writers (prompt injection, tool-output hallucination) violate
the $\IsoSI$ read invariant by construction, and fixed $(\subj, \pred)$
keys cover only the evaluated operator while stale-premise,
relation-poisoning, and cascade-repair workloads stress broader
dependency regions~\cite{chao-2026-stale, luo-2026-shadowmerge}. Open: can
a provenance-aware Byzantine quorum replace $\IsoSI$ with a stronger level
implementable in $\Theta(\log n)$ writes without routing every read
through $\IsoSR$ on the whole fact table, and can a substrate refine
isolation keys from repair cascades while preserving serializable writes
per exposed key?

\paragraph{GDPR-preserving time-travel.}
GDPR Article~17 erasure and Article~5 trace retention pull in opposite
directions. Open: which carrier $K$ admits quotienting $\witness$ by
user-bound write tokens while keeping a typed erasure operator
$\oplus_{\!\bot}$ composable with the four existing operators?

\paragraph{Compression-utility axis.}
\citet{zou-2026-demem} cast agent memory as a decision-centric
rate-distortion problem with a near-minimax-regret forgetting boundary
(orthogonal to ours: \textsc{DeMem} owns what compresses safely under a
decision-relevance loss, the algebra owns which schedules avoid N1/N2/N3
under named isolation), and \citet{zhang-2026-compression-spectrum} unify
memory, skills, and rules on one compression spectrum ($5$--$20\times$
episodic, $50$--$500\times$ procedural, $1{,}000\times$+ declarative),
reporting a cross-community citation rate below $1\%$ across $1{,}136$
references in $22$ primary papers and a ``missing diagonal'' where no
shipped system supports adaptive cross-level compression. Open: does a
\textsc{DeMem} forgetting boundary composed with the four operators
preserve $\IsoSR$ on the surviving $(\subj, \pred)$ partition (or does a
confidence-weighted carrier $K[X, T]$ become load-bearing), and does the
dual-row signature $\Fact \times \Fact \to (\Fact, \auditt)$ lift to a
tier-indexed $\Fact \times \Fact \to (\Tier, \auditt)$ that types the
missing diagonal so an audit row witnesses every episodic-to-skill-or-rule
promotion and extends Theorem~\ref{thm:audit-erasure-schema}'s $\anomAE$
defence to compression-induced erasure across tiers?

\paragraph{Argumentation-driven retrieval.}
\citet{sadowski-2026-rashomon} surface parallel goal-conditioned
perspectives at read time using Dung argumentation semantics over an
attack graph of conflicting items, complementary on the
resolve-versus-surface dimension. Open: does a fifth operator
$\oplus_{\mathrm{s}}$ typed at the schedule layer deferring to Dung's
preferred-extension semantics compose with $\opLWW$, $\opEvi$, $\opAwait$,
$\opRule$ under $\IsoSR$, lifting the runtime mechanism into the type
theory?

\paragraph{Neuro-symbolic conflict detection.}
\citet{xie-2026-neuro-symbolic-sat} translate clinical-guideline rules
into SAT clauses to detect local and global conflicts at retrieval time
(their pipeline owns detection, the calculus owns typed resolution). Open:
can the audit row of Theorem~\ref{thm:audit-erasure-schema} act as the
lineage proof for a downstream SAT-encoded explanation of a chosen
local-conflict resolution, making the typed witness consumable by a
logic-level reasoner?

\paragraph{Threat-model integration with deployment governance.}
The algebra formalizes write-time correctness for well-formed writes from
registered principals; adversarial injection through the normal
memory-update path is the orthogonal threat surface. \textsc{Sleeper
Memory Poisoning}~\cite{2026-sleeper-memory-poisoning} scales the gap: on
six frontier models in the external memory-manager regime our
implementation occupies, no evaluated defence drives injection to zero and
goal-adjacent adversarial usage reaches $60$--$89\%$ of subsequent
retrievals. A complete deployment composes three layers: proof-derived
authorisation above the agent action gate
(\textsc{Verifiable Agentic Infrastructure}~\cite{2026-verifiable-agentic-infrastructure});
our audit-row schema below, making every resolution event durable; and
\textsc{MemLineage}'s~\cite{ouyang-2026-memlineage} cryptographic lineage
threaded through the schema's $\witness$ column to bind the two (it blocks
sensitive-action dispatches with derived-untrusted ancestry). Open: does
the $K$-relation carrier map into \textsc{MemLineage}'s
max-of-strong-edges trust lattice so the audit row's
$p_{\mathrm{w}} \provadd p_{\mathrm{l}}$ annotation simultaneously
satisfies Theorem~\ref{thm:audit-erasure-schema}'s $\preceq_K$ comparison
and \textsc{MemLineage}'s untrusted-path persistence, and does a fifth
$\opDetect$ operator typed at $\IsoSI$ with a detection-oracle callback
slot into the lattice join without weakening the four operators' isolation
signatures?

\subsection{Deferred Wire Imports and Benchmark Integrations}
\label{app:deferred-wires}

A 2026-04 to 2026-05 frontier-landscape survey surfaced five candidate
extensions to the AnomalyWire layer of \S\ref{sec:measurement}: three
open-source agent-memory projects under active development and two new
external benchmarks. We assessed each under the per-extension
working-day caps adopted for scoping discipline (three days per wire
import, two to three days per benchmark integration). All five exceeded
the cap. The cap-and-convert outcomes appear below so a reader can
audit which extensions we considered, why they did not land, and on
what terms a follow-on revision can complete them.

Table~\ref{tab:deferred-wires} records six of the seven, each with the
structural blocker that exceeds the cap and the predicate it would map to;
the \sysgraphiti-Neo4j scope-out, which carries its own measurement
detail, follows the table.

\begin{table*}[!ht]
  \centering\footnotesize
  \caption{Deferred wire imports and benchmark integrations. Each
  exceeded the per-extension cap (three working days per wire, two to
  three per benchmark) for the named blocker. ``Maps to'' names the
  predicate the extension would test.}
  \label{tab:deferred-wires}
  \setlength{\tabcolsep}{4pt}
  \renewcommand{\arraystretch}{1.1}
  \begin{tabular}{@{}>{\raggedright\arraybackslash}p{0.135\textwidth}
                     >{\raggedright\arraybackslash}p{0.085\textwidth}
                     >{\raggedright\arraybackslash}p{0.520\textwidth}
                     >{\raggedright\arraybackslash}p{0.150\textwidth}@{}}
    \toprule
    Extension & Type & Blocker (estimated effort) & Maps to \\
    \midrule
    \textsc{AgentMemory}~\cite{agentmemory-2026} & wire (TS) & first-class contradicts/supersedes/extends/derives/related relation kinds plus by-convention audit; Node subprocess + REST, so wiring needs a cross-language sidecar and an HTTP shim for the deterministic-judge stub to intercept the write-time call site ($5$--$7$ days $>$ $3$-day cap) & N1/N2/N3 \\
    \textsc{Supermemory}~\cite{shah-2026-supermemory} & wire (TS) & MIT-licensed, claims top \textsc{LongMemEval}/\textsc{LoCoMo}/\textsc{ConvoMem}; every operation is an outbound round-trip to a paid managed cloud endpoint with no in-process intercept, so a live account (non-hermetic, billed) or a hermetic REST mock is needed ($>$ $3$-day cap) & N1/N2/N3 \\
    \textsc{TencentDB-Agent-Memory}~\cite{tencent-2026-tencentdb-agent-memory} & wire (TS) & MIT-licensed four-tier pipeline (raw / extracted records / scene blocks / persona); raw-layer-append-only is a by-convention recorder, not a typed audit per operator (the extraction-tier dedup leaves no witness), and a TS adapter + Node sidecar would not differentiate from \sysgraphiti's ``raw preserved, resolution opaque'' verdict & N3 \\
    \textsc{HaluMem}~\cite{chen-2025-halumem} & benchmark & $700$ sessions, $15{,}000$ memory points, $3{,}500$ questions (Extract/Update/QA); Update maps one-to-one to N3 via gold $(m_{\mathrm{old}}, m_{\mathrm{new}})$ to \code{n3_audit_erasure}, but \texttt{CC-BY-NC-ND~4.0} no-derivatives is legally ambiguous against PVLDB availability (needs unbounded written permission) and free-form plaintext needs a non-deterministic LLM preprocessing pass & N3 (Update) \\
    \textsc{STALE}~\cite{chao-2026-stale} & benchmark & $400$ scenarios, $1{,}200$ queries (State Resolution, Premise Resistance, Implicit Policy Adaptation) under CC-BY-4.0; State Resolution and Premise Resistance map to N1, Implicit Policy Adaptation is an N2 correlate needing a projection; the $55.2\%$ best-frontier figure comes from $1{,}200$ prompts at $150\,$K context to a closed API the bench cannot call & N1 (corroboration) \\
    \textsc{GroupMemBench} \cite{groupmembench-2026} & benchmark & four-domain enterprise corpus ($\sim30{,}000$ msgs/domain, $6$--$10$ channels, six question types); sessions of several thousand turns exceed the $200$K-token extractor budget, and the $128$-turn chunked path ($\sim235$ sequential calls/tile) wedged the relay across the pilot. Two completed cells (\texttt{bm25} $0.20$, simulated frontier $0.0$, knowledge-update axis, $n=30$) ship in \code{results/benchmark_f/groupmembench.csv} as transparency evidence only & N3 (knowledge update) \\
    \bottomrule
  \end{tabular}
\end{table*}

\paragraph{\textsc{Graphiti}-Neo4j scope-out under V6 matched retrieval.}
\label{para:graphiti-neo4j-scope-out}
The V6 matched-retrieval cells against \sysgraphiti on LoCoMo and
LongMemEval carry structured \texttt{status=n/a} abstention rows with
\texttt{abstention\_kind=service\_boundary}: \sysgraphiti needs a running
Neo4j graph database, the reviewer-reproducible host does not deploy Neo4j
as a service-boundary dependency, and the closest standalone deployment
exceeds the per-wire cap once credentials, schema migration, and
reproducible seeding are scoped. The transcribed \sysgraphiti wire in
\S\ref{sec:meas-g1} preserves the cited-commit N1/N2/N3 verdicts under the
AnomalyWire dispatch path (which runs without live Neo4j); the deferred V6
cells concern only the matched-retrieval utility surface and rerun via the
existing \code{cross_system_runner} entry points once a live cluster is
reinstated.

\subsection{Experimental Extensions for a Future Revision}
\label{app:experimental-extensions}

The 2026-05 adversarial-review cascade flagged six compute-bound
experimental extensions that strengthen the empirical chapter when run at
scale. The manuscript's positioning holds the typed operator algebra, the
soundness theorems, and the AnomalyClaim and AnomalyWire artefacts as the
contribution; each extension below is a strengthening pass rather than a
load-bearing addition. Table~\ref{tab:experimental-extensions} names each
item, why it is deferred, and the smallest follow-on protocol that closes
it.

\begin{table*}[!ht]
  \centering\footnotesize
  \caption{Compute-bound experimental extensions for a future revision.
  Each is a strengthening pass on the scoped empirical slice, not a
  load-bearing addition.}
  \label{tab:experimental-extensions}
  \setlength{\tabcolsep}{4pt}
  \renewcommand{\arraystretch}{1.1}
  \begin{tabular}{@{}>{\raggedright\arraybackslash}p{0.18\textwidth}
                     >{\raggedright\arraybackslash}p{0.40\textwidth}
                     >{\raggedright\arraybackslash}p{0.36\textwidth}@{}}
    \toprule
    Extension & Why deferred & Smallest closing protocol \\
    \midrule
    Hosted \sysmem~v2 published-config rerun & the \sysmem~v3 cross-system row (Table~\ref{tab:cross-system-utility}) runs with five configuration deltas from the published operating point (Appendix~A.7); the paper leads with the algebra and verdict matrix, not a single cross-system delta & a second cell under the published config (hosted v2 client, larger top-$k$, graph mode on, published extractor, $\ge3$ LoCoMo conversations), lifting harness-vs-harness to system-vs-system under bounded compute \\
    Judge-pin discriminating-workload run & the G2 primary judge-pin diagonal reports $\Delta_{\mathrm{accuracy}} = 0.00$ on LoCoMo with four of six off-target controls also moving positive; judge-pin is positioned as the N1 checker-detected predicate, the $\Delta_{\mathrm{accuracy}}$ a secondary channel & a run on an \textsc{ImpossibleBench}-style constructed workload, producing a positive delta or confirming inertia, independent of the algebra's load-bearing claim \\
    G3 scaling to larger $n$ per rung & the envelope reports $\mathrm{mean}\sim c^{0.86}$ at $R^{2}=0.99$ across $30$ samples/rung, where the $p_{99}$ estimator is the sample maximum (so the body keeps the mean envelope, drops $p_{99}$) & scale to $10\times$ the sample size per rung across the two main-text axes, stabilising $p_{99}$ from sample maximum to a quantile estimate \\
    Confidence-semiring carrier cell & the G4 grid runs three carriers (multilinear $\mathbb{N}[X,T]$, multi-degree $\mathbb{N}[X,T]^{\#}$, Boolean reduct) with verdict-layer invariance; the $K[X,T]$ carrier is a forward-compatibility hook, and a fourth probabilistic cell would commit to confidence-weighting as load-bearing & a future paper introducing a confidence-weighted resolution operator lands the fourth carrier cell and revises the classification \\
    PostgreSQL-backend confirmation for G3 & G3 runs against the in-memory backend; the \S\ref{app:scope-g3} pilot establishes the \texttt{SERIALIZABLE} saturation point ($18$ of $36$ commits at concurrency two, $11$ of $36$ at four) & a full confirmation run extending the pilot across the remaining G3 axes via a single SQL-layer adapter on the existing G3 runner, testing platform-independence of $c^{0.86}$ \\
    \textsc{BEAM} synthetic memory benchmark & G1 runs three benchmarks (LongMemEval-S, LoCoMo, MultiTQ) over $33$ wire cells; the breadth claim rests on the $33$-cell matrix not benchmark count, and no reviewer flagged \textsc{BEAM} & land the \textsc{BEAM} integration as a fourth benchmark axis testing composability under perturbation \\
    \bottomrule
  \end{tabular}
\end{table*}

% @owner       src::appendix::H-extended-related
% @does        Appendix H: extended related work and positioning. The 12-page
%              body (Section sec:related) positions TOKI against its closest
%              isolation-theory and agent-memory-benchmark neighbors; this
%              appendix extends that positioning across five further
%              literatures the page budget could not hold, naming for each what
%              it contributes and where the typed write-time operator algebra of
%              TOKI sits relative to it.
% @needs       src/macros.tex (\sysours, \anomHR, \anomBDS, \anomAE, \IsoSR,
%              \IsoSI, \opLWW); src/sections/02-related-work.tex (sec:related);
%              src/sections/03-foundations.tex (sec:algebra); refs-paper.bib.
% @feeds       src/supplement.tex (renders as Appendix H).
% @breaks      a citation characterized beyond what its entry supports; a claim
%              of novelty that an engaged work already makes; overlap with the
%              body Section sec:related rather than extension of it.
% @claim       Across agent-memory architectures, semantic-operator systems,
%              provenance and semiring lineage, bitemporal data models, belief
%              revision, and agentic concurrency control, no prior line types
%              the contradiction-resolution operator as a write-time isolation
%              object, which is the surface TOKI occupies.

\section{Extended related work and positioning}
\label{app:extended-related}

The body (Section~\ref{sec:related}) positions \sysours\ against the
isolation-theory and agent-memory-benchmark neighbors closest to its
claims. This appendix extends that positioning across five further
literatures. Each subsection states what the line contributes and where
the typed write-time operator algebra of Section~\ref{sec:algebra} sits
relative to it. The recurring boundary is consistent. Prior lines fix
the storage substrate, the read-path operator, the provenance carrier,
or the belief-update rule. \sysours\ supplies the missing piece: a type
for the contradiction-resolution operator together with the isolation
level its correctness assumes.

\subsection{Agent-memory architectures}

Agent-memory architectures organize how an agent writes, retrieves, and
consolidates long-horizon state. MemGPT casts the language model as an
operating system that pages memory between a bounded context window and
an external store~\cite{packer-2023-memgpt}. Generative Agents store
episodic observations and synthesize them through periodic
reflection~\cite{park-2023-generative-agents}. HippoRAG indexes
long-term memory through a knowledge-graph retrieval structure inspired
by hippocampal indexing~\cite{gutierrez-2024-hipporag}. Recent systems
push this substrate toward databases: MemoriesDB organizes agent memory
as a temporal-semantic-relational store~\cite{ward-2025-memoriesdb}, and
Engram backs coding-agent memory with a bitemporal graph~\cite{rawcontext-engram-2026}.
A recent survey catalogues the mechanisms and evaluation gaps of this
space~\cite{du-2026-survey}. These architectures decide where a fact is
stored and how it is recalled; \sysours\ types the operator that decides
which of two contradicting facts survives a write and states the
isolation level (Section~\ref{sec:algebra}) under which that decision is
sound.

\subsection{Semantic-operator data systems}

A parallel line at the data-systems venues types the language model as a
relational operator and optimizes pipelines of such operators. LOTUS
defines semantic operators over tables and optimizes them with accuracy
guarantees~\cite{patel-2025-lotus}; Abacus adds a cost-based optimizer
for semantic-operator systems~\cite{russo-2026-abacus}; DocETL rewrites
agentic document-processing pipelines~\cite{shankar-2025-docetl}. The
broader argument that hand-crafted systems give way to learned and
language-model components frames the trend~\cite{idreos-2025-calculators},
and the agent-first redesign of data systems extends it to the agent
setting~\cite{liu-2026-overlords}. These systems type the language model
on the read and query path, where the operator transforms or selects
data the caller already trusts; \sysours\ types it on the write path,
where the operator adjudicates a contradiction and commits one survivor,
making soundness a question of which interleavings of competing writes
the operator admits.

\subsection{Provenance and semiring lineage}

The $K$-relation provenance carrier of Section~\ref{sec:foundations}
rests on the semiring provenance tradition. Buneman, Khanna, and Tan
characterize why- and where-provenance~\cite{buneman-khanna-tan-2001},
and Cheney, Chiticariu, and Tan survey the why, how, and where
dimensions~\cite{cheney-chiticariu-tan-2009}. Geerts, Poggi, and Tannen
delimit where provenance for queries with difference reaches its
limit~\cite{geerts-poggi-tannen-2013}, a boundary the audit-erasure
recovery of Theorem~\ref{thm:audit-erasure-schema} respects by working
in a semiring without multiplicative inverses. The algebraic backbone
extends to weighted shortest-distance computation~\cite{mohri-2002-weighted}
and to provenance for lightweight description
logics~\cite{bourgaux-2023-semiring-dl}. This tradition explains how a
query result traces to its inputs; \sysours\ applies the same carrier to
the audit row a contradiction-resolution operator emits so a superseded
fact stays recoverable under the natural order.

\subsection{Bitemporal and temporal data models}

Temporal and bitemporal data models supply the storage substrate the
operator algebra writes onto. Lorentzos and Johnson extend relational
algebra to manipulate temporal data~\cite{lorentzos-johnson-1988}, and
production bitemporal stores realize the model: Datomic records an
information model with time, provenance, and accumulation~\cite{datomic-overview},
and XTDB exposes bitemporal SQL~\cite{xtdb-docs}.
\textsc{RoMem}~\cite{li-2026-romem} learns per-relation volatility and
phase-rotates obsolete facts out of retrieval reach without deleting
them, a representation-time analogue of the audit-row preservation our
schema axis types. These systems make
valid time and transaction time first-class; \sysours\ assumes such a
substrate and adds the layer above it, where the four contradiction
strategies $\opLWW$, $\opEvi$, $\opAwait$, and $\opRule$ become typed
operators whose write-time anomalies the isolation guards exclude.

\subsection{Belief revision and contradiction handling}

Belief revision formalizes how an agent should change what it believes
when a new fact contradicts an old one. Recent work studies iterated
revision with belief algebras~\cite{meng-2025-belief-algebras} and
revision over fuzzy belief bases~\cite{booth-richter-2012-fuzzy-belief-base},
and a graph-native cognitive memory gives formal belief-revision
semantics for versioned agent memory~\cite{park-2026-kumiho}.
\citet{roynard-2026-knowledge-layer} argues each memory layer needs
distinct persistence semantics (indefinite supersession, evidence-gated
revision) and stops at the layer decomposition. This line
answers which belief should hold after a contradiction; \sysours\ answers
which schedule of applying that revision is admissible under concurrent
writers, and which write-time anomalies ($\anomHR$, $\anomBDS$, $\anomAE$)
a given isolation level admits. The revision rule and the isolation level
are independent axes, connected by the alphabet bridge of
Lemma~\ref{lem:judge-callback-bridge}.

\subsection{Concurrency control and context-memory conflict}

The write-time concurrency framing of \sysours\ has two close neighbors.
Adaptive concurrency control for unforeseen agentic transactions tunes
isolation for agent workloads~\cite{zhou-2026-atcc}, and altruistic
locking is the classical strategy for long-lived
transactions~\cite{salem-garciamolina-1989}, the regime an agent's
multi-step write resembles. On the empirical side, recent work
characterizes context-memory conflict as predictable
regimes~\cite{three-regimes-conflict-2026} and reconciles such conflicts
dynamically~\cite{dcrd-context-memory-2026}. These efforts tune or
describe the conflict. \sysours\ types it: the replay-inconsistency,
belief-drift-skew, and audit-erasure predicates are the write-time
anomalies a concurrency-control scheme must exclude, and
Theorem~\ref{thm:n1-lower-bound} lower-bounds the replay anomaly any
system without keyed-judge-log discipline admits.

% ---- One bibliography for body + appendices ------------------------
\bibliographystyle{ACM-Reference-Format}
\bibliography{refs-paper}

%%% -*-BibTeX-*-
%%% Do NOT edit. File created by BibTeX with style
%%% ACM-Reference-Format-Journals [18-Jan-2012].

\begin{thebibliography}{86}

%%% ====================================================================
%%% NOTE TO THE USER: you can override these defaults by providing
%%% customized versions of any of these macros before the \bibliography
%%% command.  Each of them MUST provide its own final punctuation,
%%% except for \shownote{}, \showDOI{}, and \showURL{}.  The latter two
%%% do not use final punctuation, in order to avoid confusing it with
%%% the Web address.
%%%
%%% To suppress output of a particular field, define its macro to expand
%%% to an empty string, or better, \unskip, like this:
%%%
%%% \newcommand{\showDOI}[1]{\unskip}   % LaTeX syntax
%%%
%%% \def \showDOI #1{\unskip}           % plain TeX syntax
%%%
%%% ====================================================================

\ifx \showCODEN    \undefined \def \showCODEN     #1{\unskip}     \fi
\ifx \showDOI      \undefined \def \showDOI       #1{#1}\fi
\ifx \showISBNx    \undefined \def \showISBNx     #1{\unskip}     \fi
\ifx \showISBNxiii \undefined \def \showISBNxiii  #1{\unskip}     \fi
\ifx \showISSN     \undefined \def \showISSN      #1{\unskip}     \fi
\ifx \showLCCN     \undefined \def \showLCCN      #1{\unskip}     \fi
\ifx \shownote     \undefined \def \shownote      #1{#1}          \fi
\ifx \showarticletitle \undefined \def \showarticletitle #1{#1}   \fi
\ifx \showURL      \undefined \def \showURL       {\relax}        \fi
% The following commands are used for tagged output and should be
% invisible to TeX
\providecommand\bibfield[2]{#2}
\providecommand\bibinfo[2]{#2}
\providecommand\natexlab[1]{#1}
\providecommand\showeprint[2][]{arXiv:#2}

\bibitem[\protect\citeauthoryear{Abtahi, Rahnema, Patel, Patel, Fekri, and
  Khani}{Abtahi et~al\mbox{.}}{2026}]%
        {memanto-2026}
\bibfield{author}{\bibinfo{person}{Seyed~Moein Abtahi}, \bibinfo{person}{Rasa
  Rahnema}, \bibinfo{person}{Hetkumar Patel}, \bibinfo{person}{Neel Patel},
  \bibinfo{person}{Majid Fekri}, {and} \bibinfo{person}{Tara Khani}.}
  \bibinfo{year}{2026}\natexlab{}.
\newblock \showarticletitle{{Memanto}: Typed Semantic Memory with
  Information-Theoretic Retrieval for Long-Horizon Agents}.
\newblock \bibinfo{journal}{\emph{arXiv preprint arXiv:2604.22085}}
  (\bibinfo{year}{2026}).
\newblock
\urldef\tempurl%
\url{https://arxiv.org/abs/2604.22085}
\showURL{%
\tempurl}


\bibitem[\protect\citeauthoryear{Adya, Liskov, and O'Neil}{Adya
  et~al\mbox{.}}{2000}]%
        {adya-2000-generalized}
\bibfield{author}{\bibinfo{person}{Atul Adya}, \bibinfo{person}{Barbara
  Liskov}, {and} \bibinfo{person}{Patrick~E. O'Neil}.}
  \bibinfo{year}{2000}\natexlab{}.
\newblock \showarticletitle{Generalized Isolation Level Definitions}. In
  \bibinfo{booktitle}{\emph{Proceedings of the 16th International Conference on
  Data Engineering ({ICDE})}}. \bibinfo{pages}{67--78}.
\newblock
\urldef\tempurl%
\url{https://doi.org/10.1109/ICDE.2000.839388}
\showDOI{\tempurl}


\bibitem[\protect\citeauthoryear{Allen}{Allen}{1983}]%
        {allen-1983-cacm}
\bibfield{author}{\bibinfo{person}{James~F. Allen}.}
  \bibinfo{year}{1983}\natexlab{}.
\newblock \showarticletitle{Maintaining Knowledge about Temporal Intervals}.
\newblock \bibinfo{journal}{\emph{Commun. ACM}} \bibinfo{volume}{26},
  \bibinfo{number}{11} (\bibinfo{year}{1983}), \bibinfo{pages}{832--843}.
\newblock
\urldef\tempurl%
\url{https://doi.org/10.1145/182.358434}
\showDOI{\tempurl}


\bibitem[\protect\citeauthoryear{Arab, Lee, Glavic, Niu, Lee, and Heinis}{Arab
  et~al\mbox{.}}{2017}]%
        {arab-glavic-2017-mv-semirings}
\bibfield{author}{\bibinfo{person}{Bahareh~Sadat Arab},
  \bibinfo{person}{Su~Feng Lee}, \bibinfo{person}{Boris Glavic},
  \bibinfo{person}{Xing Niu}, \bibinfo{person}{Seokki Lee}, {and}
  \bibinfo{person}{Thomas Heinis}.} \bibinfo{year}{2017}\natexlab{}.
\newblock \showarticletitle{Using Reenactment to Retroactively Capture
  Provenance for Transactions}. In \bibinfo{booktitle}{\emph{Proceedings of the
  2017 IEEE 33rd International Conference on Data Engineering}}
  \emph{(\bibinfo{series}{ICDE 2017})}. \bibinfo{publisher}{IEEE},
  \bibinfo{pages}{1077--1088}.
\newblock
\urldef\tempurl%
\url{https://doi.org/10.1109/ICDE.2017.149}
\showDOI{\tempurl}


\bibitem[\protect\citeauthoryear{Banerjee, Moshtaghi, Subramanian, Misra, and
  Chadha}{Banerjee et~al\mbox{.}}{2026}]%
        {banerjee-2026-apex-mem}
\bibfield{author}{\bibinfo{person}{Pratyay Banerjee}, \bibinfo{person}{Masud
  Moshtaghi}, \bibinfo{person}{Shivashankar Subramanian},
  \bibinfo{person}{Amita Misra}, {and} \bibinfo{person}{Ankit Chadha}.}
  \bibinfo{year}{2026}\natexlab{}.
\newblock \showarticletitle{{APEX-MEM}: Agentic Semi-Structured Memory with
  Temporal Reasoning for Long-Term Conversational {AI}}.
\newblock \bibinfo{journal}{\emph{arXiv preprint arXiv:2604.14362}}
  (\bibinfo{year}{2026}).
\newblock
\urldef\tempurl%
\url{https://arxiv.org/abs/2604.14362}
\showURL{%
\tempurl}


\bibitem[\protect\citeauthoryear{Barros, Cunha, Pereira, and Kang}{Barros
  et~al\mbox{.}}{2026}]%
        {barros-2026-isolde}
\bibfield{author}{\bibinfo{person}{Manuel Barros}, \bibinfo{person}{Alcino
  Cunha}, \bibinfo{person}{Jose Pereira}, {and} \bibinfo{person}{Eunsuk Kang}.}
  \bibinfo{year}{2026}\natexlab{}.
\newblock \showarticletitle{Reasoning about Transactional Isolation Levels with
  {Isolde}}.
\newblock \bibinfo{journal}{\emph{arXiv preprint arXiv:2604.00159}}
  (\bibinfo{year}{2026}).
\newblock
\urldef\tempurl%
\url{https://arxiv.org/abs/2604.00159}
\showURL{%
\tempurl}


\bibitem[\protect\citeauthoryear{Berenson, Bernstein, Gray, Melton, O'Neil, and
  O'Neil}{Berenson et~al\mbox{.}}{1995}]%
        {berenson-1995-isolation}
\bibfield{author}{\bibinfo{person}{Hal Berenson}, \bibinfo{person}{Philip~A.
  Bernstein}, \bibinfo{person}{Jim Gray}, \bibinfo{person}{Jim Melton},
  \bibinfo{person}{Elizabeth~J. O'Neil}, {and} \bibinfo{person}{Patrick~E.
  O'Neil}.} \bibinfo{year}{1995}\natexlab{}.
\newblock \showarticletitle{A Critique of {ANSI} {SQL} Isolation Levels}. In
  \bibinfo{booktitle}{\emph{Proceedings of the 1995 {ACM} {SIGMOD}
  International Conference on Management of Data}}. \bibinfo{publisher}{ACM},
  \bibinfo{pages}{1--10}.
\newblock
\urldef\tempurl%
\url{https://doi.org/10.1145/223784.223785}
\showDOI{\tempurl}


\bibitem[\protect\citeauthoryear{Bernstein, Hadzilacos, and Goodman}{Bernstein
  et~al\mbox{.}}{1987}]%
        {bernstein-hadzilacos-goodman-1987}
\bibfield{author}{\bibinfo{person}{Philip~A. Bernstein},
  \bibinfo{person}{Vassos Hadzilacos}, {and} \bibinfo{person}{Nathan Goodman}.}
  \bibinfo{year}{1987}\natexlab{}.
\newblock \bibinfo{booktitle}{\emph{Concurrency Control and Recovery in
  Database Systems}}.
\newblock \bibinfo{publisher}{Addison-Wesley}.
\newblock
\showISBNx{0-201-10715-5}
\urldef\tempurl%
\url{https://www.microsoft.com/en-us/research/wp-content/uploads/2016/05/ccontrol.zip}
\showURL{%
\tempurl}


\bibitem[\protect\citeauthoryear{Bhargava and Barrento}{Bhargava and
  Barrento}{2026}]%
        {bhargava-2026-memaudit}
\bibfield{author}{\bibinfo{person}{Nishant Bhargava} {and}
  \bibinfo{person}{Rodrigo~Sobral Barrento}.} \bibinfo{year}{2026}\natexlab{}.
\newblock \bibinfo{title}{{MemAudit}: An Exact Package-Oracle Evaluation
  Protocol for Budgeted Long-Term {LLM} Memory Writing}.
\newblock
\newblock
\showeprint[arxiv]{2605.02199}~[cs.AI]
\urldef\tempurl%
\url{https://arxiv.org/abs/2605.02199}
\showURL{%
\tempurl}


\bibitem[\protect\citeauthoryear{Booth and Richter}{Booth and Richter}{2012}]%
        {booth-richter-2012-fuzzy-belief-base}
\bibfield{author}{\bibinfo{person}{Richard Booth} {and} \bibinfo{person}{Eva
  Richter}.} \bibinfo{year}{2012}\natexlab{}.
\newblock \showarticletitle{On Revising Fuzzy Belief Bases}.
\newblock \bibinfo{journal}{\emph{arXiv preprint arXiv:1212.2444}}
  (\bibinfo{year}{2012}).
\newblock
\urldef\tempurl%
\url{https://arxiv.org/abs/1212.2444}
\showURL{%
\tempurl}


\bibitem[\protect\citeauthoryear{Bourgaux, Ozaki, and Pe{\~n}aloza}{Bourgaux
  et~al\mbox{.}}{2023}]%
        {bourgaux-2023-semiring-dl}
\bibfield{author}{\bibinfo{person}{Camille Bourgaux}, \bibinfo{person}{Ana
  Ozaki}, {and} \bibinfo{person}{Rafael Pe{\~n}aloza}.}
  \bibinfo{year}{2023}\natexlab{}.
\newblock \showarticletitle{Semiring Provenance for Lightweight Description
  Logics}.
\newblock \bibinfo{journal}{\emph{arXiv preprint arXiv:2310.16472}}
  (\bibinfo{year}{2023}).
\newblock
\urldef\tempurl%
\url{https://arxiv.org/abs/2310.16472}
\showURL{%
\tempurl}


\bibitem[\protect\citeauthoryear{Brinke, Dawar, Gr{\"a}del, and Pago}{Brinke
  et~al\mbox{.}}{2026}]%
        {brinke-2026-preservation}
\bibfield{author}{\bibinfo{person}{Sophie Brinke}, \bibinfo{person}{Anuj
  Dawar}, \bibinfo{person}{Erich Gr{\"a}del}, {and} \bibinfo{person}{Benedikt
  Pago}.} \bibinfo{year}{2026}\natexlab{}.
\newblock \bibinfo{title}{Preservation Theorems in Semiring Semantics}.
\newblock
\newblock
\showeprint[arxiv]{2605.10829}~[cs.LO]
\urldef\tempurl%
\url{https://arxiv.org/abs/2605.10829}
\showURL{%
\tempurl}


\bibitem[\protect\citeauthoryear{Buneman, Khanna, and Tan}{Buneman
  et~al\mbox{.}}{2001}]%
        {buneman-khanna-tan-2001}
\bibfield{author}{\bibinfo{person}{Peter Buneman}, \bibinfo{person}{Sanjeev
  Khanna}, {and} \bibinfo{person}{Wang-Chiew Tan}.}
  \bibinfo{year}{2001}\natexlab{}.
\newblock \showarticletitle{Why and Where: A Characterization of Data
  Provenance}. In \bibinfo{booktitle}{\emph{Database Theory - {ICDT} 2001, 8th
  International Conference}} \emph{(\bibinfo{series}{Lecture Notes in Computer
  Science})}, Vol.~\bibinfo{volume}{1973}. \bibinfo{publisher}{Springer},
  \bibinfo{pages}{316--330}.
\newblock
\urldef\tempurl%
\url{https://doi.org/10.1007/3-540-44503-X_20}
\showDOI{\tempurl}


\bibitem[\protect\citeauthoryear{Cahill, R{\"o}hm, and Fekete}{Cahill
  et~al\mbox{.}}{2008}]%
        {cahill-2008-ssi}
\bibfield{author}{\bibinfo{person}{Michael~J. Cahill}, \bibinfo{person}{Uwe
  R{\"o}hm}, {and} \bibinfo{person}{Alan~D. Fekete}.}
  \bibinfo{year}{2008}\natexlab{}.
\newblock \showarticletitle{Serializable Isolation for Snapshot Databases}. In
  \bibinfo{booktitle}{\emph{Proceedings of the 2008 {ACM} {SIGMOD}
  International Conference on Management of Data}}. \bibinfo{publisher}{ACM},
  \bibinfo{pages}{729--738}.
\newblock
\urldef\tempurl%
\url{https://doi.org/10.1145/1376616.1376690}
\showDOI{\tempurl}


\bibitem[\protect\citeauthoryear{Chao, Bai, Sheng, Li, and Sun}{Chao
  et~al\mbox{.}}{2026}]%
        {chao-2026-stale}
\bibfield{author}{\bibinfo{person}{Hanxiang Chao}, \bibinfo{person}{Yihan Bai},
  \bibinfo{person}{Rui Sheng}, \bibinfo{person}{Tianle Li}, {and}
  \bibinfo{person}{Yushi Sun}.} \bibinfo{year}{2026}\natexlab{}.
\newblock \showarticletitle{{STALE}: Can {LLM} Agents Know When Their Memories
  Are No Longer Valid?}
\newblock \bibinfo{journal}{\emph{arXiv preprint arXiv:2605.06527}}
  (\bibinfo{year}{2026}).
\newblock
\urldef\tempurl%
\url{https://arxiv.org/abs/2605.06527}
\showURL{%
\tempurl}


\bibitem[\protect\citeauthoryear{Chen, Niu, Li, Liu, Zheng, Tang, Li, Xiong,
  and Li}{Chen et~al\mbox{.}}{2025}]%
        {chen-2025-halumem}
\bibfield{author}{\bibinfo{person}{Ding Chen}, \bibinfo{person}{Simin Niu},
  \bibinfo{person}{Kehang Li}, \bibinfo{person}{Peng Liu},
  \bibinfo{person}{Xiangping Zheng}, \bibinfo{person}{Bo Tang},
  \bibinfo{person}{Xinchi Li}, \bibinfo{person}{Feiyu Xiong}, {and}
  \bibinfo{person}{Zhiyu Li}.} \bibinfo{year}{2025}\natexlab{}.
\newblock \showarticletitle{{HaluMem}: Evaluating Hallucinations in Memory
  Systems of Agents}.
\newblock \bibinfo{journal}{\emph{arXiv preprint arXiv:2511.03506}}
  (\bibinfo{year}{2025}).
\newblock
\urldef\tempurl%
\url{https://doi.org/10.48550/arXiv.2511.03506}
\showDOI{\tempurl}


\bibitem[\protect\citeauthoryear{Chen, Liao, and Zhao}{Chen
  et~al\mbox{.}}{2023}]%
        {chen-2023-multitq}
\bibfield{author}{\bibinfo{person}{Ziyang Chen}, \bibinfo{person}{Jinzhi Liao},
  {and} \bibinfo{person}{Xiang Zhao}.} \bibinfo{year}{2023}\natexlab{}.
\newblock \showarticletitle{Multi-granularity Temporal Question Answering over
  Knowledge Graphs}. In \bibinfo{booktitle}{\emph{Proceedings of the 61st
  Annual Meeting of the Association for Computational Linguistics (Volume 1:
  Long Papers)}}. \bibinfo{publisher}{Association for Computational
  Linguistics}, \bibinfo{pages}{11378--11392}.
\newblock
\urldef\tempurl%
\url{https://doi.org/10.18653/v1/2023.acl-long.637}
\showDOI{\tempurl}


\bibitem[\protect\citeauthoryear{Cheney and {Engram contributors}}{Cheney and
  {Engram contributors}}{2026}]%
        {rawcontext-engram-2026}
\bibfield{author}{\bibinfo{person}{Chris Cheney} {and} \bibinfo{person}{{Engram
  contributors}}.} \bibinfo{year}{2026}\natexlab{}.
\newblock \bibinfo{title}{{Engram}: Bitemporal, Graph-Backed Memory System for
  {AI} Coding Agents}.
\newblock \bibinfo{howpublished}{GitHub repository, AGPL-3.0 licence, latest
  release \texttt{mcp@0.1.9} (2025-12-30), HEAD commit 1553e53
  (2026-01-05T20:10:48Z), 5 stars and 1 fork at survey}.
\newblock
\urldef\tempurl%
\url{https://github.com/rawcontext/engram}
\showURL{%
\tempurl}


\bibitem[\protect\citeauthoryear{Cheney, Chiticariu, and Tan}{Cheney
  et~al\mbox{.}}{2009}]%
        {cheney-chiticariu-tan-2009}
\bibfield{author}{\bibinfo{person}{James Cheney}, \bibinfo{person}{Laura
  Chiticariu}, {and} \bibinfo{person}{Wang-Chiew Tan}.}
  \bibinfo{year}{2009}\natexlab{}.
\newblock \showarticletitle{Provenance in Databases: Why, How, and Where}.
\newblock \bibinfo{journal}{\emph{Foundations and Trends in Databases}}
  \bibinfo{volume}{1}, \bibinfo{number}{4} (\bibinfo{year}{2009}),
  \bibinfo{pages}{379--474}.
\newblock
\urldef\tempurl%
\url{https://doi.org/10.1561/1900000006}
\showDOI{\tempurl}


\bibitem[\protect\citeauthoryear{Chhikara, Khant, Aryan, Singh, and
  Yadav}{Chhikara et~al\mbox{.}}{2025}]%
        {chhikara-2025-mem0}
\bibfield{author}{\bibinfo{person}{Prateek Chhikara}, \bibinfo{person}{Dev
  Khant}, \bibinfo{person}{Saket Aryan}, \bibinfo{person}{Taranjeet Singh},
  {and} \bibinfo{person}{Deshraj Yadav}.} \bibinfo{year}{2025}\natexlab{}.
\newblock \showarticletitle{{Mem0}: Building Production-Ready {AI} Agents with
  Scalable Long-Term Memory}.
\newblock \bibinfo{journal}{\emph{arXiv preprint arXiv:2504.19413}}
  (\bibinfo{year}{2025}).
\newblock
\urldef\tempurl%
\url{https://arxiv.org/abs/2504.19413}
\showURL{%
\tempurl}


\bibitem[\protect\citeauthoryear{Cognitect}{Cognitect}{2026}]%
        {datomic-overview}
\bibfield{author}{\bibinfo{person}{Cognitect}.}
  \bibinfo{year}{2026}\natexlab{}.
\newblock \bibinfo{title}{Datomic: An Information Model with Time, Provenance,
  and Accumulation}.
\newblock \bibinfo{howpublished}{Online documentation}.
\newblock
\urldef\tempurl%
\url{https://docs.datomic.com/datomic-overview.html}
\showURL{%
\tempurl}


\bibitem[\protect\citeauthoryear{Du}{Du}{2026}]%
        {du-2026-survey}
\bibfield{author}{\bibinfo{person}{Pengfei Du}.}
  \bibinfo{year}{2026}\natexlab{}.
\newblock \showarticletitle{Memory for Autonomous {LLM} Agents: Mechanisms,
  Evaluation, and Emerging Frontiers}.
\newblock \bibinfo{journal}{\emph{arXiv preprint arXiv:2603.07670}}
  (\bibinfo{year}{2026}).
\newblock
\urldef\tempurl%
\url{https://arxiv.org/abs/2603.07670}
\showURL{%
\tempurl}


\bibitem[\protect\citeauthoryear{Fekete, Liarokapis, O'Neil, O'Neil, and
  Shasha}{Fekete et~al\mbox{.}}{2005}]%
        {fekete-2005-msis}
\bibfield{author}{\bibinfo{person}{Alan Fekete}, \bibinfo{person}{Dimitrios
  Liarokapis}, \bibinfo{person}{Elizabeth O'Neil}, \bibinfo{person}{Patrick
  O'Neil}, {and} \bibinfo{person}{Dennis Shasha}.}
  \bibinfo{year}{2005}\natexlab{}.
\newblock \showarticletitle{Making Snapshot Isolation Serializable}.
\newblock \bibinfo{journal}{\emph{ACM Transactions on Database Systems}}
  \bibinfo{volume}{30}, \bibinfo{number}{2} (\bibinfo{year}{2005}),
  \bibinfo{pages}{492--528}.
\newblock
\urldef\tempurl%
\url{https://doi.org/10.1145/1071610.1071615}
\showDOI{\tempurl}


\bibitem[\protect\citeauthoryear{Foster, Green, and Tannen}{Foster
  et~al\mbox{.}}{2008}]%
        {foster-green-tannen-2008}
\bibfield{author}{\bibinfo{person}{J.~Nathan Foster}, \bibinfo{person}{Todd~J.
  Green}, {and} \bibinfo{person}{Val Tannen}.} \bibinfo{year}{2008}\natexlab{}.
\newblock \showarticletitle{Annotated {XML}: Queries and Provenance}. In
  \bibinfo{booktitle}{\emph{Proceedings of the 27th ACM SIGMOD-SIGACT-SIGART
  Symposium on Principles of Database Systems}} \emph{(\bibinfo{series}{PODS
  '08})}. \bibinfo{publisher}{ACM}, \bibinfo{pages}{271--280}.
\newblock
\urldef\tempurl%
\url{https://doi.org/10.1145/1376916.1376953}
\showDOI{\tempurl}


\bibitem[\protect\citeauthoryear{Ganesan}{Ganesan}{2026}]%
        {ganesan-2026-worlddb}
\bibfield{author}{\bibinfo{person}{Harish~Santhanalakshmi Ganesan}.}
  \bibinfo{year}{2026}\natexlab{}.
\newblock \bibinfo{title}{{WorldDB}: A Vector Graph-of-Worlds Memory Engine
  with Ontology-Aware Write-Time Reconciliation}.
\newblock \bibinfo{howpublished}{arXiv:2604.18478 [cs.AI]}.
\newblock
\urldef\tempurl%
\url{https://arxiv.org/abs/2604.18478}
\showURL{%
\tempurl}


\bibitem[\protect\citeauthoryear{Geerts, Poggi, and Tannen}{Geerts
  et~al\mbox{.}}{2013}]%
        {geerts-poggi-tannen-2013}
\bibfield{author}{\bibinfo{person}{Floris Geerts}, \bibinfo{person}{Antonella
  Poggi}, {and} \bibinfo{person}{Val Tannen}.} \bibinfo{year}{2013}\natexlab{}.
\newblock \showarticletitle{On the Limitations of Provenance for Queries with
  Difference}. In \bibinfo{booktitle}{\emph{Proceedings of the Theory and
  Practice of Provenance Workshop}} \emph{(\bibinfo{series}{TaPP '13})}.
  \bibinfo{publisher}{USENIX Association}.
\newblock
\urldef\tempurl%
\url{https://doi.org/10.1145/2448496.2448516}
\showDOI{\tempurl}


\bibitem[\protect\citeauthoryear{Ghasemirad, Liu, Sprenger, and
  Basin}{Ghasemirad et~al\mbox{.}}{2025}]%
        {ghasemirad-2025-veriso}
\bibfield{author}{\bibinfo{person}{Shabnam Ghasemirad}, \bibinfo{person}{Si
  Liu}, \bibinfo{person}{Christoph Sprenger}, {and} \bibinfo{person}{David
  Basin}.} \bibinfo{year}{2025}\natexlab{}.
\newblock \showarticletitle{{VerIso}: Verifiable Isolation Guarantees for
  Database Transactions}.
\newblock \bibinfo{journal}{\emph{Proc. {VLDB} Endow.}} \bibinfo{volume}{18},
  \bibinfo{number}{5} (\bibinfo{year}{2025}), \bibinfo{pages}{1362--1375}.
\newblock
\urldef\tempurl%
\url{https://doi.org/10.14778/3718057.3718065}
\showDOI{\tempurl}


\bibitem[\protect\citeauthoryear{Ghumare and {AgentMemory
  contributors}}{Ghumare and {AgentMemory contributors}}{2026}]%
        {agentmemory-2026}
\bibfield{author}{\bibinfo{person}{Rohit Ghumare} {and}
  \bibinfo{person}{{AgentMemory contributors}}.}
  \bibinfo{year}{2026}\natexlab{}.
\newblock \bibinfo{title}{{AgentMemory}: Persistent Memory for {AI} Coding
  Agents}.
\newblock \bibinfo{howpublished}{GitHub repository, Apache-2.0 licence, npm
  package \texttt{@agentmemory/agentmemory}, latest release v0.9.17
  (2026-05-16), HEAD commit c93c715 (2026-05-17T11:15:59Z)}.
\newblock
\urldef\tempurl%
\url{https://github.com/rohitg00/agentmemory}
\showURL{%
\tempurl}


\bibitem[\protect\citeauthoryear{Gr{\"a}del and Tannen}{Gr{\"a}del and
  Tannen}{2024}]%
        {gradel-tannen-2024-semiring-fol}
\bibfield{author}{\bibinfo{person}{Erich Gr{\"a}del} {and} \bibinfo{person}{Val
  Tannen}.} \bibinfo{year}{2024}\natexlab{}.
\newblock \showarticletitle{Provenance Analysis and Semiring Semantics for
  First-Order Logic}.
\newblock \bibinfo{journal}{\emph{arXiv preprint arXiv:2412.07986}}
  (\bibinfo{year}{2024}).
\newblock
\urldef\tempurl%
\url{https://arxiv.org/abs/2412.07986}
\showURL{%
\tempurl}


\bibitem[\protect\citeauthoryear{Green, Karvounarakis, and Tannen}{Green
  et~al\mbox{.}}{2007}]%
        {green-karvounarakis-tannen-2007}
\bibfield{author}{\bibinfo{person}{Todd~J. Green}, \bibinfo{person}{Grigoris
  Karvounarakis}, {and} \bibinfo{person}{Val Tannen}.}
  \bibinfo{year}{2007}\natexlab{}.
\newblock \showarticletitle{Provenance Semirings}. In
  \bibinfo{booktitle}{\emph{Proceedings of the 26th {ACM}
  {SIGACT}-{SIGMOD}-{SIGART} Symposium on Principles of Database Systems
  ({PODS})}}. \bibinfo{pages}{31--40}.
\newblock
\urldef\tempurl%
\url{https://doi.org/10.1145/1265530.1265535}
\showDOI{\tempurl}


\bibitem[\protect\citeauthoryear{Green and Tannen}{Green and Tannen}{2017}]%
        {green-tannen-2017-retrospective}
\bibfield{author}{\bibinfo{person}{Todd~J. Green} {and} \bibinfo{person}{Val
  Tannen}.} \bibinfo{year}{2017}\natexlab{}.
\newblock \showarticletitle{The Semiring Framework for Database Provenance}. In
  \bibinfo{booktitle}{\emph{Proceedings of the 36th {ACM}
  {SIGMOD}-{SIGACT}-{SIGAI} Symposium on Principles of Database Systems
  ({PODS})}}. \bibinfo{pages}{93--99}.
\newblock
\urldef\tempurl%
\url{https://doi.org/10.1145/3034786.3056125}
\showDOI{\tempurl}


\bibitem[\protect\citeauthoryear{Guti{\'e}rrez, Shu, Gu, Yasunaga, and
  Su}{Guti{\'e}rrez et~al\mbox{.}}{2024}]%
        {gutierrez-2024-hipporag}
\bibfield{author}{\bibinfo{person}{Bernal~Jim{\'e}nez Guti{\'e}rrez},
  \bibinfo{person}{Yiheng Shu}, \bibinfo{person}{Yu Gu},
  \bibinfo{person}{Michihiro Yasunaga}, {and} \bibinfo{person}{Yu Su}.}
  \bibinfo{year}{2024}\natexlab{}.
\newblock \showarticletitle{{HippoRAG}: Neurobiologically Inspired Long-Term
  Memory for Large Language Models}. In \bibinfo{booktitle}{\emph{Advances in
  Neural Information Processing Systems 37 ({NeurIPS} 2024)}}.
\newblock
\urldef\tempurl%
\url{https://arxiv.org/abs/2405.14831}
\showURL{%
\tempurl}


\bibitem[\protect\citeauthoryear{He and Yu}{He and Yu}{2026}]%
        {2026-verifiable-agentic-infrastructure}
\bibfield{author}{\bibinfo{person}{Jun He} {and} \bibinfo{person}{Deying Yu}.}
  \bibinfo{year}{2026}\natexlab{}.
\newblock \bibinfo{title}{Verifiable Agentic Infrastructure: Proof-Derived
  Authorization for Sovereign {AI} Systems}.
\newblock
\newblock
\showeprint[arxiv]{2605.15228}~[cs.AI]
\urldef\tempurl%
\url{https://arxiv.org/abs/2605.15228}
\showURL{%
\tempurl}


\bibitem[\protect\citeauthoryear{Idreos}{Idreos}{2025}]%
        {idreos-2025-calculators}
\bibfield{author}{\bibinfo{person}{Stratos Idreos}.}
  \bibinfo{year}{2025}\natexlab{}.
\newblock \showarticletitle{Alphabets, Grammars, Calculators, and the End of
  Hand-Crafted Systems}.
\newblock \bibinfo{journal}{\emph{Proceedings of the {VLDB} Endowment
  ({PVLDB})}} \bibinfo{volume}{18}, \bibinfo{number}{12}
  (\bibinfo{year}{2025}), \bibinfo{pages}{5537}.
\newblock
\urldef\tempurl%
\url{https://doi.org/10.14778/3750601.3760522}
\showDOI{\tempurl}


\bibitem[\protect\citeauthoryear{Jensen, Dyreson, B{\"o}hlen, Clifford,
  Elmasri, Gadia, Grandi, Hayes, Jajodia, et~al\mbox{.}}{Jensen
  et~al\mbox{.}}{1998}]%
        {jensen-1998-consensus}
\bibfield{author}{\bibinfo{person}{Christian~S. Jensen},
  \bibinfo{person}{Curtis~E. Dyreson}, \bibinfo{person}{Michael B{\"o}hlen},
  \bibinfo{person}{James Clifford}, \bibinfo{person}{Ramez Elmasri},
  \bibinfo{person}{Shashi~K. Gadia}, \bibinfo{person}{Fabio Grandi},
  \bibinfo{person}{Pat Hayes}, \bibinfo{person}{Sushil Jajodia},
  {et~al\mbox{.}}} \bibinfo{year}{1998}\natexlab{}.
\newblock \showarticletitle{The Consensus Glossary of Temporal Database
  Concepts---{F}ebruary 1998 Version}.
\newblock In \bibinfo{booktitle}{\emph{Temporal Databases: Research and
  Practice}}, \bibfield{editor}{\bibinfo{person}{Opher Etzion},
  \bibinfo{person}{Sushil Jajodia}, {and} \bibinfo{person}{Suryanarayana
  Sripada}} (Eds.). \bibinfo{series}{Lecture Notes in Computer Science},
  Vol.~\bibinfo{volume}{1399}. \bibinfo{publisher}{Springer},
  \bibinfo{pages}{367--405}.
\newblock
\urldef\tempurl%
\url{https://doi.org/10.1007/BFb0053710}
\showDOI{\tempurl}


\bibitem[\protect\citeauthoryear{Khan}{Khan}{2026}]%
        {s-bus-2026-khan}
\bibfield{author}{\bibinfo{person}{Sajjad Khan}.}
  \bibinfo{year}{2026}\natexlab{}.
\newblock \bibinfo{title}{{S-Bus}: Automatic Read-Set Reconstruction for
  Multi-Agent {LLM} State Coordination}.
\newblock
\newblock
\showeprint[arxiv]{2605.17076}~[cs.DC]
\urldef\tempurl%
\url{https://arxiv.org/abs/2605.17076}
\showURL{%
\tempurl}


\bibitem[\protect\citeauthoryear{Kulkarni and Michels}{Kulkarni and
  Michels}{2012}]%
        {kulkarni-michels-2012}
\bibfield{author}{\bibinfo{person}{Krishna Kulkarni} {and}
  \bibinfo{person}{Jan-Eike Michels}.} \bibinfo{year}{2012}\natexlab{}.
\newblock \showarticletitle{Temporal Features in {SQL:2011}}.
\newblock \bibinfo{journal}{\emph{ACM SIGMOD Record}} \bibinfo{volume}{41},
  \bibinfo{number}{3} (\bibinfo{year}{2012}), \bibinfo{pages}{34--43}.
\newblock
\urldef\tempurl%
\url{https://doi.org/10.1145/2380776.2380786}
\showDOI{\tempurl}


\bibitem[\protect\citeauthoryear{{Letta Team}}{{Letta Team}}{2026}]%
        {letta-context-constitution-2026}
\bibfield{author}{\bibinfo{person}{{Letta Team}}.}
  \bibinfo{year}{2026}\natexlab{}.
\newblock \bibinfo{title}{Context Constitution: {Letta} Code's Memory
  Filesystem}.
\newblock \bibinfo{howpublished}{Letta engineering blog}.
\newblock
\urldef\tempurl%
\url{https://letta.com/blog/context-constitution}
\showURL{%
\tempurl}


\bibitem[\protect\citeauthoryear{Li, Zhang, Yang, Ma, and Guo}{Li
  et~al\mbox{.}}{2026}]%
        {li-2026-romem}
\bibfield{author}{\bibinfo{person}{Weixian~Waylon Li}, \bibinfo{person}{Jiaxin
  Zhang}, \bibinfo{person}{Xianan~Jim Yang}, \bibinfo{person}{Tiejun Ma}, {and}
  \bibinfo{person}{Yiwen Guo}.} \bibinfo{year}{2026}\natexlab{}.
\newblock \showarticletitle{Time is Not a Label: Continuous Phase Rotation for
  Temporal Knowledge Graphs and Agentic Memory}.
\newblock \bibinfo{journal}{\emph{arXiv preprint arXiv:2604.11544}}
  (\bibinfo{year}{2026}).
\newblock
\urldef\tempurl%
\url{https://arxiv.org/abs/2604.11544}
\showURL{%
\tempurl}


\bibitem[\protect\citeauthoryear{Li, Xi, Li, Chen, Chen, Song, Niu, Wang, Yang,
  Tang, Yu, Zhao, Wang, Liu, Lin, Wang, Huo, Chen, Chen, Li, Tao, Lai, Wu,
  Tang, Wang, Fan, Zhang, Zhang, Yan, Yang, Xu, Xu, Chen, Wang, Yang, Zhang,
  Xu, Chen, and Xiong}{Li et~al\mbox{.}}{2025}]%
        {li-2025-memos}
\bibfield{author}{\bibinfo{person}{Zhiyu Li}, \bibinfo{person}{Chenyang Xi},
  \bibinfo{person}{Chunyu Li}, \bibinfo{person}{Ding Chen},
  \bibinfo{person}{Boyu Chen}, \bibinfo{person}{Shichao Song},
  \bibinfo{person}{Simin Niu}, \bibinfo{person}{Hanyu Wang},
  \bibinfo{person}{Jiawei Yang}, \bibinfo{person}{Chen Tang},
  \bibinfo{person}{Qingchen Yu}, \bibinfo{person}{Jihao Zhao},
  \bibinfo{person}{Yezhaohui Wang}, \bibinfo{person}{Peng Liu},
  \bibinfo{person}{Zehao Lin}, \bibinfo{person}{Pengyuan Wang},
  \bibinfo{person}{Jiahao Huo}, \bibinfo{person}{Tianyi Chen},
  \bibinfo{person}{Kai Chen}, \bibinfo{person}{Kehang Li},
  \bibinfo{person}{Zhen Tao}, \bibinfo{person}{Huayi Lai}, \bibinfo{person}{Hao
  Wu}, \bibinfo{person}{Bo Tang}, \bibinfo{person}{Zhengren Wang},
  \bibinfo{person}{Zhaoxin Fan}, \bibinfo{person}{Ningyu Zhang},
  \bibinfo{person}{Linfeng Zhang}, \bibinfo{person}{Junchi Yan},
  \bibinfo{person}{Mingchuan Yang}, \bibinfo{person}{Tong Xu},
  \bibinfo{person}{Wei Xu}, \bibinfo{person}{Huajun Chen},
  \bibinfo{person}{Haofen Wang}, \bibinfo{person}{Hongkang Yang},
  \bibinfo{person}{Wentao Zhang}, \bibinfo{person}{Zhi-Qin~John Xu},
  \bibinfo{person}{Siheng Chen}, {and} \bibinfo{person}{Feiyu Xiong}.}
  \bibinfo{year}{2025}\natexlab{}.
\newblock \showarticletitle{{MemOS}: A Memory {OS} for {AI} System}.
\newblock \bibinfo{journal}{\emph{arXiv preprint arXiv:2507.03724}}
  (\bibinfo{year}{2025}).
\newblock
\urldef\tempurl%
\url{https://doi.org/10.48550/arXiv.2507.03724}
\showDOI{\tempurl}


\bibitem[\protect\citeauthoryear{Liu, Ponnapalli, Shankar, Zeighami, Zhu,
  Agarwal, Chen, Suwito, Yuan, Stoica, Zaharia, Cheung, Crooks, Gonzalez, and
  Parameswaran}{Liu et~al\mbox{.}}{2026}]%
        {liu-2026-overlords}
\bibfield{author}{\bibinfo{person}{Shu Liu}, \bibinfo{person}{Soujanya
  Ponnapalli}, \bibinfo{person}{Shreya Shankar}, \bibinfo{person}{Sepanta
  Zeighami}, \bibinfo{person}{Alan Zhu}, \bibinfo{person}{Shubham Agarwal},
  \bibinfo{person}{Ruiqi Chen}, \bibinfo{person}{Samion Suwito},
  \bibinfo{person}{Shuo Yuan}, \bibinfo{person}{Ion Stoica},
  \bibinfo{person}{Matei Zaharia}, \bibinfo{person}{Alvin Cheung},
  \bibinfo{person}{Natacha Crooks}, \bibinfo{person}{Joseph~E. Gonzalez}, {and}
  \bibinfo{person}{Aditya~G. Parameswaran}.} \bibinfo{year}{2026}\natexlab{}.
\newblock \bibinfo{title}{Supporting Our {AI} Overlords: Redesigning Data
  Systems to be Agent-First}.
\newblock \bibinfo{howpublished}{Proc.\ {CIDR} 2026, January 18--21,
  Chaminade}.
\newblock
\urldef\tempurl%
\url{https://www.cidrdb.org/cidr2026/papers/p32-liu.pdf}
\showURL{%
\tempurl}


\bibitem[\protect\citeauthoryear{Lorentzos and Johnson}{Lorentzos and
  Johnson}{1988}]%
        {lorentzos-johnson-1988}
\bibfield{author}{\bibinfo{person}{Nikos~A. Lorentzos} {and}
  \bibinfo{person}{Roger~G. Johnson}.} \bibinfo{year}{1988}\natexlab{}.
\newblock \showarticletitle{Extending Relational Algebra to Manipulate Temporal
  Data}. In \bibinfo{booktitle}{\emph{Proceedings of the 14th International
  Conference on Very Large Data Bases ({VLDB})}}. \bibinfo{publisher}{Morgan
  Kaufmann}, \bibinfo{pages}{289--296}.
\newblock
\urldef\tempurl%
\url{https://vldb.org/conf/1988/P289.PDF}
\showURL{%
\tempurl}


\bibitem[\protect\citeauthoryear{Lu, Cheng, Zhang, and Tang}{Lu
  et~al\mbox{.}}{2026}]%
        {lu-2026-mma}
\bibfield{author}{\bibinfo{person}{Yihao Lu}, \bibinfo{person}{Wanru Cheng},
  \bibinfo{person}{Zeyu Zhang}, {and} \bibinfo{person}{Hao Tang}.}
  \bibinfo{year}{2026}\natexlab{}.
\newblock \bibinfo{title}{{MMA}: {M}ultimodal {M}emory {A}gent}.
\newblock
\newblock
\showeprint[arxiv]{2602.16493}~[cs.AI]
\urldef\tempurl%
\url{https://arxiv.org/abs/2602.16493}
\showURL{%
\tempurl}


\bibitem[\protect\citeauthoryear{Luo, Kang, Ji, Liu, Liu, Li, and Peng}{Luo
  et~al\mbox{.}}{2026}]%
        {luo-2026-shadowmerge}
\bibfield{author}{\bibinfo{person}{Yang Luo}, \bibinfo{person}{Zifeng Kang},
  \bibinfo{person}{Tiantian Ji}, \bibinfo{person}{Xinran Liu},
  \bibinfo{person}{Yong Liu}, \bibinfo{person}{Shuyu Li}, {and}
  \bibinfo{person}{Lingyun Peng}.} \bibinfo{year}{2026}\natexlab{}.
\newblock \showarticletitle{{ShadowMerge}: A Novel Poisoning Attack on
  Graph-Based Agent Memory via Relation-Channel Conflicts}.
\newblock \bibinfo{journal}{\emph{arXiv preprint arXiv:2605.09033}}
  (\bibinfo{year}{2026}).
\newblock
\urldef\tempurl%
\url{https://arxiv.org/abs/2605.09033}
\showURL{%
\tempurl}


\bibitem[\protect\citeauthoryear{Maharana, Lee, Tulyakov, Bansal, Barbieri, and
  Fang}{Maharana et~al\mbox{.}}{2024}]%
        {maharana-2024-locomo}
\bibfield{author}{\bibinfo{person}{Adyasha Maharana}, \bibinfo{person}{Dong-Ho
  Lee}, \bibinfo{person}{Sergey Tulyakov}, \bibinfo{person}{Mohit Bansal},
  \bibinfo{person}{Francesco Barbieri}, {and} \bibinfo{person}{Yuwei Fang}.}
  \bibinfo{year}{2024}\natexlab{}.
\newblock \showarticletitle{Evaluating Very Long-Term Conversational Memory of
  {LLM} Agents}. In \bibinfo{booktitle}{\emph{Proceedings of the 62nd Annual
  Meeting of the Association for Computational Linguistics (Volume 1: Long
  Papers)}}. \bibinfo{pages}{13851--13870}.
\newblock
\urldef\tempurl%
\url{https://doi.org/10.18653/v1/2024.acl-long.747}
\showDOI{\tempurl}


\bibitem[\protect\citeauthoryear{{Mem0 Team}}{{Mem0 Team}}{2026}]%
        {mem0ai-readme-2026}
\bibfield{author}{\bibinfo{person}{{Mem0 Team}}.}
  \bibinfo{year}{2026}\natexlab{}.
\newblock \bibinfo{title}{{Mem0} README: New Memory Algorithm (April 2026)}.
\newblock \bibinfo{howpublished}{GitHub repository README}.
\newblock
\urldef\tempurl%
\url{https://github.com/mem0ai/mem0/blob/main/README.md}
\showURL{%
\tempurl}


\bibitem[\protect\citeauthoryear{Meng, Long, Sioutis, and Zhou}{Meng
  et~al\mbox{.}}{2025}]%
        {meng-2025-belief-algebras}
\bibfield{author}{\bibinfo{person}{Hua Meng}, \bibinfo{person}{Zhiguo Long},
  \bibinfo{person}{Michael Sioutis}, {and} \bibinfo{person}{Zhengchun Zhou}.}
  \bibinfo{year}{2025}\natexlab{}.
\newblock \showarticletitle{On Definite Iterated Belief Revision with Belief
  Algebras}.
\newblock \bibinfo{journal}{\emph{arXiv preprint arXiv:2505.06505}}
  (\bibinfo{year}{2025}).
\newblock
\urldef\tempurl%
\url{https://arxiv.org/abs/2505.06505}
\showURL{%
\tempurl}


\bibitem[\protect\citeauthoryear{Messing}{Messing}{2026}]%
        {messing-2026-tee}
\bibfield{author}{\bibinfo{person}{Solomon Messing}.}
  \bibinfo{year}{2026}\natexlab{}.
\newblock \bibinfo{title}{Hidden Measurement Error in {LLM} Pipelines Distorts
  Annotation, Evaluation, and Benchmarking}.
\newblock
\newblock
\showeprint[arxiv]{2604.11581}~[cs.CL]
\urldef\tempurl%
\url{https://arxiv.org/abs/2604.11581}
\showURL{%
\tempurl}


\bibitem[\protect\citeauthoryear{Mohammadi, Potamitis, Klein, Arora, and
  Bindschaedler}{Mohammadi et~al\mbox{.}}{2026}]%
        {atomix-2026-mohammadi}
\bibfield{author}{\bibinfo{person}{Bardia Mohammadi}, \bibinfo{person}{Nearchos
  Potamitis}, \bibinfo{person}{Lars Klein}, \bibinfo{person}{Akhil Arora},
  {and} \bibinfo{person}{Laurent Bindschaedler}.}
  \bibinfo{year}{2026}\natexlab{}.
\newblock \bibinfo{title}{{Atomix}: Timely, Transactional Tool Use for Reliable
  Agentic Workflows}.
\newblock
\newblock
\showeprint[arxiv]{2602.14849}~[cs.DC]
\urldef\tempurl%
\url{https://arxiv.org/abs/2602.14849}
\showURL{%
\tempurl}


\bibitem[\protect\citeauthoryear{Mohri}{Mohri}{2002}]%
        {mohri-2002-weighted}
\bibfield{author}{\bibinfo{person}{Mehryar Mohri}.}
  \bibinfo{year}{2002}\natexlab{}.
\newblock \showarticletitle{Semiring Frameworks and Algorithms for
  Shortest-Distance Problems}.
\newblock \bibinfo{journal}{\emph{Journal of Automata, Languages and
  Combinatorics}} \bibinfo{volume}{7}, \bibinfo{number}{3}
  (\bibinfo{year}{2002}), \bibinfo{pages}{321--350}.
\newblock
\urldef\tempurl%
\url{http://www.cs.nyu.edu/~mohri/pub/jalc.pdf}
\showURL{%
\tempurl}


\bibitem[\protect\citeauthoryear{Myakala, Agrawal, and Manche}{Myakala
  et~al\mbox{.}}{2026}]%
        {myakala-2026-beliefshift}
\bibfield{author}{\bibinfo{person}{Praveen~Kumar Myakala},
  \bibinfo{person}{Manan Agrawal}, {and} \bibinfo{person}{Rahul Manche}.}
  \bibinfo{year}{2026}\natexlab{}.
\newblock \showarticletitle{{BeliefShift}: Benchmarking Temporal Belief
  Consistency and Opinion Drift in {LLM} Agents}.
\newblock \bibinfo{journal}{\emph{arXiv preprint arXiv:2603.23848}}
  (\bibinfo{year}{2026}).
\newblock
\urldef\tempurl%
\url{https://arxiv.org/abs/2603.23848}
\showURL{%
\tempurl}


\bibitem[\protect\citeauthoryear{Ouyang and Hou}{Ouyang and Hou}{2026}]%
        {ouyang-2026-memlineage}
\bibfield{author}{\bibinfo{person}{Ciyan Ouyang} {and} \bibinfo{person}{Rui
  Hou}.} \bibinfo{year}{2026}\natexlab{}.
\newblock \showarticletitle{{MemLineage}: Lineage-Guided Enforcement for {LLM}
  Agent Memory}.
\newblock \bibinfo{journal}{\emph{arXiv preprint arXiv:2605.14421}}
  (\bibinfo{year}{2026}).
\newblock
\urldef\tempurl%
\url{https://arxiv.org/abs/2605.14421}
\showURL{%
\tempurl}


\bibitem[\protect\citeauthoryear{Packer, Wooders, Lin, Fang, Patil, Stoica, and
  Gonzalez}{Packer et~al\mbox{.}}{2023}]%
        {packer-2023-memgpt}
\bibfield{author}{\bibinfo{person}{Charles Packer}, \bibinfo{person}{Sarah
  Wooders}, \bibinfo{person}{Kevin Lin}, \bibinfo{person}{Vivian Fang},
  \bibinfo{person}{Shishir~G. Patil}, \bibinfo{person}{Ion Stoica}, {and}
  \bibinfo{person}{Joseph~E. Gonzalez}.} \bibinfo{year}{2023}\natexlab{}.
\newblock \showarticletitle{{MemGPT}: Towards {LLMs} as Operating Systems}.
\newblock \bibinfo{journal}{\emph{arXiv preprint arXiv:2310.08560}}
  (\bibinfo{year}{2023}).
\newblock
\urldef\tempurl%
\url{https://arxiv.org/abs/2310.08560}
\showURL{%
\tempurl}


\bibitem[\protect\citeauthoryear{Park, O'Brien, Cai, Morris, Liang, and
  Bernstein}{Park et~al\mbox{.}}{2023}]%
        {park-2023-generative-agents}
\bibfield{author}{\bibinfo{person}{Joon~Sung Park}, \bibinfo{person}{Joseph~C.
  O'Brien}, \bibinfo{person}{Carrie~J. Cai}, \bibinfo{person}{Meredith~Ringel
  Morris}, \bibinfo{person}{Percy Liang}, {and} \bibinfo{person}{Michael~S.
  Bernstein}.} \bibinfo{year}{2023}\natexlab{}.
\newblock \showarticletitle{Generative Agents: Interactive Simulacra of Human
  Behavior}. In \bibinfo{booktitle}{\emph{Proceedings of the 36th Annual {ACM}
  Symposium on User Interface Software and Technology ({UIST})}}.
  \bibinfo{pages}{1--22}.
\newblock
\urldef\tempurl%
\url{https://doi.org/10.1145/3586183.3606763}
\showDOI{\tempurl}


\bibitem[\protect\citeauthoryear{Park}{Park}{2026}]%
        {park-2026-kumiho}
\bibfield{author}{\bibinfo{person}{Young~Bin Park}.}
  \bibinfo{year}{2026}\natexlab{}.
\newblock \showarticletitle{Graph-Native Cognitive Memory for {AI} Agents:
  Formal Belief Revision Semantics for Versioned Memory Architectures}.
\newblock \bibinfo{journal}{\emph{arXiv preprint arXiv:2603.17244}}
  (\bibinfo{year}{2026}).
\newblock
\urldef\tempurl%
\url{https://arxiv.org/abs/2603.17244}
\showURL{%
\tempurl}


\bibitem[\protect\citeauthoryear{Patel, Jha, Pan, Gupta, Asawa, Guestrin, and
  Zaharia}{Patel et~al\mbox{.}}{2025}]%
        {patel-2025-lotus}
\bibfield{author}{\bibinfo{person}{Liana Patel}, \bibinfo{person}{Siddharth
  Jha}, \bibinfo{person}{Melissa Pan}, \bibinfo{person}{Harshit Gupta},
  \bibinfo{person}{Parth Asawa}, \bibinfo{person}{Carlos Guestrin}, {and}
  \bibinfo{person}{Matei Zaharia}.} \bibinfo{year}{2025}\natexlab{}.
\newblock \showarticletitle{Semantic Operators and Their Optimization: Enabling
  {LLM}-Based Data Processing with Accuracy Guarantees in {LOTUS}}.
\newblock \bibinfo{journal}{\emph{Proceedings of the {VLDB} Endowment
  ({PVLDB})}} \bibinfo{volume}{18}, \bibinfo{number}{11}
  (\bibinfo{year}{2025}), \bibinfo{pages}{4171--4184}.
\newblock
\urldef\tempurl%
\url{https://doi.org/10.14778/3749646.3749685}
\showDOI{\tempurl}


\bibitem[\protect\citeauthoryear{Pulipaka, Hlebik, Raghav, Abdelnabi, Raina,
  Sheth, and Fritz}{Pulipaka et~al\mbox{.}}{2026}]%
        {2026-sleeper-memory-poisoning}
\bibfield{author}{\bibinfo{person}{Sidharth Pulipaka},
  \bibinfo{person}{Stanislau Hlebik}, \bibinfo{person}{Leonidas Raghav},
  \bibinfo{person}{Sahar Abdelnabi}, \bibinfo{person}{Vyas Raina},
  \bibinfo{person}{Ivaxi Sheth}, {and} \bibinfo{person}{Mario Fritz}.}
  \bibinfo{year}{2026}\natexlab{}.
\newblock \bibinfo{title}{Hidden in Memory: Sleeper Memory Poisoning in {LLM}
  Agents}.
\newblock
\newblock
\showeprint[arxiv]{2605.15338}~[cs.CR]
\urldef\tempurl%
\url{https://arxiv.org/abs/2605.15338}
\showURL{%
\tempurl}


\bibitem[\protect\citeauthoryear{Rasmussen, Paliychuk, Beauvais, Ryan, and
  Chalef}{Rasmussen et~al\mbox{.}}{2025}]%
        {rasmussen-2025-zep}
\bibfield{author}{\bibinfo{person}{Preston Rasmussen}, \bibinfo{person}{Pavlo
  Paliychuk}, \bibinfo{person}{Travis Beauvais}, \bibinfo{person}{Jack Ryan},
  {and} \bibinfo{person}{Daniel Chalef}.} \bibinfo{year}{2025}\natexlab{}.
\newblock \showarticletitle{{Zep}: A Temporal Knowledge Graph Architecture for
  Agent Memory}.
\newblock \bibinfo{journal}{\emph{arXiv preprint arXiv:2501.13956}}
  (\bibinfo{year}{2025}).
\newblock
\urldef\tempurl%
\url{https://arxiv.org/abs/2501.13956}
\showURL{%
\tempurl}


\bibitem[\protect\citeauthoryear{Roynard}{Roynard}{2026}]%
        {roynard-2026-knowledge-layer}
\bibfield{author}{\bibinfo{person}{Micha{\"e}l Roynard}.}
  \bibinfo{year}{2026}\natexlab{}.
\newblock \showarticletitle{The Missing Knowledge Layer in Cognitive
  Architectures for {AI} Agents}.
\newblock \bibinfo{journal}{\emph{arXiv preprint arXiv:2604.11364}}
  (\bibinfo{year}{2026}).
\newblock
\urldef\tempurl%
\url{https://arxiv.org/abs/2604.11364}
\showURL{%
\tempurl}


\bibitem[\protect\citeauthoryear{Russo, Liu, Sudhir, Vitagliano, Cafarella,
  Kraska, and Madden}{Russo et~al\mbox{.}}{2026}]%
        {russo-2026-abacus}
\bibfield{author}{\bibinfo{person}{Matthew Russo}, \bibinfo{person}{Chunwei
  Liu}, \bibinfo{person}{Sivaprasad Sudhir}, \bibinfo{person}{Gerardo
  Vitagliano}, \bibinfo{person}{Michael Cafarella}, \bibinfo{person}{Tim
  Kraska}, {and} \bibinfo{person}{Samuel Madden}.}
  \bibinfo{year}{2026}\natexlab{}.
\newblock \showarticletitle{{Abacus}: A Cost-Based Optimizer for Semantic
  Operator Systems}.
\newblock \bibinfo{journal}{\emph{Proceedings of the {VLDB} Endowment
  ({PVLDB})}} \bibinfo{volume}{19}, \bibinfo{number}{5} (\bibinfo{year}{2026}),
  \bibinfo{pages}{1060--1073}.
\newblock
\urldef\tempurl%
\url{https://www.vldb.org/pvldb/vol19/p1060-russo.pdf}
\showURL{%
\tempurl}


\bibitem[\protect\citeauthoryear{Sadowski and Chudziak}{Sadowski and
  Chudziak}{2026}]%
        {sadowski-2026-rashomon}
\bibfield{author}{\bibinfo{person}{Albert Sadowski} {and}
  \bibinfo{person}{Jaros{\l}aw~A. Chudziak}.} \bibinfo{year}{2026}\natexlab{}.
\newblock \bibinfo{title}{Rashomon Memory: Towards Argumentation-Driven
  Retrieval for Multi-Perspective Agent Memory}.
\newblock
\newblock
\urldef\tempurl%
\url{https://doi.org/10.48550/arXiv.2604.03588}
\showDOI{\tempurl}
\showeprint[arxiv]{2604.03588}


\bibitem[\protect\citeauthoryear{Salem, Garcia-Molina, and Sands}{Salem
  et~al\mbox{.}}{1989}]%
        {salem-garciamolina-1989}
\bibfield{author}{\bibinfo{person}{Kenneth Salem}, \bibinfo{person}{Hector
  Garcia-Molina}, {and} \bibinfo{person}{Jeannie Sands}.}
  \bibinfo{year}{1989}\natexlab{}.
\newblock \showarticletitle{Altruistic Locking: A Strategy for Coping with Long
  Lived Transactions}. In \bibinfo{booktitle}{\emph{Proceedings of the 2nd
  International Workshop on High Performance Transaction Systems}}.
\newblock
\urldef\tempurl%
\url{https://doi.org/10.1145/64162.64173}
\showDOI{\tempurl}


\bibitem[\protect\citeauthoryear{Shah and {Supermemory contributors}}{Shah and
  {Supermemory contributors}}{2026}]%
        {shah-2026-supermemory}
\bibfield{author}{\bibinfo{person}{Dhravya Shah} {and}
  \bibinfo{person}{{Supermemory contributors}}.}
  \bibinfo{year}{2026}\natexlab{}.
\newblock \bibinfo{title}{{Supermemory}: State-of-the-art Memory and Context
  Engine for {AI}}.
\newblock \bibinfo{howpublished}{GitHub repository
  \texttt{supermemoryai/supermemory}, MIT licence, HEAD commit 36ecf47
  (2026-05-17T07:55:12Z), 22{,}596 stars, zero GitHub releases, Python SDK
  \texttt{supermemory-openai-sdk} on PyPI at alpha v1.0.3}.
\newblock
\urldef\tempurl%
\url{https://github.com/supermemoryai/supermemory}
\showURL{%
\tempurl}


\bibitem[\protect\citeauthoryear{Shankar, Chambers, Shah, Parameswaran, and
  Wu}{Shankar et~al\mbox{.}}{2025}]%
        {shankar-2025-docetl}
\bibfield{author}{\bibinfo{person}{Shreya Shankar}, \bibinfo{person}{Tristan
  Chambers}, \bibinfo{person}{Tarak Shah}, \bibinfo{person}{Aditya~G.
  Parameswaran}, {and} \bibinfo{person}{Eugene Wu}.}
  \bibinfo{year}{2025}\natexlab{}.
\newblock \showarticletitle{{DocETL}: Agentic Query Rewriting and Evaluation
  for Complex Document Processing}.
\newblock \bibinfo{journal}{\emph{Proceedings of the {VLDB} Endowment
  ({PVLDB})}} \bibinfo{volume}{18}, \bibinfo{number}{12}
  (\bibinfo{year}{2025}), \bibinfo{pages}{3920--3932}.
\newblock
\urldef\tempurl%
\url{https://www.vldb.org/pvldb/vol18/p3920-shankar.pdf}
\showURL{%
\tempurl}


\bibitem[\protect\citeauthoryear{Snodgrass and Ahn}{Snodgrass and Ahn}{1986}]%
        {snodgrass-ahn-1986}
\bibfield{author}{\bibinfo{person}{Richard Snodgrass} {and}
  \bibinfo{person}{Ilsoo Ahn}.} \bibinfo{year}{1986}\natexlab{}.
\newblock \showarticletitle{Temporal Databases}.
\newblock \bibinfo{journal}{\emph{IEEE Computer}} \bibinfo{volume}{19},
  \bibinfo{number}{9} (\bibinfo{year}{1986}), \bibinfo{pages}{35--42}.
\newblock
\urldef\tempurl%
\url{https://doi.org/10.1109/MC.1986.1663327}
\showDOI{\tempurl}


\bibitem[\protect\citeauthoryear{Snodgrass}{Snodgrass}{2000}]%
        {snodgrass-1999-tdb}
\bibfield{author}{\bibinfo{person}{Richard~T. Snodgrass}.}
  \bibinfo{year}{2000}\natexlab{}.
\newblock \bibinfo{booktitle}{\emph{Developing Time-Oriented Database
  Applications in {SQL}}}.
\newblock \bibinfo{publisher}{Morgan Kaufmann}, \bibinfo{address}{San
  Francisco, CA}.
\newblock
\showISBNx{1-55860-436-7}
\urldef\tempurl%
\url{https://www2.cs.arizona.edu/~rts/tdbbook.pdf}
\showURL{%
\tempurl}


\bibitem[\protect\citeauthoryear{Su, Guo, Hou, Bai, Li, Zhang, Yin, Lin, Jin,
  Guo, and Cheng}{Su et~al\mbox{.}}{2026}]%
        {su-2026-tsm}
\bibfield{author}{\bibinfo{person}{Miao Su}, \bibinfo{person}{Yucan Guo},
  \bibinfo{person}{Zhongni Hou}, \bibinfo{person}{Long Bai},
  \bibinfo{person}{Zixuan Li}, \bibinfo{person}{Yufei Zhang},
  \bibinfo{person}{Guojun Yin}, \bibinfo{person}{Wei Lin},
  \bibinfo{person}{Xiaolong Jin}, \bibinfo{person}{Jiafeng Guo}, {and}
  \bibinfo{person}{Xueqi Cheng}.} \bibinfo{year}{2026}\natexlab{}.
\newblock \showarticletitle{Beyond Dialogue Time: Temporal Semantic Memory for
  Personalized {LLM} Agents}.
\newblock \bibinfo{journal}{\emph{arXiv preprint arXiv:2601.07468}}
  (\bibinfo{year}{2026}).
\newblock
\urldef\tempurl%
\url{https://arxiv.org/abs/2601.07468}
\showURL{%
\tempurl}


\bibitem[\protect\citeauthoryear{{Tencent}}{{Tencent}}{2026}]%
        {tencent-2026-tencentdb-agent-memory}
\bibfield{author}{\bibinfo{person}{{Tencent}}.}
  \bibinfo{year}{2026}\natexlab{}.
\newblock \bibinfo{title}{{TencentDB} Agent Memory: Fully Local Long-Term
  Memory for {AI} Agents via a 4-Tier Progressive Pipeline}.
\newblock \bibinfo{howpublished}{GitHub repository
  \texttt{Tencent/TencentDB-Agent-Memory}, MIT licence (Tencent header
  wrapper), HEAD commit 5736acc (2026-05-16T12:17:22Z), latest release v0.3.4
  (2026-05-13), 2{,}674 stars, npm package
  \texttt{@tencentdb-agent-memory/memory-tencentdb}, Node >=22.16 required}.
\newblock
\urldef\tempurl%
\url{https://github.com/Tencent/TencentDB-Agent-Memory}
\showURL{%
\tempurl}


\bibitem[\protect\citeauthoryear{{The XTDB Authors}}{{The XTDB
  Authors}}{2026}]%
        {xtdb-docs}
\bibfield{author}{\bibinfo{person}{{The XTDB Authors}}.}
  \bibinfo{year}{2026}\natexlab{}.
\newblock \bibinfo{title}{{XTDB} 2.x: Bitemporal {SQL} for the Real World}.
\newblock \bibinfo{howpublished}{Online documentation}.
\newblock
\urldef\tempurl%
\url{https://docs.xtdb.com/intro/what-is-xtdb.html}
\showURL{%
\tempurl}


\bibitem[\protect\citeauthoryear{Uddin, Shubham, Blanco, Baral, and Wang}{Uddin
  et~al\mbox{.}}{2026}]%
        {uddin-2026-memora}
\bibfield{author}{\bibinfo{person}{Md~Nayem Uddin}, \bibinfo{person}{Kumar
  Shubham}, \bibinfo{person}{Eduardo Blanco}, \bibinfo{person}{Chitta Baral},
  {and} \bibinfo{person}{Gengyu Wang}.} \bibinfo{year}{2026}\natexlab{}.
\newblock \showarticletitle{From Recall to Forgetting: Benchmarking Long-Term
  Memory for Personalized Agents}.
\newblock \bibinfo{journal}{\emph{arXiv preprint arXiv:2604.20006}}
  (\bibinfo{year}{2026}).
\newblock
\urldef\tempurl%
\url{https://arxiv.org/abs/2604.20006}
\showURL{%
\tempurl}


\bibitem[\protect\citeauthoryear{Venkata}{Venkata}{2026}]%
        {three-regimes-conflict-2026}
\bibfield{author}{\bibinfo{person}{Pruthvinath~Jeripity Venkata}.}
  \bibinfo{year}{2026}\natexlab{}.
\newblock \bibinfo{title}{Three Regimes of Context-Parametric Conflict: A
  Predictive Framework and Empirical Validation}.
\newblock
\newblock
\showeprint[arxiv]{2605.11574}~[cs.CL]
\urldef\tempurl%
\url{https://arxiv.org/abs/2605.11574}
\showURL{%
\tempurl}


\bibitem[\protect\citeauthoryear{Ward}{Ward}{2025}]%
        {ward-2025-memoriesdb}
\bibfield{author}{\bibinfo{person}{Joel Ward}.}
  \bibinfo{year}{2025}\natexlab{}.
\newblock \showarticletitle{{MemoriesDB}: A Temporal-Semantic-Relational
  Database for Long-Term Agent Memory}.
\newblock \bibinfo{journal}{\emph{arXiv preprint arXiv:2511.06179}}
  (\bibinfo{year}{2025}).
\newblock
\urldef\tempurl%
\url{https://arxiv.org/abs/2511.06179}
\showURL{%
\tempurl}


\bibitem[\protect\citeauthoryear{Wei, Peng, Dong, Xie, and Wang}{Wei
  et~al\mbox{.}}{2026}]%
        {wei-2026-fademem}
\bibfield{author}{\bibinfo{person}{Lei Wei}, \bibinfo{person}{Xiao Peng},
  \bibinfo{person}{Xu Dong}, \bibinfo{person}{Niantao Xie}, {and}
  \bibinfo{person}{Bin Wang}.} \bibinfo{year}{2026}\natexlab{}.
\newblock \showarticletitle{{FadeMem}: Biologically-Inspired Forgetting for
  Efficient Agent Memory}.
\newblock \bibinfo{journal}{\emph{arXiv preprint arXiv:2601.18642}}
  (\bibinfo{year}{2026}).
\newblock
\urldef\tempurl%
\url{https://arxiv.org/abs/2601.18642}
\showURL{%
\tempurl}


\bibitem[\protect\citeauthoryear{Widiaatmaja, Djeffal, Dandekar, and
  Senellart}{Widiaatmaja et~al\mbox{.}}{2025}]%
        {widiaatmaja-2025-temporal}
\bibfield{author}{\bibinfo{person}{Albert Widiaatmaja}, \bibinfo{person}{Belkis
  Djeffal}, \bibinfo{person}{Ashish Dandekar}, {and} \bibinfo{person}{Pierre
  Senellart}.} \bibinfo{year}{2025}\natexlab{}.
\newblock \showarticletitle{Demonstration of {ProvSQL} Update Provenance
  through Temporal Databases}. In \bibinfo{booktitle}{\emph{Provenance
  Week@SIGMOD}}.
\newblock
\urldef\tempurl%
\url{https://doi.org/10.1145/3736229.3736253}
\showDOI{\tempurl}


\bibitem[\protect\citeauthoryear{Wu, Ji, Kawatkar, Kwan, Gu, Peng, and
  Chang}{Wu et~al\mbox{.}}{2026}]%
        {wu-2026-longmemeval-v2}
\bibfield{author}{\bibinfo{person}{Di Wu}, \bibinfo{person}{Zixiang Ji},
  \bibinfo{person}{Asmi Kawatkar}, \bibinfo{person}{Bryan Kwan},
  \bibinfo{person}{Jia-Chen Gu}, \bibinfo{person}{Nanyun Peng}, {and}
  \bibinfo{person}{Kai-Wei Chang}.} \bibinfo{year}{2026}\natexlab{}.
\newblock \showarticletitle{{LongMemEval-V2}: Evaluating Long-Term Agent Memory
  Toward Experienced Colleagues}.
\newblock \bibinfo{journal}{\emph{arXiv preprint arXiv:2605.12493}}
  (\bibinfo{year}{2026}).
\newblock
\urldef\tempurl%
\url{https://arxiv.org/abs/2605.12493}
\showURL{%
\tempurl}


\bibitem[\protect\citeauthoryear{Wu, Wang, Yu, Zhang, Chang, and Yu}{Wu
  et~al\mbox{.}}{2025}]%
        {wu-2025-longmemeval}
\bibfield{author}{\bibinfo{person}{Di Wu}, \bibinfo{person}{Hongwei Wang},
  \bibinfo{person}{Wenhao Yu}, \bibinfo{person}{Yuwei Zhang},
  \bibinfo{person}{Kai-Wei Chang}, {and} \bibinfo{person}{Dong Yu}.}
  \bibinfo{year}{2025}\natexlab{}.
\newblock \showarticletitle{{LongMemEval}: Benchmarking Chat Assistants on
  Long-Term Interactive Memory}. In \bibinfo{booktitle}{\emph{Proceedings of
  the 13th International Conference on Learning Representations ({ICLR})}}.
\newblock
\urldef\tempurl%
\url{https://openreview.net/forum?id=pZiyCaVuti}
\showURL{%
\tempurl}


\bibitem[\protect\citeauthoryear{Xie and Du}{Xie and Du}{2026}]%
        {xie-2026-neuro-symbolic-sat}
\bibfield{author}{\bibinfo{person}{Shiyao Xie} {and} \bibinfo{person}{Jian
  Du}.} \bibinfo{year}{2026}\natexlab{}.
\newblock \showarticletitle{Neuro-Symbolic Resolution of Recommendation
  Conflicts in Multimorbidity Clinical Guidelines}. In
  \bibinfo{booktitle}{\emph{Proceedings of the 40th Annual {AAAI} Conference on
  Artificial Intelligence, Bridge Program on Logic and {AI}}}.
\newblock
\urldef\tempurl%
\url{https://doi.org/10.48550/arXiv.2604.17340}
\showDOI{\tempurl}


\bibitem[\protect\citeauthoryear{Yang, Lai, Wang, Chang, Harari, and
  Gabrilovich}{Yang et~al\mbox{.}}{2026}]%
        {groupmembench-2026}
\bibfield{author}{\bibinfo{person}{Jingbo Yang}, \bibinfo{person}{Kwei-Herng
  Lai}, \bibinfo{person}{Xiaowen Wang}, \bibinfo{person}{Shiyu Chang},
  \bibinfo{person}{Yaar Harari}, {and} \bibinfo{person}{Evgeniy Gabrilovich}.}
  \bibinfo{year}{2026}\natexlab{}.
\newblock \bibinfo{title}{{GroupMemBench}: Benchmarking {LLM} Agent Memory in
  Multi-Party Conversations}.
\newblock
\newblock
\showeprint[arxiv]{2605.14498}~[cs.CL]
\urldef\tempurl%
\url{https://arxiv.org/abs/2605.14498}
\showURL{%
\tempurl}


\bibitem[\protect\citeauthoryear{Yu, Fang, Liu, and Ma}{Yu
  et~al\mbox{.}}{2026}]%
        {yu-2026-h-mem}
\bibfield{author}{\bibinfo{person}{Jiawei Yu}, \bibinfo{person}{Yixiang Fang},
  \bibinfo{person}{Xilin Liu}, {and} \bibinfo{person}{Yuchi Ma}.}
  \bibinfo{year}{2026}\natexlab{}.
\newblock \bibinfo{title}{{H-Mem}: A Novel Memory Mechanism for Evolving and
  Retrieving Agent Memory via a Hybrid Structure}.
\newblock
\newblock
\showeprint[arxiv]{2605.15701}~[cs.AI]
\urldef\tempurl%
\url{https://arxiv.org/abs/2605.15701}
\showURL{%
\tempurl}


\bibitem[\protect\citeauthoryear{{Zep AI}}{{Zep AI}}{2026}]%
        {getzep-graphiti-2026}
\bibfield{author}{\bibinfo{person}{{Zep AI}}.} \bibinfo{year}{2026}\natexlab{}.
\newblock \bibinfo{title}{{Graphiti}: Build Real-Time Knowledge Graphs for {AI}
  Agents}.
\newblock \bibinfo{howpublished}{GitHub repository}.
\newblock
\urldef\tempurl%
\url{https://github.com/getzep/graphiti}
\showURL{%
\tempurl}


\bibitem[\protect\citeauthoryear{Zhang, Lin, Wu, Sun, Li, Li, and Peng}{Zhang
  et~al\mbox{.}}{2026b}]%
        {zhang-2026-useful-memories}
\bibfield{author}{\bibinfo{person}{Dylan Zhang}, \bibinfo{person}{Yanshan Lin},
  \bibinfo{person}{Zhengkun Wu}, \bibinfo{person}{Yihang Sun},
  \bibinfo{person}{Bingxuan Li}, \bibinfo{person}{Dianqi Li}, {and}
  \bibinfo{person}{Hao Peng}.} \bibinfo{year}{2026}\natexlab{b}.
\newblock \showarticletitle{Useful Memories Become Faulty When Continuously
  Updated by {LLMs}}.
\newblock \bibinfo{journal}{\emph{arXiv preprint arXiv:2605.12978}}
  (\bibinfo{year}{2026}).
\newblock
\urldef\tempurl%
\url{https://arxiv.org/abs/2605.12978}
\showURL{%
\tempurl}


\bibitem[\protect\citeauthoryear{Zhang, Gui, Yang, Chen, and Feng}{Zhang
  et~al\mbox{.}}{2026a}]%
        {zhang-2026-uma}
\bibfield{author}{\bibinfo{person}{Kehao Zhang}, \bibinfo{person}{Shangtong
  Gui}, \bibinfo{person}{Sheng Yang}, \bibinfo{person}{Wei Chen}, {and}
  \bibinfo{person}{Yang Feng}.} \bibinfo{year}{2026}\natexlab{a}.
\newblock \bibinfo{title}{Learning to Remember: End-to-End Training of Memory
  Agents for Long-Context Reasoning}.
\newblock
\newblock
\showeprint[arxiv]{2602.18493}~[cs.CL]
\urldef\tempurl%
\url{https://arxiv.org/abs/2602.18493}
\showURL{%
\tempurl}


\bibitem[\protect\citeauthoryear{Zhang, Wang, Cui, Qiu, Li, Zhu, and He}{Zhang
  et~al\mbox{.}}{2026c}]%
        {zhang-2026-compression-spectrum}
\bibfield{author}{\bibinfo{person}{Xing Zhang}, \bibinfo{person}{Guanghui
  Wang}, \bibinfo{person}{Yanwei Cui}, \bibinfo{person}{Wei Qiu},
  \bibinfo{person}{Ziyuan Li}, \bibinfo{person}{Bing Zhu}, {and}
  \bibinfo{person}{Peiyang He}.} \bibinfo{year}{2026}\natexlab{c}.
\newblock \bibinfo{title}{Experience Compression Spectrum: Unifying Memory,
  Skills, and Rules in {LLM} Agents}.
\newblock
\newblock
\showeprint[arxiv]{2604.15877}~[cs.AI]
\urldef\tempurl%
\url{https://arxiv.org/abs/2604.15877}
\showURL{%
\tempurl}


\bibitem[\protect\citeauthoryear{Zhou, Wang, Peng, Chen, Zhang, and Yu}{Zhou
  et~al\mbox{.}}{2026b}]%
        {zhou-2026-atcc}
\bibfield{author}{\bibinfo{person}{Weixing Zhou}, \bibinfo{person}{Zhiyou
  Wang}, \bibinfo{person}{Zeshun Peng}, \bibinfo{person}{Hetian Chen},
  \bibinfo{person}{Yanfeng Zhang}, {and} \bibinfo{person}{Ge Yu}.}
  \bibinfo{year}{2026}\natexlab{b}.
\newblock \showarticletitle{{ATCC}: Adaptive Concurrency Control for Unforeseen
  Agentic Transactions}.
\newblock \bibinfo{journal}{\emph{arXiv preprint arXiv:2603.13906}}
  (\bibinfo{year}{2026}).
\newblock
\urldef\tempurl%
\url{https://arxiv.org/abs/2603.13906}
\showURL{%
\tempurl}


\bibitem[\protect\citeauthoryear{Zhou, Li, Lu, Wang, Liu, Wang, Yu, Zhang, and
  Li}{Zhou et~al\mbox{.}}{2026a}]%
        {dcrd-context-memory-2026}
\bibfield{author}{\bibinfo{person}{Yigeng Zhou}, \bibinfo{person}{Wu Li},
  \bibinfo{person}{Yifan Lu}, \bibinfo{person}{Yequan Wang},
  \bibinfo{person}{Xuebo Liu}, \bibinfo{person}{Wenya Wang},
  \bibinfo{person}{Jun Yu}, \bibinfo{person}{Min Zhang}, {and}
  \bibinfo{person}{Jing Li}.} \bibinfo{year}{2026}\natexlab{a}.
\newblock \bibinfo{title}{Mitigating Context-Memory Conflicts in {LLMs} through
  Dynamic Cognitive Reconciliation Decoding}.
\newblock
\newblock
\showeprint[arxiv]{2605.12185}~[cs.CL]
\urldef\tempurl%
\url{https://arxiv.org/abs/2605.12185}
\showURL{%
\tempurl}


\bibitem[\protect\citeauthoryear{Zou, Guo, Liang, Wang, Wang, Wen, King, Qu,
  and Xu}{Zou et~al\mbox{.}}{2026}]%
        {zou-2026-demem}
\bibfield{author}{\bibinfo{person}{Mingxi Zou}, \bibinfo{person}{Zhihan Guo},
  \bibinfo{person}{Langzhang Liang}, \bibinfo{person}{Zhuo Wang},
  \bibinfo{person}{Qifan Wang}, \bibinfo{person}{Qingsong Wen},
  \bibinfo{person}{Irwin King}, \bibinfo{person}{Lizhen Qu}, {and}
  \bibinfo{person}{Zenglin Xu}.} \bibinfo{year}{2026}\natexlab{}.
\newblock \showarticletitle{Remember the Decision, Not the Description: A
  Rate-Distortion Framework for Agent Memory}.
\newblock \bibinfo{journal}{\emph{arXiv preprint arXiv:2605.10870}}
  (\bibinfo{year}{2026}).
\newblock
\urldef\tempurl%
\url{https://arxiv.org/abs/2605.10870}
\showURL{%
\tempurl}


\end{thebibliography}

\end{document}